\pgfplotsset{compat=newest}
\newtheorem{theorem}{Theorem}
\newtheorem{letterthm}{Theorem} 
\newtheorem*{theorem*}{Theorem}
\newtheorem{letterlemma}[letterthm]{Lemma}
\newtheorem{lemma}[theorem]{Lemma}
\newtheorem{corollary}[theorem]{Corollary}
\newtheorem{proposition}[theorem]{Proposition}
\newtheorem*{problem*}{Problem}
\newtheorem{definition}[theorem]{Definition}
\theoremstyle{definition}
\newtheorem{example}[theorem]{Example}
\newtheorem{remark}[theorem]{Remark}
\theoremstyle{remark}
\newtheorem*{example*}{Example}
\newtheorem*{remark*}{Remark}
\crefname{letterthm}{Thm.}{Thms.}
\crefname{letterlemma}{Lem.}{Lems.}
\crefname{theorem}{Thm.}{Thms.}
\crefname{corollary}{Cor.}{Cors.}
\crefname{lemma}{Lem.}{Lems.}
\crefname{proposition}{Prop.}{Props.}
\crefname{definition}{Def.}{Defs.}
\crefname{problem}{Problem}{Problems}
\crefname{remark}{Rem.}{Remarks}
\crefname{example}{Ex.}{Examples}
\crefname{section}{Sec.}{Sections}
\Crefname{Section}{Section}{Sections}
\def\blfootnote{\xdef\@thefnmark{}\@footnotetext}
\newcommand*{\1}{\text{\usefont{U}{bbold}{m}{n}1}}
\renewcommand{\bar}[1]{\overline{#1}}
\newcommand\CC{\mathbb C}
\newcommand\D{\mathcal D}
\newcommand\eps\varepsilon
\renewcommand\H{\mathcal H}
\newcommand\K{\mathcal K}
\newcommand\NN{\mathbb N}
\def\placeholder{\,\cdot\,} 
\newcommand\RR{\mathbb R}
\newcommand\restrictedto\upharpoonright
\newcommand{\states}{{\mathfrak S}}
\DeclareMathOperator\Ad{Ad}
\DeclareMathOperator\ad{ad}
\DeclareMathOperator\tr{tr}
\renewcommand\Re{\operatorname{Re}}
\newcommand\oo\infty
\newcommand\ox\otimes
\DeclareMathOperator{\Ran}{Ran}
\newcommand{\TC}{\mathcal{T}(\mathcal{H})}
\newcommand\boundedoperatorsymbol{{\mathcal B}}
\NewDocumentCommand\BO{mo}{ \IfNoValueTF{#2}{\boundedoperatorsymbol(#1)}{\boundedoperatorsymbol(#1,#2)}}
\let\mc\mathcal
\let\mf\mathfrak 
\def\trp{^{\mathsf T}}
\def\Lmin{\L_{\textrm{min}}}
\newcommand\mat[1]{ \begin{pmatrix} #1 \end{pmatrix} }
\renewcommand\limsup\varlimsup
\renewcommand\liminf\varliminf
\DeclareMathOperator\id{id}
\DeclareMathOperator\lin{span}
\DeclareMathOperator{\Sp}{Sp}
\DeclareMathOperator{\dom}{dom}
\newcommand\op{\textit{op}}
\newcommand\order{\mathbf{o}}
\newcommand\Order{\mathbf{O}}
\let\lim\relax
\NewDocumentCommand\lim{o}{\IfNoValueTF{#1}{\mathop{\textup {lim}}}{\mathop{\textup{{$ #1 $}-lim}}}}
\DeclareFontFamily{U}{matha}{\hyphenchar\font45}
\DeclareFontShape{U}{matha}{m}{n}{ <-6> matha5 <6-7> matha6 <7-8> matha7 <8-9> matha8 <9-10> matha9 <10-12> matha10 <12-> matha12 }{}
\DeclareSymbolFont{matha}{U}{matha}{m}{n}
\DeclareFontFamily{U}{mathx}{\hyphenchar\font45}
\DeclareFontShape{U}{mathx}{m}{n}{ <-6> mathx5 <6-7> mathx6 <7-8> mathx7 <8-9> mathx8 <9-10> mathx9 <10-12> mathx10 <12-> mathx12 }{}
\DeclareSymbolFont{mathx}{U}{mathx}{m}{n}
\DeclareMathDelimiter{\vvvert} {0}{matha}{"7E}{mathx}{"17}%
\DeclarePairedDelimiterX{\normiii}[1]{\vvvert}{\vvvert} {\ifblank{#1}{\:\cdot\:}{#1}}
\DeclarePairedDelimiter\paren\lparen\rparen
\DeclarePairedDelimiterX\norm[1]\lVert\rVert{\ifblank{#1}{\placeholder}{#1}}
\DeclarePairedDelimiter\abs\lvert\rvert
\DeclarePairedDelimiterX\ip[2]{\langle}{\rangle}{#1 , #2}
\DeclarePairedDelimiterX\dual[2]{\langle}{\rangle}{#1 , #2}
\newcommand\qandq{\quad\text{and}\quad}
\DeclarePairedDelimiter\ket\vert\rangle
\DeclarePairedDelimiterX\braket[2]\langle\rangle{ #1 \delimsize\vert #2}
\DeclarePairedDelimiterX\braAket[3]\langle\rangle{ #1 \delimsize\vert #2 \delimsize\vert #3 }
\DeclarePairedDelimiterX\ketbra[2]\vert\vert{ #1 \delimsize\rangle\delimsize\langle #2 }
\DeclarePairedDelimiterX\kettbra[1]\vert\vert{ #1 \delimsize\rangle\delimsize\langle #1 }
\DeclarePairedDelimiterX{\set}[1]\{\}{  #1 }
\newcommand\hide[1]{}
\def\B{{\mathcal B}}
\def\M{{\mathcal M}}
\def\N{{\mathcal N}}
\def\R{{\mathcal R}}
\def\B{{\mathcal B}}
\def\K{{\mathcal K}}
\def\L{{\mathcal L}}
\def\T{{\mathcal T}}
\def\U{{\mathcal U}}
\def\P{{\mathcal P}}
\def\Sch{{\mathcal S}}
\newcommand{\energy}{\mathbf E}
\newcommand\grp{{\Gamma}}
\newcommand\lie{{\mf g}}
\numberwithin{equation}{section}
\numberwithin{theorem}{section}
\title{\vspace{-11pt}Energy-limited quantum dynamics} 
\author{Lauritz van Luijk\,\orcidlink{0000-0003-3153-549X}}
\date{\normalsize Institut für Theoretische Physik, Leibniz Universiät Hannover, \\ Appelstraße 2, 30167 Hannover, Germany\\[2ex]\today \blfootnote{Email: \href{mailto:lauritz.vanluijk@itp.uni-hannover.de}{lauritz.vanluijk@itp.uni-hannover.de}}}
\begin{document}

\maketitle
\vspace{-22pt}
\begin{abstract}
    We consider quantum systems with energy constraints.
    In general, quantum channels and continuous-time dynamics need not satisfy energy conservation. 
    Physically meaningful channels, however, can only introduce a finite amount of energy to the system, and continuous-time dynamics may only increase the energy gradually over time.
    We systematically study such "energy-limited" channels and dynamics. For Markovian dynamics, energy-limitedness is equivalent to a single operator inequality in the Heisenberg picture.
    We observe new submultiplicativity inequalities for the energy-constrained diamond and operator norm.
    Together, our results prove a powerful toolkit for quantitative analyses of dynamical problems in finite and infinite-dimensional systems. As an application, we derive state-dependent bounds for quantum speed limits that outperform the usual diamond/operator norm estimates, which have to account for fluctuations in high-energy states.
\end{abstract}

\tableofcontents

\section{Introduction}\label{sec:intro}

When we want to model a quantum system, we begin by describing the Hilbert space.
In quantum information theory and related areas, models usually use finite-dimensional Hilbert spaces, whereas infinite-dimensional models dominate in quantum optics, statistical mechanics, and quantum field theory.
In models with infinite-dimensional Hilbert space, the Hamiltonian is typically an unbounded operator, which necessarily implies that the system's Hilbert space contains state vectors of infinite energy.
Infinite energy states are often discarded as unphysical.
The reason is that the models we study are only valid in certain regimes, which never contain arbitrarily large energies.
For instance, a laser might be modeled by a single bosonic mode $\H=L^2(\RR)$ with Hamiltonian $H=\Omega a^\dagger a$.
At sufficiently large energies, the lab will catch fire -- a physical effect not accounted for in the model.\footnote{This is assuming poor safety conditions. The more realistic scenario is that a fuse will pop out, causing the laser to turn off. In any case, the model breaks down at large energies.}
The naive solution is to introduce a strict energy cutoff by truncating the model onto the spectral subspace with energy below some threshold energy. This, however, has two problems:
First, the truncation completely butchers algebraic relations between observables, and second, the resulting model is sensitive to the cutoff energy in a discontinuous way.
Instead, it is better to keep the full Hilbert space but to introduce an \emph{energy constraint}, which means that we only consider states whose mean energy does not exceed a fixed threshold.
This way, the observables remain untouched and the resulting theory depends smoothly on the chosen threshold energy.

The role of the Hamiltonian in the above is to determine the energy scale.
We separate this from its role as the generator of dynamics by considering systems equipped with a specified \emph{reference Hamiltonian}, which may or may not be the generator of the system's unitary time evolution.
This is not a mere mathematical generalization but is important in applications.
Take, for instance, a laser coupled to an atom. 
While the dynamics is interacting, we are still interested in the energy of the laser itself, i.e., the mean photon number, which corresponds to the reference Hamiltonian $a^\dagger a$.
In particular, the idea of reference Hamiltonians makes sense in open systems whose dynamics are not generated by a Hamiltonian to begin with.
In the following, we consider open or closed quantum systems with energy constraints relative to reference Hamiltonians. 

A good understanding of a model requires not only the analysis of specific states but also statements concerning all states.
For instance, the Heisenberg uncertainty principle states that the standard deviations of position and momentum measurements satisfy the trade-off inequality
\begin{equation}
    \Delta p\cdot \Delta q \ge \frac\hbar2
\end{equation}
for \emph{all} states of the system. 
While statements for all states on an infinite-dimensional Hilbert space are nice, it suffices to consider states satisfying the energy constraint.
Let us consider another example.
In a qudit system, i.e., $\H=\CC^d$, the Fannes-Audenaert inequality \cite{audenaert_sharp_2007} asserts the continuity bound 
\begin{equation}\label{eq:FA_ineq}
    |S(\rho)-S(\sigma)| \le \eps\log d + h(\eps),
\end{equation}
for the von Neumann entropy of arbitrary states $\rho$ and $\sigma$, where $\eps=\frac12\norm{\rho-\sigma}_1$ is the trace-distance and is $h(\eps)$ is the binary entropy. 
When the dimension $d$ becomes larger, the continuity bound \eqref{eq:FA_ineq} diverges. 
In fact, the von Neumann entropy is discontinuous on the full state space of an infinite-dimensional Hilbert space.
However, if we take seriously the idea of an energy constraint and restrict to states with bounded energy, the von Neumann entropy does become continuous, provided the reference Hamiltonian has a finite partition function $\mc Z = \tr e^{-\beta H}<\oo$.
Indeed, Winter generalized the continuity bound \eqref{eq:FA_ineq} to this setting \cite{winter2016}.
We see that imposing an energy constraint yields a refined understanding of systems described by infinite-dimensional Hilbert spaces.

Energy constraints are widely used in classical and quantum information theory, where they appear in the study of continuous variable systems.
The basic idea is that in communication setups involving continuous signals, only a limited amount of energy is available.
The relevant quantity is then the \emph{energy-constrained capacity} of a channel, i.e., the amount of information that can be communicated through a given channel using input signals with bounded energy.
This idea, developed in Shannon's ground-laying work \cite{shannon_mathematical_1948}, is still used quantum information theory today \cite{eisert2007gaussian,noh_enhanced_2020,davis_energy-constrained_2018,Winter_EnergyConstrDiamond_2017,shirokov2018,pirandola_advances_2020}.

In the presence of energy constraints, quantifying distance in terms of the operator or diamond norm has little significance.
Indeed, these norms are defined by optimizing the norm distance over the full state space and, hence, have to account for errors on infinite-energy states.
By restricting to states with bounded energy expectation, Shirokov and Winter introduced energy-constrained versions of these norms \cite{Winter_EnergyConstrDiamond_2017,shirokov2018} that (a) have an operational interpretation in terms of distinguishability subject to an energy constraint, (b) induce a topology independent of the threshold energy and (c) restore good properties lost in the transition from finite to infinite-dimensions.
Let us give an example for the third aspect: Since Hamiltonians in infinite dimensions are typically unbounded, their unitary dynamics $U(t) = e^{-itH}$ are not operator norm continuous in $t$ but merely strongly continuous.
Norm continuity is, however, restored by the energy-constrained operator norm, which metrizes the strong topology on bounded subsets \cite{shirokov2020Enorm}.
Since these energy-constrained norms were introduced, they have been used to obtain convergence rates and continuity bounds in various physical settings ranging from speed limits to channel capacities \cite{becker2020,becker2021,lie_group_error,Winter_EnergyConstrDiamond_2017,shirokov2018}.

In this work, we further develop the theory of quantum systems energy constraints. 
Building on the works of Shirokov and Winter \cites{Winter_EnergyConstrDiamond_2017,shirokov2018,shirokov2020Enorm,weis_extreme_2021}, we systematically study quantum channels and dynamics that are compatible with the energy scale of the system.
In particular, we provide tools to estimate the maximal output energy at a given input energy constraint as a function of time. This solves an open problem suggested by Becker and Datta in \cite{becker_convergence_2020}.
We observe submultiplicativity inequalities connecting the energy gain of a quantum channel with the energy-constrained norms of Shirokov and Winter, which enable a quantitative analysis of dynamical limit problems such as quantum speed limits or Trotter products in infinite-dimensional systems.

\subsection{Overview of main results}\label{sec:overview}

We consider quantum systems equipped with specified reference Hamiltonians.
Reserving the letter $H$ for the generator of the unitary time-evolution in closed systems, we follow \cite{shirokov2020Enorm,shirokov2020extension,davis_energy-constrained_2018} in denoting the reference Hamiltonian by $G$.
As an absolute quantity, energy is often meaningless.
Instead, the meaningful quantity is the energy relative to the ground state energy.
We fix this arbitrariness by assuming the ground state energy to be zero.
If the system's Hilbert space $\H$ is infinite-dimensional, we assume that the reference Hamiltonian is an unbounded operator of the form
\begin{equation}\label{eq:discrete}
    G=\sum_{n=0}^\oo \epsilon_n \,\kettbra n, \qquad \lim_{n\to\oo} \epsilon_n = \oo
\end{equation}
for a basis $\{\ket n\}_{n=0}^\oo$ of $\H$. This form is guaranteed if the partition function is finite for all temperatures.
A prototypical example is a bosonic system with $n$ canonical degrees of freedom, where $\H = L^2(\RR^n)$ and where the reference Hamiltonian is the number operator $G=\sum_{i=1}^na_i^\dagger a_i$.

Given a threshold energy $E>0$, the \emph{energy-constrained state space} is defined as
\begin{equation}
    \states_E(\H) = \big\{ \rho\in\states(\H) : \energy[\rho]\le E\big\},\quad E>0, 
\end{equation}
where $\energy[\rho]= \tr[G\rho]$ denotes the energy expectation value in the state $\rho$, instead of the full state space $\states(\H)$.
Let us emphasize that states in $\states_E(\H)$ are only constrained in their mean energy. While it is still possible to measure arbitrarily large energies, the probability of doing so decays sufficiently fast.

In general, quantum channels mapping between systems with reference Hamiltonians need not preserve the energy but may pump energy into or extract energy from the system.
However, they must respect the energy scale.
Different ways to define this mathematically turn out to be equivalent:

\begin{letterlemma}\label{lem:introEL}
    Let $T$ be a quantum channel from system $A$ to $B$. The following are equivalent:
    \begin{enumerate}[(a)]
        \item The output energy is linearly bounded by the input energy: There exist $\lambda, E_0\ge0$, s.t.
        \begin{equation}\label{eq:intro_EL1}
            T^*(G_B) \le \lambda G_A+E_0,
        \end{equation}
        where $T^*$ denotes the dual (Heisenberg-picture) channel.
        \item For all finite-energy input states $\rho$, the output energy is finite $\energy_B[T\rho]<\oo$.
        \item\label{it:introEL2} Given any input energy constraint, the output energy is bounded: 
        \begin{equation}\label{eq:intro_f}
            f_T(E) := \sup_{\rho\in\states_E(\H_A)} \energy_B[T\rho] <\oo.\qquad \text{for all $E>0$}.
        \end{equation}
    \end{enumerate}
\end{letterlemma}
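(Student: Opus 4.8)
The plan is to prove the cycle (a)~$\Rightarrow$~(c)~$\Rightarrow$~(b)~$\Rightarrow$~(a), of which only the last implication carries real content. For (a)~$\Rightarrow$~(c) I would combine the duality relation with a spectral truncation of $G_B$: writing $G_B^{(N)}=\sum_{m=0}^N\epsilon_m\kettbra{m}$ for the bounded truncations, the ordinary adjoint identity gives $\tr[G_B^{(N)}T\rho]=\tr[T^*(G_B^{(N)})\rho]$, while positivity of $T^*$ together with (a) yields $T^*(G_B^{(N)})\le T^*(G_B)\le\lambda G_A+E_0$. Evaluating in $\rho\in\states_E(\H_A)$ thus gives $\tr[G_B^{(N)}T\rho]\le\lambda\,\energy_A[\rho]+E_0\le\lambda E+E_0$, and letting $N\to\oo$ by monotone convergence produces $f_T(E)\le\lambda E+E_0<\oo$. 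The implication (c)~$\Rightarrow$~(b) is immediate, since every finite-energy state lies in some $\states_E(\H_A)$.

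The heart of the matter is (b)~$\Rightarrow$~(a), which I would recast as a statement about quadratic forms on $\H_A$. Consider the two nonnegative forms
\[
  q_A(\psi)=\braAket{\psi}{G_A}{\psi},\qquad q_B(\psi)=\tr\big[G_B\,T(\kettbra{\psi})\big].
\]
The form $q_A$ is closed, being the form of the positive self-adjoint operator $G_A$. The form $q_B$ is the monotone increasing limit of the \emph{bounded} positive forms $\psi\mapsto\tr[G_B^{(N)}T(\kettbra{\psi})]$ and is therefore itself closed, by the monotone convergence theorem for nonnegative quadratic forms; by the representation theorem it is the form of a positive self-adjoint operator, which is exactly what the symbol $T^*(G_B)$ denotes. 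Specializing hypothesis (b) to pure inputs $\rho=\kettbra{\psi}$ says precisely that $q_A(\psi)<\oo$ forces $q_B(\psi)<\oo$, i.e.\ the form-domain inclusion $\dom q_A\subseteq\dom q_B$.

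With this inclusion I would invoke the closed graph theorem. Equip $\dom q_A$ and $\dom q_B$ with the graph norms $\norm{\psi}_A^2=q_A(\psi)+\norm{\psi}^2$ and $\norm{\psi}_B^2=q_B(\psi)+\norm{\psi}^2$; closedness of the two forms makes both into complete spaces. The inclusion map $\dom q_A\to\dom q_B$ is everywhere defined by the previous step, and it is a closed operator because each graph norm dominates the $\H_A$-norm, so that simultaneous limits in the two graph norms must agree with the limit in $\H_A$. The closed graph theorem then yields a constant $C$ with $q_B(\psi)+\norm{\psi}^2\le C^2\big(q_A(\psi)+\norm{\psi}^2\big)$, that is $q_B(\psi)\le C^2 q_A(\psi)+(C^2-1)\norm{\psi}^2$ on $\dom q_A$. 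This is exactly the form version of $T^*(G_B)\le\lambda G_A+E_0$ with $\lambda=C^2$ and $E_0=C^2-1$, which is (a).

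The only obstacle I anticipate is technical rather than conceptual. One must verify that $q_B$ really is a closed nonnegative form, so that $T^*(G_B)$ is well-defined as a self-adjoint operator and the graph norm $\norm{\psi}_B$ is complete; this is supplied by the monotone convergence theorem for forms. One must also be careful that hypothesis (b), stated for all finite-energy \emph{density operators}, genuinely delivers the inclusion of \emph{form} domains on vectors; this is handled by restricting to pure states, which loses nothing because an operator (form) inequality is equivalent to its expectation-value version on vectors.
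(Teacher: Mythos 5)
Your proof is correct, but it runs on a different engine than the paper's. The paper never proves (b) $\Rightarrow$ (a) in one step: it establishes (b) $\Leftrightarrow$ (c) by an explicit series construction (if $f_T(E)=\oo$, pick $\rho_n\in\states_E$ with $\energy_B[T\rho_n]\ge 2^n$ and feed $\sum_n 2^{-n}\rho_n$ into $T$, contradicting (b) via countable additivity of $\energy$, \cref{lem:EL,lem:elementary}), shows that $f_T$ is concave and nondecreasing, and then passes from (c) to (a) through the affine majorants of the concave function $f_T$, with $T^*(G_B)$ constructed as $L^*\bar L$ from a Stinespring dilation (\cref{thm:def_TG,thm:EL2,thm:dual_prob}). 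You instead prove (b) $\Rightarrow$ (a) directly: closedness of $q_B$ via the monotone convergence theorem for nonnegative forms replaces the Stinespring construction, and the closed graph theorem between the two form-norm Hilbert spaces replaces both the series argument and the concavity argument, producing $(\lambda,E_0)$ abstractly. Your route is shorter and self-contained (the closed-graph step is the abstract counterpart of the paper's "pointwise finiteness implies a uniform affine bound"); what it gives up are the quantitative by-products of the paper's route — concavity of $f_T$, the identity $f_T(E)=\min\{\lambda E+E_0 : T^*(G_B)\le \lambda G_A+E_0\}$ of \cref{thm:dual_prob}, and the dilation-independent operator $T^*(G_B)$ reused throughout the paper — since the closed graph theorem gives no control over the constant $C$.

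Three small points to tighten. First, the truncation $G_B^{(N)}=\sum_{m\le N}\epsilon_m\kettbra m$ presumes the discrete form \eqref{eq:discrete}; this is the standing assumption in the overview, but for general $G_B$ use the spectral truncations $P_NG_B$ from \eqref{eq:E_def} (and note that passing from pure to mixed $\rho\in\states_E$ in (a) $\Rightarrow$ (c) uses the pure-state decomposition of \cref{lem:elementary}). Second, the monotone limit form is closed but not a priori densely defined; you need the inclusion $\dom\sqrt G_{\!A}\subseteq\dom q_B$ — i.e.\ hypothesis (b) on pure states — \emph{before} invoking the representation theorem, so state it in that order. Relatedly, your $T^*(G_B)$ (the operator of the full closed form) may differ from the paper's (the closure of the form restricted to $\dom\sqrt G_{\!A}$, \cref{thm:def_TG}) outside $\dom\sqrt G_{\!A}$; the two agree on $\dom\sqrt G_{\!A}$, which is all that the ordering \eqref{eq:order} tests, so the statement is unaffected, but the identification deserves a sentence. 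Third, $E_0=C^2-1\ge0$ requires $C\ge1$, which holds without loss of generality since enlarging $C$ preserves the bound.
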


A quantum channel $T$ is called \emph{energy-limited} if it satisfies these equivalent properties \cite{Winter_EnergyConstrDiamond_2017}.
The Lemma shows that if a channel is not energy-limited, then infinite output energies exist even at arbitrarily small input energies.
Thus, physically meaningful channels must be necessarily energy-limited.
The expression $T^*(G_B)$ in \eqref{eq:intro_EL1}, where the dual channel acts on an unbounded operator, is defined as a positive self-adjoint operator using the Stinespring dilation of energy-limited channels (see \cref{sec:channels}).
By definition, $f_T(E)$ is the maximal output energy of the channel $T$ if the input energy is constrained by $E$.
It is a concave nondecreasing function of the threshold energy $E$, and can equivalently be characterized as:
\begin{equation}\label{eq:intro_fT_dual}
    f_T(E) = \min \Big\{ \lambda E+E_0 : \lambda,E_0\ge0 \ \ \text{s.t.}\ \ T^*(G_B) \le \lambda G_A+E_0 \Big\},
\end{equation}
where $T^*$ denotes the dual (Heisenberg picture) channel.
Thus, we can estimate the output energy of a channel $T$ by studying operator inequalities in the Heisenberg picture.

Energy-limited quantum channels behave naturally in the context of the energy-constrained operator and diamond norms of Shirokov and Winter \cite{Winter_EnergyConstrDiamond_2017,shirokov2018,shirokov2020extension}.
We observe that the energy-constrained diamond norm $\norm{}_{\diamond,E}$ satisfies the following submultiplicativity-type estimate with respect to energy-limited channels
\begin{equation}\label{eq:intro_submultiplicativity}
    \norm{ST}_{\diamond,E} \le \norm S_{\diamond,f_T(E)}\le \tfrac{f_T(E)}E \, \norm S_{\diamond,E},
\end{equation}
where $S$ is a $^*$-preserving map, e.g., the difference of two channels, and $T$ is an energy-limited channel.
Similarly, the energy-constrained operator norm satisfies
\begin{equation}\label{eq:intro_submultiplicativity2}
    \norm{AU}_{\op,E} \le \norm A_{\op,f_U(E)} \le \sqrt{\frac{f_U(E)}E} \norm{A}_{\op,E},
\end{equation}
where $A$ is an operator on $\H$, $U$ is a unitary and $f_U(E):=f_{T_U}(E)$ with $T_U(\rho)= U\rho U^*$.
These estimates can be used to lift bounds on the distance of quantum dynamics (or products thereof) from the finite-dimensional case to the infinite-dimensional one. 
Indeed, we apply \eqref{eq:intro_submultiplicativity} and \eqref{eq:intro_submultiplicativity2} to obtain error bounds for quantum speed limits and convergence rates Lie-Trotter products.
These bounds scale with the maximal output energy $f_T(E)$.
Similarly, the continuity bounds on energy-constrained channel capacities obtained in \cite{Winter_EnergyConstrDiamond_2017,shirokov2018} require upper bounds on $f_T(E)$. 
Therefore, we can only obtain sharp estimates if we track the output energy carefully.

The main goal of this paper is to develop a theory of energy-limitedness for continuous-time dynamics.
For the reasons indicated above, it is essential to understand the energy increase, in particular, as a function of time.
Let us begin with general open quantum systems.
We say that a quantum time evolution $\rho\to \rho(t)$ is energy-limited if the output energy is bounded linearly for small times:
\begin{equation}\label{eq:intro_first_order}
    \energy[\rho(t)] \le \energy[\rho] + (\omega t+\order(t))(\energy[\rho]+E_0), \qquad 0<t\approx0,
\end{equation}
for all initial states $\rho$, where the "stability constants" $\omega, E_0$ are state-independent.
We show that this first-order bound, in fact, implies
\begin{equation}
    f_{T(t,s)}(E) \le E+ (e^{\omega(t-s)}-1)(E+E_0), \qquad t\ge s\ge0,
\end{equation}
where $T(t,s)$ is the quantum channel taking $\rho(s)$ to $\rho(t)$ for $t\ge s\ge0$. 
We mostly consider Markovian dynamics, fully described by the semigroup $T(t)$ of quantum channels implementing a time $t$ increment, i.e., $T(t)=T(t+t_0,t_0)$ for $t_0\ge0$.
If $\L$ is the infinitesimal generator of the quantum Markov semigroup $T(t)$, a naive expansion in powers of $t$ formally yields the operator inequality
\begin{equation}\label{eq:intro_LG}
    \L^*(G) \le \omega(G+E_0),
\end{equation}
where $\L^*$ denotes the infinitesimal generator of the Heisenberg-picture dynamics.
However, in infinite dimension, the expression $\L^*(G)$, where the (unbounded) dual generator is applied to an unbounded operator, is a priori not defined.
We carefully address these issues to arrive at our main result:

\begin{letterthm}[Informal]\label{thm:intro_main}
    Let $T(t)$ be a quantum Markov semigroup with generator $\L$.
    The following are equivalent:
    \begin{enumerate}[(a)]
        \item The dynamics is energy-limited with stability constants $\omega,E_0$, i.e., \eqref{eq:intro_first_order} holds.
        \item The operator inequality $\L^*(G)\le \omega(G+E_0)$ holds.
    \end{enumerate}
    In this case, the output energy is bounded by
    \begin{equation}\label{eq:intro_fT_bound}
        f_{T(t)}(E) \le E+ (e^{\omega t}-1)(E+ E_0), \qquad t\,,E>0.
    \end{equation}
\end{letterthm}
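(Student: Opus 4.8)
The plan is to establish the equivalence by differentiating the energy growth at $t=0$ for the direction (a)$\Rightarrow$(b), and by an operator-valued Grönwall argument for (b)$\Rightarrow$(a) that simultaneously delivers the quantitative bound \eqref{eq:intro_fT_bound}. Throughout I fix a finite-energy state $\rho$ and track the scalar function $u_\rho(t)=\energy[T(t)\rho]=\tr[G\,T(t)\rho]$; by \cref{lem:introEL} the Heisenberg action $T(t)^*(G)$ is a well-defined positive self-adjoint operator, so that $u_\rho(t)=\tr[T(t)^*(G)\rho]$. For (a)$\Rightarrow$(b) I would divide the first-order bound \eqref{eq:intro_first_order} by $t$ and send $t\to0^+$. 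Since $\L^*(G)$ is, by its definition via the Stinespring dilation (\cref{sec:channels}), the limit of $t^{-1}\bigl(T(t)^*(G)-G\bigr)$ in the relevant sense, the $\limsup$ of the left-hand side equals $\tr[\L^*(G)\rho]$, and \eqref{eq:intro_first_order} forces $\tr[\L^*(G)\rho]\le\omega\,\tr[(G+E_0)\rho]$. Letting $\rho$ range over all finite-energy states yields exactly the operator inequality $\L^*(G)\le\omega(G+E_0)$ of (b).

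The core is the converse. Assuming (b), I claim the operator bound
\begin{equation}
  T(t)^*(G)+E_0\;\le\;e^{\omega t}\,(G+E_0),\qquad t\ge0.
\end{equation}
Formally this is a Grönwall estimate: with $F(t)=T(t)^*(G)$ one has $\tfrac{d}{dt}F(t)=T(t)^*(\L^*(G))\le\omega\,T(t)^*(G+E_0)=\omega\,(F(t)+E_0)$, where the middle step uses that the Heisenberg dynamics is positive and unital ($T(t)^*(\1)=\1$, as $T(t)$ is trace-preserving) and hence preserves the order relation (b); thus $t\mapsto e^{-\omega t}(F(t)+E_0)$ is nonincreasing, which integrates to the display. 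Pairing with a state $\rho\in\states_E(\H)$ gives $u_\rho(t)\le e^{\omega t}E+(e^{\omega t}-1)E_0=E+(e^{\omega t}-1)(E+E_0)$; taking the supremum over $\states_E(\H)$ proves \eqref{eq:intro_fT_bound}, and expanding $e^{\omega t}-1=\omega t+\order(t)$ recovers \eqref{eq:intro_first_order}, i.e.\ (b)$\Rightarrow$(a).

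The main obstacle is that $G$ is unbounded, so the symbolic steps above — differentiating $T(t)^*(G)$, applying the positive map $T(t)^*$ to the unbounded operator $\L^*(G)$, and preserving the order relation — must be carried out at the level of states rather than operators. I would reduce everything to scalar Grönwall applied to $u_\rho$: by the semigroup law $T(t+s)\rho=T(s)\bigl(T(t)\rho\bigr)$ the upper-right Dini derivative is $D^+u_\rho(t)=\limsup_{s\to0^+}\tfrac1s\,\tr\bigl[(T(s)^*(G)-G)\,T(t)\rho\bigr]$, which (b) bounds by $\omega\,(u_\rho(t)+E_0)$. The delicate point — exchanging the $s\to0^+$ limit with the trace against the unbounded $G$ — is the crux, and I would handle it through the bounded truncations $G_N=\sum_{n\le N}\epsilon_n\,\kettbra{n}\nearrow G$: each $\tr[G_N\,T(t)\rho]$ is elementarily differentiable, monotone convergence recovers $u_\rho=\sup_N\tr[G_N\,T(\cdot)\rho]$, and the resulting lower semicontinuity together with the Dini bound feeds the scalar Grönwall lemma to produce $u_\rho(t)\le e^{\omega t}u_\rho(0)+(e^{\omega t}-1)E_0$ — precisely the pairing of the operator bound with $\rho$, closing the loop.
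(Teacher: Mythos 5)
Your proposal is the naive differentiate-and-Grönwall argument that the paper itself flags as merely formal, and the patches you sketch for the unbounded-operator issues do not close the gap. Concretely, in (b)$\Rightarrow$(a): to feed the scalar Grönwall lemma you need $D^+u_\rho(t)\le\omega(u_\rho(t)+E_0)$ along the whole trajectory, but the hypothesis is a form inequality available only for states in $\dom(\L\upharpoonright\dom\energy)$, and you never show that $T(t)\rho$ remains in this set; in the paper, the $T(t)$-invariance of $\dom(\L\upharpoonright\dom\energy)=\R(\lambda)\dom\energy$ is proved in \cref{thm:open_core} as a \emph{consequence} of energy-limitedness, i.e., of the very direction you are proving, so invoking it here is circular. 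Even granting membership in the domain, passing from $\tfrac1s(T(s)\sigma-\sigma)\to\L\sigma$ in trace norm to convergence of energies fails because $\energy$ is only lower semicontinuous — indeed not even closable (see the footnote in \cref{sec:setup}) — and your truncation fix does not rescue this: $u_N(t)=\tr[G_N T(t)\rho]$ is differentiable (for $\rho\in\dom\L$) with $u_N'(t)=\tr[G_N\,\L T(t)\rho]$, but hypothesis (b) gives no bound of $\tr[G_N\L\sigma]$ in terms of $\tr[G_N\sigma]$ (there is no truncated operator inequality $P_NG\,\L$-type bound), so there is no differential inequality uniform in $N$, and $u_\rho=\sup_N u_N$ need not be differentiable at all — Grönwall has nothing to act on. The direction (a)$\Rightarrow$(b) has a parallel defect: you assert that $\L^*(G)$ is "by its definition via the Stinespring dilation" the limit of $t^{-1}(T(t)^*(G)-G)$, but \cref{thm:def_TG} defines $T^*(G)$ only for a single energy-limited \emph{cp} map; $\L$ is not cp, $\L^*(G)$ has no such definition, and difference quotients of unbounded positive operators need not converge in any useful sense. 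The rigorous meaning of (b), spelled out in \cref{thm:main}, is the form inequality $\energy[\L\rho]\le\omega(\energy[\rho]+E_0\tr\rho)$ on $\dom(\L\upharpoonright\dom\energy)$ \emph{together with} the range condition $\dom\energy=(\lambda-\L)\dom(\L\upharpoonright\dom\energy)$, which your proposal never addresses — and without which the converse direction is unobtainable, since the form inequality could hold vacuously on too small a domain (the paper's proof of (c)$\Rightarrow$(b) applies the inequality precisely to $\R(\lambda)\rho$, which requires that range condition).

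The paper avoids every derivative at $t=0$ by working with resolvents: (a) implies the resolvent bound \eqref{eq:main2} via the Laplace transform \eqref{eq:laplace_trafo} and the monotone-convergence lemma \cref{thm:energy_integral}; the converse iterates \eqref{eq:main2} $n$ times and recovers the semigroup through $T(t)\rho=\lim_n(\tfrac nt\R(\tfrac nt))^n\rho$ (eq.~\eqref{eq:T_from_R}) combined with lower semicontinuity of $\energy$; and the generator condition is extracted by Abelian limits $\energy[\lambda\R(\lambda)\sigma]\to\energy[\sigma]$ rather than by differentiation. The point of this design is that all limit exchanges become monotone-convergence statements for positive integrands, which is exactly what lower semicontinuity supports and what your Dini-derivative scheme lacks. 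If you want to salvage your approach, you would have to first establish the invariance of $\dom(\L\upharpoonright\dom\energy)$ and the differentiability of $t\mapsto\energy[T(t)\rho]$ there independently of (a) — at which point you would essentially be reconstructing the resolvent argument.
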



In practice, Markovian dynamics are typically given through a Markovian Master Equation.
Lindblad famously showed that generators of uniformly continuous quantum Markov semigroups are of the form
    $\L(\rho) = K\rho +\rho K^* + \sum_\alpha L_\alpha\rho L_\alpha^*$,
where $K$ and $L_\alpha$ are bounded operators and $\sum_\alpha L_\alpha^*L_\alpha=-K^*-K$ \cite{lindblad}.%
\footnote{This is the standard form of Lindblad \cite{lindblad}, which relates to the GKLS form $\L\rho = -i[H,\rho] + \frac12\{ \sum_\alpha L_\alpha^*L_\alpha,\rho\} + \sum_\alpha L_\alpha\rho L_\alpha^*$ with $H=H^*$ via $K=-iH -\frac12\sum_\alpha L_\alpha^*L_\alpha$. We use Lindblad's version here because it is better suited for the infinite-dimensional setting and also covers non-conservative dynamics \cite{siemon_unbounded_2017,fagnola1999,holevo_dissipative_1996}.}
However, in infinite-dimensional systems, quantum dynamics are hardly ever uniformly continuous.
Generators that are formally given by Lindblad's formula -- with potentially unbounded $K$ and $L_\alpha$ -- are called standard generators \cite{siemon_unbounded_2017}. 
For such generators, the formal inequality \eqref{eq:intro_LG} simply reads
\begin{equation}\label{eq:intro_LG_std}
    K^*G+GK+\sum_\alpha L_\alpha^*GL_\alpha \le \omega(G+E_0).
\end{equation}
For finite-dimensional systems, no issues arise, and one can run a semidefinite optimization algorithm to find stability constants $\omega,E_0$ satisfying \eqref{eq:intro_LG}. 
However, in infinite dimensions, standard generators are quite subtle. For example, they might admit escape to infinity in finite time.
Imposing certain regularity assumptions, we show that \eqref{eq:intro_LG_std} indeed implies energy-limitedness of the quantum Markov semigroup generated by the corresponding standard generator $\L$ (see \cref{thm:standard}).
This allows us to obtain stability constants and check energy-limitedness of Markovian dynamics.

Let us now consider unitary dynamics generated by some Hamiltonian $H$.
We say that the unitary group $U(t)=e^{-itH}$ is energy-limited if 
\begin{equation}\label{eq:intro_fU_bound}
    f_{U(t)}(E) \le E+ (e^{\omega \abs t}-1)(E+E_0), \qquad E>0,\ t\in\RR,
\end{equation}
which may be characterized by a first-order condition similar to \eqref{eq:intro_first_order}.
Notice that we require an upper bound on the output energy in both time directions.
Bounding the energy increase of the backward dynamics $U(-t)$ is the same as bounding the energy loss of the forward dynamics $U(t)$.
For unitary dynamics, \cref{thm:intro_main} takes the form:


\begin{letterthm}[Informal]\label{thm:intro_unitary}
    Let $H$ be a self-adjoint operator on $\H$.
    The following are equivalent:
    \begin{enumerate}[(a)]
        \item The unitary  is energy-limited with stability constants $\omega, E_0\ge0$, i.e., \eqref{eq:intro_fU_bound} holds.
        \item The operator inequality $\pm i [H,G]\le \omega(G+E_0)$ holds.
    \end{enumerate}
\end{letterthm}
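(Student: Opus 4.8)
The plan is to deduce the statement from the Markovian result \cref{thm:intro_main} by regarding the unitary group $U(t)=e^{-itH}$ as a pair of quantum Markov semigroups running in opposite time directions. Set $T_+(t)\rho = U(t)\rho U(t)^*$ and $T_-(t)\rho = U(-t)\rho U(-t)^* = U(t)^*\rho U(t)$ for $t\ge 0$. Both are strongly continuous semigroups of trace-preserving (indeed unitary) quantum channels, so \cref{thm:intro_main} applies to each. A direct computation of the Heisenberg-picture generators gives $\L_\pm^*(X)=\pm\,i[H,X]$, so the operator inequality of \cref{thm:intro_main}(b) reads $+i[H,G]\le\omega(G+E_0)$ for $T_+$ and $-i[H,G]\le\omega(G+E_0)$ for $T_-$. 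Their conjunction is exactly the two-sided inequality $\pm i[H,G]\le\omega(G+E_0)$ of statement (b).

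To tie this to (a), I would use that $f_{U(t)}$ is by definition the maximal-output-energy function of the channel $\rho\mapsto U(t)\rho U(t)^*$, so $f_{U(t)}=f_{T_+(t)}$ for $t\ge0$ and $f_{U(t)}=f_{T_-(\abs t)}$ for $t\le 0$. Since the bound \eqref{eq:intro_fU_bound} depends on $t$ only through $\abs t$, it holds for all $t\in\RR$ with a given pair $(\omega,E_0)$ if and only if \emph{both} $T_+$ and $T_-$ satisfy \eqref{eq:intro_fT_bound} for $t\ge0$ with the \emph{same} pair $(\omega,E_0)$. Invoking the equivalence (a)$\Leftrightarrow$(b) of \cref{thm:intro_main} for each semigroup and combining with the previous paragraph then closes the loop: \eqref{eq:intro_fU_bound} $\iff$ [$T_+$ and $T_-$ energy-limited with $(\omega,E_0)$] $\iff$ $\pm i[H,G]\le\omega(G+E_0)$. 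The single constant pair causes no difficulty precisely because \eqref{eq:intro_fU_bound} is symmetric under $t\mapsto -t$.

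The main obstacle is giving rigorous meaning to the unbounded commutator $i[H,G]$ and to the inequality (b), and identifying it with the object $\L^*(G)$ furnished by \cref{thm:intro_main}. For a unitary channel the Stinespring dilation is trivial, so $T_\pm(t)^*(G)=U(\mp t)\,G\,U(\pm t)$ is again a positive self-adjoint operator, and $\L_\pm^*(G)$ is to be read as the form derivative $\tfrac{d}{dt}\big|_{t=0}\tr[U(\mp t)GU(\pm t)\rho]=\pm\tr[i[H,G]\rho]$ on a suitable core of finite-energy states; I would fix a dense domain on which $H$ and $G^{1/2}$ interact well (e.g.\ analytic vectors or the form domain of $G$) and interpret $\pm i[H,G]\le\omega(G+E_0)$ as an inequality of quadratic forms there. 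Granting this, the underlying engine is a Grönwall estimate: writing $e(t)=\tr[U(t)^*GU(t)\rho]$ one has $\tfrac{d}{dt}e(t)=\tr[U(t)^*\,i[H,G]\,U(t)\rho]\le\omega(e(t)+E_0)$ under (b) (using $U(t)^*\1 U(t)=\1$), which integrates to $e(t)\le e(0)+(e^{\omega t}-1)(e(0)+E_0)$, i.e.\ \eqref{eq:intro_fU_bound} for $t\ge0$; conversely, differentiating \eqref{eq:intro_fU_bound} at $t=0$ and ranging over all finite-energy $\rho$ returns the form inequality. Since this mechanism is the one already behind \cref{thm:intro_main}, the only genuinely new work in the unitary case is the bookkeeping of the two time directions and the verification that the regularity hypotheses of \cref{thm:intro_main} hold for $T_\pm$.
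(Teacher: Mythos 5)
Your forward/backward decomposition is exactly the paper's: energy-limitedness of a unitary group is \emph{defined} there as joint energy-limitedness of the two semigroups $T_\pm(t)=U(\pm t)(\placeholder)U(\pm t)^*$, with \cref{lem:joint_constants} absorbing the constants into a single pair $(\omega,E_0)$, and your Gronwall estimate for (b)~$\Rightarrow$~(a) together with differentiation at $t=0$ for (a)~$\Rightarrow$~(b) is the same mechanism that drives the paper's \cref{thm:contraction}. Where you diverge is the machinery: you propose to route through the trace-class Markovian theorem (\cref{thm:intro_main}, formally \cref{thm:main}) applied to $T_\pm$, whereas the paper deliberately avoids this and instead proves a Hilbert-space-level contraction-semigroup version (\cref{lem:contraction_core} and \cref{thm:contraction}, applied to the generators $K=\mp iH$), from which \cref{thm:unitary} is immediate. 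The reason this matters is that the hypotheses of \cref{thm:main} concern the generator $\L_\pm$ of the induced semigroup on $\T(\H)$ — in particular the domain condition $\dom\energy=(\lambda-\L)\dom(\L\upharpoonright\dom\energy)$ — and the domain of $\L_\pm$ has no tractable description in terms of $\dom H$; verifying those hypotheses for $T_\pm$ is essentially as hard as the unitary statement itself, so your reduction does not actually save work.

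The genuine gap is in your (b)~$\Rightarrow$~(a) direction. The Gronwall computation on $e(t)=\energy[U(t)\psi]$ requires this function to be differentiable, which presupposes that $U(t)$ leaves $\dom\sqrt G$ invariant and that the restricted group is strongly continuous in the $\sqrt G$-graph norm. This is \emph{not} a consequence of the form inequality $\pm i[H,G]\le\omega(G+E_0)$; it is condition (i) of the paper's formal statement \cref{thm:unitary}, i.e., part of the equivalence rather than something one may grant. Your phrase "fix a dense domain on which $H$ and $G^{1/2}$ interact well" conceals the main technical content: without further hypotheses the set $\dom(H\upharpoonright\dom\sqrt G)$ could fail to be a core, and the two-sided form inequality on it would then not imply \eqref{eq:intro_fU_bound}. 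To bootstrap from the commutator inequality on a core of $G$ (with $H$ being $G$-bounded there) to the semigroup bound, the paper needs the Fröhlich-style regularization of \cref{thm:unitary2}/\cref{thm:dissipative}: replace the generator $K$ by $K_\eps=R_\eps KR_\eps$ with $R_\eps=(1+\eps G)^{-1}$, verify energy-limitedness for the bounded regularized semigroup via \cref{thm:contraction}, and pass to the limit by Trotter--Kato together with lower semicontinuity of $\energy$ (item \ref{it:energy_lim4} of \cref{lem:energy_lim}). Your outline is faithful to the paper's strategy, but the converse direction as written has a hole exactly where the paper invests its effort.
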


\paragraph{Energy-limitedness in bosonic systems.}
Consider bosonic systems with $n$ modes where the reference Hamiltonian is the number operator.
All Gaussian quantum channels and Gaussian quantum Markov dynamics are energy-limited (see \cref{sec:gaussian}).
The latter have generators of the form
\begin{equation}
    \L(\rho) = \frac12 \sum_{jk} \Big( m_{jk} \big(R_j [\rho,R_k] +[R_j,\rho]R_k \big) + h_{jk} [R_j R_k,\rho] \Big),
\end{equation}
with matrices $0\le m\in M_{2n}(\CC)$, $h=h\trp\in M_{2n}(\RR)$, where $R$ is the vector of canonical operators.
Stability constants can be computed directly from the matrices $m$ and $h$ (see \cref{sec:quantization}).
Using \cref{thm:intro_unitary}, we establish energy-limitedness of the unitary dynamics generated by coherent state quantizations
\begin{equation}
    H = (2\pi)^{-n} \int_{\RR^{2n}} h(\alpha)\, \kettbra \alpha\,d\alpha,
\end{equation}
of functions $h:\RR^{2n}\to\RR$ with uniformly bounded second derivatives, where $\ket\alpha$, $\alpha\in\RR^{2n}$, denotes the family of coherent states in $L^2(\RR^n)$.
This extends to coupled systems: If $h$ is hermitian matrix-valued with uniformly bounded second derivatives, then $H = \int_{\RR^{2n}} h(\alpha)\ox \kettbra \alpha\,d\alpha$ generate energy-limited unitary dynamics on $L^2(\RR^n;\CC^d)$ (see \cref{prop:coherent_state_quant}).
This class of interacting Hamiltonian includes the quantum Rabi model
\begin{equation}
    H=\Omega a^\dagger a+ g\sigma_x(a^\dagger+a) +\nu \sigma_z
\end{equation}
as well as all other such Hamiltonians with interactions linear in $a$ and $a^\dagger$.
This shows energy-limitedness for the dynamics of most closed-system models considered in quantum optics.

\paragraph{Continuity bounds for closed systems.}
To demonstrate how energy-limited dynamics, the submultiplicativity estimate \eqref{eq:intro_submultiplicativity2} and the energy-constrained operator norm can be used for a quantitative analysis of dynamical problems in closed systems, we consider the quantum speed limit.
Given Hamiltonians $H_1$, $H_2$ and a pure state $\psi$, we seek an upper bound on $\norm{e^{-itH_1}\psi-e^{-itH_2}\psi}$ for small times.
Using the operator norm and the usual integration-differentiation trick, we get an upper bound $\abs t \norm{H_1-H_2}$ that works for all states $\psi$. 
In large systems, the operator norm on the right-hand side can be huge. In infinite-dimensional systems, it is typically infinite, making the upper bound useless.
When the state $\psi$ is not known, it is often concluded that this operator norm estimate is optimal since the operator norm bound is always tight in first order on some state $\psi$.
However, if we can bound the energy of the system, we can use this knowledge to get a better bound valid for all states whose energy is in agreement with our estimation:
Indeed, we show
\begin{equation}\label{eq:intro_qsl}
    \norm{e^{-itH_1}\psi-e^{-itH_2}\psi} \le \abs t \norm{H_1-H_2}_{\op,f_t(E)}
\end{equation}
where $E$ is the energy of the state $\psi$, $f_t(E)=E+(e^{\omega t}-1)(E+E_0)$ and $\omega,E_0$ are stability constants for one of the two dynamics.
Note that the right hand side equals $ \abs t \norm{H_1-H_2}_{\op,E}$ up to an error $\Order(t^2)$.
In the infinite-dimensional case, \eqref{eq:intro_qsl} requires mild regularity assumption (see \cref{prop:qsl1} for details).
If $\psi$ is a low-energy state and if the dynamics of $H_1$ and $H_2$ are energy-limited, this bound significantly outperforms the operator norm bound because $\norm{H_1-H_2}_{\op,E}$ is then much smaller than the operator norm. 
This is confirmed by our numerics (see \cref{fig:speed_lim}).
The same techniques are used in \cite{becker_convergence_2024} to derive the following convergence rates for the Trotter product formula
\begin{equation}\label{eq:unitary_trotter_rate}
    \norm{\big(e^{i\frac tnH_1}e^{i\frac tnH_2}\big)^n\psi-e^{it(H_1+H_2)}\psi}
    \le \frac{t^2}{2n} \norm{[H_1,H_2]}_{\op,f_{2t}(E)},
\end{equation}
where $f_t$ is defined as above with $\omega,E_0$ joint stability constants (see \cite[Thm.~3.5]{becker_convergence_2024} for details).

\begin{figure}[htp!]
\usepgfplotslibrary{groupplots}
\centering
\begin{tikzpicture}
\pgfplotsset{width=\textwidth, height=0.8\textwidth,legend cell align={left}, legend columns=3, label style={font=\footnotesize}, legend style={font=\footnotesize}, tick label style={font=\footnotesize},}
\begin{groupplot}[group style={group size= 2 by 1},height=5.1cm,width=6.5cm]
    \nextgroupplot[xlabel= time,ylabel=norm distance, scaled ticks=false, tick label style={/pgf/number format/fixed},xmin=0,xmax=.6,ymin=0,ymax=2,legend to name=zelda]
        \addplot[thick,color=BurntOrange] table[x=time,y=actualError,col sep=comma] {energyL.csv}; \addlegendentry{actual error\phantom a}; 
        \addplot[thick,color=OliveGreen] table[x=time,y=energyBound,col sep=comma] {energyL.csv}; \addlegendentry{our bound\phantom a};
        \addplot[thick,color=MidnightBlue] table[x=time,y=uniformBound,col sep=comma] {energyL.csv}; \addlegendentry{op.\ norm bound};
    \nextgroupplot[xlabel= time, scaled ticks=false, tick label style={/pgf/number format/fixed}, yticklabel=\empty,xmin=0,xmax=.6,ymin=0,ymax=2]
        \addplot[thick,color=BurntOrange] table[x=time,y=actualError,col sep=comma] {energyR.csv}; 
        \addplot[thick,color=OliveGreen] table[x=time,y=energyBound,col sep=comma] {energyR.csv}; 
        \addplot[thick,color=MidnightBlue] table[x=time,y=uniformBound,col sep=comma] {energyR.csv}; 
\end{groupplot}\end{tikzpicture}
\\[5pt]\ref*{zelda}
    \caption{Numerical comparison of $\norm{e^{-itH_1}\psi-e^{-itH_2}\psi}$, the first order of our bound \eqref{eq:intro_qsl}, and the operator norm bound $t\norm{H_1-H_2}$ for the quantum speed limit problem.
    The system is $\H=(\CC^2)^{\otimes 7}$ with reference Hamiltonian $S_x^2+S_y^2+S_z^2$ minus its ground state energy, where $S_j = \sum_{k=1}^7 1^{\ox k-1}\ox \sigma_j\ox1^{\ox N-k}$.
    To obtain a state with relatively small energy, we take a weighted superposition $\psi = c(\Omega +\frac12\phi)$ of the ground state $\Omega$ and a Haar randomly chosen state $\phi$ ($c$ is a normalizing constant).
    On the left, the Hamiltonians are $H_1=S_x+R_1$ and $H_2=S_y+R_2$, where $R_1$ and $R_2$ are random hermitian matrices of operator norm $\norm{R_i}=\tfrac12$. These generate little energy, and we see that our bound \eqref{eq:intro_qsl} is much better than the operator norm bound.
    On the right, we have $H_1=S_x$ and $H_2=R$ is a random hermitian matrix with $\norm{R}=\norm{S_x}=7$. Even though $H_2$ generates a lot of energy, our bound is still better than the operator norm bound, but the benefit is not that large.}
    \label{fig:speed_lim}
\end{figure}

State-dependent continuity bounds such as \eqref{eq:intro_qsl} or \eqref{eq:unitary_trotter_rate} are of interest in both finite-dimensional and infinite-dimensional systems.
To apply them, one needs to identify a reference energy scale so that the two dynamics do not generate too much energy and so that the given state $\psi$ has low energy.
The energy-constrained operator norm appearing on the right-hand side can be estimated through the semidefinite minimization problem (see \cref{thm:dual_prob_Enorm})
\begin{align}\label{eq:intro_sdp}
    \norm{A}_{\op,E}^2 &= \min \big\{\lambda E + E_0 : \lambda,E_0\ge0\ \text{s.t.}\ A^*A \le \lambda G+E_0\big\}.
\end{align}

\paragraph{Continuity bounds for open systems.}

By similar techniques, the results of the previous paragraph can also be obtained for open quantum systems, where we use the energy-constrained diamond norm.
For instance, we derive that if $\L_1$ and $\L_2$ are generators of quantum Markov semigroups, then
\begin{equation}\label{eq:intro_qsl2}
    \norm{e^{t\L_1}-e^{t\L_2}}_{\diamond,E} \le t\norm{\L_1-\L_2}_{\diamond,f_t(E)},
\end{equation}
with $f_t$ as above for stability constants $\omega,E_0$ for one of the two dynamics.
The proof for convergence rates of the Trotter product formula in \cite{becker_convergence_2024} can be adapted to open quantum systems, giving
\begin{align}\label{eq:intro_trotter_bound}
    \norm{\big(T_1(\tfrac tn)T_2(\tfrac tn)\big)^n - T(t)}_{\diamond,E} 
    \le \frac{t^2}{2n}\norm{[\L_1,\L_2]}_{\diamond,f_{2t}(E)} 
\end{align}
where $\omega,E_0$ need to be joint stability constants (see \cref{sec:trotter} for details).

Finally, we mention that the methods developed here solve an open problem posed by Becker and Datta \cite{becker_convergence_2024}, which asks for methods to estimate the maximal output energy at a given energy constraint in open quantum systems.
Together with the results of \cite{becker_convergence_2024} it is then possible to bound the rate at which information can spread in continuous variable systems, as explained in \cite[Sec.~8]{becker_convergence_2024}.
The upper bound on the maximal output energy is necessary to apply the continuity bounds for energy-constrained channel capacities due to Shirokov and Winter \cite{shirokov2018,Winter_EnergyConstrDiamond_2017}.

\paragraph{Acknowledgements.}
The author warmly thanks Simon Becker, Daniel Burgarth, Alexander Hahn, Niklas Galke, Robert Salzmann, Maksim Shirokov, Alexander Stottmeister, Reinhard F.\ Werner, Henrik Wilming, and Andreas Winter for many interesting discussions, spotting errors, and helpful comments.
The author acknowledges funding by the MWK Lower Saxony via the Stay Inspired Program (Grant ID: 15-76251-2-Stay-9/22-16583/2022).

\paragraph{Declarations.}
No data was generated or processed in this work.

\paragraph{Notations and conventions.}
We do not use Dirac notation, with the exception that we write $\ketbra\psi\phi$ for the linear operator $\xi\mapsto \ip\phi\xi\,\psi$ and use $\ket n$ to denote orthonormal bases.
The algebra of bounded operators on a Hilbert space $\H$ is denoted $\B(\H)$, and the trace-class is denoted $\T(\H)$.
The operator and trace norm are denoted $\norm\placeholder$ and $\norm\placeholder_1$, respectively.
We use the convention that the trace, denoted "$\tr$", is defined on all positive operators but may be infinite.
The set of density operators, i.e., positive operators with unit trace, is denoted $\states(\H)$.
The domain of an unbounded operator $A$ will be denoted $\dom A$, and the graph norm on $\dom A$ is denoted $\norm\psi_A = \sqrt{\norm\psi^2 + \norm{A\psi}^2}$.
Positive cones of ordered vector spaces $(X,\le)$ are denoted $X^+$.
The algebraic tensor product of topological spaces is denoted with the symbol "$\odot$" to distinguish it from Banach space tensor products.

\section{Quantum systems with energy reference}\label{sec:kin}

\subsection{setup}\label{sec:setup}

We present and extend the kinematical setup of quantum systems with reference energy scales, which was systematically developed by Winter and, especially, Shirokov \cite{Winter_EnergyConstrDiamond_2017,shirokov2019,shirokov2020Enorm,weis_extreme_2021}. 

A \emph{reference Hamiltonian} for a quantum system described by a Hilbert space $\H$ is a self-adjoint positive operator $G\ge0$ on $\H$ with vanishing ground state energy:
\begin{equation}\label{eq:ground state_energy}
    \inf\Sp(G) = 0.
\end{equation}
We can assume it without loss of generality because we are not interested in absolute energy but rather in the energy relative to the ground state energy.
In general, we do not assume the reference Hamiltonian to be discrete but add this as an extra assumption if needed.
The energy of a state $\rho\in\states(\H)$ is given by
\begin{equation}\label{eq:E_def}
    \energy[\rho] := \lim_n \tr[P_n G\rho] \in \bar\RR^+,
\end{equation}
where $P_n$ is the spectral projection of $G$ onto the interval $[0,n]$.\footnote{The naive definition "$\tr[\rho G]$" via an orthonormal basis cannot be applied because $\rho G$ is an unbounded operator.}
The \emph{energy-constrained state space} is then defined as
\begin{equation}
    \states_E(\H) := \big\{\rho\in\states(\H) : \energy[\rho]\le E\big\},\qquad E>0.
\end{equation}
Note that $\states_E(\H)$ is a convex set,  monotonically increasing in $E$.
The set of all finite-energy states is denoted
\begin{equation}
    \states_{<\oo}(\H):= \big\{\rho\in\states(\H): \energy[\rho]<\oo\big\} = \bigcup_{E>0}\states_E(\H).
\end{equation}
Note that a state $\rho$ has finite energy if and only if $\sqrt G\rho\sqrt G\in\TC$.
It will be useful to extend $\energy$ to a linear functional, the \emph{energy functional}, on the domain 
\begin{equation}\label{eq:domE}
    \dom\energy  := \lin \states_{<\oo}(\H)=(G+1)^{-\frac12} \TC (G+1)^{-\frac12}
\end{equation}
via
\begin{equation}\label{eq:energy_functional}
    \energy[\rho] = \tr\!\big[\sqrt G\rho\sqrt G\big], \qquad \rho\in\dom\energy.
\end{equation}
On positive elements $0\le\rho\in \dom\energy$, this definition agrees with \eqref{eq:E_def}. A positive operator $\rho\in\TC^+$ is in $\dom\energy$ if and only if it is proportional to a finite-energy state.
A rank one operator $\ketbra\psi\phi$ is in $\dom\energy$ if and only if $\psi,\phi\in\dom\sqrt G$.
Abusing notion, we shall write $\energy[\psi]$ for $\energy[\ketbra\psi\psi]$ for vectors $\psi\in\H$.
Note that 
\begin{equation}\label{eq:energy_functional_psi}
    \energy[\psi]=\ \begin{cases} \ \norm{\sqrt G\psi}^2,& \qquad\text{if } \psi\in\dom\sqrt G\\[5pt] \ +\oo,&\qquad\text{else.}\end{cases}
\end{equation}
With eq.~\eqref{eq:E_def} we can define $\energy[\rho]$ in $\bar\RR^+$ for general $\rho\in\TC^+$. 
The situation is similar to that of the integral in Lebesgue theory: The energy functional makes sense either on the cone of general positive elements (cp.\ positive measurable functions) where it may be infinite or on the linear span of the finite-energy states (cp.\ the $L^1$ space).
A convenient fact that we use many times throughout this work is the \emph{lower semicontinuity} of $\energy$ on $\TC^+$:
\begin{equation}\label{eq:lsc}
    \energy[\lim_n\rho] \le \liminf_n \energy[\rho_n]
\end{equation}
for all norm convergent sequences $(\rho_n)$ of positive trace-class operators. 
This follows directly from \eqref{eq:E_def}, which expresses $\energy$ as a pointwise supremum of linear functions.

\begin{lemma}\label{lem:elementary}
\begin{enumerate}[(1)]
    \item\label{it:elementary1}
        For all $E_0>0$, $\dom\sqrt G = \dom\sqrt{G+E_0}$.
        If a subspace $\D\subseteq \dom G$ is a core for $G$, it is also a core for $\sqrt G$.
    \item\label{it:elementary2}
        Let $\rho$ be a state and let $\rho=\sum_\alpha\lambda_\alpha\kettbra{\psi_\alpha}$ be any (countable) decomposition into pure states. Then
        \begin{equation}
            \energy[\rho] = \sum_\alpha \lambda_\alpha \energy[\psi_\alpha]
        \end{equation}
        where both sides may be infinite. In particular, each $\psi_\alpha$ has finite energy if $\rho$ does.
    \item\label{it:elementary3}
        Let $(X,\mu)$ be a measure space. If $\rho:X\to\TC^+$ is a measurable (Bochner) integrable map, then
        \begin{equation}\label{eq:elementy3}
            \energy\bigg[\int_X \rho(x) \,d\mu(x)\bigg] = \int_X \energy[\rho(x)]\,d\mu(x)
        \end{equation}
        where both sides may be infinite.
\end{enumerate}
\end{lemma}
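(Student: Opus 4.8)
The plan is to handle (1) by direct spectral calculus and to reduce both (2) and (3) to the monotone convergence theorem. The common engine is the representation $\energy[\sigma]=\sup_n\tr[GP_n\sigma]$ for $\sigma\in\TC^+$, which follows from the definition \eqref{eq:E_def} together with $P_nG=GP_n$. Here $GP_n$ is the spectral truncation of $G$ to energies at most $n$, a bounded positive operator with $\norm{GP_n}\le n$, and the operators $GP_n$ increase monotonically to $G$; consequently each functional $\sigma\mapsto\tr[GP_n\sigma]$ is bounded and positive, and the sequence increases in $n$. Parts (2) and (3) then follow by interchanging $\sup_n$ with a sum resp.\ an integral.

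For (1) I would use the finite scalar spectral measures $\mu_\psi(S)=\ip\psi{\mathbf 1_S(G)\psi}$, which have total mass $\mu_\psi([0,\oo))=\norm\psi^2$. One has $\psi\in\dom\sqrt G$ iff $\int_0^\oo\lambda\,d\mu_\psi(\lambda)<\oo$ and $\psi\in\dom\sqrt{G+E_0}$ iff $\int_0^\oo(\lambda+E_0)\,d\mu_\psi(\lambda)<\oo$; since $\int_0^\oo E_0\,d\mu_\psi=E_0\norm\psi^2$ is always finite, the two conditions coincide, proving $\dom\sqrt G=\dom\sqrt{G+E_0}$. For the core statement I would factor through $\dom G$. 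First, $\dom G$ is itself a core for $\sqrt G$: for $\psi\in\dom\sqrt G$ the truncations $P_n\psi$ lie in $\dom G$ and satisfy $\norm{\sqrt G(P_n\psi-\psi)}^2=\int_{(n,\oo)}\lambda\,d\mu_\psi\to0$ and $\norm{P_n\psi-\psi}^2\to0$, so $P_n\psi\to\psi$ in the $\sqrt G$-graph norm. Second, on $\dom G$ the $\sqrt G$-graph norm is dominated by the $G$-graph norm, because $\norm{\sqrt G\psi}^2=\ip\psi{G\psi}\le\norm{G\psi}\,\norm\psi\le\tfrac12(\norm{G\psi}^2+\norm\psi^2)$. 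Hence if $\D$ is $G$-graph dense in $\dom G$ it is also $\sqrt G$-graph dense in $\dom G$, and composing with the first step shows $\D$ is $\sqrt G$-graph dense in $\dom\sqrt G$, i.e.\ a core for $\sqrt G$.

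For (2) and (3) I would run the identical template. Since $GP_n$ is bounded, the functional $\sigma\mapsto\tr[GP_n\sigma]$ is norm-continuous and therefore commutes with the trace-norm convergent sum $\rho=\sum_\alpha\lambda_\alpha\kettbra{\psi_\alpha}$ and with the Bochner integral $\int_X\rho(x)\,d\mu(x)$, giving $\tr[GP_n\rho]=\sum_\alpha\lambda_\alpha\ip{\psi_\alpha}{GP_n\psi_\alpha}$ resp.\ $\tr[GP_n\int_X\rho\,d\mu]=\int_X\tr[GP_n\rho(x)]\,d\mu(x)$. In both cases the summand/integrand is nonnegative and, because $GP_n$ increases, nondecreasing in $n$, with pointwise limit $\lambda_\alpha\energy[\psi_\alpha]$ resp.\ $\energy[\rho(x)]$. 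Applying $\sup_n$ and interchanging it with the sum (Tonelli for counting measure) resp.\ the integral (monotone convergence) yields the two identities, both sides permitted to equal $+\oo$. Measurability of $x\mapsto\tr[GP_n\rho(x)]$ comes from continuity of the functional and measurability of $\rho$, and that of $x\mapsto\energy[\rho(x)]$ from its being a pointwise supremum of measurable functions. The final assertion of (2) is then immediate: finiteness of $\sum_\alpha\lambda_\alpha\energy[\psi_\alpha]$ forces every term $\lambda_\alpha\energy[\psi_\alpha]$ to be finite. In fact (2) is exactly the counting-measure instance of (3), once one notes the homogeneity $\energy[\lambda\,\kettbra{\psi}]=\lambda\,\energy[\psi]$ and that trace-norm summability of the $\lambda_\alpha$ is the Bochner integrability hypothesis.

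None of the three parts conceals a genuine difficulty; the entire effort is infinite-dimensional bookkeeping. I expect the most delicate point to be the core transfer in (1), where one must route the argument through $\dom G$ and apply the graph-norm comparison in the correct direction; in (2) and (3) the only things needing care are the monotonicity of the operators $GP_n$, which legitimizes the monotone-convergence interchange, and the measurability of the integrands in (3).
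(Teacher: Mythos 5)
Your proposal is correct and follows essentially the same route as the paper's proof: the representation $\energy[\sigma]=\sup_n\tr[GP_n\sigma]$ via the increasing spectral truncations, the monotone convergence theorem to interchange the supremum with the sum in (2) and the integral in (3), and the spectral theorem for (1) (which the paper dispatches in one line via multiplication operators, while you spell out the core transfer through $\dom G$ explicitly). Even your measurability argument in (3) — $\energy$ as a pointwise supremum of norm-continuous functionals — is the same observation the paper invokes under the name of lower semicontinuity.
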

\begin{proof}
    We denote by $G_n$ the truncation of $G$ onto the spectral interval $[0,n]$. Note that $G_1\le G_2\le \ldots$ and that $\energy[\rho]=\lim_n \tr[G_n\rho]$ for $\rho\in\dom\energy$.
    The first item is clear for multiplication operators. Thus, the general case follows from the spectral theorem.
    \cref{it:elementary2} follows from the monotone convergence theorem: $\energy[\rho] = \lim_n \tr[G_n\rho] = \lim_n \sum_\alpha\lambda_\alpha \ip{\psi_\alpha}{G_n\psi_\alpha}= \sum_\alpha \lambda_\alpha \energy[\psi_\alpha]$.

    \ref{it:elementary3}:
    Since $\energy$ is lower semicontinuous, the map $x\mapsto \energy[\rho(x)]$ is measurable.
    The functions $f_n:X\to\RR^+$, $f_n(x) = \tr[G_n\rho(x)]$ are measurable and $f_1\le f_2\le \ldots$ is monotonically increasing.
    By definition of the energy functional, the pointwise limit $f(x):=\lim_{n\to\oo} f_n(x)$ is given by $f(x)=\energy[\rho(x)]$. Therefore the monotone convergence theorem implies $\energy[\int \rho(x)\,d\mu(x)]
        =\lim_n \tr[G_n \int \rho(x)\,d\mu(x)]= \lim_{n\to\oo}\int f_n(x)\,d\mu(x) = \int f(x)\,d\mu(x)=\int\energy[\rho(x)]\,d\mu(x)$.
\end{proof}

 Next, we mention two results on the structure of the energy-constrained state spaces.

\begin{lemma}[Shirokov-Weis {\cite{weis_extreme_2021}}]\label{lem:weis}
    The extremal points of $\states_E(\H)$ are pure states, and $\states_E(\H)$ is the closed convex hull of its extreme points.
    If $f:\states_E(\H)\to \RR$ is a lower semicontinuous convex function, then 
    \begin{equation}
        \sup_{\rho\in\states_E(\H)} f(\rho) = \sup_{\substack{\norm\psi=1\\\energy[\psi]\le E}}f(\kettbra\psi), \qquad E>0.
    \end{equation}
\end{lemma}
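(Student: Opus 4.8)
The plan is to prove the two assertions in turn, reducing the supremum formula to the extreme-point description by a Krein--Milman argument.

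\emph{Extreme points are pure.} I would argue the contrapositive by a finite-rank perturbation. Let $\rho\in\states_E(\H)$ be extremal; since $\energy[\rho]\le E<\oo$, \cref{lem:elementary}\ref{it:elementary2} shows that every eigenvector of $\rho$ with positive eigenvalue lies in $\dom\sqrt G$. Suppose $\rho$ is not pure, so it has two orthonormal eigenvectors $e_1,e_2$ with eigenvalues $p_1,p_2>0$; put $V=\lin\{e_1,e_2\}$ and observe that the compression $P_VGP_V$ is a genuine $2\times2$ Hermitian matrix because $\ip{e_i}{Ge_j}=\ip{\sqrt G e_i}{\sqrt G e_j}$ is finite. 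On the $3$-real-dimensional space of traceless self-adjoint operators supported on $V$, the functional $X\mapsto\tr[GX]$ has a kernel of dimension at least $2$; I pick a nonzero $X$ in this kernel when $\energy[\rho]=E$, and any nonzero traceless self-adjoint $X$ on $V$ when $\energy[\rho]<E$. For sufficiently small $t>0$ the operators $\rho_\pm=\rho\pm tX$ are distinct states (positivity holds since $\rho|_V\ge\min(p_1,p_2)>0$ while $X$ is supported on $V$), and $\energy[\rho_\pm]=\energy[\rho]\pm t\,\tr[GX]\le E$ — exactly $E$ in the boundary case by the choice $\tr[GX]=0$, and strictly below $E$ for small $t$ otherwise. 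Thus $\rho_\pm\in\states_E(\H)$ and $\rho=\tfrac12(\rho_++\rho_-)$ contradicts extremality. Conversely, a pure state of energy $\le E$ is extremal in $\states_E(\H)$ because it is already extremal in the full $\states(\H)$. Hence the extreme points are precisely the pure states of energy at most $E$.

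\emph{Closed convex hull and the supremum formula.} Granting compactness of $\states_E(\H)$ (discussed below), Krein--Milman gives $\states_E(\H)=\overline{\conv}\,\operatorname{ext}\states_E(\H)$, which is the second assertion of the first sentence. The supremum formula is then immediate and does not even require Bauer's maximum principle: writing $M$ for the right-hand supremum, the sublevel set $\{f\le M\}$ is convex (as $f$ is convex) and closed (as $f$ is lower semicontinuous), and by the previous paragraph it contains every extreme point of $\states_E(\H)$. A closed convex set containing the extreme points contains their closed convex hull, i.e.\ all of $\states_E(\H)$, so $\sup_{\rho\in\states_E}f(\rho)\le M$. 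The reverse inequality is trivial since the pure states of energy $\le E$ lie in $\states_E(\H)$.

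\emph{Main obstacle.} The delicate point is the compactness underlying Krein--Milman. The set $\states_E(\H)$ is trace-norm closed because $\energy$ is lower semicontinuous (cf.\ \eqref{eq:lsc}), but compactness of the state space fails in infinite dimensions, and the energy constraint rescues it only when $G$ has discrete spectrum $\epsilon_n\to\oo$. In that case the uniform tightness estimate $\tr[\rho\,(1-P_N)]\le E/\epsilon_N$ for all $\rho\in\states_E(\H)$, where $P_N$ is the spectral projection of $G$ below $\epsilon_N$, upgrades any weak$^\ast$-convergent sequence in $\states_E(\H)$ to a trace-norm convergent one whose limit still has unit trace, so $\states_E(\H)$ is trace-norm compact. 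I expect this tightness argument, together with checking that the topology in which $f$ is assumed lower semicontinuous coincides with the one in which the closed convex hull is taken, to be the main technical hurdle; for non-discrete $G$ one would instead separate a putative point $\rho\in\states_E(\H)$ outside the hull by Hahn--Banach and convert the resulting pure-state inequality $\ip\psi{A\psi}\le c$ (valid for all unit $\psi$ with $\energy[\psi]\le E$) into the operator inequality $A\le (c-\lambda E)+\lambda G$ via the S-procedure, the same semidefinite duality used for the energy-constrained operator norm in \cref{thm:dual_prob_Enorm}.
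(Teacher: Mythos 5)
Your first half is correct, and it is essentially the standard argument: by \cref{lem:elementary}\ref{it:elementary2} every eigenvector of a finite-energy state lies in $\dom\sqrt G$, so the compressed energy form on $V=\lin\{e_1,e_2\}$ is finite and your perturbation $X$ lies in $\dom\energy$ (being a combination of ketbras of vectors in $\dom\sqrt G$); choosing $X$ traceless with $\energy[X]=0$ in the boundary case $\energy[\rho]=E$, and an arbitrary traceless $X$ otherwise, the decomposition $\rho=\tfrac12(\rho_++\rho_-)$ with $\rho_\pm=\rho\pm tX\in\states_E(\H)$ cleanly contradicts extremality. This part stands as written.

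The genuine gap is in the second half. The lemma makes no discreteness assumption on $G$: the form \eqref{eq:discrete} is deliberately introduced only in the subsequent compactness lemma of Holevo, and the paper cites \cite{weis_extreme_2021} precisely because the statement holds for arbitrary reference Hamiltonians, where $\states_E(\H)$ is trace-norm closed and bounded but in general \emph{not} compact (e.g.\ for $G$ with continuous spectrum), so Krein--Milman is unavailable. Your tightness estimate $\tr[\rho(1-P_N)]\le E/\epsilon_N$ is fine but is exactly Holevo's argument and exactly where discreteness enters; hence your completed proof covers only the compact-resolvent case. For general $G$ you offer a sketch whose pivotal step is unproved: passing from $\ip{\psi}{A\psi}\le c$ on all energy-constrained unit vectors to $A\le\lambda G+(c-\lambda E)$ is a lossless S-procedure for the Hermitian form pair $(A,G)$ with $G$ unbounded, and that is where the real content lies (it can be established, e.g., via convexity of the joint numerical range of two Hermitian forms on the form domain together with the Slater point supplied by $\inf\Sp(G)=0$, but none of this is carried out). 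Invoking \cref{thm:dual_prob_Enorm} as a template is also shaky, since that lemma concerns the positive form $V^*V$ and is itself derived from concavity results of Shirokov \cite{shirokov2020Enorm} of the same provenance as the present lemma, so leaning on it here risks circularity. Structurally your plan is sound — Hahn--Banach separation would yield the closed-convex-hull claim, and your sublevel-set argument then gives the supremum formula without Bauer's principle — but the separation step is the theorem in the non-compact case. For comparison, the cited proof of Shirokov and Weis avoids compactness entirely by showing that every state in $\states_E(\H)$ is a countable convex mixture of pure states obeying the same energy bound, after which the supremum formula follows from convexity and lower semicontinuity alone.
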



\begin{lemma}[Holevo {\cite{Holevo_2003}}]
    Assume that the reference Hamiltonian $G$ is of the form \eqref{eq:discrete}.\footnote{This is the case if and only if $G$ has compact resolvent if and only if the spectrum is discrete with finite multiplicity.}
    Then, the energy-constrained state space $\states_E(\H)$ is compact in the trace-norm topology for all $E>0$.
\end{lemma}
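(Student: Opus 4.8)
\section*{Proof proposal}

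The plan is to show that $\states_E(\H)$ is both \emph{closed} and \emph{totally bounded} in the complete metric space $(\TC,\norm{\placeholder}_1)$; since a closed, totally bounded subset of a complete metric space is compact, this gives the claim. Closedness is the soft part: if $\rho_k\to\rho$ in trace norm, then $\rho\ge0$ and $\tr\rho=\lim_k\tr\rho_k=1$ because the trace is trace-norm continuous, while $\energy[\rho]\le\liminf_k\energy[\rho_k]\le E$ by the lower semicontinuity \eqref{eq:lsc}. Hence $\rho\in\states_E(\H)$, and $\states_E(\H)$ is closed. The substance of the argument is total boundedness, and the single place where the discreteness assumption \eqref{eq:discrete} enters is a uniform tail (tightness) estimate.

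First I would introduce the spectral projections $P_N=\sum_{n<N}\kettbra n$ and $Q_N=1-P_N$ of $G$ onto its lowest $N$ eigenvectors, and set $\delta_N=\inf_{n\ge N}\epsilon_n$; since $\epsilon_n\to\oo$ we have $\delta_N\to\oo$. The key estimate is that for every $\rho\in\states_E(\H)$,
\[
    \tr[Q_N\rho]=\sum_{n\ge N}\ip n{\rho\, n}\le\frac1{\delta_N}\sum_{n\ge N}\epsilon_n\ip n{\rho\, n}\le\frac{\energy[\rho]}{\delta_N}\le\frac E{\delta_N},
\]
which tends to $0$ \emph{uniformly} in $\rho$. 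This expresses that no probability mass escapes to infinity along the spectral subspaces of $G$, and it is here—and only here—that $\epsilon_n\to\oo$ is used; on the full state space $\states(\H)$ this tightness fails, which is precisely why compactness is lost without the energy constraint.

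The second step upgrades tightness to a uniform trace-norm approximation of $\rho$ by its finite-dimensional corner $P_N\rho P_N$. Splitting $\rho$ into its four blocks and estimating the off-diagonal ones by Hölder's inequality for Schatten norms, $\norm{P_N\rho Q_N}_1\le\norm{P_N\sqrt\rho}_2\,\norm{\sqrt\rho Q_N}_2=\sqrt{\tr[P_N\rho]}\,\sqrt{\tr[Q_N\rho]}\le\sqrt{E/\delta_N}$, while $\norm{Q_N\rho Q_N}_1=\tr[Q_N\rho]\le E/\delta_N$, one obtains
\[
    \norm{\rho-P_N\rho P_N}_1\le 2\sqrt{E/\delta_N}+E/\delta_N,
\]
again uniformly over $\states_E(\H)$. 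Finally, total boundedness follows: given $\eps>0$, choose $N$ so that the right-hand side above is below $\eps/2$; the set $\{P_N\rho P_N:\rho\in\states_E(\H)\}$ is a norm-bounded subset of the finite-dimensional space $P_N\TC P_N$, hence totally bounded, so it admits a finite $\eps/2$-net, which combined with the uniform approximation yields a finite $\eps$-net for $\states_E(\H)$. I expect the only genuine obstacle to be the tail estimate in the second paragraph—the remaining steps are routine functional analysis—so the crux is recognizing that discreteness of $G$ is exactly what enforces tightness and thereby rules out the escape of mass to infinity.
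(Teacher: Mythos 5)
Your proof is correct. The paper itself gives no proof of this lemma (it simply cites Holevo), and your argument—closedness from trace-norm continuity of the trace together with lower semicontinuity \eqref{eq:lsc} of $\energy$, plus total boundedness from the uniform tail estimate $\tr[Q_N\rho]\le E/\delta_N$ and the finite-dimensional corner approximation $\norm{\rho-P_N\rho P_N}_1\le 2\sqrt{E/\delta_N}+E/\delta_N$—is exactly the standard tightness-based argument underlying Holevo's compactness criterion; the only microscopic point to add is to take $N$ large enough that $\delta_N>0$ before dividing, which is harmless since $\epsilon_n\to\oo$.
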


Both of these Lemmas refer to the trace norm topology. In addition, the energy scale induces two natural norms on $\dom\energy$: the $\energy$-graph norm $\norm\rho_\energy= \norm\rho_1+\abs{\energy[\rho]}$, and the "base" norm\footnote{
    This norm turns $\dom\energy$ into a so-called "base norm space" with base $K=\{\rho \ge0: \tr\rho+\energy[\rho]=1\}$  \cite{nagel1974order}. This follow from the isomorphism in \cref{lem:base} below because $\T(\H)$ is a base norm space.}
\begin{equation}\label{eq:base_norm}
    \normiii\rho_1 = \norm{\sqrt{G+1}\rho\sqrt{G+1}}_1.
\end{equation}
These norms agree on positive elements but differ on general self-adjoint elements, where the relation $\norm\rho_1 \le \norm\rho_\energy \le \normiii\rho_1$, $\rho=\rho^*\in\dom\energy$ holds.\footnote{The first inequality is clear. The second one is seen as follows: $\norm\rho_\energy = \tr V|\rho|V^* \le \tr |V\rho V^*| = \normiii\rho_1$, where $V=\sqrt{G+1}$.}
A subspace $\D\subset\dom\energy$ is $\normiii\placeholder_1$-dense if and only if $\sqrt G \D\sqrt G$ is dense in $\T(\H)$ ($\sqrt G$ and $\sqrt{G+1}$ are equal up to multiplication by a bounded operator with bounded inverse).
In this case, $\D$ is also $\norm\placeholder_\energy$-dense and, hence, a core for $\energy$.
In many regards, the topology induced by $\normiii\placeholder_1$ is nicer.
For instance, it turns $\dom\energy$ into a Banach space and the finite-energy state space $\states_{<\oo}(\H)$ into a complete metric space, which is false for the $\energy$-graph topology.%
\footnote{If $G$ is unbounded, $\energy$ is not even a closable: Given a sequence $(\gamma_n)$ of finite-energy states with $E_n:=\energy[\gamma_n]\to\oo$ set $\rho_n = (1-E_n^{-1})\sigma +E_n^{-1}\gamma_n$ for some fixed $\sigma\in\states_{<\oo}(\H)$. Then $\rho_n$ and $\energy[\rho_n]$ converge but $\energy[\lim_n\rho_n]\ne\lim_n\energy[\rho_n]$.}

\begin{lemma}\label{lem:base}
    Consider $\dom \energy$ with the $\normiii{}_1$-norm and the positive cone $(\dom\energy)^+=\dom\energy\cap\T(\H)^+$. 
    Set $Z=\sqrt{G+1}$.
    Then $W : \dom\energy\to\T(\H)$, $W\rho=Z\rho Z$, is an isomorphism of ordered Banach spaces.
    \begin{enumerate}[(1)]
        \item\label{it:base1}
            The energy functional $\energy$ is continuous and the energy-constrained state space $\states_E(\H)$ is closed with respect to the $\normiii\placeholder_1$-norm.
        \item\label{it:base2}
            An increasing sequence $(\rho_n)\subset(\dom\energy)^+$ with $\sup_n\energy[\rho_n]<\oo$ converges in $\normiii\placeholder_1$-norm.
        \item\label{it:base3}
            A subspace $\D\subset \dom\energy$ is $\normiii\placeholder_1$-dense if and only if $\D^+:=\D\cap\T(\H)^+$ is $\normiii\placeholder_1$-dense in $(\dom\energy)^+$. 
            In this case, $K = \D \cap \states(\H)$ is a convex $\normiii\placeholder_1$-dense subset of finite-energy states.
        \item\label{it:base4}
            If $\D\subset\dom\sqrt G$ is a core, then $\D^{\kettbra{}} := \lin \big\{\ketbra\psi\phi : \psi,\phi\in\D\}\subset\dom\energy $ is $\normiii\placeholder_1$-dense.
        \item\label{it:base5} 
            The dual space of $(\dom\energy,\dom\normiii\placeholder_1)$ can be identified with the space of unbounded operators $A$ such that $Z^{-1}AZ^{-1}\in\B(\H)$ under the norm $\normiii{A}_\oo = \norm{Z^{-1}AZ^{-1}}$.
            The dual pairing is given by $(\rho,A) \mapsto \tr \rho A := \tr\big[\big(Z^{-1}AZ^{-1}\big) \big(Z\rho Z\big)\big] $.
    \end{enumerate}
\end{lemma}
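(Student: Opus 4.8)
The lemma is driven by the single observation that $W$ is a base-norm isometry: by definition $\normiii\rho_1=\norm{Z\rho Z}_1=\norm{W\rho}_1$. Since $\dom\energy=Z^{-1}\T(\H)Z^{-1}$, the map $\tau\mapsto Z^{-1}\tau Z^{-1}$ is a two-sided inverse of $W$, so $W$ is an isometric bijection onto $\T(\H)$; as $Z\ge1$ is positive with bounded inverse, conjugation by $Z$ and by $Z^{-1}$ both preserve positivity, so $W$ carries $(\dom\energy)^+$ onto $\T(\H)^+$. Hence $W$ is an isomorphism of ordered Banach spaces and, in particular, $(\dom\energy,\normiii\placeholder_1)$ is complete. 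The plan is to transport each of the five assertions along $W$ to the corresponding (mostly standard) statement about the trace class and its positive cone.

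For (1), cyclicity gives $\tr[W\rho]=\tr[(G+1)\rho]=\tr\rho+\energy[\rho]$ for $\rho\in\dom\energy$, so $\energy[\rho]=\tr[W\rho]-\tr\rho$ is a difference of $\normiii\placeholder_1$-continuous functionals ($\tr\circ W$ is continuous since $W$ is isometric and $\tr$ is trace-norm continuous; $\tr$ itself because $\norm\placeholder_1\le\normiii\placeholder_1$); indeed $\abs{\energy[\rho]}=\abs{\tr[\sqrt G\rho\sqrt G]}\le\norm{\sqrt G\rho\sqrt G}_1\le\normiii\rho_1$ using $\sqrt G\rho\sqrt G=B(W\rho)B^*$ with $B=\sqrt G\,Z^{-1}$, $\norm B\le1$. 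Then $\states_E(\H)$ is the intersection of the $\normiii\placeholder_1$-closed sets $\{\rho\ge0\}$, $\{\tr\rho=1\}$, $\{\energy[\rho]\le E\}$, hence closed. For (2), $W$ sends an increasing $(\rho_n)\subset(\dom\energy)^+$ to an increasing sequence in $\T(\H)^+$ with $\tr[W\rho_n]=\tr\rho_n+\energy[\rho_n]=\normiii{\rho_n}_1$; once this is bounded, the standard fact that a norm-bounded increasing sequence of positive trace-class operators converges in trace norm applies, and pulling back through the isometry $W^{-1}$ yields $\normiii\placeholder_1$-convergence.

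For (3)–(4), under $W$ the statement that $\D$ is $\normiii\placeholder_1$-dense becomes density of $W\D$ in $\T(\H)$, and density of $\D^+$ becomes density of $W\D\cap\T(\H)^+$ in $\T(\H)^+$. The implication "positive part dense $\Rightarrow$ subspace dense" is immediate: decompose any $\sigma\in\T(\H)$ as a combination $\sum_{j=1}^4 c_j\sigma_j$ of four positive operators with $\norm{\sigma_j}_1\le\norm\sigma_1$, approximate each $W^{-1}\sigma_j$ by $\D^+$, and take the same combination, which stays in the subspace $\D$. \emph{The converse is the step I expect to be the main obstacle}: a dense subspace need not have positive part dense in the cone, precisely because taking the positive part of a self-adjoint approximant leaves the subspace. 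I would exploit that the relevant $\D$ (hence $W\D$) are $\ast$-closed and argue by Hahn–Banach — if $\tau_0\in\T(\H)^+$ were not in the closed convex cone $\overline{W\D\cap\T(\H)^+}$, separation yields $B=B^*\in\B(\H)$ with $\tr[B\tau_0]>0$ and $\tr[Bm]\le0$ on the positive part of $W\D$ — the difficulty being to contradict density of $W\D$ by producing genuinely positive approximants inside it. Item (4) I would instead prove directly, sidestepping this converse: $\D\subset\dom\sqrt G$ a core is a core for $Z$ by \cref{it:elementary1}, and since $Z^{-1}$ is bounded this makes $Z\D$ dense in $\H$; as $W(\ketbra\psi\phi)=\ketbra{Z\psi}{Z\phi}$, we get $W\D^{\kettbra{}}=\lin\{\ketbra uv:u,v\in Z\D\}$, which is trace-norm dense because finite-rank operators built from a dense set of vectors are, so $\D^{\kettbra{}}$ is $\normiii\placeholder_1$-dense. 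Finally, once $\D^+$ is $\normiii\placeholder_1$-dense, normalizing dense positive approximants $d$ of a finite-energy state to $d/\tr d$ (whose trace is near $1$) gives $\normiii\placeholder_1$-density of the convex set $K=\D\cap\states(\H)$.

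For (5), the dual pulls back along the isometric isomorphism $W$ from $(\T(\H),\norm\placeholder_1)^*=\B(\H)$: a functional $\tau\mapsto\tr[B\tau]$ with $B\in\B(\H)$ becomes $\rho\mapsto\tr[B\,W\rho]=\tr[BZ\rho Z]$, and writing $A=ZBZ$, i.e.\ $B=Z^{-1}AZ^{-1}$, this is the pairing $\tr\rho A=\tr[(Z^{-1}AZ^{-1})(Z\rho Z)]$ with $\normiii A_\oo=\norm{Z^{-1}AZ^{-1}}=\norm B$ matching the dual norm, giving the stated identification.
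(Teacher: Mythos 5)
Your strategy is exactly the paper's: establish that $W=Z(\placeholder)Z$ is an isometric order isomorphism onto $\T(\H)$ and transport each item to a standard trace-class fact — indeed the paper's proof of items \ref{it:base2}–\ref{it:base5} is literally one sentence to this effect. Your arguments for items \ref{it:base1}, \ref{it:base2}, \ref{it:base4} and \ref{it:base5} are correct and more detailed than the paper's; in particular your direct proof of \ref{it:base4} via $W\ketbra\psi\phi=\ketbra{Z\psi}{Z\phi}$ and density of $Z\D$ is sound, since the $\sqrt G$- and $Z$-graph norms coincide, so a core for $\sqrt G$ is a core for $Z$. (In item \ref{it:base2} your hedge "once this is bounded" is apt: $\sup_n\energy[\rho_n]<\oo$ alone does not bound $\normiii{\rho_n}_1$ — take $\rho_n=\sum_{k\le n}\kettbra{\psi_k}$ with unit vectors satisfying $\energy[\psi_k]\le 2^{-k}$, which exist because $\inf\Sp G=0$ — so a bound on the traces must be read into the hypothesis; this defect sits in the statement, not in your argument.)

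The step you flag as the main obstacle in item \ref{it:base3} is a genuine gap, and no completion of your Hahn–Banach sketch exists, because after transporting by $W$ the "only if" direction is false for arbitrary subspaces. Concretely, let $(e_n)$ be an orthonormal basis, let $F=\lin\{\ketbra{e_m}{e_n} : m,n\}$ be the finitely supported operators, and let $\mathcal E=\ker\phi$ for the unbounded linear functional $\phi(\tau)=\sum_n 2^n\ip{e_n}{\tau e_n}$ on $F$. Then $\mathcal E$ is a $^*$-closed, trace-norm dense subspace of $\T(\H)$ — for $x\in F$ the elements $x-\phi(x)\,2^{-n}\kettbra{e_n}$ lie in $\mathcal E$ and converge to $x$ in trace norm, and $F$ is dense — while $\mathcal E\cap\T(\H)^+=\{0\}$, since a positive operator with vanishing diagonal vanishes. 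Pulling back, $W^{-1}\mathcal E$ is a $\normiii{}_1$-dense subspace of $\dom\energy$ with trivial positive cone, so the separation argument you attempt must fail: there are simply no positive approximants to produce. The direction you do prove (positive part dense $\Rightarrow$ subspace dense, via the four-positives decomposition) is correct, and the equivalence can be repaired by an extra hypothesis such as $\D=\lin\D^+$ (after which your direction carries all the content). Your choice to prove item \ref{it:base4} directly rather than through item \ref{it:base3} is therefore not merely prudent but necessary; note that the paper's own one-line appeal to "standard properties of the trace class" glosses over this same point, so your flagged obstacle identifies a flaw in the lemma as stated rather than a deficiency of your proof.
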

\begin{proof} 
    By definition, the positive cone $(\dom\energy)^+$ corresponds precisely to $\T(\H)^+$ via $W$. 
    \Cref{it:base1} holds because $\energy[\rho]=\tr W\rho+\tr\rho$ and because $\states_E(\H)$ is the intersection of $\normiii\placeholder_1$-closed sets $\states_E(\H)=(\dom\energy)^+\cap \energy^{-1}([0,E])\cap \tr^{-1}(\{1\})$.
    \Cref{it:base2,it:base3,it:base4,it:base5} follow by applying the isomorphism $W$ and using standard properties of the trace class.
    %
\end{proof}

\subsection{Energy-limited quantum channels}\label{sec:channels}

In this section, we consider quantum systems $A,B,\ldots$ with Hilbert spaces $\H_A,\H_B,\ldots$ and reference Hamiltonians $G_A$, $G_B,\ldots$, and we denote by $\energy_A,\energy_B,\ldots$ the respective energy functionals.
A combined system $AB$, described by the  Hilbert space $\H_{AB}=\H_A\ox\H_B$, is equipped with the reference Hamiltonian $G_{AB}=G_A\ox1+1\ox G_B$.
Ancillary quantum systems $R$ will be endowed with trivial reference Hamiltonians $G_R=0$ so that the joint Hamiltonian is simply $G_{AR}=G_A\ox 1$ and the energy of a bipartite state $\rho\in\states(\H_{AB})$ is given by the energy of the partial trace $\energy_{AR}[\rho]=\energy_A[\tr_R\rho]$.

Recall that \emph{quantum channels} between systems $A$ and $B$ are mathematically modeled by trace-preserving completely positive (cp) maps $T:\T(\H_A)\to\T(\H_B)$.
In the case of an open system, an effective description sometimes requires the larger class of trace-nonincreasing cp maps.
Let us begin by applying Winter's definition of energy-limited quantum channels from \cite{Winter_EnergyConstrDiamond_2017} to general cp maps:

\begin{definition}\label{def:EL}
    A cp map $T:\T(\H_A)\to\T(\H_B)$ is {\bf energy-limited} if
    \begin{equation}\label{eq:EL}
        f_T(E) := \sup\{\energy_{B}[T\rho] : \rho\in\states_E(\H_A)\}<\infty, \qquad E>0.
    \end{equation}
    An operator $V:\H_A\to\H_B$ is energy-limited if $T_V\rho = V\rho V^*$ is energy-limited, and we write $f_V$ for $f_{T_V}$.
\end{definition}

In general, we cannot restrict the supremum in \eqref{eq:EL}, which runs over states with energy bounded by $E$, to a supremum over states with energy \emph{equal} to $E$.%
\footnote{For instance, if the reference Hamiltonian is bounded, this cannot hold because no states with energies $E> \max \Sp G$ exist.}
Shirokov observed that energy-limitedness is equivalent to the statement that the output energy is finite whenever the input energy is \cite{shirokov2020extension}:        

\begin{lemma}\label{lem:EL}
    Let $T:\T(\H_A)\to\T(\H_B)$ be completely positive.
    The following are equivalent:
    \begin{enumerate}[(a)]
        \item\label{it:EL1} $T$ is energy-limited, i.e., $f_T(E)$ is finite for all $E>0$,
        \item\label{it:EL2} $f_T(E)$ is finite for some $E>0$,
        \item\label{it:EL3} for all finite-energy input states $\rho\in\states_{<\oo}(\H_A)$, the output energy is finite $\energy_B[T\rho]<\oo$.
    \end{enumerate}
    In this case, the function $f_T:\RR^+\to\RR^+$ is continuous, nondecreasing, and concave.
    Therefore, it holds
    \begin{equation}\label{eq:concavity_ineq_fT}
        f_T(E) \le f_T(E') \le \frac{E'}{E} \, f_T(E), \qquad E'\ge E\ge 0.
    \end{equation}
\end{lemma}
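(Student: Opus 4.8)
The plan is to prove the three-way equivalence by the cycle \ref{it:EL1} $\Rightarrow$ \ref{it:EL2} $\Rightarrow$ \ref{it:EL3} $\Rightarrow$ \ref{it:EL1}, and then to read off the regularity of $f_T$ from the convex structure of the supremum defining it. Throughout I would use only two elementary facts about $\energy$, both immediate from \eqref{eq:E_def}: it is additive and positively homogeneous on $\T(\H)^+$ (as a limit of the linear functionals $\rho\mapsto\tr[P_nG\rho]$, in $\bar\RR^+$) and it is monotone on $\T(\H)^+$ (since each $P_nG\ge0$); the countable version of additivity is \cref{lem:elementary}.

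The implication \ref{it:EL1} $\Rightarrow$ \ref{it:EL2} is trivial. For \ref{it:EL2} $\Rightarrow$ \ref{it:EL3}, suppose $f_T(E_0)<\oo$ and let $\rho$ be a finite-energy state. If $\energy_A[\rho]\le E_0$ we are done; otherwise I would fix a reference state $\sigma$ with $\energy_A[\sigma]<E_0$ (such states exist because $\inf\Sp(G_A)=0$) and form $\rho_t=t\rho+(1-t)\sigma$. Choosing $t\in(0,1)$ so small that $\energy_A[\rho_t]=t\,\energy_A[\rho]+(1-t)\,\energy_A[\sigma]\le E_0$ puts $\rho_t\in\states_{E_0}(\H_A)$, so $\energy_B[T\rho_t]\le f_T(E_0)<\oo$. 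By additivity of $\energy_B$ on positive operators, $\energy_B[T\rho_t]=t\,\energy_B[T\rho]+(1-t)\,\energy_B[T\sigma]$ in $\bar\RR^+$, and $\energy_B[T\sigma]\le f_T(E_0)<\oo$ because $\sigma\in\states_{E_0}(\H_A)$; hence $t\,\energy_B[T\rho]<\oo$, forcing $\energy_B[T\rho]<\oo$.

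The crux is \ref{it:EL3} $\Rightarrow$ \ref{it:EL1}, which I would prove by contraposition. If $f_T(E)=\oo$ for some $E>0$, then by definition of the supremum there are states $\rho_n\in\states_E(\H_A)$ with $\energy_B[T\rho_n]\ge n\,2^n$. The key step is to \emph{stack} these into a single state $\rho=\sum_{n\ge1}2^{-n}\rho_n$. This is a genuine state ($\rho\ge0$, $\tr\rho=\sum_{n\ge1}2^{-n}=1$), and it has finite energy, since countable additivity (\cref{lem:elementary}) gives $\energy_A[\rho]=\sum_{n\ge1}2^{-n}\energy_A[\rho_n]\le\sum_{n\ge1}2^{-n}E=E<\oo$; thus $\rho\in\states_{<\oo}(\H_A)$. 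On the other hand, $\rho\ge2^{-n}\rho_n$ as positive operators for every $n$, so positivity of $T$ gives $T\rho\ge2^{-n}T\rho_n$, and monotonicity of $\energy_B$ then yields $\energy_B[T\rho]\ge2^{-n}\energy_B[T\rho_n]\ge n$ for all $n$, i.e.\ $\energy_B[T\rho]=\oo$. This contradicts \ref{it:EL3}. I expect this stacking construction to be the only real obstacle: it works precisely because, as emphasized in the text, $\energy$ behaves like an $L^1$-integral (countably additive and monotone) rather than like an ordinary expectation value, so a finite-energy input can absorb a whole sequence of bad inputs at once.

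Finally, assuming the equivalent conditions hold, I would establish the properties of $f_T$. Monotonicity is immediate from $\states_E(\H_A)\subseteq\states_{E'}(\H_A)$ for $E\le E'$. For concavity, given $\rho_i\in\states_{E_i}(\H_A)$ and $\lambda\in[0,1]$, the mixture $\lambda\rho_1+(1-\lambda)\rho_2$ lies in $\states_{\lambda E_1+(1-\lambda)E_2}(\H_A)$ and, by additivity of $\energy_B\circ T$, satisfies $\energy_B[T(\lambda\rho_1+(1-\lambda)\rho_2)]=\lambda\,\energy_B[T\rho_1]+(1-\lambda)\,\energy_B[T\rho_2]$; taking suprema over $\rho_1,\rho_2$ gives $f_T(\lambda E_1+(1-\lambda)E_2)\ge\lambda f_T(E_1)+(1-\lambda)f_T(E_2)$. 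Continuity on $(0,\oo)$ is then automatic for a finite concave function. The left inequality in \eqref{eq:concavity_ineq_fT} is just monotonicity, while the right one follows from concavity together with $f_T\ge0$: for $0<E\le E'$, writing $E=\tfrac{E}{E'}E'+(1-\tfrac{E}{E'})\cdot0$ yields $f_T(E)\ge\tfrac{E}{E'}f_T(E')+(1-\tfrac{E}{E'})f_T(0)\ge\tfrac{E}{E'}f_T(E')$, where $f_T(0):=\lim_{E\downarrow0}f_T(E)\ge0$ exists by monotonicity.
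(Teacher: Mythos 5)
Your proof is correct and follows essentially the same route as the paper: the heart in both cases is stacking a geometric series of bad states $\rho=\sum_n 2^{-n}\rho_n$ (justified by \cref{lem:elementary}) to refute \ref{it:EL3}, and the same mixture argument establishes concavity of $f_T$, from which monotonicity, continuity and \eqref{eq:concavity_ineq_fT} follow. The only (harmless) differences are cosmetic: you use weights $n2^n$ together with monotonicity of $\energy_B$ on $\T(\H_B)^+$ where the paper applies countable additivity of the energy functional to $T\rho$, and you prove \proofdir{it:EL2}{it:EL3} by a direct mixing argument rather than routing it through concavity as the paper does.
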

\begin{proof}
    The last statement and \ref{it:EL1} $\Leftrightarrow$ \ref{it:EL2} were observed by Winter in \cite{Winter_EnergyConstrDiamond_2017}, and the equivalence with \ref{it:EL3} is shown in \cite{shirokov2020extension}.
    Since this Lemma is essential to our work, we recall the proofs here:
It is clear that $f_T$ is a nondecreasing function $\RR^+\to \bar\RR^+$.
To see concavity, let $\eps>0$. Given $E_1,E_2>0$ and $0<p<1$, pick $\rho_i \in\states_{E_i}$ with $\energy[T\rho_i]\ge f_T(E_i)-\eps $ and set $\rho=p\rho_1+(1-p)\rho_2$. Then $\energy[\rho]\le pE_1+(1-p)E_2$ implies
\begin{align*}
    f_T(pE_1+(1-p)E_2)&\ge \energy[T\rho] =p \energy[T\rho_1]+(1-p)\energy[T\rho_2]\ge pf_T(E_1)+(1-p)f_T(E_2)-\eps.
\end{align*}
Thus, $f_T$ is a concave nondecreasing function $\RR^+\to\bar \RR^+$.
This implies \ref{it:EL1} $\Leftrightarrow$ \ref{it:EL2} as well as \eqref{eq:concavity_ineq_fT}; see  \cite[Lem.~1]{shirokov2020Enorm} and \cref{fig:sqrt}. 
\ref{it:EL2} $\Rightarrow$ \ref{it:EL3} is clear. For the converse, assume the contrary and take $\rho_n\in\states_E(\H_A)$ with $\energy_B[T\rho_n]\ge 2^n$ and set $\rho = \sum_{n=0}^\oo 2^{-n}\rho_n\in\states_E(\H)$.
By \cref{lem:elementary}, $T\rho$ has infinite energy $\energy_B[T\rho]= \sum 2^{-n} \energy_B[T\rho_n] \ge \sum 1=\oo$, contradicting \ref{it:EL3}.
\end{proof}

\begin{figure}[ht]
    \centering

\resizebox{8cm}{!}{%
\begin{tikzpicture}
\begin{axis}[axis lines=middle, clip=false, xticklabels=\empty, yticklabels=\empty, yscale=.8, ymax=2.99]
    \addplot[thick,samples=100,domain=0.005:.2] {sqrt(x)+.2};
    \addplot[thick,samples=100,domain=.2:4.2] {sqrt(x)+.2};
    \addplot[gray,samples=2,domain=0:4.15] {.687*x};
    \addplot[dashed,gray] coordinates {(0,.687*2.666) (2.666,.687*2.666)};
    \addplot[dashed,gray] coordinates {(0,2.2) (4,2.2)};
    \addplot[dashed,gray] coordinates {(0,4*.687) (4,4*.687)};
    \addplot[dashed,gray] coordinates {(4,0) (4,4*.687)};
    \addplot[dashed,gray] coordinates {(2.666,0) (2.666, 2.666*.687)};
    \draw[fill] (axis cs:2.666,0) circle [radius=1pt] node[below] {\small$E$};
    \draw[fill] (axis cs:4, 0) circle [radius=1pt] node[below] {\small $E'$};
    \draw[fill] (axis cs:0, 1.832) circle [radius=1pt] node[left] {\small $f(E)$};
    \draw[fill] (axis cs:0, 2.2) circle [radius=1pt] node[left] {\small $f(E')$};
    \draw[fill] (axis cs:0, 4*.687) circle [radius=1pt] node[left] {\small $\tfrac{E'}Ef(E)$};
    \draw[fill] (axis cs:0, 0) circle [radius=1pt] node[below] {$0$};
\end{axis}
\end{tikzpicture}
}
    \caption{Visualization of the inequality $f(E)\le f(E')\le \frac{E'}Ef(E)$, valid for concave nondecreasing functions $f:\RR^+\to\RR^+$ and $0<E<E'$. The diagonal has slope $f(E)/E$.}\label{fig:sqrt} 
\end{figure}

Note that \eqref{eq:concavity_ineq_fT} implies $\lambda f_T(E)\le f_T(\lambda E)$ for $0<\lambda<1$. Hence, $\energy_B[T\rho] \le f_T(\energy_A[\rho])$ also holds for subnormalized states.
Another consequence is that $E\mapsto f_T(E)/E$ is monotonically decreasing.
Its limit at $E=0$ is the total energy-amplification factor:
\begin{equation}
    \sup_{\rho\in \states_{<\oo}(\H_A)} \frac{\energy_B[T\rho]}{\energy_A[\rho]}=\lim_{E\to0} \frac{f_T(E)}E,
\end{equation}
which may be infinite, e.g., if the ground state is mapped to a state with nonzero energy.
We collect basic properties of the maximal output energy $f_T(E)$ in a Lemma:

\begin{lemma}\label{lem:energy_lim}
    Let $T:\T(\H_A)\to\T(\H_B)$ be completely positive.
    \begin{enumerate}[(1)]
        \item\label{it:energy_lim1}
            The maximal output energy is attained on pure states:
            \begin{equation}\label{eq:fT_pure_states}
                f_T(E) = 
                \sup\big\{\energy_B[T\kettbra\psi]: \psi\in\H_A,\ \norm\psi=1,\ \energy_A[\psi]\le E\big\}.
            \end{equation}
            In particular, if $V:\H_A\to\H_B$, we have 
            \begin{equation}
                f_V(E) =\sup\big\{\norm{\sqrt G_{\!B} V\psi}^2 : \psi\in\H_A,\  \norm\psi=1,\  \norm{\sqrt G_{\!A}\psi}^2\le E\big\}.
            \end{equation}
        \item\label{it:energy_lim2}
            To compute $f_T$, we may include subnormalized states, i.e.,
            \begin{equation}\label{eq:fT_subnormalized}
                f_T(E)=\sup\big\{\energy_B[T\rho] : \rho\in\T(\H_A)^+,\ \tr\rho\le 1,\ \energy_A[\rho]\le E\big\}.
            \end{equation}
        \item\label{it:energy_lim3} 
            If  $R$ is an ancillary system, then $T\ox\id_R$ is energy-limited and $f_{T\ox\id_R}=f_T$.
        \item\label{it:energy_lim4} 
            Let $T_n:\T(\H_A)\to\T(\H_B)$ be a sequence of energy-limited cp maps such that $T_n\rho\to T\rho$ for all $\rho\in\T(\H_A)$.
            If there exists a common affine upper bound $f_{T_n}(E)\le\lambda E+E_0$ for all $n\in\NN$, then the limit $T$ is also energy-limited and $f_T(E)\le\lambda E+E_0$.
        \item\label{it:energy_lim5}
            If $T$ is trace-nonincreasing and if $S:\T(\H_B)\to\T(\H_C)$ is cp, then $ST:\T(\H_A)\to\T(\H_C)$ is energy-limited if $S$ and $T$ are, and
            \begin{equation}\label{eq:energy_lim5}
                f_{ST}(E)\le f_{S}\big(f_{T}(E)\big), \qquad E>0.
            \end{equation}
        \item\label{it:energy_lim6}
            If $S:\T(\H_A)\to\T(\H_B)$ is energy-limited, then
            \begin{equation}\label{eq:energy_lim6}
                f_{T+S}(E)\le f_T(E)+f_S(E), \qquad E>0.
            \end{equation}
        \item\label{it:energy_lim7}
            Assume that $T$ maps $\dom\energy_A$ into $\dom\energy_B$.
            Then $T$ is energy-limited if and only if the restriction $T:\dom\energy_A\to\dom\energy_B$ is a bounded operator for the respective base norms $\normiii{}_1$.
    \end{enumerate} 
\end{lemma}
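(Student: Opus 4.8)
The plan is to dispatch the seven parts one by one, each reducing to a structural fact already in hand: lower semicontinuity of $\energy$ (eq.~\eqref{eq:lsc}), its additivity on $\T(\H)^+$ (immediate from \eqref{eq:E_def}), the concavity and monotonicity of $f_T$ together with the subnormalized-state bound $\energy_B[T\rho]\le f_T(\energy_A[\rho])$ recorded just after \cref{lem:EL}, the Shirokov--Weis maximum principle (\cref{lem:weis}), and the base-norm isomorphism $W\rho=Z\rho Z$ of \cref{lem:base}. For part \ref{it:energy_lim1} I would approximate $\energy_B$ from below by the bounded truncations $G_{B,n}$ of $G_B$ to $[0,n]$: each functional $\rho\mapsto\tr[G_{B,n}T\rho]$ is trace-norm continuous and linear, hence lower semicontinuous and convex, so \cref{lem:weis} equates its supremum over $\states_E(\H_A)$ with the supremum over pure states. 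Since $\energy_B[T\rho]=\sup_n\tr[G_{B,n}T\rho]$ by \eqref{eq:E_def}, interchanging the two suprema (always legitimate) gives the identity in $\bar\RR^+$, valid whether or not $T$ is energy-limited; the formula for an operator $V$ then follows by substituting $T_V\kettbra\psi=\ketbra{V\psi}{V\psi}$ and \eqref{eq:energy_functional_psi}.

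Parts \ref{it:energy_lim2}, \ref{it:energy_lim5}, \ref{it:energy_lim6}, \ref{it:energy_lim3}, \ref{it:energy_lim4} are short. For \ref{it:energy_lim2} the inequality ``$\ge$'' is trivial and ``$\le$'' is exactly the subnormalized-state bound, since any admissible $\rho$ satisfies $\energy_B[T\rho]\le f_T(\energy_A[\rho])\le f_T(E)$. For \ref{it:energy_lim5}, trace-nonincreasingness makes $T\rho$ a subnormalized state with $\energy_B[T\rho]\le f_T(E)$, so the same bound applied to $S$ and monotonicity of $f_S$ give $\energy_C[S T\rho]\le f_S(\energy_B[T\rho])\le f_S(f_T(E))$. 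For \ref{it:energy_lim6}, additivity of $\energy_B$ on the positive operators $T\rho,S\rho$ yields $\energy_B[(T+S)\rho]=\energy_B[T\rho]+\energy_B[S\rho]\le f_T(E)+f_S(E)$ in $\bar\RR^+$. For \ref{it:energy_lim3} I would use $\tr_R\circ(T\ox\id_R)=T\circ\tr_R$ and $G_R=0$, so that $\energy_{BR}[(T\ox\id_R)\rho]=\energy_B[T\tr_R\rho]$ while $\energy_{AR}[\rho]=\energy_A[\tr_R\rho]$; optimizing over $\rho\in\states_E(\H_{AR})$ gives $f_{T\ox\id_R}=f_T$ (product states furnish ``$\ge$''). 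For \ref{it:energy_lim4} I invoke lower semicontinuity directly: $T_n\rho\to T\rho$ forces $\energy_B[T\rho]\le\liminf_n\energy_B[T_n\rho]\le\lambda E+E_0$ for each $\rho\in\states_E(\H_A)$, and taking the supremum finishes.

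The substance of the lemma lies in part \ref{it:energy_lim7}. The direction ``bounded $\Rightarrow$ energy-limited'' is immediate: for a state $\rho\in\states_E(\H_A)$ one has $\energy_B[T\rho]\le\normiii{T\rho}_1\le C\normiii{\rho}_1=C(1+\energy_A[\rho])\le C(1+E)$, so $f_T(E)<\oo$. For the converse I would first fix an affine majorant of $f_T$; concavity and \eqref{eq:concavity_ineq_fT} supply the explicit choice $f_T(E)\le f_T(1)(1+E)$, i.e.\ $\lambda=E_0=f_T(1)<\oo$. A scaling argument then upgrades this to the homogeneous bound $\energy_B[T\rho]\le\lambda\,\energy_A[\rho]+E_0\tr\rho$ for every positive $\rho\in\dom\energy_A$ (write $\rho=(\tr\rho)\,\sigma$ with $\sigma$ a state and use linearity of $\energy$). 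Because $T$ is a positive map it is trace-norm bounded, so $\tr[T\rho]=\norm{T\rho}_1\le\norm{T}\tr\rho$, and combining the two estimates gives $\normiii{T\rho}_1=\tr[T\rho]+\energy_B[T\rho]\le(\norm T+\lambda+E_0)\normiii{\rho}_1$ for all $\rho\in(\dom\energy_A)^+$.

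It remains to pass from the positive cone to all of $\dom\energy_A$, which is the one genuinely bookkeeping-heavy point. Here I transport $T$ through the isomorphisms of \cref{lem:base}, setting $\tilde T=W_B\,T\,W_A^{-1}:\T(\H_A)\to\T(\H_B)$; this $\tilde T$ is positive and, by the previous paragraph, bounded on $\T(\H_A)^+$ with constant $\norm T+\lambda+E_0$. Writing a self-adjoint $\sigma\in\T(\H_A)$ as $\sigma_+-\sigma_-$ with $\norm\sigma_1=\norm{\sigma_+}_1+\norm{\sigma_-}_1$, and a general $\sigma$ via its real and imaginary parts, the triangle inequality bounds $\tilde T$ on all of $\T(\H_A)$, and hence $T=W_B^{-1}\tilde T W_A$ is bounded for $\normiii{}_1$. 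I expect this last step, together with the scaling needed to turn the merely concave $f_T$ into the homogeneous operator-type bound, to be the main obstacle: no new idea is required, but one must keep track of the $\tr\rho$-dependent term and of the two distinct conjugators $Z_A=\sqrt{G_A+1}$ and $Z_B=\sqrt{G_B+1}$.
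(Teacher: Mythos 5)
Your proof is correct and follows essentially the same route as the paper's: \cref{lem:weis} plus lower semicontinuity for \ref{it:energy_lim1}, the concavity/scaling bound \eqref{eq:concavity_ineq_fT} for \ref{it:energy_lim2} and \ref{it:energy_lim5}, lower semicontinuity of $\energy$ for \ref{it:energy_lim4}, and the base-norm isomorphism of \cref{lem:base} for \ref{it:energy_lim7}. The only departures are cosmetic and sound: you prove \ref{it:energy_lim3} directly from $\tr_R\circ(T\ox\id_R)=T\circ\tr_R$ and $G_R=0$ rather than via purification and item \ref{it:energy_lim1} as the paper does, and in \ref{it:energy_lim1} and \ref{it:energy_lim7} you make explicit details the paper leaves implicit (the truncation $G_{B,n}$ to handle $\bar\RR^+$-valued suprema, and the passage from the positive cone to all of $\dom\energy_A$ by self-adjoint and real/imaginary decomposition).
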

\begin{proof}
    The first item follows from \cref{lem:weis}.
    Item \ref{it:energy_lim6} follows from the definition.
    
    \ref{it:energy_lim2}:
    Let $0\ne\rho\in\TC^+$ with $\lambda=\tr\rho\le 1$, $\energy[\rho]=E$ and set $\sigma=\lambda^{-1}\rho\in\states(\H)$. Then \eqref{eq:concavity_ineq_fT} implies
    \begin{equation*}
        \energy_B[T\rho] =\lambda\cdot \energy_B[T\sigma]
        \le \lambda \cdot f_T(E/\lambda) \le f_T(E). 
    \end{equation*}
    Thus, the supremum in \eqref{eq:fT_subnormalized} is bounded by $f_T(E)$. The other inequality holds trivially.

    \ref{it:energy_lim3}: Let $\psi\in\H_{AR}$ be a unit vector with $\energy_{AR}[\psi]=\norm{(\sqrt G_{\!A}\ox1)\psi}^2\le E$, then $\rho=\tr_R\kettbra\psi\in\states_E(\H_A)$.
    Therefore
    \[
        \energy_{BR}[(T\ox\id)\kettbra\psi]=\energy_B[\tr_R(T\ox\id)\kettbra\psi] = \energy_B[T\rho] \le f_T(E).
    \] 
    By item \ref{it:energy_lim1}, optimizing the left-hand side over such $\psi$ gives us $f_{T\ox\id}(E)$ so that equality is proved.
    
    \ref{it:energy_lim4}:
    Using the lower semicontinuity, we find $\energy_B[T\rho]\le \liminf_n\energy_B[T_n\rho] \le \lambda E+E_0$, $\rho\in\states_E(\H_A)$.

    \ref{it:energy_lim5}: 
    If $\rho\in\states_{<\oo}(\H)$, then $\energy_C[ST\rho] \le f_{S}(\energy_B[T\rho]) \le f_{S}(f_{T}(\energy[\rho]))$ where we used item \ref{it:energy_lim2} and that $f_{S}$ is nondecreasing.

    \ref{it:energy_lim7}: 
    Denote by $W_j:\dom\energy_j\to \T(\H_j)$ the isometric isomorphism from \cref{lem:base}.
    Thus, the restriction is bounded if and only if $S=W_B\circ T\circ W_A^{-1}$ is bounded $\T(\H_A)\to\T(\H_B)$. 
    Since the latter is equivalent to $\exists M>0$ : $\tr S\rho = \energy_B[T\sigma]+\tr T\sigma \le M \tr\rho = M (\energy[\sigma]+\tr\sigma)$ for all $\rho\in\T(\H_A)^+$, where $\sigma=W_A^{-1}\rho\in(\dom\energy_A)^+$, the claim follows.
\end{proof}

\begin{proposition}\label{thm:Stinespring}
    Let $T:\T(\H_A)\to\T(\H_B)$ be a cp map and let $V:\H_A\to\H_B\ox\H_R$ be a Stinespring dilation, i.e., $T= \tr_R[V(\placeholder)V^*]$.
    If we set $G_R=0$, then $V:\H_A\to\H_B\ox\H_R$ is energy-limited and $f_T=f_V$.
    In particular, the following are equivalent:
    \begin{enumerate}[(a)]
        \item\label{it:StinespringEL} $T$ is energy-limited.
        \item\label{it:Stinespring1} $T$ admits a Stinespring dilation $T= \tr_R[V(\placeholder)V^*]$ with $V:\H_A\to \H_B\ox\H_R$ being energy-limited.
        \item\label{it:Stinespring2} For every Stinespring dilation $T= \tr_R[V(\placeholder) V^*]$, the operator $V:\H_A\to\H_B\ox\H_R$ is energy-limited. 
    \end{enumerate} 
\end{proposition}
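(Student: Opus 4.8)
The plan is to reduce the whole statement to the single identity $f_V = f_T$, where $f_V = f_{T_V}$ for the cp map $T_V(\rho)=V\rho V^*$ from $\T(\H_A)$ into $\T(\H_{BR})$ (whose output carries the reference Hamiltonian $G_{BR}=G_B\ox 1 + 1\ox G_R = G_B\ox 1$ because $G_R=0$). Once this identity is in hand as an equality of $\bar\RR^+$-valued functions, the finiteness condition defining energy-limitedness of $V$ is \emph{literally} the same as that defining energy-limitedness of $T$, and the three equivalences fall out immediately. So no separate argument is needed for each implication.

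The core computation is to establish $\energy_{BR}[V\rho V^*]=\energy_B[T\rho]$ for every $\rho\in\T(\H_A)^+$, both sides read via the spectral-truncation definition \eqref{eq:E_def}. Let $P_n$ be the spectral projection of $G_B$ onto $[0,n]$, so that $P_n\ox 1$ is the corresponding projection of $G_{BR}$ and $(P_nG_B)\ox 1$ is the truncation of $G_{BR}$. Using the elementary partial-trace identity $\tr[(A\ox 1)\sigma]=\tr_B[A\,\tr_R\sigma]$ with the bounded operator $A=P_nG_B$ and $\sigma=V\rho V^*\in\T(\H_{BR})^+$, I would compute
\begin{align*}
    \energy_{BR}[V\rho V^*]
    &= \lim_n \tr\!\big[(P_nG_B\ox 1)\,V\rho V^*\big] \\
    &= \lim_n \tr_B\!\big[P_nG_B\,\tr_R(V\rho V^*)\big]
    = \energy_B[T\rho],
\end{align*}
where the last step uses the Stinespring identity $\tr_R(V\rho V^*)=T\rho$ together with \eqref{eq:E_def} for $G_B$. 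This is exactly the partial-trace convention $\energy_{AR}[\blub]=\energy_A[\tr_R\blub]$ recorded in \cref{sec:channels}, applied to the systems $B,R$. Taking the supremum over $\rho\in\states_E(\H_A)$ gives $f_V(E)=f_T(E)$ for every $E>0$.

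From $f_V=f_T$ the equivalences are immediate. If $T$ is energy-limited, then $f_T(E)<\oo$ for all $E$, so $f_V=f_T<\oo$ for \emph{every} Stinespring dilation $V$, hence every such $V$ is energy-limited: this is \ref{it:StinespringEL}~$\Rightarrow$~\ref{it:Stinespring2}. The implication \ref{it:Stinespring2}~$\Rightarrow$~\ref{it:Stinespring1} is trivial since Stinespring dilations exist (e.g.\ from a Kraus decomposition $T(\rho)=\sum_kK_k\rho K_k^*$ one sets $V\psi=\sum_kK_k\psi\ox\ket k$). Finally, if some dilation $V$ is energy-limited, then $f_T=f_V<\oo$, so $T$ is energy-limited, giving \ref{it:Stinespring1}~$\Rightarrow$~\ref{it:StinespringEL}.

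The computation is short, so the only points needing genuine care are bookkeeping ones. First, one must justify passing the $n\to\oo$ limit through the partial trace in the regime where energies may be infinite; this is clean because $(P_nG_B)\ox 1$ increases to $G_{BR}$ and $V\rho V^*\ge0$, so both sides are monotone limits of the same increasing sequence and monotone convergence applies. Second, one must confirm that $V\rho V^*$ is positive trace-class so that $\energy_{BR}$ makes sense, which holds because the Stinespring operator $V$ of an (everywhere-defined, hence bounded) cp map between trace classes is bounded, with $V^*V=T^*(1_B)$. I do not expect any substantive obstacle beyond these routine justifications.
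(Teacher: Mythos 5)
Your proof is correct, and its skeleton matches the paper's: both reduce everything to the single identity $f_V=f_T$, obtained from the fact that $G_R=0$ makes the energy of the dilated output equal to the energy of the channel output, after which the three equivalences are immediate. The one genuine difference is where that identity is proved. The paper verifies $\energy_{BR}[V\psi]=\energy_B[T\kettbra\psi]$ only on pure input states and then needs the pure-state attainment of the maximal output energy (item \ref{it:energy_lim1} of \cref{lem:energy_lim}, which rests on the Shirokov--Weis result, \cref{lem:weis}) to conclude $f_T=f_V$; you instead prove $\energy_{BR}[V\rho V^*]=\energy_B[T\rho]$ for \emph{all} $\rho\in\T(\H_A)^+$ directly from the truncation definition \eqref{eq:E_def}, via the partial-trace identity $\tr[(P_nG_B\ox1)\sigma]=\tr_B[P_nG_B\,\tr_R\sigma]$, and take the supremum over $\states_E(\H_A)$ with no pure-state reduction. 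This buys you a more self-contained argument that bypasses the convexity/lower-semicontinuity machinery entirely (indeed, since the truncated traces agree term by term, you do not even need monotone convergence, only that both limits in \eqref{eq:E_def} are limits of the \emph{same} sequence); the paper's version is shorter because that machinery is already on the shelf. Your two bookkeeping points (that $P_n\ox1$ is the correct spectral projection of $G_{BR}=G_B\ox1$, and that $V$ is bounded with $V^*V=T^*(1_B)$) are handled correctly.
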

\begin{proof}
    $G_R=0$ implies $\energy_{BR}[\phi]= \energy_B[\tr_R\kettbra\phi]$ for $\phi\in\H_B\ox\H_R$.
    Optimizing $\energy_{BR}[V\psi]= \energy_B[T\kettbra\psi]$ over unit vectors $\psi\in\dom\sqrt G$ with $\energy_{AR}[\psi]\le E$ shows $f_T(E)=f_V(E)$.
\end{proof}

This result is implicitly also contained in \cite{shirokov2020extension}, where the Stinespring dilation is used to extend the action of the dual operation $T^*$ to $\sqrt G$-bounded operators. 
As a consequence, we get:
\begin{corollary}
    Let $T:\T(\H_A)\to\T(\H_B)$ be an energy-limited cp map. 
    Then there are energy-limited operators $K_\alpha:\H_A\to\H_B$ such that
    \begin{equation}\label{eq:Kraus}
        T\rho = \sum_\alpha K_\alpha \rho K_\alpha^*,\qquad\rho\in\T(\H_A).
    \end{equation}
\end{corollary}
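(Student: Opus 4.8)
The plan is to reduce everything to the Stinespring picture furnished by \cref{thm:Stinespring}. First I would fix a Stinespring dilation $V:\H_A\to\H_B\ox\H_R$ of $T$, so that $T=\tr_R[V(\placeholder)V^*]$. By \cref{thm:Stinespring} the operator $V$ is then automatically energy-limited with $f_V=f_T<\oo$; this is the only place the hypothesis on $T$ enters, and it carries essentially all of the analytic content.

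Next I would build the Kraus operators in the standard way. Fixing an orthonormal basis $(e_\alpha)$ of $\H_R$, every $\psi\in\H_A$ has a unique expansion $V\psi=\sum_\alpha (K_\alpha\psi)\ox e_\alpha$ with $K_\alpha\psi\in\H_B$; this defines bounded operators $K_\alpha:\H_A\to\H_B$ (with $\norm{K_\alpha}\le\norm V$), and tracing out $R$ against this basis yields the decomposition $T\rho=\sum_\alpha K_\alpha\rho K_\alpha^*$ of \eqref{eq:Kraus}. Thus the only thing left to check is that each $K_\alpha$ is energy-limited.

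For this I would use the pure-state formula for the maximal output energy from \cref{lem:energy_lim}\ref{it:energy_lim1}, together with $G_R=0$, so that $G_{BR}=G_B\ox1$ and hence $\sqrt{G_{BR}}=\sqrt{G_B}\ox1$. For any unit vector $\psi$ with $\energy_A[\psi]\le E$, energy-limitedness of $V$ forces $V\psi\in\dom(\sqrt{G_B}\ox1)$, and the expansion above yields the orthogonal decomposition
\begin{equation*}
    \norm{(\sqrt{G_B}\ox1)V\psi}^2 = \sum_\alpha \norm{\sqrt{G_B}\,K_\alpha\psi}^2 .
\end{equation*}
In particular each summand is finite, so $K_\alpha\psi\in\dom\sqrt{G_B}$ and $\norm{\sqrt{G_B}\,K_\alpha\psi}^2\le\norm{(\sqrt{G_B}\ox1)V\psi}^2=\energy_{BR}[V\psi]$. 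Taking the supremum over admissible $\psi$ and invoking \cref{lem:energy_lim}\ref{it:energy_lim1} once more gives $f_{K_\alpha}(E)\le f_V(E)=f_T(E)<\oo$ for every $E>0$, so each $K_\alpha$ is energy-limited.

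The proof is short precisely because the work has been front-loaded into \cref{thm:Stinespring}. The one point requiring genuine care is the domain bookkeeping behind the displayed identity: one must justify that the componentwise norms $\norm{\sqrt{G_B}\,K_\alpha\psi}$ really square-sum to $\norm{(\sqrt{G_B}\ox1)V\psi}$, so that finiteness of the latter \emph{forces} each $K_\alpha\psi$ into $\dom\sqrt{G_B}$. This is exactly the statement that $\sqrt{G_B}\ox1$ acts diagonally with respect to the chosen basis of the trivially-weighted factor $\H_R$, and I expect verifying this — rather than any estimate — to be the main thing to get right.
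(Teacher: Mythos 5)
Your proof is correct and is exactly the intended route: the paper states this corollary as an immediate consequence of \cref{thm:Stinespring} (leaving the details implicit), namely fix a Stinespring dilation, read off the Kraus operators from an orthonormal basis of $\H_R$, and let each $K_\alpha$ inherit the bound $f_{K_\alpha}\le f_V=f_T$ from the diagonal action of $\sqrt{G_B}\ox 1$. Your domain bookkeeping for $\sqrt{G_B}\ox 1$ (it acts as the direct sum $\bigoplus_\alpha \sqrt{G_B}$ relative to the basis of $\H_R$, so finiteness of $\|(\sqrt{G_B}\ox1)V\psi\|$ forces each $K_\alpha\psi\in\dom\sqrt{G_B}$) is precisely the point that needed checking, and you handle it correctly.
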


In \cite{Winter_EnergyConstrDiamond_2017}, after introducing energy-limited quantum channels, Winter noted that an affine upper bound $f_T(E)\le \lambda E+E_0$ formally corresponds to the operator inequality
\begin{equation}\label{eq:formal}
    T^*(G_B) \le \lambda G_A +E_0,
\end{equation}
where $T^*$ is the dual (Heisenberg-picture) channel.
However, since $T^*$ cannot be applied to unbounded operators, a rigorous version of this statement requires, first of all, a proper definition of $T^*(G_B)$.
Indeed, we will show that there is a canonical way to turn $T^*(G_B)$ into a positive self-adjoint operator, which then lets us prove \eqref{eq:formal} rigorously.
Before we proceed, we note that the assumption of energy-limitedness is necessary for $T^*(G_B)$ to make sense as an operator:
Consider the quantum channel $T\rho=(\tr \rho) \,\kettbra\psi$ with $\psi\notin\dom\sqrt G_{\!B}$ then $T^*(G_B)$ formally evaluates to multiplication by the "scalar" $\ip\psi{G_B\psi}=\oo$.

\begin{lemma}\label{thm:def_TG}
    Let $T:\T(\H_A)\to\T(\H_B)$ be an energy-limited cp map. Then: 
    \begin{enumerate}[(1)]
        \item\label{it:def_TG1}
            For every Stinespring dilation $(V,\H_R)$ of $T$, the operator $L:=(\sqrt G_{\!B}\ox1)V:\dom \sqrt G_{\!A}\allowbreak\to \H_B\ox\H_R$ is closable and well-defined.
        \item\label{it:def_TG2}
            The self-adjoint operator $T^*(G_B):=L^*\bar L$, with $L$ as in \ref{it:def_TG1}, does not depend on the chosen dilation.
        \item\label{it:def_TG3} 
            The positive quadratic form $a_0(\psi,\phi)=\energy_B[T\ketbra\phi\psi]$ defined on the form domain $Q(a_0)=\dom\sqrt G_{\!A}$ is closable.
            $T^*(G_B)$, as defined in item \ref{it:def_TG2}, is the unique self-adjoint positive operator inducing the closure of the positive quadratic form $a_0$.
    \end{enumerate} 
\end{lemma}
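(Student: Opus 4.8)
The plan is to reduce all three claims to the standard correspondence between closed nonnegative quadratic forms and nonnegative self-adjoint operators, namely von Neumann's theorem (for $\bar L^*\bar L$) together with Kato's first representation theorem (for uniqueness). The setup is fixed by \cref{thm:Stinespring}: since $T$ is energy-limited, so is the dilation $V$, with $f_V=f_T$, and because $G_R=0$ we have $\sqrt{G_{BR}}=\sqrt{G_B}\ox1$, so by definition $\energy_{BR}[\tau]=\energy_B[\tr_R\tau]$ for $\tau\in\dom\energy_{BR}$.

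For item \emph{(1)}, well-definedness comes first. For $\psi\in\dom\sqrt{G_A}$ the operator $\kettbra\psi$ is a positive finite-energy operator, so energy-limitedness gives $\energy_B[T\kettbra\psi]<\oo$; since $\energy_B[T\kettbra\psi]=\energy_{BR}[\kettbra{V\psi}]=\norm{(\sqrt{G_B}\ox1)V\psi}^2$, this forces $V\psi\in\dom(\sqrt{G_B}\ox1)$, so $L\psi=(\sqrt{G_B}\ox1)V\psi$ is well-defined. This also yields the identity driving everything: for $\psi,\phi\in\dom\sqrt{G_A}$,
\[
\ip{L\psi}{L\phi}
=\energy_{BR}[\ketbra{V\phi}{V\psi}]
=\energy_B[T\ketbra\phi\psi]
=a_0(\psi,\phi),
\]
whose right-hand side is \emph{intrinsic} to $T$, i.e.\ manifestly independent of the dilation. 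For closability I would show $L^*$ is densely defined: for $\phi\in\dom(\sqrt{G_B}\ox1)$ and $\psi\in\dom\sqrt{G_A}$, self-adjointness of $\sqrt{G_B}\ox1$ (with both $V\psi$ and $\phi$ in its domain) and boundedness of $V$ give $\ip{L\psi}{\phi}=\ip{\psi}{V^*(\sqrt{G_B}\ox1)\phi}$, so $\phi\in\dom L^*$. As $\dom(\sqrt{G_B}\ox1)$ is dense, $L^*$ is densely defined and $L$ is closable.

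For items \emph{(2)} and \emph{(3)}, since $\bar L$ is closed and densely defined (its domain $\dom\sqrt{G_A}$ is dense), von Neumann's theorem makes $T^*(G_B):=L^*\bar L=\bar L^*\bar L$ a nonnegative self-adjoint operator whose quadratic form is exactly $\psi\mapsto\norm{\bar L\psi}^2$ on $\dom\bar L$; this gives the self-adjointness asserted in item (2). Because $L$ is closable, the form $a_0(\psi,\phi)=\ip{L\psi}{L\phi}$ is closable with closure $\bar a(\psi,\phi)=\ip{\bar L\psi}{\bar L\phi}$ on $\dom\bar L$, and Kato's first representation theorem identifies $\bar L^*\bar L$ as the \emph{unique} nonnegative self-adjoint operator inducing $\bar a$; this is item (3). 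Dilation-independence in item (2) then follows for free: the closed form $\bar a$ is determined by the intrinsic quantities $\energy_B[T\ketbra\phi\psi]$, so by the uniqueness in the representation theorem its representing operator $T^*(G_B)$ cannot depend on the chosen Stinespring dilation.

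\textbf{Main obstacle.} The algebraic identity and the invocation of the two abstract theorems are routine; the two delicate points are (i) closability of $L$, which I handle via density of $\dom(\sqrt{G_B}\ox1)$ and hence of $\dom L^*$, and (ii) making sure the operator representing the intrinsic form $\bar a$ is \emph{precisely} $\bar L^*\bar L$ and not merely some other self-adjoint extension. For (ii) the combination is essential: von Neumann's theorem guarantees that the form of $\bar L^*\bar L$ is exactly $\norm{\bar L\,\placeholder}^2$, so that $\bar L^*\bar L$ represents $\bar a$, while the uniqueness clause of Kato's theorem pins it down and simultaneously delivers the dilation-independence.
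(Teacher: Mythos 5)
Your proposal is correct and follows essentially the same route as the paper: well-definedness from energy-limitedness of the dilation, the intrinsic identity $\ip{L\psi}{L\phi}=\energy_B[T\ketbra\phi\psi]$, self-adjointness of $L^*\bar L$ via von Neumann's theorem, and dilation-independence plus item (3) via the uniqueness clause of the representation theorem for closed forms (the paper phrases this as the two dilations inducing the same graph norm, hence the same closed form). The only cosmetic difference is that you prove closability by showing $\dom L^*\supseteq\dom(\sqrt G_{\!B}\ox1)$ is dense, whereas the paper verifies the closability definition directly by pairing against $\dom\sqrt G_{\!B}\odot\H_R$ — the same duality computation either way.
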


\begin{proof}
    \ref{it:def_TG1}: Let $(\psi_n)$ be an $L$-graph norm Cauchy sequence in $\dom\sqrt G_{\!A}$ such that $\psi_n\to 0$ in $\H_A$ and set $\psi=\lim_n L\psi_n$.
    Then $\psi=0$ because for every vector $\varphi$ from the dense subspace $\dom\sqrt G_{\!B} \odot\H_R\subseteq \H_B\ox\H_R$ it holds that $\ip\varphi{\psi} = \lim_n \ip\varphi{(\sqrt G_{\!B}\ox1)V\psi_n} = \lim_n \ip{(\sqrt G_{\!B}\ox1)\varphi}{V\psi_n}=0$.

    \ref{it:def_TG2}: Recall that if $B$ is a closed operator then $B^*B$ is self-adjoint \cite[Thm.~X.25]{ReedSimon2}.
    If $(V_i,\H_{R,i})$ is a Stinespring dilation of $T$, then $L^*_i\bar L_i$ is the unique positive self-adjoint operator inducing the closed quadratic form $a_i(\psi,\phi)=\ip{\bar L_i\psi}{\bar L_i\phi}$ with $Q(a_i)=\dom\bar L_i$, where $L_i=(\sqrt G_{\!B}\ox1)V_i$ with $\dom L_i =\dom \sqrt G_{\!A}$, $i=1,2$.
    $L_1$ and $L_2$ induce the same graph norm because $\norm{\psi}_{L_1}^2 = \norm\psi^2+\energy[T\kettbra\psi] = \norm\psi_{L_2}^2$, $\psi\in\dom\sqrt G_{\!A}$.
    Thus, they induce the same closed quadratic forms $a_1$ and $a_2$, which implies $L_1^*\bar L_1=L_2^*\bar L_2$ \cite[Sec.~X.4]{ReedSimon1}.
    
    \ref{it:def_TG3}: 
    Since we have $a_0(\psi,\psi)=\energy[T\ketbra\psi\psi] = \ip\psi{T^*(G_B)\psi}$, the polarization identity implies that $a_0$ is the form corresponding to $T^*(G_B)$ restricted to the form core $\dom \sqrt G_{\!A}$. This implies the claim.
%
\end{proof}

Recall that the operator ordering $A\le B$ is defined for positive self-adjoint unbounded operators $A$ and $B$ by
\begin{equation}\label{eq:order}
    A \le B :\!\iff \dom\sqrt A\supseteq\dom\sqrt B \qandq \norm{\sqrt A\psi}\le\norm{\sqrt B\psi},\quad \psi\in\dom\sqrt B.
\end{equation}
Equipped with this definition and the above definition of $T^*(G_B)$ as a positive self-adjoint operator, we make Winter's statement precise:

\begin{proposition}\label{thm:EL2}
    Let $T:\T(\H_A)\to\T(\H_B)$ be a cp map and let $\lambda,E_0\ge0$.
    The following are equivalent:
    \begin{enumerate}[(a)]
        \item $T$ is energy-limited and $f_T(E)\le\lambda E + E_0$ for all $E>0$.
        \item The positive quadratic form $(\psi,\phi)\mapsto \energy_B[T\ketbra\phi\psi]$ with form domain $\{\psi\in\dom\sqrt G_{\!A} : \energy_B[T\kettbra\psi]<\oo\}$ is densely defined and closable, and the operator $T^*(G_B)$ inducing its closure satisfies 
        \begin{equation}\label{eq:non_formal}
            T^*(G_B)\le \lambda G_A+E_0.
        \end{equation}
    \end{enumerate}
\end{proposition}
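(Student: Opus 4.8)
The plan is to prove both implications by unpacking the operator inequality \eqref{eq:non_formal}, read through the ordering \eqref{eq:order}, into two separate statements, and then to move between these and the maximal output energy $f_T$ via the pure-state formula for $f_T$ in \cref{lem:energy_lim}. Since $\psi\in\dom\sqrt{G_A}$ is equivalent to $\energy_A[\psi]<\oo$ by \eqref{eq:energy_functional_psi}, and since $\dom\sqrt{\lambda G_A+E_0}=\dom\sqrt{G_A}$ by \cref{lem:elementary}, the inequality $T^*(G_B)\le\lambda G_A+E_0$ splits into a domain containment $\dom\sqrt{T^*(G_B)}\supseteq\dom\sqrt{G_A}$ together with the form bound $\ip\psi{T^*(G_B)\psi}\le\lambda\energy_A[\psi]+E_0\norm\psi^2$ for all $\psi\in\dom\sqrt{G_A}$. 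Both clauses enter the argument.

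For (a) $\Rightarrow$ (b), energy-limitedness gives $\energy_B[T\kettbra\psi]\le f_T(\energy_A[\psi])<\oo$ for every $\psi\in\dom\sqrt{G_A}$, so the form domain named in (b) is all of $\dom\sqrt{G_A}$, which is dense. Closability, and the statement that $T^*(G_B)$ induces the closure, are then exactly the content of \cref{thm:def_TG}, which also yields the domain containment since the form domain of $T^*(G_B)$ contains $\dom\sqrt{G_A}$. For the form bound I use that, on this core, the form value coincides with the output energy: for a unit $\psi$ with $\energy_A[\psi]=E$,
\[
    \ip\psi{T^*(G_B)\psi}=\energy_B[T\kettbra\psi]\le f_T(E)\le\lambda E+E_0,
\]
and the general case follows because both sides are homogeneous of degree two under $\psi\mapsto c\psi$. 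This establishes \eqref{eq:non_formal}.

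For (b) $\Rightarrow$ (a), which I expect to be the main obstacle, the difficulty is that (b) only controls the form $\ip\psi{T^*(G_B)\psi}$ on the form closure of the a priori possibly smaller domain $\mathcal D_0=\{\psi\in\dom\sqrt{G_A}:\energy_B[T\kettbra\psi]<\oo\}$, whereas $f_T$ is governed by the genuine output energies $\energy_B[T\kettbra\psi]$ of all $\psi\in\dom\sqrt{G_A}$, including any with infinite output energy. The bridge is lower semicontinuity \eqref{eq:lsc}. Fix $\psi\in\dom\sqrt{G_A}$; the domain-containment clause gives $\psi\in\dom\sqrt{T^*(G_B)}$, so there are $\psi_n\in\mathcal D_0$ converging to $\psi$ in the form norm of $T^*(G_B)$, whence $\psi_n\to\psi$ in $\H_A$ and $\ip{\psi_n}{T^*(G_B)\psi_n}\to\ip\psi{T^*(G_B)\psi}$. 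On the core $\mathcal D_0$ the form value equals $\energy_B[T\kettbra{\psi_n}]$, and since $\kettbra{\psi_n}\to\kettbra\psi$ in trace norm and $T$ is bounded, $T\kettbra{\psi_n}\to T\kettbra\psi$ in trace norm. Lower semicontinuity then gives
\[
    \energy_B[T\kettbra\psi]\le\liminf_n\energy_B[T\kettbra{\psi_n}]=\ip\psi{T^*(G_B)\psi}\le\lambda\energy_A[\psi]+E_0\norm\psi^2,
\]
now for every $\psi\in\dom\sqrt{G_A}$.

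It remains to take the supremum over unit vectors with $\energy_A[\psi]\le E$: by \eqref{eq:energy_functional_psi} each such vector lies in $\dom\sqrt{G_A}$, and the pure-state formula for $f_T$ in \cref{lem:energy_lim} identifies the supremum with $f_T(E)$, giving $f_T(E)\le\lambda E+E_0<\oo$ for all $E>0$, so $T$ is energy-limited. The genuinely analytic ingredients are the identification of the form of $T^*(G_B)$ with the output energy on the core (\cref{thm:def_TG}) and the lower-semicontinuity estimate \eqref{eq:lsc}; the remaining work is the careful separation of the two clauses hidden in the ordering \eqref{eq:order}.
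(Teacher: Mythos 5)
Your proof is correct. For (a) $\Rightarrow$ (b) you follow the paper's own route exactly: energy-limitedness makes the form domain all of $\dom\sqrt G_{\!A}$, closability and the identification of $T^*(G_B)$ come from \cref{thm:def_TG}, and the operator inequality is read off from $\norm{\sqrt{T^*(G_B)}\psi}^2=\energy_B[T\kettbra\psi]\le\lambda\energy_A[\psi]+E_0\norm\psi^2$ on $\dom\sqrt G_{\!A}$. Where you genuinely diverge is the converse, which the paper dispatches with the single sentence "The converse is clear." You correctly identify that it is not quite immediate: hypothesis (b) only identifies $T^*(G_B)$ with the closure of the form on the a priori smaller domain $\D_0=\{\psi\in\dom\sqrt G_{\!A}:\energy_B[T\kettbra\psi]<\oo\}$, so for a hypothetical $\psi\in\dom\sqrt G_{\!A}\setminus\D_0$ the abstract form value $\ip{\psi}{T^*(G_B)\psi}$ — finite by the domain clause of the ordering \eqref{eq:order}, since $\dom\sqrt{\lambda G_A+E_0}=\dom\sqrt G_{\!A}$ — does not obviously dominate the actual output energy. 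Your approximation argument (take $\psi_n\in\D_0$ converging to $\psi$ in the form norm, use boundedness of the cp map $T$ to get $T\kettbra{\psi_n}\to T\kettbra\psi$ in trace norm, and invoke lower semicontinuity \eqref{eq:lsc}) is exactly the right bridge; it yields $\energy_B[T\kettbra\psi]\le\ip{\psi}{T^*(G_B)\psi}\le\lambda\energy_A[\psi]+E_0$ for every $\psi\in\dom\sqrt G_{\!A}$ and shows a posteriori that $\D_0=\dom\sqrt G_{\!A}$. Combined with the pure-state formula \eqref{eq:fT_pure_states} (whose proof via \cref{lem:weis} does not presuppose energy-limitedness, since $\rho\mapsto\energy_B[T\rho]$ is affine and lower semicontinuous), this gives $f_T(E)\le\lambda E+E_0<\oo$. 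So your proposal supplies, correctly, a nontrivial step the paper leaves implicit; the only variant worth noting is that one could replace the appeal to \eqref{eq:fT_pure_states} by item \ref{it:elementary2} of \cref{lem:elementary}, decomposing an arbitrary $\rho\in\states_E(\H_A)$ into pure states and summing the bound, which is equivalent in content.
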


\begin{proof}
    Let $T$ be energy-limited with $\lambda,E_0\ge0$ such that $f_T(E)\le \lambda E+E_0$. 
    The form domain is simply $\dom\sqrt G_{\!A}$ and, hence, dense. Closability is proved in \cref{thm:def_TG}.  
    \Cref{eq:non_formal} follows from
     $$\norm{\sqrt{T^*(G_B)}\psi}^2 =\energy_B[T(\kettbra\psi)] \le \lambda\energy_A[\kettbra\psi]+E_0\tr\kettbra\psi=\lambda\norm{\sqrt G_{\!A}\psi}^2+E_0\norm\psi^2$$ for all $\psi\in\dom\sqrt G_{\!A}$.
     The converse is clear.    
\end{proof}


\begin{corollary}\label{thm:dual_prob}
    Let $T:\T(\H_A)\to\T(\H_B)$ be an energy-limited cp map.
    Then 
    \begin{equation}\label{eq:dual_prob}
        f_T(E) = \min \Big\{ \lambda E+E_0  : \lambda,E_0\ge0\ \text{s.t.}\  T^*(G_B) \le \lambda G_A +E_0 \Big\}.
    \end{equation}
\end{corollary}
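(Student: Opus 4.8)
The plan is to read \eqref{eq:dual_prob} as an instance of Legendre--Fenchel duality for the concave function $f_T$, using \cref{thm:EL2} to translate between affine upper bounds on $f_T$ and the operator inequality defining the feasible set. The crucial bridge is already available: by \cref{thm:EL2}, a pair $\lambda,E_0\ge0$ satisfies $T^*(G_B)\le\lambda G_A+E_0$ if and only if the affine function $\ell(E')=\lambda E'+E_0$ dominates $f_T$ on all of $\RR^+$. Hence the feasible set of the minimization problem is precisely the collection of affine functions with nonnegative coefficients whose graphs lie above that of $f_T$, and the objective is the value $\ell(E)$ of such a function at the prescribed energy $E$. With this reformulation, the statement becomes the (almost) elementary fact that a concave function is recovered as the pointwise infimum of its affine majorants.

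I would first dispatch the easy inequality ``$\le$'': for any feasible $(\lambda,E_0)$, \cref{thm:EL2} directly gives $f_T(E)\le\lambda E+E_0$, so $f_T(E)$ is a lower bound for the objective over the whole feasible set, whence $f_T(E)$ is at most the infimum. For the reverse inequality together with attainment, I would exhibit a single feasible pair whose objective equals $f_T(E)$. Recall from \cref{lem:EL} that $f_T$ is concave, nondecreasing and continuous on $\RR^+$, and that it is nonnegative since $G_B\ge0$. As $E>0$ is an interior point, concavity supplies a supporting line from above, i.e.\ real numbers $\lambda,E_0$ with $\ell(E'):=\lambda E'+E_0\ge f_T(E')$ for all $E'>0$ and $\ell(E)=f_T(E)$.

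It then remains to verify that this supporting line automatically respects the sign constraints, which is the only piece of genuine bookkeeping. If $\lambda<0$, then $\ell(E')\to-\infty$ as $E'\to\infty$, contradicting $\ell\ge f_T\ge0$; hence $\lambda\ge0$. Letting $E'\to0^+$ in $\ell(E')\ge f_T(E')\ge0$ yields $E_0\ge0$. By \cref{thm:EL2} applied in the converse direction, this pair satisfies $T^*(G_B)\le\lambda G_A+E_0$ and is therefore feasible, with objective value $\lambda E+E_0=\ell(E)=f_T(E)$.

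Combining the two directions, the infimum equals $f_T(E)$ and is attained by the supporting line, so the $\min$ in \eqref{eq:dual_prob} is justified. I do not expect any infinite-dimensional obstacle here: all the analytic subtleties (closability of the form, the very definition of $T^*(G_B)$ as a self-adjoint operator) have already been absorbed into \cref{thm:def_TG,thm:EL2}. The mild point to handle carefully is simply the automatic nonnegativity of the supporting line's slope and intercept, which I would argue as above from monotonicity and nonnegativity of $f_T$ rather than invoking superdifferentials.
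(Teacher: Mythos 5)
Your proof is correct and follows essentially the same route as the paper: both reduce \eqref{eq:dual_prob} to the fact that the concave, nondecreasing function $f_T$ is the pointwise minimum of its affine majorants with nonnegative coefficients, and then invoke \cref{thm:EL2} to identify these majorants with the feasible pairs $(\lambda,E_0)$ of the operator inequality. Your additional bookkeeping (attainment via a supporting line at the interior point $E$, and deducing $\lambda,E_0\ge0$ from $f_T\ge0$) just makes explicit what the paper's one-line appeal to concavity and monotonicity leaves implicit.
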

\begin{proof}
    Since $f_T$ is a concave function, it is the pointwise minimum of affine functions dominating it, and, since $f_T$ is nondecreasing, we can restrict to affine functions with positive slope.
    Therefore:
    \begin{equation*}
        f_T(E)= \min\Big\{ \lambda E +E_0 : \lambda,E_0\in\RR\ \text{s.t.}\ f_T(E')\le\lambda E'+E_0, \forall E'>0\Big\}.
    \end{equation*}
    By \cref{thm:EL2}, $\lambda E +E_0$ dominates $f_T$ if and only if $T^*(G_B)\le \lambda G_A+E_0$, proving the claim.
\end{proof}

For Hilbert space operators, we can connect energy-limitedness with graph norm-boundedness:

\begin{corollary}\label{thm:EL_contraction}
    Let $V:\H_A\to\H_B$ be a bounded operator.
    The following are equivalent:
    \begin{enumerate}[(a)]
        \item\label{it:EL_contraction1} 
            $V$ is energy-limited
        \item\label{it:EL_contraction2}
            $V$ maps $\dom\sqrt G_{\!A}$ into $\dom\sqrt G_{\!B}$ and there exist constants $\lambda,E_0\ge0$ such that the operator inequality  $V^*G_BV\le \lambda G_A+E_0$ holds.
        \item\label{it:EL_contraction3}
             $V$ maps $\dom\sqrt G_{\!A}$ into $\dom\sqrt G_{\!B}$ and $\sqrt G_{\!B}V$ is $\sqrt G_{\!A}$-bounded, i.e., $V$ restricts to a bounded operator $\dom\sqrt G_{\!A}\to\dom\sqrt G_{\!B}$ (both of which are equipped with the graph norms).
     \end{enumerate}
     In this case $\norm{\sqrt G_{\!B}V\psi}^2 \le \lambda\norm{\sqrt G_{\!A}\psi}^2+E_0\norm\psi^2$ holds for all $\psi\in\dom\sqrt G_{\!A}$ and a given pair of constants $\lambda, E_0\ge0$ if and only if the operator inequality in \ref{it:EL_contraction2} holds.
\end{corollary}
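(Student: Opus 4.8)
The plan is to read the corollary off from the theory already developed for the cp map $T_V\rho=V\rho V^*$. The key identification is that $V^*G_BV$ in \ref{it:EL_contraction2} is precisely $T_V^*(G_B)$ in the sense of \cref{thm:def_TG}: choosing the trivial ancilla $\H_R=\CC$, the operator $V$ itself is a Stinespring dilation of $T_V$, so that $L=\sqrt G_{\!B}V$ on $\dom\sqrt G_{\!A}$ and $V^*G_BV:=L^*\bar L$. The closability of $L$ needed to define this does not require energy-limitedness: if $V$ maps $\dom\sqrt G_{\!A}$ into $\dom\sqrt G_{\!B}$, then for any graph-norm Cauchy sequence with $\psi_n\to0$ and $L\psi_n\to\chi$ and any $\varphi\in\dom\sqrt G_{\!B}$, one has $\ip\varphi\chi=\lim_n\ip{\sqrt G_{\!B}\varphi}{V\psi_n}=0$ since $V$ is bounded; hence $\chi=0$. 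This is the specialization of \cref{thm:def_TG}\,\ref{it:def_TG1} to a trivial ancilla.

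\proofiff{it:EL_contraction1}{it:EL_contraction2} I would apply \cref{thm:EL2} to $T=T_V$. Energy-limitedness forces the mapping property: by \cref{lem:energy_lim}\,\ref{it:energy_lim1} one has $\energy_B[T_V\kettbra\psi]=\energy_B[V\psi]$, finite for every $\psi\in\dom\sqrt G_{\!A}$ since such $\psi$ has finite input energy and $f_V$ is finite, and by \eqref{eq:energy_functional_psi} finiteness is equivalent to $V\psi\in\dom\sqrt G_{\!B}$. Hence the form domain $\{\psi\in\dom\sqrt G_{\!A}:\energy_B[T_V\kettbra\psi]<\oo\}$ of \cref{thm:EL2} collapses to the dense subspace $\dom\sqrt G_{\!A}$. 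Since a finite concave nondecreasing $f_V$ always admits an affine majorant (\cref{lem:EL}), $V$ is energy-limited if and only if $f_V(E)\le\lambda E+E_0$ for some $\lambda,E_0\ge0$; \cref{thm:EL2} then states exactly that this holds if and only if $V^*G_BV\le\lambda G_A+E_0$. For the converse reading, the mapping property assumed in \ref{it:EL_contraction2} already renders the form densely defined and closable, by the previous paragraph, so \cref{thm:EL2} applies in the reverse direction.

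\proofiff{it:EL_contraction2}{it:EL_contraction3} This, together with the final assertion, is a matter of unfolding the operator order \eqref{eq:order}. Since $\dom\sqrt{\lambda G_A+E_0}=\dom\sqrt G_{\!A}$ (by \cref{lem:elementary}\,\ref{it:elementary1} when $\lambda>0$, and trivially otherwise) and $\dom\sqrt{V^*G_BV}=\dom\bar L\supseteq\dom\sqrt G_{\!A}$, the domain inclusion required by \eqref{eq:order} is automatic, while the norm inequality there reads, on $\dom\sqrt G_{\!A}$,
\begin{equation*}
    \norm{\sqrt G_{\!B}V\psi}^2\le\lambda\norm{\sqrt G_{\!A}\psi}^2+E_0\norm\psi^2,
\end{equation*}
using $\norm{\sqrt{V^*G_BV}\psi}=\norm{\bar L\psi}=\norm{\sqrt G_{\!B}V\psi}$ and $\norm{\sqrt{\lambda G_A+E_0}\psi}^2=\lambda\norm{\sqrt G_{\!A}\psi}^2+E_0\norm\psi^2$. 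This is exactly the "In this case" statement, and such a bound holds for some $\lambda,E_0\ge0$ precisely when $\sqrt G_{\!B}V$ is $\sqrt G_{\!A}$-bounded, i.e.\ \ref{it:EL_contraction3}; one passes between the quadratic and the linear form of boundedness via $\sqrt{x+y}\le\sqrt x+\sqrt y$ and $(a+b)^2\le2a^2+2b^2$.

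The step I expect to be most delicate is this last bookkeeping with unbounded operators: verifying that $\sqrt{V^*G_BV}=\abs{\bar L}$ acts in norm as $\bar L$ on the core $\dom L=\dom\sqrt G_{\!A}$, and that the form domains collapse exactly as claimed, so that the abstract operator inequality \eqref{eq:non_formal} translates faithfully into the concrete quadratic-form bound. Everything else is a direct transcription of \cref{thm:EL2,thm:def_TG,lem:energy_lim} to the special channel $T_V$.
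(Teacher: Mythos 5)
Your proposal is correct and follows essentially the same route as the paper, which likewise derives \ref{it:EL_contraction1}\,$\Leftrightarrow$\,\ref{it:EL_contraction2} from the proof of \cref{thm:EL2} applied to $T_V$ (with $V^*G_BV$ understood via \cref{thm:def_TG} for the trivial ancilla) and settles \ref{it:EL_contraction3} and the final assertion by unfolding the order relation \eqref{eq:order} through the identities $\energy_B[V\psi]=\norm{\sqrt G_{\!B}V\psi}^2$ and $\lambda\norm{\sqrt G_{\!A}\psi}^2+E_0\norm\psi^2=\ip\psi{(\lambda G_A+E_0)\psi}$. Your extra care — the direct closability argument for $L=\sqrt G_{\!B}V$ without assuming energy-limitedness, and deducing the mapping property $V\dom\sqrt G_{\!A}\subseteq\dom\sqrt G_{\!B}$ from finiteness of $f_V$ via \eqref{eq:energy_functional_psi} — just makes explicit what the paper leaves as "clear."
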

\begin{proof}
    \ref{it:EL_contraction1} $\Leftrightarrow$ \ref{it:EL_contraction2} is clear from the proof of \cref{thm:EL2}.
    \ref{it:EL_contraction1} $\Leftrightarrow$ \ref{it:EL_contraction3} follows from $\energy_B[V\psi]=\|\sqrt G_{\!B}V\psi\|^2 =\ip\psi{V^*G_BV\psi}$ and $\lambda \|\sqrt G_{\!A}\psi\|^2+E_0\norm{\psi}^2 = \ip\psi{(\lambda G_A+E_0)\psi}$.
\end{proof}

Our next result is concerned with altering the reference Hamiltonians:

\begin{theorem}
    Let $\varphi:\RR^+\to\RR^+$ be a continuous operator-monotone function such that $\varphi(0)=0$ and $\lim_{t\to\oo}\frac{\varphi(\lambda t)}{\varphi(t)}<\oo$ for all $\lambda>0$.
    Then every trace-nonincreasing cp map $T:\T(\H_A)\to\T(\H_B)$ which is energy-limited with respect to $G_A$ and $G_B$ is also energy-limited with respect to $\varphi(G_A)$ and $\varphi(G_B)$.
\end{theorem}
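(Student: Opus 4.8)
The plan is to pass to a Stinespring dilation and combine the operator Jensen inequality with the operator monotonicity of $\varphi$. Throughout I write $\energy^\varphi_A[\rho]=\tr[\varphi(G_A)\rho]$ and $\energy^\varphi_B[\sigma]=\tr[\varphi(G_B)\sigma]$ for the $\varphi$-energies. By \cref{lem:EL} and \cref{lem:energy_lim}\ref{it:energy_lim1} it suffices to show that $\energy^\varphi_B[T\kettbra\psi]$ is bounded by an affine function of $\energy^\varphi_A[\psi]$ for every unit vector $\psi\in\dom\sqrt{\varphi(G_A)}$. I would fix a Stinespring dilation $V:\H_A\to\H_B\ox\H_R$ of $T$; since $T$ is trace-nonincreasing, $V$ is a contraction, $V^*V\le1$. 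Because $T$ is energy-limited with respect to $G_A,G_B$, \cref{thm:EL2} (equivalently \cref{thm:EL_contraction} applied to the bounded operator $V$) provides $\lambda,E_0\ge0$ with
\begin{equation*}
    V^*(G_B\ox1)V=T^*(G_B)\le\lambda G_A+E_0 .
\end{equation*}
Finally I record two standard facts: being operator monotone on $\RR^+$ with $\varphi(0)=0$, the function $\varphi$ is operator concave, and it is concave as a scalar function.

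Next I would truncate to confine the Jensen step to bounded operators. Let $G_B^{(n)}$ be the truncation of $G_B$ to $[0,n]$; then $A_n:=G_B^{(n)}\ox1$ is bounded and positive, $\varphi(A_n)=\varphi(G_B^{(n)})\ox1$ is bounded, and $A_n\uparrow G_B\ox1$. Since $\varphi$ is operator concave with $\varphi(0)=0$ and $V$ is a contraction, the Hansen--Pedersen form of Jensen's operator inequality yields
\begin{equation*}
    V^*\varphi(A_n)V\le\varphi\big(V^*A_nV\big).
\end{equation*}

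The third step bounds the right-hand side by a function of $G_A$. From $A_n\le G_B\ox1$ I get $V^*A_nV\le V^*(G_B\ox1)V\le\lambda G_A+E_0$ as positive forms, and operator monotonicity of $\varphi$ — valid for the unbounded order \eqref{eq:order} via the Löwner integral representation and resolvent monotonicity — gives $\varphi(V^*A_nV)\le\varphi(\lambda G_A+E_0)$. Using subadditivity of the concave function $\varphi$ (with $\varphi(0)=0$) together with the bound $\varphi(\lambda t)\le\max(1,\lambda)\varphi(t)$ — this is exactly where the growth hypothesis $\lim_{t\to\oo}\varphi(\lambda t)/\varphi(t)<\oo$ is used — I obtain the scalar inequality $\varphi(\lambda t+E_0)\le\max(1,\lambda)\varphi(t)+\varphi(E_0)$ for all $t\ge0$, which lifts by the functional calculus of $G_A$ to
\begin{equation*}
    \varphi(\lambda G_A+E_0)\le\max(1,\lambda)\,\varphi(G_A)+\varphi(E_0).
\end{equation*}
Chaining the three displays gives $V^*\varphi(A_n)V\le\max(1,\lambda)\varphi(G_A)+\varphi(E_0)$ uniformly in $n$.

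Finally I would take $n\to\oo$. Testing on $\psi$ and using $\varphi(A_n)=\varphi(G_B^{(n)})\ox1\uparrow\varphi(G_B)\ox1$ with monotone convergence of the spectral integrals,
\begin{equation*}
    \energy^\varphi_B[T\kettbra\psi]=\ip{V\psi}{(\varphi(G_B)\ox1)V\psi}=\lim_n\ip{V\psi}{\varphi(A_n)V\psi}\le\max(1,\lambda)\,\energy^\varphi_A[\psi]+\varphi(E_0)<\oo,
\end{equation*}
where the first equality uses $G_R=0$ exactly as in \cref{thm:Stinespring}. Hence the output-energy function for the pair $\varphi(G_A),\varphi(G_B)$ satisfies $f_T(E)\le\max(1,\lambda)E+\varphi(E_0)$, so $T$ is energy-limited with respect to $\varphi(G_A),\varphi(G_B)$. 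The main obstacle is the unbounded-operator bookkeeping: making sense of $\varphi$ applied to unbounded operators, of the order between the bounded $V^*A_nV$ and the unbounded $\lambda G_A+E_0$, and justifying the interchange of limit and expectation. The truncation neatly restricts Hansen--Pedersen to bounded operators, the integral representation of operator monotone functions supplies monotonicity against the unbounded comparison operator, and the remaining limit exchange is monotone convergence.
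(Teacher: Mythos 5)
Your proof is correct, and its mathematical core coincides with the paper's: both rest on the chain $T^*(\varphi(G_B))\le\varphi(T^*(G_B))\le\varphi(\lambda G_A+E_0)$ followed by a scalar comparison $\varphi(\lambda t+E_0)\le\lambda'\varphi(t)+E_0'$ lifted through the functional calculus of $G_A$. The paper packages the first two inequalities into the extended-positive-cone machinery of \cref{sec:epc} (\cref{thm:op_monotone,thm:op_monotone_quantum_op}, citing \cite{dinh2019}), and the proof of \cref{thm:op_monotone_quantum_op} there is exactly your combination of a Stinespring dilation with Hansen's operator Jensen inequality (\cref{thm:Hansen_ineq,thm:Stinespring_lemma}); your truncation $A_n=G_B^{(n)}\ox 1$ together with monotone convergence of spectral integrals is a self-contained substitute for that formalism, at the price of the form-domain bookkeeping you flag — when extending $\varphi(V^*A_nV)\le\varphi(\lambda G_A+E_0)$ from the form core $\dom\sqrt{G_{\!A}}$ to all of $\dom\sqrt{\varphi(G_A)}$ you should invoke boundedness of the left-hand side together with closedness of the quadratic form on the right, but this is routine. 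Where you genuinely diverge is the scalar step: the paper treats bounded and unbounded $\varphi$ separately and uses continuity near zero plus the growth hypothesis at infinity, whereas you use that a nonnegative operator monotone function with $\varphi(0)=0$ is (operator) concave, hence subadditive with $\varphi(\lambda t)\le\max(1,\lambda)\varphi(t)$, which yields the explicit bound $f_T(E)\le\max(1,\lambda)E+\varphi(E_0)$ rather than unnamed constants $\lambda',E_0'$. One correction to your commentary rather than your mathematics: the inequality $\varphi(\lambda t)\le\max(1,\lambda)\varphi(t)$ follows from concavity and $\varphi(0)=0$ alone, not from the hypothesis $\lim_{t\to\oo}\varphi(\lambda t)/\varphi(t)<\oo$ as you parenthetically claim — so your argument never actually uses the growth hypothesis and in fact shows it is redundant for the theorem as stated (concavity forces $\limsup_{t\to\oo}\varphi(\lambda t)/\varphi(t)\le\max(1,\lambda)$). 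Such a growth condition only becomes a genuine assumption in the variant recorded in the remark following the theorem, where two different functions $\varphi_A,\varphi_B$ appear and no concavity of the dominating function $\varphi_B$ is available.
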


In particular, the result holds for the square root $\varphi(t)=t^{1/2}$.
The main tools used in the proof are operator inequalities involving cp maps and operator monotonicity, which we extend to unbounded operators.
These extensions are best formulated using the "extended positive cone" of $\B(\H)$, a concept from the theory of von Neumann algebras (see \cref{sec:epc}).

\begin{proof}
    Let $\lambda,E_0>0$ be such that $T^*(G_B)\le \lambda G_A+E_0$.
    Then \cref{thm:op_monotone,thm:op_monotone_quantum_op} show
    \begin{equation*}\label{eq:help1}
        T^*(\varphi(G_B)) \le \varphi(T^*(G_B)) \le \varphi(\lambda G_A+E_0).
    \end{equation*}
    We want to show that the right-hand side is bounded by $\lambda'\varphi(G_A)+E_0'$ for constants $\lambda', E_0'\ge0$.
    It suffices to show $\varphi(\lambda t+E_0)\le \lambda'\varphi(t)+E_0'$ for all $t\ge0$.
    For uniformly bounded $\varphi$ we can just set $E_0'=\sup_t \varphi(t)$ and $\lambda'=1$.
    This leaves us with the case where $\varphi(t)\to\oo$ as $t\to\oo$. 
    Since $\varphi$ is continuous we can pick $E_0'>0$ such that $\varphi(\lambda t+E_0)\le E_0'$ for all $t\in[0,1]$.
    A constant $\lambda'>0$ such that the inequality holds for all $t\ge0$ exists if and only if
    \[
        \sup_{t\ge1} \frac{\varphi(\lambda t + E_0) -E_0'}{\varphi(t)} <\oo. 
    \] 
    Since $\varphi(t)$ is bounded away from zero for  $t\ge1$, this holds if and only if $\lim_{t\to\oo}\frac{\varphi(\lambda t+E_0)}{\varphi(t)}<\oo$ which is guaranteed by the growth assumption and the monotonicity of $\varphi$.
    Combining this with \eqref{eq:help1}, we get $\tr T^*(\varphi(G_B))\rho \le \tr\varphi(\lambda G_A+E_0)\rho \le \lambda'\tr\varphi(G_A)\rho + E_0$ for all $\rho\in\TC^+$ (where we adopted the trace notation instead of using the energy functionals of $\varphi(G_i)$ to make things clearer).
\end{proof}

\begin{remark}
    The proof actually shows more: If $\varphi_A:\RR^+\to\RR^+$ is an operator monotone function with $\varphi_A(0)=0$ and $\varphi_B:\Sp(G_B)\to\RR^+$ a Borel function with $\varphi_B(0)=0$ such that for all $\lambda,E_0>0$ there exist $\lambda',E_0'>0$ such that $\varphi_A(\lambda t+E_0)\le \lambda'\varphi_B(t)+E_0'$ for all $t\ge0$.
    Then energy-limitedness with respect to $G_A$ and $G_B$ implies energy-limitedness with respect to the reference Hamiltonians $\varphi_A(G_A)$ and $\varphi_B(G_B)$.
\end{remark}

\subsection{Energy-constrained norms}\label{sec:norms}

We collect properties and definitions of energy-constrained norms and show how they relate to energy-limited quantum channels.
The energy-constrained diamond norm was introduced by Shirokov and Winter in \cite{shirokov2018,Winter_EnergyConstrDiamond_2017}, and the energy-constrained operator norm was introduced by Shirokov \cite{shirokov2020Enorm,shirokov2020strong}.\footnote{A norm similar to the energy-constrained diamond norm was also introduced in \cite{pirandola_fundamental_2017} for bosonic systems. The energy-constrained operator norm is called \emph{operator $E$-norm} by Shirokov \cite{shirokov2020Enorm,shirokov2022form}. We choose "energy-constrained operator norm" to highlight the analogy with the energy-constrained diamond norm.}

\begin{definition}[Shirokov, Winter]\label{def:norms}
    Let $\H_A$ and $\H_B$ be Hilbert spaces with reference Hamiltonians $G_A$, $G_B$ respectively and let $E>0$.
    For operators $V:\H_A\supseteq \dom V\to \H_B$ with $\dom V\supseteq \dom \sqrt G_{\!A}$, the {\bf energy-constrained operator (ECO) norm} is defined as
    \begin{equation}\label{eq:norm1}
        \norm V_{\op,E} = \sup_{\substack{\norm\psi=1\\ \energy[\psi]\le E}} \norm{V\psi} 
    \end{equation}
    For $^*$-preserving linear maps $T:\T(\H_A)\supseteq \dom T\to \T(\H_B)$ with $\dom T \supset \states_{<\oo}(\H_A)$, the {\bf energy-constrained diamond (ECD) norm} is defined as
    \begin{equation}\label{eq:norm2}
        \norm{T}_{\diamond,E} = \sup_{\rho\in\states_E(\H_{AR})} \ \norm{(T\ox\id)\rho}_1,
    \end{equation}
    where $R$ is an ancillary system with infinite-dimensional Hilbert space $\H_R$ and $G_R=0$.
\end{definition}

For a fixed $^*$-preserving map $T$, the ECD norm $\norm T_{\diamond,E}$ is a concave nondecreasing function of the energy and, hence, satisfies
\begin{equation}\label{eq:concavity_ineq_ECD}
    \norm T_{\diamond,E}\le \norm T_{\diamond,E'}\le \tfrac{E'}E\, \norm T_{\diamond,E},\qquad E'\ge E>0.
\end{equation}
In the case of the ECO norm, one has
\begin{equation}\label{eq:concavity_ineq_opE}
    \norm V_{\op,E}\le \norm V_{\op,E'}\le \sqrt{\tfrac{E'}E}\,\norm V_{\op,E},\qquad E'\ge E>0.
\end{equation}
Sometimes, it is useful to express the ECO norm as
\begin{equation}\label{eq:Enorm'}
    \norm V_{\op,E} = \sup_{\substack{\norm\phi=\norm\psi=1\\\energy[\psi]\le E}} \abs{\ip\phi{V\psi}}
\end{equation}
where $\phi$ may be drawn from some dense subspace.\footnote{Similarly, it suffices to optimize over vectors $\psi$ in some core $\D$ of $\sqrt G$; see \cref{lem:ECD} below.}
%

By definition of the ECD norm, it holds that $\norm{T\rho}_1 \le \norm T_{\diamond,\energy[\rho]}$ for all $\rho\in\states(\H)$.
Therefore, convergence in ECD norm always implies pointwise convergence on the state space.

\begin{lemma}[Shirokov {\cite{shirokov2018,shirokov2020Enorm}}]
    Assume that the reference Hamiltonian is of the form \eqref{eq:discrete}.
    Then the ECD norm metrizes the strong operator topology on bounded sets of $^*$-preserving linear maps $\T(\H_A)\to\T(\H_B)$, and the ECO norm metrizes the strong operator topology on bounded subsets of $\B(\H_A,\H_B)$.
\end{lemma}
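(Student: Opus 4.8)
The plan is to prove both assertions in parallel: the ECO and ECD statements are driven by the same mechanism, and the hypothesis \eqref{eq:discrete} enters only through the fact that the low-energy spectral projection $P_R:=\1(G_A\le R)$ is \emph{finite rank} for every $R>0$. Fixing an energy $E>0$ and a bounded set -- say $\mathcal B_C=\{V:\norm V\le C\}$ in the ECO case, and a family of $^*$-preserving maps with $\norm{(T_\alpha-T)\ox\id}_{1\to1}\le C$ in the ECD case -- I would show that the metric $d_E(V,W)=\norm{V-W}_{\op,E}$ (resp.\ $\norm{T-S}_{\diamond,E}$) induces precisely the strong operator topology on this set. Since a topology is determined by its convergent nets, it suffices to prove that a net converges in $d_E$ if and only if it converges strongly, with the same limit; both implications will use the uniform bound $C$ together with the density of the finite-energy vectors $\dom\sqrt G_{\!A}$ (resp.\ finite-energy states), which holds because $G_A$ is self-adjoint.

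For the implication ``$d_E$-convergence $\Rightarrow$ strong convergence'' I would use that the ECO norm dominates the pointwise action on finite-energy vectors: for $\psi\in\dom\sqrt G_{\!A}$ one has $\norm{(V_\alpha-V)\psi}\le\norm\psi\,\norm{V_\alpha-V}_{\op,\energy[\psi]}$, which by the concavity estimate \eqref{eq:concavity_ineq_opE} is controlled by $\norm{V_\alpha-V}_{\op,E}$. Thus $d_E$-convergence forces $V_\alpha\psi\to V\psi$ on the dense domain $\dom\sqrt G_{\!A}$, and this extends to all of $\H_A$ by the uniform bound via the standard $\eps/3$-argument. The ECD case is identical, using $\norm{T\rho}_1\le\norm T_{\diamond,\energy[\rho]}$ (seen by tensoring $\rho$ with a pure ancilla state), the concavity estimate \eqref{eq:concavity_ineq_ECD}, and density of finite-energy states in $\T(\H_A)$.

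The substance is the reverse implication, where discreteness of $G_A$ is essential. Its engine is the uniform tail bound: for a unit vector with $\energy[\psi]\le E$, the inequality $\1(G_A>R)\le R^{-1}G_A$ gives $\norm{(1-P_R)\psi}^2=\tr[\1(G_A>R)\kettbra\psi]\le R^{-1}\energy[\psi]\le E/R$. Since $P_R=\sum_{j=1}^m\kettbra{e_j}$ is finite rank, $\norm{(V_\alpha-V)P_R}\le\big(\sum_{j=1}^m\norm{(V_\alpha-V)e_j}^2\big)^{1/2}\to0$ along any strongly convergent net, while the complementary part is bounded by $2C\sqrt{E/R}$ uniformly in $\psi$ and $\alpha$. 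Splitting $\norm{(V_\alpha-V)\psi}\le\norm{(V_\alpha-V)P_R}+2C\sqrt{E/R}$, choosing $R$ large and then the net index, yields $\norm{V_\alpha-V}_{\op,E}\to0$.

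For the ECD norm this step is the main obstacle, because the dilated reference Hamiltonian is $G_{AR}=G_A\ox1$, which has \emph{infinite} multiplicity; hence $\states_E(\H_{AR})$ is not compact and Holevo's Lemma does not apply on $\H_{AR}$. I would bypass this by compressing only the $A$-factor: put $\sigma=(P_R\ox1)\rho(P_R\ox1)$ and use the gentle-measurement estimate $\norm{\rho-\sigma}_1\le2\sqrt{\tr[(\1(G_A>R)\ox1)\rho]}\le2\sqrt{E/R}$ to control the complementary part by $4C\sqrt{E/R}$ through the uniform bound on $\norm{(T_\alpha-T)\ox\id}_{1\to1}$. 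Over the finite basis of $\Ran P_R$ the compressed part decomposes as $((T_\alpha-T)\ox\id)\sigma=\sum_{i,j=1}^m(T_\alpha-T)(\ketbra{e_i}{e_j})\ox\sigma_{ij}$, with operator entries $\sigma_{ij}\in\T(\H_R)$ satisfying $\sum_{i,j}\norm{\sigma_{ij}}_1\le m$ uniformly (from $\norm{\sigma_{ij}}_1\le\sqrt{\norm{\sigma_{ii}}_1\norm{\sigma_{jj}}_1}$ for the positive operator $\sigma$ and Cauchy-Schwarz). As each of the finitely many $\norm{(T_\alpha-T)(\ketbra{e_i}{e_j})}_1\to0$, the compressed part vanishes uniformly over $\rho\in\states_E(\H_{AR})$, and combining the two estimates gives $\norm{T_\alpha-T}_{\diamond,E}\to0$, as desired.
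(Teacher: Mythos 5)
Your proof is correct, but note that the paper itself does not prove this lemma at all --- it is imported from Shirokov with a citation --- so the comparison is between your self-contained argument and Shirokov's published proofs. Your forward implications (ECO/ECD convergence $\Rightarrow$ strong convergence via $\norm{(V_\alpha-V)\psi}\le\norm{V_\alpha-V}_{\op,\energy[\psi]}$, the concavity estimates, density of finite-energy vectors/states, and the uniform bound) are standard and fine. The substantive reverse direction is also sound, and this is where your route genuinely differs from Shirokov's: he reduces the supremum over $\states_E(\H_{AR})$ to reduced states on $\H_A$, invokes Holevo's compactness of $\states_E(\H_A)$, and uses that a norm-bounded family converging strongly converges uniformly on trace-norm compact sets; you instead run a direct finite-rank cutoff, which treats ECO and ECD by one mechanism and is arguably more elementary. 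Your key estimates all check out: $\1(G_A>R)\le R^{-1}G_A$ gives the tail bound, $P_R$ is finite rank precisely because of \eqref{eq:discrete}, $\norm{(V_\alpha-V)P_R}$ is dominated by the Hilbert--Schmidt norm, the gentle-measurement bound $\norm{\rho-(P_R\ox1)\rho(P_R\ox1)}_1\le2\sqrt{E/R}$ is valid, and $\norm{\sigma_{ij}}_1\le\sqrt{\norm{\sigma_{ii}}_1\norm{\sigma_{jj}}_1}$ holds for the positive block matrix $\sigma$ (write $\sigma_{ij}=\sigma_{ii}^{1/2}K\sigma_{jj}^{1/2}$ with $K$ a contraction), giving the uniform bound $\sum_{ij}\norm{\sigma_{ij}}_1\le m$ by Cauchy--Schwarz. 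Two points worth making explicit: (i) in the ECD half, "bounded" must be read as bounded with respect to the diamond norm, i.e.\ $\norm{(T_\alpha-T)\ox\id}_{1\to1}\le C$, exactly as you assumed --- your tail step genuinely needs this and would fail for sets bounded only in the $1\to1$ norm, since $^*$-preserving maps (e.g.\ transposition) can be bounded without being completely bounded; for quantum channels the diamond bound is automatic. (ii) For "metrizes" you should also record definiteness ($\norm{V-W}_{\op,E}=0$ forces $V=W$ because energy-constrained vectors span a dense subspace, and similarly for the ECD norm on finite-energy states) and observe that your two implications are proved for arbitrary nets, which is what identifies the two topologies rather than merely their convergent sequences.
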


The ECO norm can be characterized via a semidefinite minimization problem:

\begin{lemma}\label{thm:dual_prob_Enorm}
    Let $V:\dom \sqrt G_{\!A}\to\H_B$ be $\sqrt G_{\!A}$-bounded.
    Then 
    \begin{equation}\label{eq:dual_prob_Enorm}
        \norm V_{\op,E}^2 = \min \Big\{ \lambda E+ E_0 : \lambda,E_0\ge0\ \mathrm{s.t.}\ VV^* \le \lambda G_A +E_0 \Big\}.
    \end{equation}
\end{lemma}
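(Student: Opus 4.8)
The plan is to mirror the concave-duality argument behind \cref{thm:dual_prob}, replacing the output-energy functional $\energy_B[T\,\cdot\,]$ by the linear functional $\rho\mapsto\tr[V^*V\rho]$. (Since $\norm{V\psi}^2=\ip\psi{V^*V\psi}$, the relevant operator is $V^*V$, in accordance with \eqref{eq:intro_sdp}; the display in the statement should read $V^*V$.) First I would record that $V^*V$ makes sense as a positive self-adjoint operator: because $V$ is $\sqrt G_{\!A}$-bounded it is closable, just as $(\sqrt G_{\!B}\ox1)V$ is in item~\ref{it:def_TG1} of \cref{thm:def_TG}, and setting $V^*V:=\bar V^*\bar V$ yields a positive self-adjoint operator whose form domain contains $\dom\sqrt G_{\!A}$ and satisfies $\norm{\sqrt{V^*V}\psi}=\norm{V\psi}$ there. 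With the ordering convention \eqref{eq:order}, the operator inequality $V^*V\le\lambda G_A+E_0$ is then literally equivalent to the quadratic-form inequality
\[
\norm{V\psi}^2\le\lambda\norm{\sqrt G_{\!A}\psi}^2+E_0\norm\psi^2,\qquad\psi\in\dom\sqrt G_{\!A}.
\]
Note also that the $\sqrt G_{\!A}$-boundedness of $V$ guarantees that some such pair $\lambda,E_0\ge0$ exists, so the feasible set of the minimization is nonempty.

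Next I would rewrite the squared ECO norm as a supremum over \emph{mixed} states. The functional $\rho\mapsto\tr[V^*V\rho]$ is affine and, being an increasing limit of the bounded functionals $\rho\mapsto\tr[(V^*V)_n\rho]$ associated with spectral truncations, is lower semicontinuous on $\TC^+$ exactly as in \eqref{eq:lsc}; moreover $\tr[V^*V\rho]\le\lambda E+E_0$ is finite on $\states_E(\H_A)$. Hence \cref{lem:weis} applies and gives
\[
\norm V_{\op,E}^2=\sup_{\substack{\norm\psi=1\\\energy[\psi]\le E}}\ip\psi{V^*V\psi}=\sup_{\rho\in\states_E(\H_A)}\tr[V^*V\rho]=:g(E).
\]
From this mixed-state formula, concavity and monotonicity of $g$ follow by the mixing argument already used for $f_T$ in \cref{lem:EL}: for $\rho=p\rho_1+(1-p)\rho_2$ with $\rho_i\in\states_{E_i}(\H_A)$ one has $\energy[\rho]\le pE_1+(1-p)E_2$ while $\tr[V^*V\rho]$ splits linearly, yielding $g(pE_1+(1-p)E_2)\ge pg(E_1)+(1-p)g(E_2)$; monotonicity is immediate.

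Then I would invoke concave duality exactly as in the proof of \cref{thm:dual_prob}: a finite concave nondecreasing function $g:\RR^+\to\RR^+$ is the pointwise minimum of the affine functions $\lambda E+E_0$ dominating it, the minimum being attained at each $E$ by a supporting line, and one may restrict to $\lambda\ge0$ (monotonicity) and $E_0\ge0$ (since $g\ge0$). It remains to identify ``$\lambda E'+E_0$ dominates $g$ for all $E'$'' with the operator inequality. Forward: $V^*V\le\lambda G_A+E_0$ gives $\ip\psi{V^*V\psi}\le\lambda\energy[\psi]+E_0\le\lambda E'+E_0$ on unit vectors with $\energy[\psi]\le E'$, so $g(E')\le\lambda E'+E_0$. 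Conversely, $g(E')\le\lambda E'+E_0$ for all $E'$, evaluated at $E'=\energy[\psi]$ and rescaled to arbitrary $\psi\in\dom\sqrt G_{\!A}$, is precisely the displayed form inequality, i.e.\ $V^*V\le\lambda G_A+E_0$. Combining the two equivalences with the duality statement gives the claimed minimization formula.

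The main obstacle I anticipate is the operator-theoretic bookkeeping around the unbounded operator $V^*V$: establishing closability of $V$ and then matching the ordering \eqref{eq:order} to the quadratic-form inequality on the correct domain $\dom\sqrt G_{\!A}$. This is the technical heart and runs parallel to the analysis of $T^*(G_B)$ in \cref{thm:def_TG}. The only other point needing care is the passage from the pure-state supremum (which defines the ECO norm) to the mixed-state supremum, for which lower semicontinuity together with \cref{lem:weis} is exactly what is required.
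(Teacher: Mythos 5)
Your proof follows the paper's own argument in all essential steps: concavity and monotonicity of $E\mapsto\norm V_{\op,E}^2$, the representation of a finite concave nondecreasing function as the pointwise minimum of its affine majorants with $\lambda,E_0\ge0$, and the identification of affine domination with the quadratic-form inequality $\norm{V\psi}^2\le\lambda\norm{\sqrt G_{\!A}\psi}^2+E_0\norm\psi^2$ for $\psi\in\dom\sqrt G_{\!A}$. The only structural difference is that the paper simply cites \cite{shirokov2020Enorm} for concavity, where you rederive it via \cref{lem:weis} and a mixing argument; that is legitimate, and your reading of $VV^*$ as a typo for $V^*V$ is also correct.

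One supporting claim in your write-up is false, though fortunately dispensable: a $\sqrt G_{\!A}$-bounded operator $V:\dom\sqrt G_{\!A}\to\H_B$ need \emph{not} be closable. Take $\chi\in\dom\sqrt G_{\!A}\setminus\dom G_A$, a unit vector $\xi\in\H_B$, and $V\psi=\ip{\sqrt G_{\!A}\chi}{\sqrt G_{\!A}\psi}\,\xi$: this is $\sqrt G_{\!A}$-bounded by Cauchy--Schwarz, but since $\sqrt G_{\!A}\chi\notin\dom\sqrt G_{\!A}$ the functional $\psi\mapsto\ip{\sqrt G_{\!A}\chi}{\sqrt G_{\!A}\psi}$ is $\H$-discontinuous, so after rescaling one finds $\psi_n\to0$ in $\H_A$ with $V\psi_n=\xi$ for all $n$; moreover the form $\psi\mapsto\norm{V\psi}^2$ is then not closable, so no self-adjoint $V^*V$ exists at all. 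Your analogy with item \ref{it:def_TG1} of \cref{thm:def_TG} breaks down because there the factor $V$ in $(\sqrt G_{\!B}\ox1)V$ is a \emph{bounded} Stinespring operator, which is exactly what makes the adjoint-shuffling argument go through; the same issue affects your lower-semicontinuity remark via ``spectral truncations of $V^*V$''. None of this damages the proof, because --- as the final sentence of \cref{thm:EL_contraction} makes explicit --- the constraint $V^*V\le\lambda G_A+E_0$ is to be read as precisely the quadratic-form inequality above, and from that point on your argument (like the paper's) uses nothing else; concavity can also be obtained without the mixed-state detour from Shirokov's characterization of $\norm V_{\op,E}^2$ as a supremum of $\sum_\alpha p_\alpha\norm{V\psi_\alpha}^2$ over probability distributions, quoted in the proof of item \ref{it:norm_dense} of \cref{lem:ECD}. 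So: same route as the paper and correct once the inequality is taken in the form sense, but delete the closability claim.
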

\begin{proof}
    By \cite{shirokov2020Enorm} $E\mapsto \norm V_{\op,E}^2$ is concave and nondecreasing.
    Hence, it is the pointwise minimum of affine functions $\lambda E+E_0$ with $\lambda,E_0\ge0$ such that $\norm V_{\op,E}^2\le \lambda E+E_0$  $\forall E>0$.
    The latter is equivalent to $\norm{V\psi}^2 \le \lambda \norm{\sqrt G_{\!A}\psi}^2 + E_0\norm \psi^2$ for all $\psi\in \dom\sqrt G_{\!A}$, which is equivalent to $V^*V\le \lambda G_A+E_0$ (see \eqref{eq:order}).
\end{proof}

The dual of the semidefinite minimization problem \eqref{eq:dual_prob_Enorm} is precisely to maximize the energy of $V\rho V^*$ under the energy constraint $E$, i.e., \eqref{eq:Enorm_ECD_norm}.
Thus, the primary and dual problems have the same solution.

We now collect some useful properties of these energy-constrained norms, most of which are taken from Shirokov's works \cite{shirokov2020Enorm,shirokov2019,shirokov2020strong}:

\begin{lemma}\label{lem:ECD}
    Let $\H_A$ and $\H_B$ be Hilbert spaces and let $G_A$ be a reference Hamiltonian on $\H_A$. Let $\H_R$ be a separable Hilbert space with $G_R=0$.
    Let $T:\T(\H_A)\supseteq\dom T\to\T(\H_B)$ be a $^*$-preserving map with $\dom T\supseteq \dom\energy_A$, and let $V:\H_A\supseteq \dom V\to\H_B$ be an operator with $\dom V\supseteq \dom\sqrt G_{\!A}$.
    \begin{enumerate}[(1)]
        \item\label{it:ECD_subnormalized} 
            To compute the ECD norm, one may include subnormalized energy-constrained states, i.e.,
            \begin{equation}\label{eq:ECD_subnormalized}
                \norm{T}_{\diamond,E} = \sup\big\{\norm{(T\ox\id)\rho}_1 : \rho\in\T(\H_{AR})^+\!,\ \tr\rho\le1,\ \energy_{AR}[\rho]\le E\big\}.
            \end{equation}
        \item\label{it:Enorm_subnormalized} 
            To compute the ECO norm, one may include subnormalized pure states, i.e.,
            \begin{equation}\label{eq:Enorm_subnormalized}
                \norm{V}_{\op,E} = \sup\big\{\norm{V\psi} : \psi\in\dom\sqrt G_{\!A},\ \norm\psi\le1,\ \energy_A[\psi]\le E\big\}.
            \end{equation}
        \item\label{it:ECD_cp}
            If $T$ is cp, the ECD norm is given by $\norm{T}_{\diamond,E} = \sup_{\rho\in\states_E}\norm{T\rho}_1 = \sup_{\rho\in\states_E}\tr[T\rho]$.
        \item\label{it:Enorm_ECD_norm}
            Assume $V$ has finite ECO norm. If $\rho\in\states_{<\oo}(\H_A)$ is a finite-energy state and $\rho=\sum_\alpha \lambda_\alpha\kettbra{\psi_\alpha}$ with $\lambda_\alpha\ge0$, then $V\rho V^*:=\sum_{\alpha}\ketbra{V\psi_\alpha}{V\psi_\alpha}$ converges in trace-norm.
            Furthermore,
            \begin{equation}\label{eq:Enorm_ECD_norm}
                \norm V_{\op,E} = \sup_{\rho\in\states_E} \sqrt{\tr{V\rho V^*}} = \sqrt{\norm{V(\placeholder)V^*}_{\diamond,E}}<\oo,
            \end{equation}
            where we extend $V(\placeholder)V^*$ linearly to a map $\dom\energy_A\to\T(\H_B)$.
        \item\label{it:norm_dense}
            $V$ has finite ECO norm if and only if $V$ is $\sqrt G_{\!A}$-bounded.
            If $\D$ is a core for $\sqrt G_{\!A}$ and $V$ is $\sqrt G_{\!A}$-bounded, the supremum in \eqref{eq:norm1} can be restricted to vectors in $\D$, i.e.,
            \begin{equation}\label{eq:norm_dense}
                \norm{V}_{\op,E} = \sup\{ \norm{V\psi} : \psi\in\D,\ \norm\psi=1,\ \energy[\psi]\le E\}.
            \end{equation}
            If $V':\D\to\H_B$ is an operator such that the right-hand side of \eqref{eq:norm_dense} is finite for some $E>0$, then $V'$ is $\sqrt G_{\!A}$-bounded on $\D$ and its $\sqrt G_{\!A}$-graph norm continuous extension $V:\dom\sqrt G_{\!A}\to\H_B$ has ECO norm given by the right-hand side of \eqref{eq:norm_dense}.
        \item\label{it:ECDnorm_dense}
            $T$ has finite ECD norm if and only if $T\ox\id:\dom\energy_{AR}\to\T(\H_{BR})$ is bounded, where $\dom\energy_{AR}$ is equipped with the norm $\normiii{}_1$ and $G_{AR}=G_A\ox 1$.
            In this case, the supremum in \eqref{eq:norm2} can be restricted to any $\normiii{}_1$-dense subspace $\D\subset\dom\energy_A$, i.e.,
            \begin{equation}\label{eq:ECDnorm_dense}
                \norm T_{\diamond,E} = \sup \{\norm{T\ox\id \rho}_1 : \rho \in\states_E(\H_{AR})\cap(\D\odot\T(\H_R))\}.
            \end{equation}
            If $T':\D\to\T(\H_B)$ is a $^*$-preserving map such that the right-hand side of \eqref{eq:ECDnorm_dense} is finite for some $E>0$, then $T'$ is bounded for the $\normiii{}_1$ norm on $\D$ and the $\normiii{}_1$-continuous extension to a map $\dom\energy_A\to\T(\H_B)$ has finite ECD norm.
    \end{enumerate} 
\end{lemma}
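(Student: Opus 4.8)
The plan is to dispatch the items in order of difficulty, using throughout the concavity of the two norms in $E$ (eqs.~\eqref{eq:concavity_ineq_ECD} and \eqref{eq:concavity_ineq_opE}) together with the structural results \cref{lem:weis,lem:elementary,lem:base}, and citing Shirokov for the purely kinematic facts (such as the concavity of $E\mapsto\norm V_{\op,E}^2$). Items \ref{it:ECD_subnormalized} and \ref{it:Enorm_subnormalized} mirror the proof of \cref{lem:energy_lim}\ref{it:energy_lim2}: for a subnormalized $\rho$ with $\tr\rho=\lambda\le1$ and $\energy_{AR}[\rho]\le E$ I normalize to $\sigma=\lambda^{-1}\rho\in\states_{E/\lambda}(\H_{AR})$ and estimate $\norm{(T\ox\id)\rho}_1=\lambda\norm{(T\ox\id)\sigma}_1\le\lambda\norm T_{\diamond,E/\lambda}\le\norm T_{\diamond,E}$, the last step being \eqref{eq:concavity_ineq_ECD} with $E'=E/\lambda\ge E$; the reverse inequality is trivial. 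The operator case is identical, normalizing $\psi$ with $\norm\psi=\mu\le1$ to $\psi/\mu$ of energy $\le E/\mu^2$, where the factor $\sqrt{(E/\mu^2)/E}=1/\mu$ in \eqref{eq:concavity_ineq_opE} cancels the $\mu$. For \ref{it:ECD_cp} the point is that a cp map is positive, so $(T\ox\id)\rho\ge0$ and $\norm{(T\ox\id)\rho}_1=\tr[(T\ox\id)\rho]=\tr[T\,\tr_R\rho]$; since $G_R=0$ gives $\energy_{AR}[\rho]=\energy_A[\tr_R\rho]$, the reduced state ranges over all of $\states_E(\H_A)$, the ancilla drops out, and the stated formula follows.

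Item \ref{it:Enorm_ECD_norm} has two halves. For the trace-norm convergence of $\sum_\alpha\lambda_\alpha\ketbra{V\psi_\alpha}{V\psi_\alpha}$ I note that the partial sums form an increasing sequence of positive operators whose traces are $\sum_\alpha\lambda_\alpha\norm{V\psi_\alpha}^2\le a\,\energy_A[\rho]+b\,\tr\rho<\oo$, using the $\sqrt G_{\!A}$-bound $\norm{V\psi}^2\le a\,\energy_A[\psi]+b\norm\psi^2$ (available since a finite ECO norm entails $\sqrt G_{\!A}$-boundedness, established in \ref{it:norm_dense} below) together with \cref{lem:elementary}\ref{it:elementary2}; an increasing positive trace-class sequence with bounded trace converges in trace norm. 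For the norm identity, $\rho\mapsto\tr[V\rho V^*]=\tr[V^*V\rho]$ is a finite, affine, lower semicontinuous functional on $\states_E(\H_A)$, so \cref{lem:weis} reduces its supremum to pure states, giving $\sup_{\rho\in\states_E}\tr[V\rho V^*]=\norm V_{\op,E}^2$; reading $V(\placeholder)V^*$ as a cp map and applying \ref{it:ECD_cp} identifies this supremum with $\norm{V(\placeholder)V^*}_{\diamond,E}$.

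The genuinely delicate items are \ref{it:norm_dense} and \ref{it:ECDnorm_dense}. I would first settle the finiteness equivalences: a $\sqrt G_{\!A}$-bound immediately yields $\norm V_{\op,E}\le c\sqrt{1+E}<\oo$, while conversely, since $E\mapsto\norm V_{\op,E}^2$ is concave nondecreasing, it is dominated by an affine function $\lambda E+E_0$, which rescaled to unit vectors gives $\norm{V\psi}^2\le\lambda\energy_A[\psi]+E_0\norm\psi^2$; the diamond-norm equivalence is the same statement transported through the isomorphism $W$ of \cref{lem:base}, exactly as in \cref{lem:energy_lim}\ref{it:energy_lim7}. The main obstacle is the restriction of the supremum to a core $\D$ (resp.\ a $\normiii{}_1$-dense subspace): given an admissible unit vector $\psi$ with $\energy_A[\psi]\le E$, I approximate it in $\sqrt G_{\!A}$-graph norm by $\psi_n\in\D$, whence $\norm{V\psi_n}\to\norm{V\psi}$ by graph continuity of $V$, but the normalized approximants $\psi_n/\norm{\psi_n}$ only satisfy $\energy_A[\psi_n/\norm{\psi_n}]\to\energy_A[\psi]$ and may marginally violate the constraint. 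I would circumvent this by first proving the identity for vectors with strict inequality $\energy_A[\psi]<E$—for which the approximants eventually obey $\energy_A\le E$—and then letting $E'\uparrow E$, using the continuity of $E\mapsto\norm V_{\op,E}$ that follows from concavity.

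For the converse (a finite restricted supremum forces a $\sqrt G_{\!A}$-bound on $\D$, hence a graph-continuous extension), the subtlety is that the supremum restricted to pure vectors of $\D$ need not be manifestly concave in $E$. I would recover concavity by superposing a high-energy vector with a low-energy one inside the subspace $\D$ and averaging over their relative phase to eliminate cross terms, thereby reproducing within $\D$ the convex-combination argument of \cref{lem:EL}, and then extract the affine majorant exactly as above. The diamond-norm statements in \ref{it:ECDnorm_dense} follow verbatim after transporting everything through $W$, using \cref{lem:base} to pass between $\normiii{}_1$-density on $\dom\energy_{AR}$ and the energy constraint, and \ref{it:ECD_cp} to pass between trace and trace norm.
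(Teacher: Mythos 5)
Most of your proposal is sound, and where the paper simply cites Shirokov (items \ref{it:ECD_subnormalized}, \ref{it:Enorm_subnormalized}, \ref{it:Enorm_ECD_norm}) your direct arguments via the concavity inequalities \eqref{eq:concavity_ineq_ECD}--\eqref{eq:concavity_ineq_opE} are valid replacements. The genuine problem sits in the converse half of item \ref{it:norm_dense}: the phase-averaging repair does not work as you state it. A superposition $\psi_\theta=\sqrt{p}\,\psi_1+e^{i\theta}\sqrt{1-p}\,\psi_2$ produces \emph{three} bilinear cross terms --- in $\norm{\psi_\theta}^2$, in $\energy[\psi_\theta]$, and in $\norm{V'\psi_\theta}^2$ --- of which the first two must be controlled from above and the third from below. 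Averaging over $\theta$ kills them only in expectation; to extract a single admissible vector you need one phase in the intersection of three closed half-circles $\{\theta : \Re(e^{i\theta}z_j)\le 0\}$ resp.\ $\{\theta:\Re(e^{i\theta}z_3)\ge0\}$, and three such half-circles can have empty intersection (take the $z_j$ at mutual angles $2\pi/3$). The paper avoids this entirely: following Shirokov, the restricted supremum is identified with the ensemble quantity $\sup\sum_\alpha p_\alpha\norm{V'\psi_\alpha}^2$ over unit vectors $\psi_\alpha\in\D$ with $\sum_\alpha p_\alpha\energy[\psi_\alpha]\le E$, which is manifestly concave and nondecreasing in $E$ (mix two near-optimal ensembles), whence the affine majorant $aE+b$ and the $\sqrt G_{\!A}$-bound on $\D$ follow at once. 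Your configuration is salvageable, but only with more care than "eliminate cross terms": since $\D$ is a core and $\inf\Sp(G_A)=0$ you may take the partner $\chi\in\D$ near-unit with $\energy[\chi]\le\delta$, so the energy cross term is automatically $\le 2\sqrt{tE\delta}$ by Cauchy--Schwarz; only two sign conditions then need the phase, and two closed half-circles always intersect. Choosing $\theta$ there yields a \emph{subnormalized} $\psi_\theta\in\D$, so you must additionally track the renormalization $\psi_\theta/\norm{\psi_\theta}$ (which increases the energy) before concluding a linear-in-$E$ bound on $\norm{V'\psi}^2$; none of this bookkeeping is in your sketch.

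Two smaller points. In item \ref{it:Enorm_ECD_norm} you invoke \cref{lem:weis}, which requires lower semicontinuity of $\rho\mapsto\tr[V\rho V^*]$ on $\states_E(\H_A)$; for a merely $\sqrt G_{\!A}$-bounded, possibly non-closable $V$ this is true but not obvious (one can write $V=B(\sqrt G_{\!A}+1)$ with $B$ bounded and use that $\sigma\mapsto\tr[B^*B\,\sigma]$ is weak$^*$ lower semicontinuous on the positive trace-class cone). It can be bypassed altogether by Jensen: $\tr[V\rho V^*]=\sum_\alpha\lambda_\alpha\norm{V\psi_\alpha}^2\le\sum_\alpha\lambda_\alpha\norm{V}_{\op,\energy[\psi_\alpha]}^2\le\norm{V}_{\op,E}^2$, using \cref{lem:elementary}\ref{it:elementary2} and the concavity of $E\mapsto\norm V_{\op,E}^2$ that you already cite. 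Finally, in item \ref{it:ECDnorm_dense} the concavity of the restricted diamond supremum is unproblematic (orthogonal flag states in the infinite-dimensional ancilla make the trace norm additive on the blocks while staying inside $\D\odot\T(\H_R)$), and your $W$-transport otherwise matches the paper's proof --- except that the converse there uses item \ref{it:ECD_subnormalized} applied to $W_{AR}^{-1}\rho$, not item \ref{it:ECD_cp}, since $T$ is only $^*$-preserving and trace and trace norm need not agree.
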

\begin{proof}
    \ref{it:ECD_subnormalized} is proved in \cite[Lem.~1]{shirokov2019}.
    \ref{it:Enorm_subnormalized} is proved in \cite[Prop.~3]{shirokov2020Enorm} and
    \ref{it:ECD_cp} is straightforward. 
    \ref{it:Enorm_ECD_norm}
    was shown by Shirokov \cite{shirokov2020Enorm}. The last equality in \eqref{eq:Enorm_ECD_norm} follows from \ref{it:ECD_cp}.

    \ref{it:norm_dense}:
    The equivalence of $\sqrt G_{\!A}$-boundedness and finite ECO norm was proved by Shirokov in \cite{shirokov2020Enorm}.
    Since we assume that $V$ is $\sqrt G_{\!A}$-bounded, \eqref{eq:norm_dense} is clear.
    Now let $V':\D\to\H_B$ be such that the right-hand side of \eqref{eq:norm_dense} is finite.
    Following Shirokov \cite{shirokov2020Enorm,shirokov2020strong}, the right-hand side equals the square root of the supremum of $\sum p_\alpha \norm{V'\psi_\alpha}^2$ where we optimize over probability distributions $(p_\alpha)$ on $\NN$ and sequences of unit vectors $\psi_\alpha\in\D$ such that $\sum_\alpha p_\alpha\energy[\psi_\alpha]\le E$.
    Therefore the right-hand side of \eqref{eq:norm_dense} is the square root of a concave nondecreasing function of $E$ and hence bounded by $\sqrt{aE+b}$ for some $a,b\ge0$.
    This immediately gives $\norm{V'\psi}^2\le a\norm{\sqrt G\psi}^2+b\norm\psi^2$ for $\psi\in\D$ and, thus, shows that $V'$ is $\sqrt G$-bounded. The rest follows from considering $V'$ as the restriction of its graph norm continuous extension $V'\subset V:\dom\sqrt G_{\!A}\to\H_B$.

    \ref{it:ECDnorm_dense}:
    Let $W_{AR}:\dom \energy_{AR}\to\T(\H_{AR})$ be the isometric isomorphism from \cref{lem:base}.
    Assume $T\ox \id$ is bounded with $M>0$ such that $\norm{(T\ox\id)\rho}_1\le M\normiii\rho_1$, $\rho\in\dom\energy_{AR}$.
    If $\rho\in\states_E(\H_{AR})$ then $\norm{(T\ox\id)\rho}_1 \le M\normiii\rho_1 = M\norm{W_{AR}\rho}_1 \le M(E+1)$ implies $\norm{T}_{\diamond,E}\le M(E+1)<\oo$.
    Conversely, assume that $\norm T_{\diamond,E}<\oo$.
    If $\rho\in\states(\H_{AR})$ then $W_{AR}^{-1}\rho$ is a subnormalized state with energy bounded by $1$.
    By \eqref{eq:ECD_subnormalized}, we have $\norm{(T\ox\id)W_{AR}^{-1}\rho}_1 \le \norm{T}_{\diamond,1}$.
    Therefore $(T\ox\id)W_{AR}^{-1}:\T(\H_{AR})\to\T(\H_{BR})$ is bounded which is equivalent to boundedness of $(T\ox\id):\dom\energy_{AR}\to\T(\H_{BR})$.
    In this case, \eqref{eq:ECDnorm_dense} follows for every $\normiii{}_1$-dense subspace $\D\subset\dom\energy_A$.
    Now let $T':\D\to\T(\H_B)$ be as described. The right-hand side of \eqref{eq:ECDnorm_dense} is a concave function of $E$. Hence, it is finite for all $E>0$.
    Therefore, boundedness of $T'\ox\id:\D\odot\T(\H_R)\to\T(\H_{AR})$ with respect to the $\normiii{}_1$ norm follows as before. 
    The rest follows from considering $T'$ as the restriction of its $\normiii{}_1$-continuous extension $T:\dom\energy_{A}\to\T(\H_{B})$.
\end{proof}

As a consequence of item \ref{it:ECDnorm_dense}, we can partially answer a conjecture of Shirokov \cite{shirokov2019completion}: A $^*$-preserving map $T:\dom\energy_A \to \T(\H_B)$ has finite ECD norm $\norm T_{\diamond,E}<\oo$ if and only if $T=T_+-T_-$ is the difference of two completely positive maps with finite ECD norm $\norm{T_\pm}_{\diamond,E}<\oo$.\footnote{Indeed, decomposing the completely bounded $^*$-preserving map $S=T\circ W_A^{-1}:\T(\H_A)\to\T(\H_B)$ as $S=S_+-S_-$ yields a decomposition of $T$ via $T_\pm = S_\pm\circ W_A$, where $W_A$ is the isomorphism from \cref{lem:base}}

\begin{proposition}[Submultiplicativity]\label{thm:submultiplicativity}
    Let $\H_A,\H_B,\H_C$ be Hilbert spaces and let $G_A$, $G_B$ be reference Hamiltonians on $\H_A$ and $\H_B$, respectively.
    \begin{enumerate}[(1)]
        \item\label{it:submultiplicativity1} 
            Let $V:\H_A\to\H_B$ be an energy-limited contraction and let $W:\H_B\supseteq \dom W \to\H_C$ be an operator with $\dom W\supseteq \dom\sqrt G_{\!B}$. 
            Then
            \begin{equation}\label{eq:submultiplicativity1}
                \norm{WV}_{\op,E} \le \norm{W}_{\op,f_V(E)} \le \sqrt{\frac{f_{V}(E)}E} \norm W_{\op,E},\qquad E>0.
            \end{equation}
        \item\label{it:submultiplicativity2} 
            Let $T:\T(\H_A)\to\T(\H_B)$ be an energy-limited trace-nonincreasing cp map and let $S:\T(\H_B)\supseteq\dom S\to\T(\H_C)$ be a $^*$-preserving linear map such that $\states_{<\oo}(\H_B)\subset\dom S$. Then
            \begin{equation}\label{eq:submultiplicativity2}
                \norm{ST}_{\diamond,E} \le \norm{S}_{\diamond,f_T(E)}\le \frac{f_{S}(E)}E \norm S_{\diamond,E},\qquad E>0.
            \end{equation}
    \end{enumerate} 
\end{proposition}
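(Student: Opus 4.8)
The plan is to establish each of the two nontrivial (middle) inequalities by pushing the input energy constraint through the inner map and then feeding the resulting \emph{subnormalized} output into the energy-constrained norm of the outer map; the rightmost inequalities are then exactly the concavity inequalities \eqref{eq:concavity_ineq_opE} and \eqref{eq:concavity_ineq_ECD}, applied at $E'=f_V(E)$ and $E'=f_T(E)$.

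For \ref{it:submultiplicativity1} I would start from $\norm{WV}_{\op,E}=\sup\{\norm{WV\psi}:\norm\psi=1,\ \energy_A[\psi]\le E\}$. Since $V$ is energy-limited, \cref{thm:EL_contraction} ensures that $V$ maps $\dom\sqrt G_{\!A}$ into $\dom\sqrt G_{\!B}\subseteq\dom W$, so $WV$ is defined on $\dom\sqrt G_{\!A}$ and the supremum is meaningful. For a unit vector $\psi$ with $\energy_A[\psi]\le E$, the vector $\phi:=V\psi$ satisfies $\norm\phi\le 1$ because $V$ is a contraction, and $\energy_B[\phi]\le f_V(E)$ by the definition of $f_V$ (\cref{lem:energy_lim}, item~\ref{it:energy_lim1}). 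Thus $\phi$ is precisely a subnormalized vector of the type admitted in the reformulation of the ECO norm in \cref{lem:ECD}, item~\ref{it:Enorm_subnormalized}, whence $\norm{WV\psi}=\norm{W\phi}\le\norm W_{\op,f_V(E)}$. Taking the supremum over $\psi$ gives the middle inequality.

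For \ref{it:submultiplicativity2} I would argue analogously, with trace-nonincreasingness now playing the role of the contraction property. Fix an ancilla $R$ with $G_R=0$ and a state $\rho\in\states_E(\H_{AR})$, and set $\sigma:=(T\ox\id_R)\rho$. As $T$ is completely positive and trace-nonincreasing, so is $T\ox\id_R$; hence $\sigma\ge0$ and $\tr\sigma\le\tr\rho=1$. By \cref{lem:energy_lim}, item~\ref{it:energy_lim3}, the map $T\ox\id_R$ is energy-limited with $f_{T\ox\id_R}=f_T$, so $\energy_{BR}[\sigma]\le f_T(\energy_{AR}[\rho])\le f_T(E)$ since $f_T$ is nondecreasing. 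Thus $\sigma$ is a subnormalized state on $\H_{BR}$ of finite energy at most $f_T(E)$, and in particular it lies in the domain on which $\norm S_{\diamond,\cdot}$ is evaluated. Factoring $(ST)\ox\id_R=(S\ox\id_R)(T\ox\id_R)$ and invoking the subnormalized reformulation of the ECD norm in \cref{lem:ECD}, item~\ref{it:ECD_subnormalized}, gives $\norm{((ST)\ox\id_R)\rho}_1=\norm{(S\ox\id_R)\sigma}_1\le\norm S_{\diamond,f_T(E)}$, and taking the supremum over $\rho\in\states_E(\H_{AR})$ yields the middle inequality.

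The step I expect to be the crux is the bookkeeping forced by the fact that the two maps are composed on \emph{subnormalized} outputs: the inner map generically shrinks the norm of the intermediate vector/state, so one cannot feed it into the plain definitions \eqref{eq:norm1}/\eqref{eq:norm2} of the outer norm (which optimize over \emph{normalized} inputs) and must instead use the subnormalized reformulations of \cref{lem:ECD}, which are tailored to absorb exactly this. The contraction (resp.\ trace-nonincreasing) hypothesis is precisely what guarantees $\norm{V\psi}\le1$ (resp.\ $\tr\sigma\le1$); the remaining work — verifying the domain inclusions $V\psi\in\dom\sqrt G_{\!B}\subseteq\dom W$ and that $\sigma$ has finite energy — is routine given \cref{thm:EL_contraction} and the energy bound above.
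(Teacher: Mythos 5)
Your proposal is correct, and for item \ref{it:submultiplicativity2} it is essentially identical to the paper's proof: the paper also takes $\rho\in\states_E(\H_{AR})$, sets $\sigma=(T\ox\id)\rho$, notes $\tr\sigma\le1$ and $\energy_{BR}[\sigma]\le f_{T\ox\id}(E)=f_T(E)$ via item \ref{it:energy_lim3} of \cref{lem:energy_lim}, invokes the subnormalized reformulation in item \ref{it:ECD_subnormalized} of \cref{lem:ECD}, and concludes with the concavity estimate.

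The one genuine divergence is item \ref{it:submultiplicativity1}. The paper does not prove it directly: it deduces it from item \ref{it:submultiplicativity2} in a single line via the identity $\norm V_{\op,E}=\sqrt{\norm{V(\placeholder)V^*}_{\diamond,E}}$ from \eqref{eq:Enorm_ECD_norm}, applied to the composition $W V(\placeholder)V^*W^*$. You instead give a direct vector-level argument: $\phi=V\psi$ is subnormalized with $\energy_B[\phi]\le f_V(E)$, so item \ref{it:Enorm_subnormalized} of \cref{lem:ECD} yields $\norm{WV\psi}\le\norm W_{\op,f_V(E)}$. Both are valid. The paper's reduction is shorter but leans on the ECO--ECD bridge and the trace-class extension of $V(\placeholder)V^*$ to $\dom\energy_A$; your route stays entirely at the Hilbert-space level and makes the needed domain bookkeeping explicit --- correctly so, since $V\dom\sqrt G_{\!A}\subseteq\dom\sqrt G_{\!B}\subseteq\dom W$ is exactly what \cref{thm:EL_contraction} provides and what makes $\norm{WV}_{\op,E}$ well defined in the first place. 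Two shared caveats, affecting the paper's proof equally and hence not gaps in your proposal: the subnormalized reformulations you and the paper both use require the standing convention of vanishing ground state energy (\cref{rem:nonzero_gse0}), and the final concavity step \eqref{eq:concavity_ineq_opE}/\eqref{eq:concavity_ineq_ECD} is stated for $E'\ge E$, so the rightmost factors $\sqrt{f_V(E)/E}$ and $f_T(E)/E$ are justified when $f_V(E)\ge E$, resp.\ $f_T(E)\ge E$ (note also that the $f_S$ in \eqref{eq:submultiplicativity2} should read $f_T$, as in \eqref{eq:intro_submultiplicativity}; your reading is the intended one).
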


\begin{proof}
    By \eqref{eq:Enorm_ECD_norm}, the second item implies the first one. 
    \ref{it:submultiplicativity2}:
    Let $\rho\in\states_E(\H_{AR})$. Then $\sigma=(T\ox\id)\rho\in\T(\H_{BR})^+$ with $\tr\sigma\le 1$ and $\energy_{BR}[\sigma]\le f_{T\ox\id}(E)=f_T(E)$ (see item \ref{it:energy_lim3} of \cref{lem:energy_lim}).
    By item \ref{it:ECD_subnormalized} of \cref{lem:ECD}, it holds that 
    $
    \norm{(ST\ox\id)\rho}_1 = \norm{(S\ox\id)\sigma}_1 \le \norm{S}_{\diamond,f_T(E)}.
    $
    The result now follows from \eqref{eq:concavity_ineq_fT}.
\end{proof}

\begin{remark}[Nonzero ground state energy]\label{rem:nonzero_gse0}
    Most of the statements presented in \cref{sec:channels,sec:norms} do not require the assumption that the reference Hamiltonians have vanishing ground state energy.
    In particular, \cref{thm:dual_prob,thm:dual_prob_Enorm} do not depend on the ground state energy being zero.
    However, items \ref{it:energy_lim2} and \ref{it:energy_lim5} of \cref{lem:energy_lim} and items \ref{it:ECD_subnormalized} and \ref{it:Enorm_subnormalized} of \cref{lem:ECD} need the ground state energy to be nonzero.
    These statements have in common that they (or their proofs) involve subnormalized states.
    For instance, item \ref{it:energy_lim5} of \cref{lem:energy_lim} will be true even for nonzero ground state energy if the cp map $S$ is trace-preserving.
\end{remark}

\section{Energy-limited dynamics}\label{sec:dynamics}

In this chapter, we develop the theory of energy-limited dynamics.
We mostly focus on the case of Markovian dynamics.
Nonetheless, we begin by properly defining energy-limitedness in the general case and establishing its basic properties in full generality.
We fix a Hilbert space $\H$ with a reference Hamiltonian $G$.

A \emph{quantum evolution system} $\{T(t,s)\}_{t\ge s}$ is a collection of completely positive trace-nonincreasing maps $T(t,s)$ on $\TC$, defined for times $t\ge s$ in some interval, such that
\begin{enumerate}[(i)]
    \item $T(t,s) T(s,u) = T(t,u)$ and $T(t,t)=\id$ for all $t\ge s\ge u$,
    \item $T(t,s)\rho \to \rho$ as $t\to s^+$ for all $\rho \in\TC$ and $s$.
\end{enumerate}
Physically, the maps $T(t,s)$ model the change from time $s$ to time $t$.
In general, we do not assume the time evolution to be trace-preserving, accounting for cases where particles are lost (e.g., in arrival time measurements \cite{werner_arrival_1987}). Additionally, we do not assume the evolution to be time-homogeneous.
However, we say that an evolution system is \emph{conservative}, if $T(t,s)$ is trace-preserving for all $t\ge s$, and \emph{Markovian}, if $T(t,s)$ only depends on the time increment $t-s$.
If $\{T(t,s)\}_{t\ge s}$ is a Markovian evolution system, we set $T(t):= T(t,0)$. The properties of evolution systems imply that $\{T(t)\}_{t\ge0}$ is a \emph{quantum dynamical semigroup}, i.e., a strongly continuous one-parameter semigroup of completely positive trace-nonincreasing maps on $\TC$ \cite{davies1974,davies1975,davies1976}, from which the evolution system can be recovered via $T(t,s)=T(t-s)$.

As our definition of energy-limited dynamics, we take that for small time-increments, the output energy should be linearly bounded by the input energy:

\begin{definition}\label{def:energy_lim_dyn}
    A quantum evolution system $\{T(t,s)\}_{t\ge s}$ on $\H$ is {\bf energy-limited} if there exist constants $\omega,E_0\in\RR$ such that for each finite-energy state $\rho$, it holds
    \begin{equation}\label{eq:first_order}
        \energy[T(t+\Delta t,t)\rho] \le \energy[\rho] + (\omega \Delta t+\order(\Delta t)) (\energy[\rho]+E_0),\qquad t,\Delta t\ge0.
    \end{equation}
    Such constants $\omega,E_0$ are called {\bf stability constants}.
    A quantum dynamical semigroup is energy-limited if the corresponding Markovian evolution system is.
\end{definition}

Since the right-hand side of \eqref{eq:first_order} must be larger than the ground state energy, which is zero by convention, any pair of stability constants $\omega,E_0$ must satisfy $\omega \cdot E_0 \ge 0$.

\begin{lemma}\label{thm:first_order}
    A quantum evolution system $\{T(t,s)\}_{t\ge s}$ is energy-limited with stability constants $\omega,E_0$ if and only if 
    \begin{equation}\label{eq:affine_semigroup_bound}
        f_{T(t,s)} \le E + (e^{\omega (t-s)}-1)(E+E_0), \qquad t\ge s\ge0.
    \end{equation}
    In this case, $t\mapsto\energy[T(t,s)\rho]$ is right-continuous and lower semicon\-tinuous in $t\ge s$ for all $s$ and all finite-energy states $\rho\in\states_{<\oo}$.
\end{lemma}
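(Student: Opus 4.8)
The plan is to track the equivalent, additively-shifted form of \eqref{eq:affine_semigroup_bound}: adding $E_0$ to both sides turns it into $f_{T(t,s)}(E)+E_0\le(E+E_0)\,e^{\omega(t-s)}$, which is the inequality I would actually propagate. The direction \eqref{eq:affine_semigroup_bound} $\Rightarrow$ \eqref{eq:first_order} is immediate: for a finite-energy state $\rho$ one has $\energy[T(t+\Delta t,t)\rho]\le f_{T(t+\Delta t,t)}(\energy[\rho])$, and inserting \eqref{eq:affine_semigroup_bound} together with $e^{\omega\Delta t}-1=\omega\Delta t+\order(\Delta t)$ reproduces \eqref{eq:first_order} with the same constants. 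So the real content is the converse.

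For the converse the engine is the composition estimate $f_{ST}\le f_S\circ f_T$ from \cref{lem:energy_lim}, applied to the factorization $T(t_{k+1},s)=T(t_{k+1},t_k)\,T(t_k,s)$. I would first upgrade the pointwise condition \eqref{eq:first_order} to a bound on the output-energy function itself: since $x\mapsto x+(\omega\Delta t+\order(\Delta t))(x+E_0)$ is nondecreasing for small $\Delta t$, taking the supremum over all states with $\energy[\rho]\le E'$ yields $f_{T(t+\Delta t,t)}(E')\le E'+(\omega\Delta t+\order(\Delta t))(E'+E_0)$ with one and the same $\order$-term for every input energy $E'$. This supremum step is the crux, and the place I expect the most care: one cannot in general interchange $\sup_\rho$ with the limit $\Delta t\to0$, because a per-state error term could blow up along a moving family of near-optimizers. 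The whole argument therefore hinges on reading \eqref{eq:first_order} with a single $\order(\Delta t)$ that is uniform in $\rho$ (and in the base time $t$), and the monotonicity of $x\mapsto x+c(x+E_0)$ is exactly what converts that uniform pointwise bound into the estimate for $f$.

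I would then integrate this up by telescoping, which avoids any a priori regularity of $t\mapsto f_{T(t,s)}(E)$. Fix $s<t$, partition $[s,t]$ into $n$ steps of length $\delta=(t-s)/n$ with nodes $t_k$, and set $a_k=f_{T(t_k,s)}(E)$, so that $a_0=E$ and $a_n=f_{T(t,s)}(E)$. Combining the composition estimate with the single-step bound gives $a_{k+1}+E_0\le(a_k+E_0)(1+\omega\delta+\order(\delta))$, hence $a_n+E_0\le(E+E_0)(1+\omega\delta+\order(\delta))^n$; an easy induction shows each $a_k<\oo$, so all maps in sight are genuinely energy-limited and the composition step is legitimate. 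Letting $n\to\oo$ and using $n\,\order(1/n)\to0$, the right-hand side converges to $(E+E_0)\,e^{\omega(t-s)}$, which is the shifted target. (I would restrict to $\omega,E_0\ge0$, the case of interest; the remaining sign combinations allowed by $\omega E_0\ge0$ are analogous, using only $1+c>0$ for small $\delta$.)

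For the final assertion I would combine two ingredients. Lower semicontinuity of $t\mapsto\energy[T(t,s)\rho]$ is a soft consequence of the strong continuity of the evolution together with the lower semicontinuity of $\energy$ on $\TC^+$ (eq.\ \eqref{eq:lsc}). Right-continuity then follows from the bound just proved: writing $\sigma=T(t_0,s)\rho$, which has finite energy $E_\sigma:=\energy[\sigma]$ precisely because the dynamics is energy-limited, \eqref{eq:affine_semigroup_bound} applied to $T(t,t_0)$ gives $\energy[T(t,s)\rho]\le E_\sigma+(e^{\omega(t-t_0)}-1)(E_\sigma+E_0)\to E_\sigma$ as $t\to t_0^+$, so $\limsup_{t\to t_0^+}\energy[T(t,s)\rho]\le E_\sigma$; together with the lower-semicontinuity bound $\liminf\ge E_\sigma$ this yields right-continuity at $t_0$.
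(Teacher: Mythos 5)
Your proof is correct and follows essentially the same route as the paper's: the one-step bound extracted from \cref{def:energy_lim_dyn} combined with the composition estimate $f_{ST}\le f_S\circ f_T$ from \cref{lem:energy_lim}, telescoped over an $n$-step partition and pushed to the exponential via Euler's formula, with the identical lower-semicontinuity squeeze for right-continuity. Your $E_0$-shifted bookkeeping and your explicit insistence that the $\order(\Delta t)$ term be read as uniform in $\rho$ and in the base time $t$ are refinements of presentation (the paper uses this uniformity tacitly), not a different method.
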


The functions $f_t(E) = E+(e^{\omega t}-1)(E+E_0)-$ form groups of affine functions, i.e., $f_t\circ f_s = f_{t+s}$ holds for all $t,s\in\RR$. Sometimes the form $f_t(E)=Ee^{\omega t}+(e^{\omega t}-1)E_0$ is more convenient.

\begin{proof}
    The "if" part is clear. For the converse, we start by showing, for $n\in\NN$, the estimate
    \begin{align}\label{eq:cuss}
        f_{T(nt,0)}(E) \le \big(1+\omega t + \order(t)\big)^n E + E_0\big(\omega t+\order(t)\big) \sum_{k=0}^{n-1} \big(1+\omega t + \order(t)\big)^k
    \end{align}
    with induction. The case $n=1$ is follows from \eqref{eq:first_order} and the induction step goes as follows:
    \begin{align*}
        f_{T((n+1)t,0)}(E)
        &\le f_{T((n+1)t,nt)}\circ f_{T(nt,0)}(E)\\
        &\le \big(1+\omega t+\order(t)\big)f_{T(nt,0)}(E) + E_0(t + \order(t)) \\
        &\le \big(1+\omega t + \order(t)\big)^{n+1} E + E_0(\omega t+\order(t)) \sum_{k=0}^{n} \big(1+\omega t + \order(t)\big)^k.
    \end{align*}
    Evaluating the geometric sum in \eqref{eq:cuss} and replacing $t$ by $\frac tn$, gives
    \begin{align}
        f_{T(t,0)}(E) 
        &\le (1+\tfrac{\omega t}n + \order(\tfrac tn))^n E - E_0(1- (1+\tfrac{\omega t}n+\order(\tfrac tn))^n)
    \end{align}
    By Eulers Formula, the right-hand side converges to $e^{\omega t}E+(e^{\omega t}-1)E_0$ as $n\to\oo$.
    Lower semicontinuity follows from lower semicontinuity of $\energy$. Right-continuity follows from lower semicontinuity: $\energy[\rho]\le \liminf_{t\to s^+}\energy[T(t,s)\rho]
        \le \limsup_{t\to s^+} \energy[T(t,s)\rho] \le \lim_{t\to s^+} (e^{\omega (t-s)}(\energy[\rho]+E_0)-E_0)
        =\energy[\rho]$.
\end{proof}

We say that $\omega, E_0$ are \emph{joint stability constants} for a collection of quantum evolution systems $\{T_i(t,s)\}_{t\ge s}$, $i\in I$, if they are stability constants for each of the dynamics.
A collection is \emph{jointly energy-limited} if it admits joint stability constants.

\begin{lemma}\label{lem:joint_constants}
    Every finite collection $\{T_i(t,s)\}_{t\ge s}$, $i\in I$, of energy-limited quantum evolution systems is jointly energy-limited.
\end{lemma}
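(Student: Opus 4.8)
The plan is to reformulate energy-limitedness through the explicit bound of \cref{thm:first_order}: $\{T(t,s)\}$ has stability constants $\omega,E_0$ if and only if $f_{T(t,s)}(E)\le g_{\omega,E_0}(t-s,E)$ for all $t\ge s\ge0$, $E>0$, where I abbreviate $g_{\omega,E_0}(\tau,E):=E+(e^{\omega\tau}-1)(E+E_0)$. Fixing a pair of stability constants $(\omega_i,E_{0,i})$ for each $T_i$, the goal becomes producing one pair $(\omega,E_0)$ with $g_{\omega_i,E_{0,i}}\le g_{\omega,E_0}$ for every $i$. I would organize this in two steps: first replace each $(\omega_i,E_{0,i})$ by \emph{nonnegative} stability constants $(\tilde\omega_i,\tilde E_{0,i})$, and then exploit that $g$ is monotone in $(\omega,E_0)$ on the nonnegative quadrant, so that the coordinatewise maximum over the finite index set $I$ furnishes the desired joint pair.

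For the reduction to nonnegative constants I would split on the sign of $\omega_i$, using that every stability pair obeys $\omega_iE_{0,i}\ge0$ (observed right after \cref{def:energy_lim_dyn}). If $\omega_i>0$ then $E_{0,i}\ge0$ and nothing is to be done; if $\omega_i=0$ then $g_{\omega_i,E_{0,i}}(\tau,E)=E$ and $(0,0)$ works. The only substantial case is $\omega_i<0$, which forces $E_{0,i}\le0$. Setting $c:=-E_{0,i}\ge0$, a short computation rewrites $g_{\omega_i,E_{0,i}}(\tau,E)=e^{\omega_i\tau}E+(1-e^{\omega_i\tau})c$, and I would show it is dominated by $g_{\abs{\omega_i},c}$. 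After subtracting $E$ and substituting $u=\abs{\omega_i}\tau\ge0$, this amounts to $(1-e^{-u})(c-E)\le(e^{u}-1)(E+c)$, which is immediate when $c<E$ (left side nonpositive) and otherwise follows from the elementary estimates $0\le1-e^{-u}\le e^{u}-1$ and $0\le c-E\le c+E$. Thus $(\tilde\omega_i,\tilde E_{0,i}):=(\abs{\omega_i},\abs{E_{0,i}})$ are nonnegative stability constants for $T_i$.

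Finally I would set $\omega:=\max_i\tilde\omega_i$ and $E_0:=\max_i\tilde E_{0,i}$, both finite since $I$ is finite and both nonnegative. For each $i$ the factors $e^{\tilde\omega_i\tau}-1$ and $E+\tilde E_{0,i}$ are nonnegative and dominated by $e^{\omega\tau}-1$ and $E+E_0$ respectively, so $g_{\tilde\omega_i,\tilde E_{0,i}}\le g_{\omega,E_0}$; combining with the previous step and \cref{thm:first_order} yields $f_{T_i(t,s)}\le g_{\omega,E_0}$ for all $i$, i.e.\ $(\omega,E_0)$ are joint stability constants. I expect the main obstacle to be exactly the sign bookkeeping in the second step: a dynamics may a priori be certified only by a contracting pair with $\omega_i<0$, where the exponential bends the opposite way from the positive-$\omega$ template, and the inequality $e^{u}-1\ge1-e^{-u}$ is precisely what allows an expanding bound to absorb a contracting one.
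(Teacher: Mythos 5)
Your proof is correct, and its core is the same as the paper's: the paper's entire proof is to set $\omega=\max_i\omega_i$, $E_0=\max_i E_{0,i}$ and assert the chain $f_{T_i(t,s)}(E)\le E+(e^{\omega_i(t-s)}-1)(E+E_{0,i})\le E+(e^{\omega(t-s)}-1)(E+E_0)$. What you add — and the paper omits — is the justification that this second inequality actually holds, which requires monotonicity of $g_{\omega,E_0}$ in $(\omega,E_0)$. Since \cref{def:energy_lim_dyn} allows $\omega,E_0\in\RR$ subject only to $\omega E_0\ge0$, a dynamics could a priori be certified by a contracting pair $\omega_i<0$, $E_{0,i}<0$, and then the paper's chain is not automatic: with $c=-E_{0,i}$ the bound reads $e^{\omega_i\tau}E+(1-e^{\omega_i\tau})c$, which for $c>E$ exceeds $E$ and need not be dominated by $E+(e^{\omega\tau}-1)(E+E_0)$ when the maxima $\omega,E_0$ are small (e.g.\ $c$ large, $\omega=\max_i\omega_i$ tiny, $E_0=\max_i E_{0,i}=0$). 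Your normalization step — replacing $(\omega_i,E_{0,i})$ by $(\abs{\omega_i},\abs{E_{0,i}})$ via the elementary estimate $(1-e^{-u})(c-E)\le(e^{u}-1)(E+c)$, then taking maxima on the nonnegative quadrant where both factors of $(e^{\omega\tau}-1)(E+E_0)$ are nonnegative and monotone — closes exactly this gap, so your version is the airtight form of the paper's one-line argument rather than a different route.
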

\begin{proof}
    Let $\omega_i,E_{0,i}$ be stability constants for the respective dynamics and set $\omega=\max_i \omega_i$, $E_0=\max E_{0,i}$. Then $f_{T_i(t,s)}(E)\le E+(e^{\omega_i(t-s)}-1)(E+E_{0,i})\le E+(e^{\omega t}-1)(E+E_0)$ for all $i$.
\end{proof}

In the rest of this chapter, we restrict to Markovian dynamics.
In \cref{sec:unitary_dynamics}, we start with unitary dynamics.
In \cref{sec:general_open}, we deal with open quantum systems in full generality.
Afterward, we consider standard generators in \cref{sec:standard}.
Examples of energy-limited dynamics can be found in \cref{sec:examples}.

\subsection{Unitary dynamics}\label{sec:unitary_dynamics}

Unitary one-parameter groups $\{U(t)\}_{t\in\RR}$ describe invertible Markovian quantum dynamics.
We distinguish between forward and backward energy-limitedness:

\begin{definition}
    A unitary one-parameter group $\{U(t)\}_{t\in\RR}$ is called {\bf forward} (resp.\ {\bf backward energy-limited}) if the forward dynamical semigroup $\{T_+(t)\}_{t\ge0}$ (resp.\ the backward dynamical semigroup $\{T_-(t)\}_{t\ge0}$) is energy-limited, where $T_\pm(t):=U(\pm t)(\placeholder) U(\pm t)^*$.
    We say that $\{U(t)\}_{t\in\RR}$ is {\bf energy-limited} if it is both forward and backward energy-limited. 
\end{definition}

According to \cref{lem:joint_constants}, a unitary one-parameter group is energy-limited if and only if there are stability constants $\omega,E_0\ge0$ such that
\begin{equation}\label{eq:fU_bound}
    f_{U(t)}(E) \le e^{\omega\abs t}(E+E_0)-E_0, \qquad t\in\RR.
\end{equation}
Backward energy-limitedness is equivalent to a lower bound on the energy loss of the forward dynamics.
This also lets us prove:

\begin{lemma}
    Let $\{U(t)\}_{t\in\RR}$ be an energy-limited unitary group with stability constants $\omega,E_0\ge0$.
    Let $\psi\in\H$ be a unit vector. Then the energy change of $\psi$ is bounded as
    \begin{equation}
        - w(-\abs t)\le  \energy[U(t)\psi]-\energy[\psi]  \le w(\abs t), \qquad t\in\RR,
    \end{equation}
    where $w(t) = (e^{\omega t}-1)(\energy[\psi]+E_0) = \omega t(\energy[\psi]+E_0)+\Order(t^2)$.
    In particular, $t\mapsto \energy[U(t)\psi]$ is continuous in $t$ for all $\psi \in \dom \sqrt G$.
\end{lemma}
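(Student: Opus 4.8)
The plan is to obtain both inequalities from the affine output-energy bound \eqref{eq:fU_bound} together with the group law $U(-t)U(t)=1$, and then to bootstrap these into continuity. Throughout I use the equivalent form $f_{U(t)}(E)\le e^{\omega\abs t}E+(e^{\omega\abs t}-1)E_0$, valid for all $t\in\RR$ because $\{U(t)\}_{t\in\RR}$ is energy-limited in both time directions with the same constants $\omega,E_0$.

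For the upper bound, the definition of $f_{U(t)}$ applied to $\rho=\kettbra\psi$ together with energy-limitedness gives $\energy[U(t)\psi]\le f_{U(t)}(\energy[\psi])\le e^{\omega\abs t}\energy[\psi]+(e^{\omega\abs t}-1)E_0$. Subtracting $\energy[\psi]$ and factoring yields $\energy[U(t)\psi]-\energy[\psi]\le(e^{\omega\abs t}-1)(\energy[\psi]+E_0)=w(\abs t)$. If $\energy[\psi]=\oo$ this is vacuous; if $\energy[\psi]<\oo$ it moreover shows $U(t)\psi\in\dom\sqrt G$, a fact I reuse below.

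For the lower bound I invert the dynamics. Setting $\phi=U(t)\psi$, the group law gives $\psi=U(-t)\phi$, so $\energy[\psi]\le f_{U(-t)}(\energy[\phi])\le e^{\omega\abs t}\energy[U(t)\psi]+(e^{\omega\abs t}-1)E_0$. Rearranging for $\energy[U(t)\psi]$ — legitimate since this quantity is finite whenever $\energy[\psi]<\oo$ by the upper bound, the estimate being trivial otherwise — and subtracting $\energy[\psi]$ gives $\energy[U(t)\psi]-\energy[\psi]\ge(e^{-\omega\abs t}-1)(\energy[\psi]+E_0)=w(-\abs t)$, which is the (negative) lower bound.

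For continuity on $\dom\sqrt G$, I apply the two-sided bound just proved not to $\psi$ but to the finite-energy vector $U(t)\psi$. Writing $U(s)\psi=U(s-t)\,U(t)\psi$ and invoking the bound with initial vector $U(t)\psi$ gives
\[
    (e^{-\omega\abs{s-t}}-1)(\energy[U(t)\psi]+E_0)\le \energy[U(s)\psi]-\energy[U(t)\psi]\le(e^{\omega\abs{s-t}}-1)(\energy[U(t)\psi]+E_0).
\]
Both outer terms vanish as $s\to t$, so $s\mapsto\energy[U(s)\psi]$ is continuous by squeezing. The only step requiring care is the inversion argument for the lower bound: it relies on the backward dynamics obeying the same affine bound (forward/backward energy-limitedness with joint constants) and on the finiteness of $\energy[U(t)\psi]$ needed to rearrange the inequality; everything else is elementary algebra.
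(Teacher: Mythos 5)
Your proof is correct and takes essentially the same route as the paper: the upper bound is forward energy-limitedness applied to $\kettbra{\psi}$, the lower bound comes from writing $\psi=U(-t)\,U(t)\psi$ and applying backward energy-limitedness (with the finiteness of $\energy[U(t)\psi]$, supplied by the upper bound, justifying the rearrangement), and the continuity claim follows by running the two-sided bound from the finite-energy initial vector $U(t)\psi$ --- a step the paper leaves implicit but which you spell out correctly. One remark: the lower bound you derive, $w(-\abs t)\le \energy[U(t)\psi]-\energy[\psi]$, is the correct one; the ``$-w(-\abs t)$'' in the printed statement (and the factor $e^{-\omega t}$ in the paper's own proof) are sign slips, since $-w(-\abs t)>0$ would already fail for the energy-conserving case $U(t)=e^{-itG}$.
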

\begin{proof}
    Assume without loss of generality that $t>0$.
    The upper bound is immediate from forward energy-limitedness.
    The lower bound follows from backward energy-limitedness: Since $\psi = U(-t) (U(t)\psi)$, we have
    $\energy[\psi] \le e^{-\omega t}(\energy[U(t)\psi]+E_0)-E_0$, which is equivalent to the lower bound.
\end{proof}

\begin{example}[Forward but not backward energy-limited]
    Let $\H = L^2(\RR)$ and let $Q,P$ be the canonical position and momentum operators. Consider the multiplication operator $G = \exp Q^3$.
    Then $U(t)=e^{-itP}$ is forward energy-limited (because $U(t)^*G U(t)= e^{(Q-t)^3} \le G$, $t>0$) but not backward energy-limited because one can never find $\omega, E_0$ such that $e^{(x+t)^3}$ is bounded by $e^{\omega t} (e^{x^3}+E_0)$ for all $x>0$ and all $t>0$.
\end{example}

We start by stating our main result on energy-limited unitary dynamics.
To do this, we define
\begin{equation}\label{eq:restricted_dom}
    \dom(H\upharpoonright\dom\sqrt G) := \big\{\psi\in\dom\sqrt G\cap \dom H : H\psi\in\dom\sqrt G\big\},
\end{equation}
where $H$ is some densely defined operator on $\H$.

\begin{theorem}\label{thm:unitary}
    Let $H$ be a self-adjoint operator on $\H$ and set $U(t)=e^{-itH}$.
    Then $\{U(t)\}_{t\in\RR}$ is energy-limited with stability constants $\omega,E_0\ge0$ if and only if both of the following properties hold:
    \begin{enumerate}[(i)]
        \item For all $t\in\RR$, $U(t)$ leaves $\dom\sqrt G$ invariant, the restrictions $U_0(t):=U(t)|_{\dom\sqrt G}$ are $\sqrt G$-graph norm bounded and form a $\sqrt G$-graph norm-strongly continuous one-parameter group.
        \item The operator inequality $\pm i [H,G]\le \omega(G+E_0)$ holds in the sense that
        \begin{equation}\label{eq:unitary2}
            \pm i (\ip{\sqrt G\psi}{\sqrt GH\psi}-\ip{\sqrt GH\psi}{\sqrt G\psi}) \le \omega(\norm{\sqrt G\psi}^2+E_0\norm\psi^2)
        \end{equation}
        for all $\psi\in\dom(H\upharpoonright\dom\sqrt G)$.
    \end{enumerate}
\end{theorem}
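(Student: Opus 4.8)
The plan is to treat the two implications separately, using the restricted group $U_0(t)$ on the Banach space $\V=\dom\sqrt G$ (with the $\sqrt G$-graph norm) as the central object and reducing everything to a Gronwall estimate for $g(t):=\energy[U(t)\psi]=\norm{\sqrt G U(t)\psi}^2$. First I would record the operator-level reformulation of energy-limitedness. For fixed $t$ the map $V=U(t)$ is a contraction, so by \cref{thm:EL_contraction} the bound $f_{U(t)}(E)\le e^{\omega\abs t}(E+E_0)-E_0$ from \eqref{eq:fU_bound} is equivalent to $U(t)$ mapping $\dom\sqrt G$ into itself together with $U(t)^*GU(t)\le e^{\omega\abs t}G+(e^{\omega\abs t}-1)E_0$, i.e. $\norm{\sqrt G U(t)\psi}^2\le e^{\omega\abs t}\norm{\sqrt G\psi}^2+(e^{\omega\abs t}-1)E_0\norm\psi^2$. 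This already yields the invariance of $\dom\sqrt G$ and the local graph-norm boundedness of the $U_0(t)$, and it also shows (in the forward direction) that $\psi\mapsto g(t)$ and $\psi\mapsto\energy[\psi]$ are graph-norm continuous for fixed $t$.

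For the "only if" direction the remaining content of (i) is the $\sqrt G$-strong continuity of $U_0$. I would obtain it by a weak-plus-norm argument: the inequality above gives $\limsup_{t\to0}\norm{\sqrt G U(t)\psi}^2\le\norm{\sqrt G\psi}^2$, while lower semicontinuity \eqref{eq:lsc} of $\energy$ gives the matching $\liminf$, so $\norm{\sqrt G U(t)\psi}\to\norm{\sqrt G\psi}$; testing against the dense set $\sqrt G(\dom G)$ and using $U(t)\psi\to\psi$ in $\H$ gives $\sqrt G U(t)\psi\rightharpoonup\sqrt G\psi$ weakly, and weak plus norm convergence in Hilbert space upgrades to graph-norm convergence. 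With (i) established I would prove (ii) by differentiating $g$ at $t=0$ for $\psi\in\dom(H\upharpoonright\dom\sqrt G)$. Writing $U(t)\psi-\psi=-i\int_0^t U(s)H\psi\,ds$ as a Bochner integral valued in $\V$ (legitimate because $H\psi\in\dom\sqrt G$ and $U_0$ is now a $C_0$-group, so $s\mapsto U_0(s)H\psi$ is graph-norm continuous) and using that $\sqrt G$ is a closed operator, I get $\sqrt G\tfrac{U(t)\psi-\psi}{t}\to-i\sqrt G H\psi$, hence $g$ is differentiable with $g'(0)=2\Im\ip{\sqrt G\psi}{\sqrt GH\psi}$, which is precisely the "$-$" form in \eqref{eq:unitary2}. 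Comparing $g$ with the majorant $\Phi(t)=e^{\omega\abs t}(\energy[\psi]+E_0\norm\psi^2)-E_0\norm\psi^2$, which agrees with $g$ at $0$, the one-sided derivatives $\Phi'(0^\pm)=\pm\omega(\energy[\psi]+E_0\norm\psi^2)$ sandwich $g'(0)$ and deliver both signs of \eqref{eq:unitary2}.

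For the "if" direction I would reverse this. Given (i), the group $U_0$ has a generator $A_0$, and a short computation shows $\dom A_0\subseteq\dom(H\upharpoonright\dom\sqrt G)$ with $A_0=-iH$ there: graph-norm differentiability forces $\psi\in\dom H$ with $A_0\psi=-iH\psi$, and $A_0\psi\in\V$ forces $H\psi\in\dom\sqrt G$. For $\psi\in\dom A_0$ the orbit $t\mapsto U_0(t)\psi$ is graph-norm $C^1$, and composing with the continuous quadratic form $\phi\mapsto\norm{\sqrt G\phi}^2$ on $\V$ shows $g$ is $C^1$ with $g'(t)=2\Im\ip{\sqrt G U_0(t)\psi}{\sqrt GH U_0(t)\psi}$. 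Since $U_0(t)\psi\in\dom A_0\subseteq\dom(H\upharpoonright\dom\sqrt G)$, hypothesis (ii) (with the "$-$" sign) gives $g'(t)\le\omega(g(t)+E_0\norm\psi^2)$, and Gronwall's lemma yields $g(t)\le e^{\omega t}(\energy[\psi]+E_0\norm\psi^2)-E_0\norm\psi^2$ for $t\ge0$; the backward estimate follows from the "$+$" sign applied to the group generated by $-H$, for which $[-H,G]=-[H,G]$. Finally I would extend this bound from the graph-norm dense set $\dom A_0$ to all of $\dom\sqrt G$ (both sides are graph-norm continuous in $\psi$ by the first paragraph), and take the supremum over unit vectors with $\energy[\psi]\le E$, using \cref{lem:energy_lim} to recover $f_{U(t)}(E)\le e^{\omega\abs t}(E+E_0)-E_0$, which is energy-limitedness by \cref{thm:first_order}.

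I expect the genuine technical work to be concentrated in two places: the weak-plus-norm upgrade to graph-norm strong continuity, and the interchange of the closed operator $\sqrt G$ with the Bochner integral that makes $g$ differentiable on $\dom(H\upharpoonright\dom\sqrt G)$. Both hinge on the closedness of $\sqrt G$ and the lower semicontinuity of $\energy$ established earlier, so the remaining steps are essentially bookkeeping once these are in place.
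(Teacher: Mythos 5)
Your proof is correct, and at the level of ingredients it follows the same strategy as the paper -- per-time reformulation via \cref{thm:EL_contraction}, upgrade to $\sqrt G$-graph-norm strong continuity, differentiation of $g(t)=\energy[U(t)\psi]$, then Gronwall plus a density argument -- but the organization is genuinely different. The paper never proves \cref{thm:unitary} directly: it proves the one-sided statement for contraction semigroups (\cref{thm:contraction}, resting on the technical \cref{lem:contraction_core}) and obtains the unitary theorem as an immediate corollary by applying it separately to the forward and backward semigroups $U(\pm t)$; in particular, the two signs in \eqref{eq:unitary2} come from two separate one-sided derivative arguments. You instead work with the two-sided group throughout and extract both signs at once by sandwiching $g'(0)=2\Im\ip{\sqrt G\psi}{\sqrt GH\psi}$ between the one-sided derivatives of the majorant $\Phi(t)=e^{\omega\abs t}(\energy[\psi]+E_0)-E_0$ at their point of tangency -- a nice economy, though it is specific to groups and would not yield the dissipative version (\cref{thm:contraction}) that the paper needs again later for standard generators in \cref{sec:standard}. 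Two technical steps also differ: for strong continuity at $t=0$ the paper expands $\norm{\sqrt G(C(t)-1)\psi}^2$ directly for $\psi$ in the core $\dom G$ and then extends by uniform boundedness, whereas you combine lower semicontinuity with the upper bound to get convergence of norms and then apply the Radon--Riesz (weak-plus-norm) upgrade, which works directly on all of $\dom\sqrt G$; and for the derivative formula the paper cites the subspace-semigroup result of Engel--Nagel (the generator of $U_0$ is the part of $-iH$ in $\dom\sqrt G$), while you rederive it by promoting the Duhamel identity $U(t)\psi-\psi=-i\int_0^t U(s)H\psi\,ds$ to a Bochner integral in $\dom\sqrt G$ and using closedness of $\sqrt G$. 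Both devices are sound and the remaining steps (extension from the graph-norm dense generator domain, pure-state reduction via \cref{lem:energy_lim}, and \cref{thm:first_order}) match the paper's.

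One small slip worth repairing: $\sqrt G(\dom G)$ is dense in $\overline{\Ran\sqrt G}=(\ker G)^\perp$, not in $\H$, and since $\inf\Sp(G)=0$ the kernel is typically nontrivial (e.g.\ for the number operator). Your weak-convergence step survives because every vector $\sqrt G U(t)\psi$ lies in $\overline{\Ran\sqrt G}$ and pairs to zero with any $\phi\in\ker\sqrt G$, so it suffices to test against $\sqrt G(\dom G)$ within that closed subspace, using the local uniform bound on $\norm{\sqrt GU(t)\psi}$; alternatively, the paper's expansion on $\dom G$ sidesteps the issue entirely.
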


The downside to \cref{thm:unitary} is that verifying the strong continuity of $U_0(t)$ for a given self-adjoint operator $H$ is hard in practice.
To address this, we adapt an idea due to Fröhlich \cite{frohlich} to obtain sufficient conditions that make energy-limitedness explicitly checkable in concrete examples:

\begin{theorem}\label{thm:unitary2}
    Let $\omega,E_0\ge0$ and let $H$ be self-adjoint operator.
    Let $\D\subset \dom H$ be a core got $G$ on which $H$ is $G$-bounded and satisfies $\pm i[H,G]\le \omega(G+E_0)$ for $\omega,E_0\ge0$, in the sense that
    \begin{equation}\label{eq:commutator_bound}
        \abs{\ip{H\psi}{G\psi}-\ip{G\psi}{H\psi}} \le \ip\psi{\omega(G+E_0)\psi}, \qquad \psi\in\D.
    \end{equation}
    Then, the unitary group generated by $H$ is energy-limited with stability constants $\omega,E_0$.
\end{theorem}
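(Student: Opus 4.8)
The plan is to prove directly the explicit output-energy bound
$f_{U(t)}(E)\le E+(e^{\omega t}-1)(E+E_0)$ for all $t\ge0$, which by \cref{thm:first_order} is exactly forward energy-limitedness with stability constants $\omega,E_0$. Since the hypotheses are invariant under $H\mapsto -H$ (indeed $\dom(-H)=\dom H$ and $\pm i[-H,G]=\mp i[H,G]$), the same argument applied to $-H$ yields backward energy-limitedness with the same constants, and hence energy-limitedness of $\{U(t)\}$. The analytic engine is a Grönwall estimate for the regularized energy $t\mapsto\ip{U(t)\psi}{G_\eps U(t)\psi}$, where I use the Yosida approximation $G_\eps=G(1+\eps G)^{-1}$ and $R_\eps=(1+\eps G)^{-1}$, letting $\eps\to0$ at the end by monotone convergence.

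First I would upgrade the hypotheses from the core $\D$ to all of $\dom G$. Since $H$ is $G$-bounded on $\D$ and $\D$ is a core for $G$, closedness of $H$ gives $\dom G\subseteq\dom H$ with the same $G$-bound, and the commutator bound \eqref{eq:commutator_bound} extends from $\D$ to every $\eta\in\dom G$ because both sides are continuous in the $G$-graph norm (note $\ip{H\eta}{G\eta}-\ip{G\eta}{H\eta}=2i\Im\ip{H\eta}{G\eta}$, which depends continuously on $\eta$ in that norm, as does $\ip\eta{(G+E_0)\eta}$).

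The crux is the regularized inequality
\begin{equation*}
    \pm\,i\big(\ip{H\chi}{G_\eps\chi}-\ip{G_\eps\chi}{H\chi}\big)\le \omega\big(\ip\chi{G_\eps\chi}+E_0\norm\chi^2\big),\qquad \chi\in\dom H .
\end{equation*}
Formally this is $\pm i[H,G_\eps]=R_\eps(\pm i[H,G])R_\eps\le\omega(R_\eps GR_\eps+E_0R_\eps^2)\le\omega(G_\eps+E_0)$, using that $G,G_\eps,R_\eps$ commute, that $R_\eps GR_\eps=GR_\eps^2\le GR_\eps=G_\eps$, and that $R_\eps^2\le 1$. To make this rigorous I would substitute $\eta=R_\eps\chi\in\dom G$ into the extended commutator bound, so the right-hand side becomes $\omega\ip{R_\eps\chi}{(G+E_0)R_\eps\chi}\le\omega(\ip\chi{G_\eps\chi}+E_0\norm\chi^2)$, and then verify that $\pm i(\ip{HR_\eps\chi}{GR_\eps\chi}-\ip{GR_\eps\chi}{HR_\eps\chi})$ equals the left-hand side of the display. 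The difference of these two quantities reduces (using $GR_\eps\chi=G_\eps\chi$) to a multiple of $\Im\ip{H(1-R_\eps)\chi}{G_\eps\chi}$, and since $(1-R_\eps)\chi=\eps G_\eps\chi$ this is $\eps\,\Im\ip{HG_\eps\chi}{G_\eps\chi}$. The decisive observation is that $G_\eps=\tfrac1\eps(1-R_\eps)$ maps $\dom H$ into itself, because $R_\eps$ maps $\H$ into $\dom G\subseteq\dom H$; hence $G_\eps\chi\in\dom H$ and $\ip{HG_\eps\chi}{G_\eps\chi}$ is real by self-adjointness of $H$, so the cross term vanishes. I expect this to be the main obstacle: it is precisely where one would otherwise be forced to assume that $\D$ or $\dom G$ is a core for $H$ (which is not hypothesized), and the regularization is exactly what circumvents this.

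With the inequality established, for $\psi\in\dom H$ set $g_\eps(t)=\ip{U(t)\psi}{G_\eps U(t)\psi}$. As $U(t)\psi\in\dom H$ and $t\mapsto U(t)\psi$ is $C^1$ in $\H$ with derivative $-iHU(t)\psi$, and $G_\eps$ is bounded, $g_\eps$ is continuously differentiable with $g_\eps'(t)=i\big(\ip{HU(t)\psi}{G_\eps U(t)\psi}-\ip{G_\eps U(t)\psi}{HU(t)\psi}\big)\le\omega\big(g_\eps(t)+E_0\norm\psi^2\big)$. Grönwall's inequality then gives $g_\eps(t)\le e^{\omega t}g_\eps(0)+(e^{\omega t}-1)E_0\norm\psi^2$, and letting $\eps\to0$ (so that $G_\eps\nearrow G$ in the form sense and $g_\eps(t)\nearrow\energy[U(t)\psi]$, $g_\eps(0)\nearrow\energy[\psi]$ by monotone convergence) yields $\energy[U(t)\psi]\le e^{\omega t}\energy[\psi]+(e^{\omega t}-1)E_0\norm\psi^2$ for all $\psi\in\dom H$. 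To finish, I would extend this to every unit vector $\psi\in\dom\sqrt G$ by approximating $\psi$ in the $\sqrt G$-graph norm by vectors of the core $\D\subseteq\dom H$ (a core for $\sqrt G$ by \cref{lem:elementary}), so that $\energy[\psi_n]\to\energy[\psi]$, and applying the lower semicontinuity \eqref{eq:lsc} of $\energy$ to the convergent sequence $U(t)\psi_n\to U(t)\psi$. Taking the supremum over $\energy[\psi]\le E$ via \cref{lem:energy_lim} gives $f_{U(t)}(E)\le E+(e^{\omega t}-1)(E+E_0)$, and \cref{thm:first_order} concludes the forward case; the backward case follows by the same argument for $-H$.
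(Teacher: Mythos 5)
Your proof is correct, and it takes a genuinely different route from the paper's. The paper deduces \cref{thm:unitary2} from \cref{thm:dissipative}, whose proof (following Fröhlich) regularizes the \emph{generator}: it replaces $K=-iH$ by the bounded operator $K_\eps = R_\eps K R_\eps$ with $R_\eps=(1+\eps G)^{-1}$, verifies the hypotheses of \cref{thm:contraction} for the regularized semigroup $e^{tK_\eps}$, and then removes the cutoff using the generation theorem of \cref{sec:generation} (which shows $\dom G$ is a core for the generator) together with the Trotter--Kato approximation theorem and item (4) of \cref{lem:energy_lim}. You instead keep the dynamics fixed and regularize the \emph{observable}, passing to the Yosida approximation $G_\eps = GR_\eps$; the decisive point, which you identify and justify correctly, is that the commutator bound survives this regularization because the error term $\eps\Im\ip{HG_\eps\chi}{G_\eps\chi}$ vanishes exactly — $G_\eps$ maps $\dom H$ into itself since $R_\eps\H=\dom G\subseteq\dom H$, and $\ip{H\phi}{\phi}$ is real by self-adjointness. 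From there, Grönwall applies to the bounded form $t\mapsto\ip{U(t)\psi}{G_\eps U(t)\psi}$ for $\psi\in\dom H$ (Stone's theorem supplies differentiability on the invariant domain $\dom H$), monotone convergence removes $\eps$, and your extension to all of $\dom\sqrt G$ via core approximation and lower semicontinuity is sound (\cref{lem:elementary} and \eqref{eq:lsc}); reducing backward energy-limitedness to the forward statement for $-H$ is legitimate since \eqref{eq:commutator_bound} is symmetric under $H\mapsto -H$. What each approach buys: yours is more elementary and self-contained, avoiding semigroup approximation theory, the appendix generation theorem, and the $\sqrt G$-graph-norm machinery of \cref{lem:contraction_core}, at the price of exploiting unitarity. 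The paper's detour through maximally dissipative operators is heavier but is reused elsewhere — \cref{thm:dissipative} and the generation theorem also underpin \cref{thm:standard} for open systems and yield Nelson's commutator theorem as a by-product. Incidentally, your cross-term trick would even carry over to the dissipative setting, since for dissipative $K$ the error term $2\eps\Re\ip{KG_\eps\chi}{G_\eps\chi}$ is $\le 0$, which is the right sign for the Grönwall bound.
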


The condition of self-adjointness is redundant: By Nelson's commutator theorem, a symmetric $G$-bounded operator $H_0$ which satisfies \eqref{eq:commutator_bound} on a core $\D$ of $G$ is essentially self-adjoint \cite{nelson_time-ordered_1972,frohlich}.

These theorems will follow from more general results about contraction semigroups on $\H$.
This has two benefits: (1) it allows us to prove forward/backward energy-limitedness for unitary dynamics even in cases where energy-limitedness in both time directions might fail, and (2) considering the energy increase of proper contraction semigroups will be useful for our study of open quantum systems later on (see \cref{sec:standard}).

We briefly recall the basics:
A contraction semigroup $\{C(t)\}_{t\ge0}$ on $\H$ is a strongly continuous contraction-valued map $C:\RR^+\to\B(\H)$ such that $C(t)C(s)=C(t+s)$ and $C(0)=1$. 
The generator $K$ of a contraction semigroup is the operator $K\psi = (d/dt) C(t)\psi|_{t=0}$ whose domain consists of all vectors $\psi\in\H$ such that $t\mapsto C(t)\psi$ is $C^1$.
Since the dynamics is contractive, the generator is \emph{dissipative}, i.e., satisfies $K+K^* \le 0$ in the sense that
\begin{equation*}
    \Re\ip\psi{K\psi} = \frac12 \frac{d}{dt}\Big|_{t=0} \norm{C(t)\psi}^2 \le 0 , \qquad \psi\in \dom K.
\end{equation*}
Among all dissipative operators, the generators of contraction semigroups are precisely the \emph{maximally} dissipative ones, those that admit no proper dissipative extensions \cite{arendt_extensions_2023}.
Thus, maximally dissipative operators are for contraction semigroups what self-adjoint operators are for unitary groups.
In fact, an operator $H$ is self-adjoint if and only if $-iH$ and $iH$ are both maximally dissipative.

We say that a contraction semigroup $\{C(t)\}_{t\ge0}$ is energy-limited with stability constants $\omega,E_0$ if
\begin{equation}\label{eq:fC_bound}
    f_{C(t)}(E)\le E + (e^{\omega t}-1)(E+E_0), \qquad E>0,\ t>0.
\end{equation}
The technical backbone of this section is the following Lemma, which reformulates energy-limitedness with respect to the $\sqrt G$-graph norm:
\begin{lemma}\label{lem:contraction_core}
    Let $\{C(t)\}_{t\ge0}$ be a contraction semigroup with generator $K$ and let $T(t)\rho=C(t)\rho C(t)^*$ be the corresponding quantum dynamical semigroup. 
    The following are equivalent:
    \begin{enumerate}[(a)]
        \item\label{it:contraction_core1} For all $t>0$, $C(t)$ leaves $\dom\sqrt G$ invariant and the restrictions $C_0(t)=C(t)|_{\dom\sqrt G}$ form a $\sqrt G$-graph norm-strongly continuous semigroup of bounded operators on $\dom\sqrt G$ (with the graph norm).
        \item\label{it:contraction_core2} For all $t>0$, the cp map $T(t)$ is energy-limited and $t\mapsto\energy[T(t)\rho]$ is continuous for all finite-energy states $\rho$. 
            Furthermore, $\sup_{0\le t\le \delta}f_{T(t)}(E)<\oo$ for some/all $E,\delta>0$.
        \item\label{it:contraction_core3} For all $t>0$, the contraction $C(t)$ is energy-limited and $\energy[C(t)\psi]\to\energy[\psi]$ as $t\to0^+$ for all $\psi\in\dom\sqrt G$.
            Furthermore, $\sup_{0\le t\le \delta}f_{C(t)}(E)<\oo$ for some/all $E,\delta>0$.
    \end{enumerate} 
    If these equivalent properties hold, then $\dom(K\upharpoonright\dom\sqrt G)$ is a common core for $K$ and $\sqrt G$, and 
    $t\mapsto\energy[C(t)\psi]$ is differentiable for all $\psi\in\dom(K\upharpoonright\dom\sqrt G)$ with derivative 
    \begin{equation}\label{eq:contraction_core1}
        \frac d{dt} \,\energy[C(t)\psi]= 2\Re \ip{\sqrt GKC(t)\psi}{\sqrt GC(t)\psi}.
    \end{equation}
\end{lemma}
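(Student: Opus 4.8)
The plan is to recast the entire statement in terms of the restricted family $C_0(t) = C(t)\upharpoonright\dom\sqrt G$ acting on the Banach space $\V := (\dom\sqrt G, \norm{\cdot}_{\sqrt G})$ with $\norm\psi_{\sqrt G}^2 = \norm\psi^2 + \norm{\sqrt G\psi}^2$. I will use throughout that $\energy[\psi] = \norm{\sqrt G\psi}^2$ for $\psi\in\dom\sqrt G$ by \eqref{eq:energy_functional_psi}, that $\energy[T(t)\kettbra\psi] = \energy[C(t)\psi]$ since $T(t)\kettbra\psi = \kettbra{C(t)\psi}$, and that $f_{T(t)} = f_{C(t)}$ by \cref{def:EL}. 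The opening observation is that, for each fixed $t$, \cref{thm:EL_contraction} identifies energy-limitedness of $C(t)$ with $C(t)$ mapping $\dom\sqrt G$ into itself and restricting to a bounded operator $C_0(t)$ on $\V$; the semigroup law for $C_0$ is then inherited from that of $C$. Consequently the only content of (a), (b), (c) beyond energy-limitedness of each $C(t)$ is a continuity statement together with the local uniform bound $\sup_{0\le t\le\delta} f_{C(t)}(E) < \oo$, whose "some/all" form I reduce to one instance using the concavity inequality \eqref{eq:concavity_ineq_fT} and the submultiplicativity of $f$ from item~\ref{it:energy_lim5} of \cref{lem:energy_lim}.

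I would treat (a)$\Leftrightarrow$(c) first, as the core. For (a)$\Rightarrow$(c), graph-norm convergence $C_0(t)\psi\to\psi$ forces $\sqrt G C(t)\psi \to \sqrt G\psi$ in $\H$, hence $\energy[C(t)\psi]\to\energy[\psi]$, while the local operator-norm bound of a $C_0$-semigroup supplies $\sup_{[0,\delta]}f_{C(t)}(E)\le(1+E)\sup_{[0,\delta]}\norm{C_0(t)}_{\V\to\V}^2 < \oo$. The substantial direction is (c)$\Rightarrow$(a), where one must upgrade the scalar convergence of norms $\norm{\sqrt G C(t)\psi}^2 = \energy[C(t)\psi] \to \energy[\psi] = \norm{\sqrt G\psi}^2$ to the vector convergence $\sqrt G C(t)\psi \to \sqrt G\psi$. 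Here I invoke the Hilbert-space fact that weak convergence together with convergence of norms implies strong convergence: weak convergence $\sqrt G C(t)\psi \rightharpoonup \sqrt G\psi$ holds because $\ip{\sqrt G C(t)\psi}{\phi} = \ip{C(t)\psi}{\sqrt G\phi} \to \ip{\sqrt G\psi}{\phi}$ for $\phi$ in the dense subspace $\dom\sqrt G$, the family $\{\sqrt G C(t)\psi\}_{0<t\le\delta}$ being norm-bounded by the uniform bound; and the norm convergence is precisely the continuity hypothesis of (c). This yields graph-norm continuity as $t\to 0^+$, and the semigroup law together with local boundedness promote it to strong continuity on all of $[0,\oo)$. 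I expect this passage from the scalar energy estimate to genuine graph-norm continuity to be the main obstacle, since it is the only place where the interaction of $\sqrt G$ with the dynamics is used in an essential way.

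For (b)$\Leftrightarrow$(c), the identity $f_{T(t)} = f_{C(t)}$ makes the energy-limitedness and uniform-bound clauses coincide, so only continuity needs attention. The implication (b)$\Rightarrow$(c) is immediate on restricting to pure states $\rho = \kettbra\psi$. For (c)$\Rightarrow$(b), I use the now-available (a): each $t\mapsto\energy[C(t)\psi]$ is continuous on $[0,\oo)$, and for a general finite-energy state I write $\rho = \sum_\alpha \lambda_\alpha \kettbra{\psi_\alpha}$ and apply item~\ref{it:elementary2} of \cref{lem:elementary} to get $\energy[T(t)\rho] = \sum_\alpha \lambda_\alpha \energy[C(t)\psi_\alpha]$. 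On any compact interval the uniform bound and \eqref{eq:concavity_ineq_fT} provide an affine majorant $\energy[C(t)\psi_\alpha]\le a\,\energy[\psi_\alpha] + b$ whose $\lambda$-weighted sum $a\,\energy[\rho] + b$ is finite, so the Weierstrass $M$-test forces uniform convergence of the partial sums and hence continuity of $t\mapsto\energy[T(t)\rho]$.

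For the final assertions I work with the $C_0$-semigroup $C_0$ on $\V$ produced by (a) and let $K_0$ be its generator. Since $\V$ embeds continuously in $\H$ and is $C(t)$-invariant, standard semigroup theory (the part of a generator in an invariant subspace, via the restricted resolvent $R(\lambda,K_0) = R(\lambda,K)\upharpoonright\V$) identifies $\dom K_0 = \{\psi\in\dom K\cap\dom\sqrt G : K\psi\in\dom\sqrt G\} = \dom(K\upharpoonright\dom\sqrt G)$ with $K_0 = K$ there. As the domain of the generator of $C_0$, $\dom K_0$ is dense in $\V$ for the graph norm, which is exactly the statement that it is a core for $\sqrt G$; and being dense in $\H$ and $C(t)$-invariant, it is a core for $K$ by the standard criterion that a dense, semigroup-invariant subspace of $\dom K$ is a core. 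Finally, for $\psi\in\dom K_0$ the orbit $t\mapsto C_0(t)\psi$ is graph-norm $C^1$ with derivative $K_0 C_0(t)\psi = K C(t)\psi$; applying the bounded operator $\sqrt G : \V\to\H$ shows $t\mapsto \sqrt G C(t)\psi$ is differentiable in $\H$ with derivative $\sqrt G K C(t)\psi$ (well-defined since $C(t)\psi\in\dom K_0$, whence $K C(t)\psi\in\dom\sqrt G$), so differentiating $\energy[C(t)\psi] = \norm{\sqrt G C(t)\psi}^2$ by the product rule produces $\frac{d}{dt}\energy[C(t)\psi] = 2\Re\ip{\sqrt G K C(t)\psi}{\sqrt G C(t)\psi}$, which is \eqref{eq:contraction_core1}.
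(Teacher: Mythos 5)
Your proposal is correct and follows essentially the same route as the paper's proof: the reduction via \cref{thm:EL_contraction} to the semigroup $C_0(t)$ on $\dom\sqrt G$ with its graph norm, the locally uniform bound from strong continuity of semigroups, the decomposition $\rho=\sum_\alpha\lambda_\alpha\kettbra{\psi_\alpha}$ with a summable majorant to get continuity of $t\mapsto\energy[T(t)\rho]$, the identification of the generator of $C_0$ as the part of $K$ in $\dom\sqrt G$ together with the core and product-rule arguments all match the paper's argument. The only local deviation is in (c)$\Rightarrow$(a), where you upgrade $\energy[C(t)\psi]\to\energy[\psi]$ to strong convergence $\sqrt G\,C(t)\psi\to\sqrt G\psi$ via weak convergence plus the Radon--Riesz property directly on all of $\dom\sqrt G$, whereas the paper expands $\norm{\sqrt G(C(t)-1)\psi}^2$ for $\psi$ in the core $\dom G$ and then extends by density using the uniform graph-norm bound---an equivalent and equally valid step.
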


\begin{proof}
    Equivalence of "some" and "all" in  \ref{it:EL_contraction2} and \ref{it:EL_contraction3} follows from \eqref{eq:concavity_ineq_fT} and \eqref{eq:energy_lim5}.
    \ref{it:contraction_core2} $\Rightarrow$ \ref{it:contraction_core3} is clear.
    
    \ref{it:contraction_core1} $\Rightarrow$ \ref{it:contraction_core2}:
    Let $\rho\in \states_{<\oo}$ with spectral decomposition $\rho=\sum_i \lambda_i\ketbra{\psi_i}{\psi_i}$. Then $\psi_i\in \dom\sqrt G$ and, of course, $(\lambda_i)\in\ell^1$.
    Using dominated convergence, we find
    \begin{equation*}
        \abs[\big]{\energy[T(t)\rho]-\energy[T(s)\rho]} 
        \le\sum_i \lambda_i \abs[\big]{\|\sqrt GC_0(t)\psi_i\|^2 - \|\sqrt GC_0(s)\psi_i\|^2}\xrightarrow{\abs{t-s}\to0}0.
    \end{equation*}
    Thus, $t\mapsto\energy[T(t)\rho]$ is continuous for all $\rho\in\states_{<\oo}$. 
    By \cref{thm:EL_contraction}, the general fact that strongly continuous semigroups of bounded operators are uniformly norm bounded for small times (see \cite[Prop.~I.5]{EngelNagel}) implies that $f_{T(t)}(E)=f_{C(t)}(E)$ is uniformly bounded for small times.

    \ref{it:contraction_core3} $\Rightarrow$ \ref{it:contraction_core1}:
    $C(t)$ leaves $\dom\sqrt G$ invariant and restricts to a $\sqrt G$-graph norm bounded operator because it is energy-limited.
    It suffices to show strong continuity at $t=0$ \cite[Prop.~I.5.3]{EngelNagel}. 
    Since $C(t)$ is already known to be strongly continuous on $\H$, we only need to show $\sqrt G C(t)\psi\to\sqrt G\psi$ as $t\to0^+$.
    For vectors $\psi\in \dom G$, we have  
    \begin{align*}
        \|\sqrt G(C(t)-1)\psi\|^2
        =\underbrace{\norm{\sqrt G C(t)\psi}^2}_{=\energy[C(t)\psi]\to\energy[\psi]}+ \norm{\sqrt G\psi}^2 - 2\underbrace{\Re\ip{G\psi}{C(t)\psi}}_{\to \ip\psi{G\psi}=\energy[\psi]}\to0.
    \end{align*}
    The first term converges by assumption, and the last term converges by strong continuity of $C(t)$ on $\H$.
    Strong convergence extends from the core $\dom G$ to all of $\dom\sqrt G$ since, by assumption, $\norm{\sqrt G C(t)\psi}\le M \norm{\sqrt G\psi}$ for some $M>0$ sufficiently small $t$.

    We now assume the equivalent properties to hold.
    The generator of the strongly continuous semigroup $C_0(t)$ on $\dom\sqrt G$ is the restriction of $K$ to $\dom(K\upharpoonright\dom\sqrt G)$, \cite[Sec.~II.2.3]{EngelNagel}.
    Consequently, $\dom(K\upharpoonright\dom\energy)$ is a core for $K$ because it is dense and $C(t)$-invariant, and a core for $\sqrt G$ because $\dom K_0$ is dense in $\dom\sqrt G$.
    Let $\psi\in\dom(K\upharpoonright\dom\sqrt G)$. Then $t\mapsto C_0(t)\psi$ is $C^1$ with respect to the $\sqrt G$-graph norm or, what is equivalent, $t\mapsto \sqrt GC(t)\psi$ is $C^1$ in $\H$. The derivative is $(d/dt) \sqrt GC(t)\psi = \sqrt GKC(t)\psi$.
    Thus, $t\mapsto \energy[C(t)\psi] = \ip{\sqrt G C(t)\psi}{\sqrt GC(t)\psi}$ is $C^1$ with derivative given by \eqref{eq:contraction_core1}.
\end{proof}

From this, we can deduce a contraction semigroup-version of \cref{thm:unitary}:

\begin{proposition}\label{thm:contraction}
    A contraction semigroup $\{C(t)\}_{t\ge0}$ with generator $K$ is energy-limited with stability constants $\omega,E_0$ if and only if both of the following properties hold:
    \begin{enumerate}[(i)]
        \item\label{it:contr1} 
            For all $t>0$, $C(t)$ leaves $\dom\sqrt G$ invariant and the restrictions $C_0(t)$ to $\dom\sqrt G$ are $\sqrt G$-graph norm bounded and form a $\sqrt G$-graph norm-strongly continuous one-parameter semigroup.
        \item\label{it:contr2}
            The operator inequality $GK+K^*G\le \omega(G+E_0)$ holds in the sense of quadratic forms:
            \begin{equation}\label{eq:contr2}
                2\Re \ip{\sqrt GK\psi}{\sqrt G\psi}\le \omega (\norm{\sqrt G\psi}^2 + E_0\norm\psi^2), \qquad \psi\in\dom(K\upharpoonright\dom\sqrt G)
            \end{equation}
    \end{enumerate}
\end{proposition}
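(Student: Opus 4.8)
The plan is to derive the Proposition entirely from \cref{lem:contraction_core}, which already identifies condition \ref{it:contr1} with several equivalent regularity statements and, crucially, supplies the differentiation formula \eqref{eq:contraction_core1}. Once \ref{it:contr1} is available, the function $g(t):=\energy[C(t)\psi]$ is differentiable for every $\psi\in\dom(K\upharpoonright\dom\sqrt G)$, and the quadratic-form inequality \eqref{eq:contr2} is exactly the statement that $g'(0)\le\omega(\norm{\sqrt G\psi}^2+E_0\norm\psi^2)$. Thus both implications reduce to passing between the integrated energy bound \eqref{eq:fC_bound} and its infinitesimal version \eqref{eq:contr2}, a Grönwall-type equivalence. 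I would assume $\omega,E_0\ge0$ throughout (as is forced for genuine stability constants by the sign constraint $\omega E_0\ge0$ noted after \cref{def:energy_lim_dyn}).

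For the \emph{forward direction}, suppose $C(t)$ is energy-limited with constants $\omega,E_0$. First I would establish \ref{it:contr1} by verifying condition \ref{it:contraction_core3} of \cref{lem:contraction_core}: energy-limitedness gives directly that each $C(t)$ is an energy-limited contraction and, via \eqref{eq:fC_bound}, that $\sup_{0\le t\le\delta}f_{C(t)}(E)<\oo$. For the remaining continuity $\energy[C(t)\psi]\to\energy[\psi]$ as $t\to0^+$, I would combine the upper estimate $\energy[C(t)\psi]\le f_{C(t)}(\energy[\psi])\to\energy[\psi]$ (from \eqref{eq:fC_bound}) with the lower semicontinuity \eqref{eq:lsc} of $\energy$ applied to $\kettbra{C(t)\psi}\to\kettbra\psi$, which forces $\liminf_{t\to0^+}\energy[C(t)\psi]\ge\energy[\psi]$. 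This yields \ref{it:contr1}. To extract \ref{it:contr2}, I would then invoke \eqref{eq:contraction_core1} at $t=0$: since $g(t)\le\energy[\psi]+(e^{\omega t}-1)(\energy[\psi]+E_0)$ with equality at $t=0$, comparing the one-sided derivatives at $t=0^+$ gives $2\Re\ip{\sqrt GK\psi}{\sqrt G\psi}=g'(0^+)\le\omega(\energy[\psi]+E_0)$, which is \eqref{eq:contr2} after writing $\energy[\psi]=\norm{\sqrt G\psi}^2$ and using homogeneity to reduce to unit vectors.

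For the \emph{backward direction}, assume \ref{it:contr1} and \ref{it:contr2}. By \cref{lem:contraction_core}, $\dom(K\upharpoonright\dom\sqrt G)$ is a $C(t)$-invariant common core, so for $\psi$ in it $g(t)=\energy[C(t)\psi]$ is $C^1$ with $g'(t)=2\Re\ip{\sqrt GKC(t)\psi}{\sqrt GC(t)\psi}$. Applying \eqref{eq:contr2} to the vector $C(t)\psi$ (legitimate by invariance) and bounding $\norm{C(t)\psi}\le\norm\psi$ by contractivity, I obtain the differential inequality $g'(t)\le\omega(g(t)+E_0\norm\psi^2)$; Grönwall's lemma then gives $g(t)\le e^{\omega t}g(0)+(e^{\omega t}-1)E_0\norm\psi^2$. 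For a unit vector with $\energy[\psi]\le E$ this reads $\energy[C(t)\psi]\le E+(e^{\omega t}-1)(E+E_0)$. Finally, since $\dom(K\upharpoonright\dom\sqrt G)$ is a core for $\sqrt G$ and $\sqrt GC(t)$ is $\sqrt G$-bounded by \ref{it:contr1}, item \ref{it:norm_dense} of \cref{lem:ECD} lets me restrict the supremum defining $f_{C(t)}(E)$ to this core, upgrading the estimate to $f_{C(t)}(E)\le E+(e^{\omega t}-1)(E+E_0)$, i.e.\ \eqref{eq:fC_bound}.

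The hard part will be the one-sided derivative comparison at the boundary point $t=0$ in the forward direction: extracting the quadratic-form inequality from the integrated bound hinges on the genuine differentiability furnished by \cref{lem:contraction_core} rather than mere one-sided differentiability, and on the fact that the two functions agree at $t=0$. A secondary technical obstacle is the passage from the invariant core $\dom(K\upharpoonright\dom\sqrt G)$ to arbitrary energy-constrained unit vectors in the backward direction; this is exactly what the core/density statement of \cref{lem:ECD} provides, and without it the Grönwall estimate would only control the energies of core vectors rather than $f_{C(t)}(E)$ itself.
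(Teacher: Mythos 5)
Your proposal is correct and follows essentially the same route as the paper's proof: both directions are reduced to \cref{lem:contraction_core}, with the forward implication obtained by combining the integrated bound \eqref{eq:fC_bound} with lower semicontinuity of $\energy$ to verify the lemma's hypotheses and then differentiating the energy bound at $t=0^+$, and the backward implication obtained from the derivative formula \eqref{eq:contraction_core1}, contractivity, Grönwall's lemma, and extension from the core $\dom(K\upharpoonright\dom\sqrt G)$ to all of $\dom\sqrt G$. The only cosmetic difference is that you cite item \ref{it:norm_dense} of \cref{lem:ECD} for the core-density step where the paper argues the graph-norm approximation directly; the content is identical.
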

\begin{proof}
    Note that \ref{it:contr1} is one of the equivalent properties of \cref{lem:contraction_core}. Assume that $\{C(t)\}_{t\ge0}$ is energy-limited with stability constants $\omega,E_0$.
    Then \eqref{eq:fC_bound} and lower semicontinuity imply
    \begin{equation*}
        \limsup_{t\to0^+} \energy[C(t)\psi] \le \limsup_{t\to0} e^{\omega t}(\energy[\psi]-E_0\norm\psi^2)-E_0\norm\psi^2 = \energy[\psi] \le \liminf_{t\to0^+}\energy[C(t)\psi]
    \end{equation*}
    for $\psi\in\dom\sqrt G$. Since this implies $\lim_{t\to0^+}\energy[C(t)\psi]=\energy[\psi]$, the equivalent properties of \cref{lem:contraction_core} hold (in particular, \ref{it:contr1} holds). 
    Let $\psi\in\dom(K\upharpoonright\dom\sqrt G)$ be a unit vector with energy $\energy[\psi]=E$. Then we have
    \[
        \energy[C(t)\psi]-E \le e^{\omega t}(E+E_0)-(E+E_0).
    \] 
    If we divide both sides by $t$ and take the limit $t\to 0^+$, \cref{lem:contraction_core} shows
    \[
        2\Re \ip{\sqrt GK\psi}{\sqrt G\psi} = \frac d{dt}\energy[C(t)\psi] |_{t=0} \le \omega(E+E_0)= \omega(\norm{\sqrt G\psi}^2+E_0\norm\psi^2).
    \] 

    Conversely, \ref{it:contr1} implies that $t\mapsto \energy[C(t)]$ is $C^1$ with $(d/dt)\energy[C(t)\psi]=2\Re \ip{\sqrt GKC(t)\psi}{\sqrt GC(t)\psi}$, and \ref{it:contr2} implies that the right-hand side is bounded by $\omega(\norm{\sqrt GC(t)\psi}^2 + E_0 \norm{C(t)\psi}^2) \le \omega(\energy[C(t)\psi] + E_0\norm\psi^2)$.
    Thus, the $C^1$ function $F(t)=\energy[C(t)\psi]+E_0\norm\psi^2$ satisfies $F'(t) \le \omega F(t)$ and
    Gronwall's Lemma \cite[App.~B.2.j]{evans2022partial} gives $F(t) \le e^{\omega t} F(0)$. 
    Thus, we have $\energy[C(t)\psi] \le e^{\omega t}(\energy[\psi]+E_0\norm\psi^2)-E_0\norm \psi^2$ for all $\psi\in \dom(K \restriction \sqrt G)$.
    Since $\dom(K\restriction\sqrt G)$ is a core for $\sqrt G$, the same holds for all vectors $\psi\in\dom\sqrt G$, i.e., $C(t)$ is energy-limited with stability constants $\omega,E_0$.
\end{proof}

\cref{thm:unitary} is immediate from \cref{thm:contraction} because a unitary group $\{U(t)\}_{t\in\RR}$ is energy-limited if and only if it is forward and backward energy-limited.
Similarly, \cref{thm:unitary2} follows from:

\begin{proposition}\label{thm:dissipative}
    Let $K$ be a maximally dissipative operator such that $\dom K$ contains a core $\D$ of $G$ on which $K$ is $G$-bounded and satisfies $K^*G+GK\le \omega(G+E_0)$ for some $\omega,E_0\ge0$, in the sense that 
    \begin{equation}\label{eq:dissipative}
        2\Re\ip{K\psi}{G\psi}\le \omega\ip\psi{(G+E_0)\psi}, \qquad \psi\in\D.
    \end{equation}
    Then, the contraction semigroup generated by $K$ is energy-limited with stability constants $\omega,E_0\ge0$.
\end{proposition}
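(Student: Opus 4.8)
\emph{Proof proposal.} The definition of energy-limitedness for a contraction semigroup is the bound \eqref{eq:fC_bound}, and by \cref{lem:energy_lim}\ref{it:energy_lim1} (applied to the operator $V=C(t)$) it suffices to establish the pointwise estimate
\[
    \norm{\sqrt G\,C(t)\psi}^2 \le e^{\omega t}\big(\energy[\psi]+E_0\norm\psi^2\big)-E_0\norm\psi^2, \qquad \psi\in\dom\sqrt G .
\]
The plan is to prove this directly by a Yosida regularization in the spirit of Fröhlich, which will simultaneously yield the invariance $C(t)\dom\sqrt G\subseteq\dom\sqrt G$; I would not verify conditions (i)--(ii) of \cref{thm:contraction} separately.

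First I would upgrade the hypotheses from the core $\D$ to all of $\dom G$. Since $K$ is $G$-bounded on $\D$ and $\D$ is a core for $G$, the restriction $K|_\D$ has a unique $G$-graph-continuous extension to $\dom G$; as the generator $K$ is closed on $\H$, this extension agrees with the generator $K$, so in fact $\dom G\subseteq\dom K$ and $K$ is $G$-bounded on $\dom G$. Both sides of \eqref{eq:dissipative} are continuous for the $G$-graph norm, hence $2\Re\ip{K\psi}{G\psi}\le\omega\ip\psi{(G+E_0)\psi}$ extends to every $\psi\in\dom G$.

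The crux, which I expect to be the main obstacle, is the regularized inequality. Writing $R_\eps=(1+\eps G)^{-1}$ and $G_\eps=GR_\eps$ (bounded, with $G_\eps\nearrow G$ as $\eps\to0^+$), I claim
\[
    2\Re\ip{K\phi}{G_\eps\phi}\le\omega\,\ip\phi{(G_\eps+E_0)\phi}, \qquad \phi\in\dom K .
\]
To prove this, set $\psi_\eps=R_\eps\phi\in\dom G\subseteq\dom K$, so that $G_\eps\phi=G\psi_\eps$ and $\phi-\psi_\eps=\eps G\psi_\eps$; the latter identity shows $G\psi_\eps=\eps^{-1}(\phi-\psi_\eps)\in\dom K$, which is the point that makes the argument run (one cannot argue $G\psi_\eps\in\dom G$ in general). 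Expanding $K\phi=K\psi_\eps+\eps K(G\psi_\eps)$ gives
\[
    2\Re\ip{K\phi}{G_\eps\phi}=2\Re\ip{K\psi_\eps}{G\psi_\eps}+\eps\,\ip{G\psi_\eps}{(K+K^*)G\psi_\eps}.
\]
The second summand is $\le0$ because $K$ is dissipative ($K+K^*\le0$) and $G\psi_\eps\in\dom K$; here dissipativity is exactly what renders the $\Order(\eps)$ error harmless. The first summand is controlled by the extended form inequality applied to $\psi_\eps\in\dom G$, and finally $\ip{\psi_\eps}{G\psi_\eps}=\ip\phi{GR_\eps^2\phi}\le\ip\phi{G_\eps\phi}$ (since $R_\eps^2\le R_\eps$) together with $\norm{\psi_\eps}\le\norm\phi$ deliver the claim.

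With the regularized inequality in hand, the remainder is a Gronwall estimate. For $\psi\in\dom K$ the function $F_\eps(t)=\ip{C(t)\psi}{G_\eps C(t)\psi}+E_0\norm\psi^2$ is $C^1$ because $G_\eps$ is bounded and $t\mapsto C(t)\psi$ is differentiable with derivative $KC(t)\psi$; using the claim with $\phi=C(t)\psi$ and $\norm{C(t)\psi}\le\norm\psi$ gives $F_\eps'(t)\le\omega F_\eps(t)$, whence $\ip{C(t)\psi}{G_\eps C(t)\psi}\le e^{\omega t}(\ip\psi{G_\eps\psi}+E_0\norm\psi^2)-E_0\norm\psi^2$. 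Letting $\eps\to0^+$ and using monotone convergence in the spectral calculus of $G$ (so that $\ip\chi{G_\eps\chi}\nearrow\energy[\chi]$) yields the target estimate for all $\psi\in\dom K$, in particular for $\psi\in\dom G$, where the right-hand side is finite so that $C(t)\psi\in\dom\sqrt G$. Since $\dom G$ is a core for $\sqrt G$ by \cref{lem:elementary}\ref{it:elementary1}, a final approximation of $\psi\in\dom\sqrt G$ by vectors of $\dom G$ in the $\sqrt G$-graph norm, together with the closedness of $\sqrt G$, extends both the membership $C(t)\psi\in\dom\sqrt G$ and the bound to all of $\dom\sqrt G$. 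This is precisely \eqref{eq:fC_bound} with stability constants $\omega,E_0$.
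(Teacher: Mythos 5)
Your proof is correct, but it takes a genuinely different route from the paper's. The paper regularizes the \emph{generator}: it sets $K_\eps = R_\eps K R_\eps$ with $R_\eps=(1+\eps G)^{-1}$, verifies conditions \ref{it:contr1}--\ref{it:contr2} of \cref{thm:contraction} for the bounded semigroups $e^{tK_\eps}$ via Fröhlich-style power-series estimates, and then removes the regularization with the Trotter--Kato approximation theorem --- which requires knowing that $\dom G$ is a core for $K$, supplied by the generation theorem of \cref{sec:generation} --- before concluding via item \ref{it:energy_lim4} of \cref{lem:energy_lim}. You instead regularize the \emph{reference Hamiltonian} (Yosida approximation $G_\eps=GR_\eps$) and work with the true semigroup $C(t)=e^{tK}$ throughout. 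Your key observation --- that for $\phi\in\dom K$ one has $G\psi_\eps=\eps^{-1}(\phi-\psi_\eps)\in\dom K$ as a linear combination of vectors in $\dom K$, even though $G\psi_\eps\notin\dom G$ in general, so that the $\Order(\eps)$ cross term $2\eps\Re\ip{KG\psi_\eps}{G\psi_\eps}$ is nonpositive by dissipativity --- yields the form inequality for $G_\eps$ on all of $\dom K$; Gronwall and monotone convergence in the spectral calculus then give the pointwise bound directly, with the invariance $C(t)\dom\sqrt G\subseteq\dom\sqrt G$ falling out as a byproduct rather than entering as a hypothesis to verify. The individual steps check out: the upgrade of \eqref{eq:dissipative} from $\D$ to $\dom G$ correctly uses closedness of the maximally dissipative generator, the estimates $GR_\eps^2\le G_\eps$ and $\norm{R_\eps}\le1$ justify the regularized claim, and the final core-plus-closedness extension to $\dom\sqrt G$ is sound (lower semicontinuity of $\energy$, \eqref{eq:lsc}, would serve equally), after which item \ref{it:energy_lim1} of \cref{lem:energy_lim} converts the pure-state bound into \eqref{eq:fC_bound}. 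As for what each approach buys: yours is more elementary and self-contained --- no Trotter--Kato, no appeal to \cref{thm:generation}, and no need to verify the $\sqrt G$-graph-norm strong continuity in \cref{thm:contraction}\ref{it:contr1} --- whereas the paper's generator-regularization recycles machinery it needs anyway, since the same $K_\eps=R_\eps KR_\eps$ scheme reappears in the proof of \cref{thm:standard} with the Lindblad operators, where a purely form-based Gronwall argument would not obviously suffice. One cosmetic point: your expression $\ip{G\psi_\eps}{(K+K^*)G\psi_\eps}$ is only formal, as $G\psi_\eps$ need not lie in $\dom K^*$; the rigorous statement, which you in fact use, is $2\Re\ip{K\chi}{\chi}\le0$ for $\chi=G\psi_\eps\in\dom K$.
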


As with \cref{thm:unitary2}, the assumption that $K$ is a generator is redundant: If $K:\D\to\H$ is a dissipative $G$-operator satisfying \eqref{eq:dissipative}, then $\bar K$ is maximally dissipative.
This follows from the generalization of Nelson's commutator theorem in \cref{sec:generation}.
Another consequence of this, and an important step in the proof, is that $\dom G$ is a core for $K$.

\begin{proof}
    Since $K$ is $G$-bounded on a core for $G$, we know that $K$ is $G$-bounded on $\dom G\subset \dom K$ as well.
    By taking $G$-graph norm limits, \eqref{eq:dissipative} extends to all $\psi\in \dom G$. Thus, we may simply assume $\D=\dom G$ in the following.
    The following is inspired by the proof of \cite[Lem.~2]{frohlich}.

    \emph{Step 1.} We start by showing the claim for a bounded approximation of $K$. 
    We use the resolvent-type operators $R_\eps = (1+\eps G)^{-1}$, $\eps>0$, to define a regularized generator 
    \begin{equation*}
        K_\eps = R_\eps K R_\eps.
    \end{equation*}
    Since $K$ is $G$-bounded, $K_\eps$ is a bounded dissipative operator and $e^{tK_\eps} = \sum_{n=0}^\oo (t^n/n!) K_\eps^n$ is a contraction semigroup.
    From spectral theory and $G$-boundedness of $K$, it follows that $\sqrt{G}K_\eps$ is bounded as well.
    For $\psi\in \dom \sqrt{G}$, the estimate
    \begin{equation}
        \norm{\sqrt G e^{tK_\eps}\psi} \le \sum_{n=0}^\oo \frac{t^n}{n!}\norm{\sqrt GK_\eps^n\psi} \le \norm{\sqrt G\psi} + \norm\psi\sum_{n=1}^\oo \frac{t^n}{n!}\norm{\sqrt GK_\eps} \norm{K_\eps}^{n-1} < \oo
    \end{equation}
    shows that $e^{tK_\eps}$ leaves $\dom\sqrt G$ invariant (cp.~\cite{frohlich}) and the estimate
    \begin{equation*}\label{eq:diff}
        \big\|\sqrt{G}\big(\tfrac1t(e^{tK_\eps}\psi-\psi)-K_\eps\psi\big)\big\|
        \le \sum_{n=2}^\oo \frac{t^{n-1}}{n!} \norm{\sqrt{G}K_\eps} \norm{K_\eps}^{n-1}\norm\psi <\oo,
    \end{equation*}
    shows that $\RR^+\ni t\mapsto \sqrt{G} e^{tK_\eps}\psi\in\H$ is a continuously differentiable map with derivative $\sqrt{G}K_\eps e^{tK_\eps}\psi$ (cp.~\cite{frohlich}). 
    Therefore, $\{e^{tK_\eps}\}_{t\ge0}$ satisfies condition \ref{it:contr1} of \cref{thm:contraction}.
    Next we check condition \ref{it:contr2}:
    If $\psi\in\dom \sqrt G= \dom(K_\eps\restriction\dom\sqrt G)$ then
    \begin{align*}
        2\Re \ip{\sqrt{G}K_\eps\psi}{\sqrt G\psi} = 2\Re \ip{GK_\eps R_\eps\psi}{\sqrt GR_\eps\psi}\le \omega \ip{R_\eps\psi}{(G+E_0)R_\eps\psi}\le \omega\ip\psi{(G+E_0)\psi}
    \end{align*}
    where we applied \eqref{eq:dissipative}, which is allowed since $R_\eps\psi \in \dom G=\D$. 
    Therefore, \cref{thm:contraction} shows that $\{e^{tK_\eps}\}_{t\ge0}$ is energy-limited with stability constants $\omega,E_0$.
    
    \emph{Step 2.} In this step, we take the limit $\eps\to 0$.
    Since $K$ is $G$-bounded, $X = K(1+G)^{-1}$ is a bounded operator on $\H$ and $K = X(1+G)$ on $\dom G$.
    Since $R_\eps$ converges strongly to the identity as $\eps\to0$, it follows that $K_\eps = R_\eps XR_\eps (1+G)$ converges strongly to $K$ strongly on $\dom G$.
    By \cref{thm:generation} $\dom G$ is a core for the generator $K$.
    The Trotter-Kato approximation theorem \cite[Thm.~III.4.8]{EngelNagel} now shows that $e^{tK_\eps}$ converges strongly to $e^{tK}$ as $\eps\to0$.
    By item \ref{it:energy_lim3} of \cref{lem:energy_lim}, this implies that $\{e^{tK}\}_{t\ge0}$ is also energy-limited with stability constants $\omega,E_0$.
\end{proof}

\subsection{General open systems}\label{sec:general_open}

An open quantum system is a quantum system with irreversible dynamics, i.e., non-unitary, dynamics.
In the Markovian case, these dynamics are described by quantum dynamical semigroups, which are strongly continuous one-parameter semigroups of trace-nonincreasing cp maps.
The generator $\L$ of a quantum dynamical semigroup  $\{T(t)\}_{t\ge0}$, is defined as
\begin{equation}
    \L\rho = \lim_{t\to0^+} t^{-1}(T(t)\rho-\rho)
\end{equation}
on the domain $ \dom\L = \{\rho\in\TC : [t\mapsto T(t)\rho] \in C^1(\RR^+,\TC)\}$.
Importantly, $\{T(t)\}_{t\ge0}$ is conservative, i.e., each $T(t)$ is trace-preserving, if and only if 
\begin{equation}
    \tr\L\rho = 0,\qquad \rho\in\dom\L.
\end{equation}
In general, we only have $\tr\L\rho\le 0$ for $0\le \rho\in\dom\L$.
The semigroup can be recovered from its generator because $T(t)\rho$ is the unique solution to the initial value problem $\dot\rho(t)=\L\rho$, $\rho(0)=\rho$ for $\rho\in\dom\L$.
This is summarized by writing $T(t)=e^{t\L}$.
We denote the generator of the dual (Heisenberg-picture) semigroup $\{T^*(t)\}_{t\ge0}$ on $\B(\H)$, which is strongly continuous for the $\sigma$-weak operator topology, by $\L^*$.
Our goal is to understand energy-limitedness in terms of the generator.
We define
\begin{equation}\label{eq:domL_E}
    \dom(\L\upharpoonright \dom\energy) = \big\{\rho\in\dom\L\cap\dom\energy : \L\rho\in\dom\energy\big\}.
\end{equation}
Our main result is the following:

\begin{theorem}\label{thm:main}
    Let $\{T(t)\}_{t\ge0}$ be a quantum dynamical semigroup with generator $\L$ and resolvents $\R(\lambda)=(\lambda-\L)^{-1}$.
    The following are equivalent:
    \begin{enumerate}[(a)]
        \item\label{it:main1}
            $\{T(t)\}_{t\ge0}$ is energy-limited with stability constants $\omega,E_0$.
        \item\label{it:main2}
            For all $\lambda>\omega$, the output energy of the resolvents is bounded by 
            \begin{equation}\label{eq:main2}
                \energy[\R(\lambda)\rho]\le \frac1{\lambda-\omega}\Big(\energy[\rho] + \frac \omega\lambda E_0\tr\rho\Big),\qquad\rho\in\TC^+.
            \end{equation}
        \item\label{it:main3}
            $\dom\energy = (\lambda-\L)\dom(\L\upharpoonright\dom\energy) $ and the operator inequality $\L^*(G)\le \omega(G+E_0)$ holds in the sense that 
            \begin{equation}\label{eq:main3}
                \energy[\L\rho]\le \omega(\energy[\rho]+E_0\tr\rho),\qquad  0\le\rho\in\dom(\L\upharpoonright\dom\energy).
            \end{equation}
    \end{enumerate}
    In this case, $\dom(\L\upharpoonright\dom\energy)$ is a $T(t)$-invariant core for $\energy$ and for $\L$.
\end{theorem}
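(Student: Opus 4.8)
The plan is to treat the resolvent $\R(\lambda)=(\lambda-\L)^{-1}$ as the bridge between the time-domain bound \ref{it:main1} and the generator inequality \ref{it:main3}, establishing the cycle \ref{it:main1}$\Leftrightarrow$\ref{it:main2} and \ref{it:main2}$\Leftrightarrow$\ref{it:main3}. Two structural facts drive everything: the Laplace representation $\R(\lambda)\rho=\int_0^\oo e^{-\lambda t}T(t)\rho\,dt$ together with the exponential formula $T(t)\rho=\lim_n\big(\tfrac nt\R(\tfrac nt)\big)^n\rho$, and the observation that energy-limitedness forces $T(t)$ to restrict to a strongly continuous semigroup $T_0(t)$ on the Banach space $(\dom\energy,\normiii\placeholder_1)$, along whose orbits the $\normiii\placeholder_1$-continuous functional $\energy$ (\cref{lem:base}) can be differentiated. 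I record that $S_\lambda:=\lambda\R(\lambda)=\int_0^\oo\lambda e^{-\lambda t}T(t)\,dt$ is cp and trace-nonincreasing and that $\R(\lambda)$ is positivity-preserving with $\tr\R(\lambda)\rho\le\lambda^{-1}\tr\rho$ for $\rho\ge0$; I also repeatedly use lower semicontinuity \eqref{eq:lsc} and the interchange of $\energy$ with Bochner integrals from \cref{lem:elementary}.

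For \ref{it:main1}$\Rightarrow$\ref{it:main2} I apply $\energy$ to the Laplace integral: by \cref{lem:elementary} and the bound $\energy[T(t)\rho]\le e^{\omega t}(\energy[\rho]+E_0\tr\rho)-E_0\tr\rho$ (equivalent to \ref{it:main1} by \cref{thm:first_order}), the elementary integrals $\int_0^\oo e^{-(\lambda-\omega)t}\,dt=\tfrac1{\lambda-\omega}$ and $\int_0^\oo e^{-\lambda t}\,dt=\tfrac1\lambda$ give exactly \eqref{eq:main2}. For \ref{it:main2}$\Rightarrow$\ref{it:main1} I iterate: multiplying \eqref{eq:main2} by $\lambda$ yields $\energy[S_\lambda\rho]\le\mu\,\energy[\rho]+(\mu-1)E_0\tr\rho$ with $\mu=\tfrac\lambda{\lambda-\omega}\ge1$, and since $S_\lambda$ is trace-nonincreasing, induction gives $\energy[S_\lambda^n\rho]\le\mu^n\energy[\rho]+(\mu^n-1)E_0\tr\rho$. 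Taking $\lambda=n/t$ makes $\mu^n=(1-\tfrac{\omega t}n)^{-n}\to e^{\omega t}$, so the exponential formula together with \eqref{eq:lsc} yields $\energy[T(t)\rho]\le e^{\omega t}\energy[\rho]+(e^{\omega t}-1)E_0\tr\rho$, which is \ref{it:main1} via \cref{thm:first_order}.

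For \ref{it:main3}$\Rightarrow$\ref{it:main2} I fix $\rho\ge0$ in $\dom\energy$ and $\lambda>\omega$; the domain equality provides $\sigma=\R(\lambda)\rho\in\dom(\L\upharpoonright\dom\energy)$, which is positive, and from $\L\sigma=\lambda\sigma-\rho$ and linearity of $\energy$ the inequality \eqref{eq:main3} becomes $\lambda\energy[\sigma]-\energy[\rho]\le\omega(\energy[\sigma]+E_0\tr\sigma)$; solving for $\energy[\sigma]$ and inserting $\tr\sigma\le\lambda^{-1}\tr\rho$ gives \eqref{eq:main2}. Conversely, for \ref{it:main2}$\Rightarrow$\ref{it:main3} the domain equality follows because \eqref{eq:main2} and positivity show $\R(\lambda)$ maps $\dom\energy$ into $\dom\energy$, so $\L\R(\lambda)\rho=\lambda\R(\lambda)\rho-\rho\in\dom\energy$ and $\R(\lambda)\rho\in\dom(\L\upharpoonright\dom\energy)$, the reverse inclusion being immediate. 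For the inequality I use that (having \ref{it:main1} in hand) the restricted semigroup $T_0(t)$ is strongly continuous on $(\dom\energy,\normiii\placeholder_1)$; hence for $\rho\in\dom(\L\upharpoonright\dom\energy)$ the identity $\tfrac1t(T(t)\rho-\rho)=\tfrac1t\int_0^tT(s)\L\rho\,ds$ converges to $\L\rho$ in $\normiii\placeholder_1$ (mean value of the $\normiii\placeholder_1$-continuous map $s\mapsto T_0(s)\L\rho$), so $\energy[\tfrac1t(T(t)\rho-\rho)]\to\energy[\L\rho]$. Comparing $g(t)=\energy[T(t)\rho]$ with the bound $e^{\omega t}(\energy[\rho]+E_0\tr\rho)-E_0\tr\rho$ at the common value $g(0)=\energy[\rho]$ forces $\energy[\L\rho]=g_+'(0)\le\omega(\energy[\rho]+E_0\tr\rho)$.

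It remains to justify the $T_0$-device and the closing statement. The restrictions $T_0(t)$ are $\normiii\placeholder_1$-bounded by \cref{lem:energy_lim} and locally uniformly bounded for small $t$ by the energy bound; strong continuity at $t=0$ holds on the positive cone because for $\eta\ge0$ trace-norm convergence $T(t)\eta\to\eta$ together with $\energy[T(t)\eta]\to\energy[\eta]$ (lower semicontinuity from below, the bound from above) upgrades to $\normiii\placeholder_1$-convergence by the isomorphism $W$ of \cref{lem:base} and the classical trace-class theorem that, for positive operators, weak-operator convergence plus convergence of traces implies trace-norm convergence; it then extends to all of $\dom\energy$ by uniform boundedness since the positive cone is total. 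The generator of $T_0$ has domain exactly $\dom(\L\upharpoonright\dom\energy)$ (by the computation above it contains it, and any $\normiii\placeholder_1$-derivative is also a trace-norm one), so this set is automatically $T(t)$-invariant and a core for $\energy$, i.e.\ $\normiii\placeholder_1$-dense in $\dom\energy$; being $\normiii\placeholder_1$-dense it is trace-norm dense in $\TC$, and together with $T(t)$-invariance and containment in $\dom\L$ this makes it a core for $\L$. I expect the genuine obstacle to be exactly this strong-continuity step for $T_0$: the semigroup is only continuous in trace norm while $\energy$ is only tame in the finer $\normiii\placeholder_1$-topology, and bridging the two for positive elements (via the Grümm-type convergence theorem) and then propagating by uniform boundedness is where the real work lies; everything else is bookkeeping with the resolvent and \cref{thm:first_order}.
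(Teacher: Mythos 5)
Your proposal is correct, and while the (a)$\Leftrightarrow$(b) loop is essentially the paper's own argument (Laplace transform of the energy bound one way; iterating \eqref{eq:main2} with $\lambda=n/t$, the exponential formula \eqref{eq:T_from_R} and lower semicontinuity the other way), your treatment of (b)$\Rightarrow$(c) and of the closing core statement takes a genuinely different route. The paper never leaves the resolvent level: it shows $\energy[\lambda\R(\lambda)\sigma]\to\energy[\sigma]$ by a lower-semicontinuity squeeze against \eqref{eq:main2}, and then extracts \eqref{eq:main3} purely algebraically from $\L\R(\lambda)=\lambda\R(\lambda)-\id$ via $\energy[\L\rho]=\lim_\lambda\lambda\,\energy[\L\R(\lambda)\rho]$, obtaining the core statement from $\dom(\L\upharpoonright\dom\energy)=\R(\lambda)\dom\energy$, $T(t)$-invariance, and the standard core theorem — no continuity of the restricted semigroup in the $\normiii{}_1$-norm is ever invoked. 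You instead upgrade $T(t)\upharpoonright\dom\energy$ to a $C_0$-semigroup on $(\dom\energy,\normiii{}_1)$ and differentiate $t\mapsto\energy[T(t)\rho]$ along orbits. Your justification of that upgrade checks out as far as I can see: right-continuity of $t\mapsto\energy[T(t)\eta]$ at $t=0$ is exactly \cref{thm:first_order}, traces converge by trace-norm continuity, weak-operator convergence of $\sqrt{G+1}\,T(t)\eta\sqrt{G+1}$ holds on the dense domain $\dom\sqrt{G+1}$ with the uniform bound supplied by the energy estimate, and the Dell'Antonio--Gr\"umm theorem (positivity plus trace convergence upgrades weak-operator to trace-norm convergence) is classical; linearity over $\dom\energy=\lin\states_{<\oo}$ and the Engel--Nagel criterion then finish it. You should be aware, however, that this means you are proving strictly more than the theorem: the paper's remark immediately after \cref{thm:main} states that, to the author's knowledge, the equivalent conditions do \emph{not} imply $\normiii{}_1$-strong continuity of $T(t)\upharpoonright\dom\energy$, and lists it as an apparently additional property. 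If your Gr\"umm step is airtight — and I could not find a hole in it — you have settled that remark affirmatively, and the same argument applied to $\lambda\R(\lambda)\rho\to\rho$ even yields the $\normiii{}_1$-density in the remark's condition (ii). So either present this step with full care (it is the load-bearing point, correctly identified as such in your last paragraph), or, if you want a proof that only uses what the paper asserts, substitute the paper's resolvent computation for your differentiation step; note also that the paper's "core for $\energy$" claim is only the weaker $\energy$-graph-norm density, which the resolvent route delivers directly, whereas your route gives the stronger $\normiii{}_1$-density.
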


Before we give the proof, we recall a few properties of resolvents. 
Let $T(t)$ be a quantum dynamical semigroup with generator $\L$.
It can be shown that $(\lambda-\L)$ is surjective for all $\lambda>0$, and that the inverse $\R(\lambda)=(\lambda-\L)^{-1}$, the resolvent, is a bounded operator on $\TC$. 
Equivalently, the resolvents can be expressed as the Laplace transforms of the semigroup:
\begin{equation}\label{eq:laplace_trafo}
    \R(\lambda)\rho=\int_0^\oo e^{-\lambda t} T(t)\rho\,dt,\qquad\rho\in\TC.
\end{equation}
This implies $\tr\R(\lambda)\rho \le \lambda^{-1}\tr\rho$ for $\rho\in\TC^+$.
The range of the resolvents is exactly the domain of the generator $\Ran\R(\lambda)=\dom\L$ and $\L$ can be recovered from the resolvents via
\begin{equation}\label{eq:resolvent_LR_eq}
    \L\R(\lambda)\rho = \lambda\R(\lambda)\rho-\rho, \qquad\rho\in\TC.
\end{equation}
The dynamics can be recovered directly from the resolvents via 
\begin{equation}\label{eq:T_from_R}
    T(t)\rho = \lim_{n\to\oo} \big(\tfrac nt \R(\tfrac nt)\big)^n\rho,\qquad\rho\in\TC.
\end{equation}
By \eqref{eq:laplace_trafo} and \eqref{eq:T_from_R} the resolvents are cp if and only if the dynamics is.
We also note the formulae
\begin{equation}\label{eq:lambda_R_convergence}
    \lim_{\lambda\to\oo}\lambda\R(\lambda)\rho=\rho,\qandq \lim_{\lambda\to\oo}\lambda\L\R(\lambda)\rho = \L\rho,\qquad \rho\in\dom\L.
\end{equation}
The first limit even holds for all $\rho\in\TC$.
All of these statements (and many more) can be found in \cite{EngelNagel}.
We need the following immediate consequence of item \ref{it:elementary3} of \cref{lem:elementary}:

\begin{lemma}\label{thm:energy_integral}
    Let $T(t)$ be a quantum dynamical semigroup and let $0\le p\in L^1(\RR^+)$.
    Then $t\mapsto \energy[T(t)\rho]$ is a Borel measurable $\bar\RR^+$-valued map for all $\rho\in\TC^+$ and
    \begin{equation}\label{eq:energy_integral}
        \energy\bigg[\int p(s) T(s)\rho\,ds\bigg] =\int p(s) \energy[T(s)\rho]\,ds, \qquad\rho\in\TC^+,
    \end{equation}
    where both sides may be infinite.
    If $p(t)\energy[T(t)\rho]$ is integrable for all finite-energy states, equation \eqref{eq:energy_integral} extends linearly to all $\rho\in\dom\energy$.
\end{lemma}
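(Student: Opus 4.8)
The plan is to recognize \eqref{eq:energy_integral} as a direct specialization of \ref{it:elementary3} of \cref{lem:elementary}, applied to the measure space $(X,\mu)=(\RR^+,\,p(s)\,ds)$ and the integrand $s\mapsto T(s)\rho$. Since $p\ge0$ and $p\in L^1(\RR^+)$, the measure $d\mu(s)=p(s)\,ds$ is a finite positive measure, so that cited lemma applies verbatim once its two hypotheses—measurability and Bochner integrability of the integrand—are checked. Note that $\int_X T(s)\rho\,d\mu(s)=\int p(s)T(s)\rho\,ds$, so the two framings match the notation of the statement.

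First I would verify measurability and Bochner integrability. By strong continuity of the semigroup, $s\mapsto T(s)\rho$ is a norm-continuous $\TC^+$-valued map, hence strongly measurable; since each $T(s)$ is trace-nonincreasing, $\norm{T(s)\rho}_1=\tr T(s)\rho\le\tr\rho$, so $\int_{\RR^+}\norm{T(s)\rho}_1\,d\mu(s)\le\tr\rho\int p<\oo$ and $s\mapsto T(s)\rho$ is Bochner integrable against $\mu$. The measurability of $s\mapsto\energy[T(s)\rho]$ asserted in the statement also drops out here: $\energy$ is lower semicontinuous on $\TC^+$ by \eqref{eq:lsc}, so its composition with the continuous map $s\mapsto T(s)\rho$ is lower semicontinuous, hence Borel measurable as a map into $\bar\RR^+$. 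With the hypotheses in place, \ref{it:elementary3} of \cref{lem:elementary} yields
\[
    \energy\Big[\int p(s)T(s)\rho\,ds\Big]=\int \energy[T(s)\rho]\,p(s)\,ds,
\]
which is exactly \eqref{eq:energy_integral} (both sides possibly $+\oo$) for every $\rho\in\TC^+$.

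For the final claim I would argue by linearity. Fix $\rho\in\dom\energy=\lin\states_{<\oo}(\H)$ and write $\rho=\sum_k c_k\rho_k$ as a finite combination of finite-energy states $\rho_k$. The integrability hypothesis makes each $\int p(s)\energy[T(s)\rho_k]\,ds$ finite, so by the positive case already proved, $\int p(s)T(s)\rho_k\,ds$ has finite energy and thus lies in $\dom\energy$; moreover each $T(s)\rho_k$ has finite energy for a.e.\ $s$, so $T(s)\rho\in\dom\energy$ for a.e.\ $s$. Since both $\energy$ and $\rho\mapsto\int p(s)T(s)\rho\,ds$ are linear on $\dom\energy$, and the finitely many scalar integrals converge absolutely, both sides of \eqref{eq:energy_integral} expand to $\sum_k c_k\int p(s)\energy[T(s)\rho_k]\,ds$, giving the extension.

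The main point demanding care—rather than a genuine obstacle—is that $\energy$ is neither everywhere defined nor continuous on $\TC$ (indeed not even closable when $G$ is unbounded). Consequently the interchange of the finite sum with the integral in the last step cannot be justified by continuity of $\energy$ and must instead rest on the assumed integrability of $p(t)\energy[T(t)\rho_k]$, which guarantees that each term, and hence $T(s)\rho$, stays inside $\dom\energy$ for a.e.\ $s$. Once this bookkeeping is granted, the argument is routine, consistent with the statement being flagged as an immediate consequence of \cref{lem:elementary}.
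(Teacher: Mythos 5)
Your proposal is correct and follows essentially the same route as the paper, which presents this lemma as an immediate consequence of item (3) of \cref{lem:elementary}: its own (suppressed) proof simply reruns the monotone-convergence argument with $h_n(t)=\tr[G_nT(t)\rho]$ and obtains measurability from lower semicontinuity of $\energy$, exactly as you do. Your explicit verification of Bochner integrability via $\norm{T(s)\rho}_1\le\tr\rho$ and your bookkeeping for the linear extension (each $T(s)\rho_k\in\dom\energy$ for $\mu$-a.e.\ $s$, then termwise linearity) supply precisely the details the paper leaves implicit.
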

\hide{\begin{proof}
    Borel measurability holds for $\rho\in\TC^+$ because $t\mapsto\energy[T(t)\rho]$ is lower semicontinuous. By linearity, measurability extends to $\dom\energy$.
    Denote by $P_n$ the spectral projection of $G$ of the interval $[0,n]$ and set $G_n=P_nG$.
    For $\rho\in\TC^+$ define $h_n(t) = \tr[G_nT(t)\rho]$ and observe that $h_1\le h_2\le \ldots$ is monotonically increasing.
    By definition of the energy functional, the pointwise limit  $h(t):=\lim_{n\to\oo} h_n(t)$ is equal to $g(t)=\energy[T(t)\rho]$. Therefore the monotone convergence (Beppo-Levi) theorem implies $\lim_{n\to\oo}\int h_n(t)\,d\mu(t) = \int h(t)d\mu(t)$.
    Since $\int T(s)\rho\,d\mu(s)$ exists as a Bochner integral in $\TC$, we conclude
    \begin{equation*}
        \energy\bigg[\int T(s)\rho\,d\mu(s)\bigg]
        =\lim_n \tr\bigg[{ G_n\! \int T(s)\rho\,d\mu(s)}
        =\lim_n \int \underbrace{\tr[G_n T(s)\rho]}_{=h_n(s)}\,d\mu(s)
        =\int \underbrace{\energy[T(s)\rho]}_{=h(s)}\,d\mu(s).
    \end{equation*}
\end{proof}}

\begin{lemma}\label{thm:open_core}
    Let $\L$ be the generator of an energy-limited quantum dynamical semigroup with stability constants $\omega,E_0$.
    If $\lambda>\omega$, then $\lambda\R(\lambda)$ is an energy-limited trace-nonincreasing cp map and
    \begin{equation}\label{eq:open_core}
        \dom(\L\upharpoonright\dom\energy) = \R(\lambda)\dom\energy   
    \end{equation}
    is a $T(t)$-invariant core for $\L$. Furthermore, $t\mapsto e^{-\lambda t}\,\energy[T(t)\rho]$ is $L^1$ and
    \begin{equation}\label{eq:open_core2}
        \energy[\R(\lambda)\rho] =\int_0^\oo e^{-\lambda t}\energy[T(t)\rho]\,dt,\qquad\rho\in\dom\energy.
    \end{equation}
\end{lemma}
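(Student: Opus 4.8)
The plan is to compute $\energy[\R(\lambda)\rho]$ through the Laplace-transform representation \eqref{eq:laplace_trafo} and then read off all three assertions. First I would apply the energy-integral \cref{thm:energy_integral} with the weight $p(t)=e^{-\lambda t}\in L^1(\RR^+)$, which gives, for every $\rho\in\TC^+$,
\begin{equation*}
    \energy[\R(\lambda)\rho]=\int_0^\oo e^{-\lambda t}\,\energy[T(t)\rho]\,dt\in\bar\RR^+.
\end{equation*}
Since the semigroup is energy-limited, \cref{thm:first_order} bounds the integrand by $\energy[T(t)\rho]\le e^{\omega t}(\energy[\rho]+E_0\tr\rho)-E_0\tr\rho$ for $\rho\in\TC^+$ (by rescaling a state), and evaluating the two elementary integrals, which converge precisely because $\lambda>\omega$, yields $\energy[\R(\lambda)\rho]\le\frac1{\lambda-\omega}(\energy[\rho]+\frac\omega\lambda E_0\tr\rho)$. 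This shows at once that $t\mapsto e^{-\lambda t}\energy[T(t)\rho]$ is $L^1$ for positive finite-energy $\rho$, that $\lambda\R(\lambda)$ (which is cp and trace-nonincreasing because $\tr\R(\lambda)\rho\le\lambda^{-1}\tr\rho$) is energy-limited, and---after extending both the bound and the integral identity linearly from $\states_{<\oo}$ to $\dom\energy=\lin\states_{<\oo}$---formula \eqref{eq:open_core2}. Incidentally, the displayed estimate is exactly the resolvent bound \eqref{eq:main2}.

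Next I would establish the identity \eqref{eq:open_core}. For the inclusion $\R(\lambda)\dom\energy\subseteq\dom(\L\upharpoonright\dom\energy)$: if $\rho\in\dom\energy$ then $\R(\lambda)\rho\in\Ran\R(\lambda)=\dom\L$, it lies in $\dom\energy$ by the energy-limitedness of $\lambda\R(\lambda)$ just proved, and $\L\R(\lambda)\rho=\lambda\R(\lambda)\rho-\rho\in\dom\energy$ by the resolvent relation \eqref{eq:resolvent_LR_eq}. Conversely, any $\rho\in\dom(\L\upharpoonright\dom\energy)$ satisfies $\rho=\R(\lambda)(\lambda\rho-\L\rho)$ with $\lambda\rho-\L\rho\in\dom\energy$ by \eqref{eq:domL_E}, so it lies in $\R(\lambda)\dom\energy$.

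For the core property I would combine two facts: that $\dom\energy$ is norm-dense in $\TC$ (already the span of the $\kettbra\psi$ with $\psi\in\dom\sqrt G$ is), and that $\R(\lambda)$ is a topological isomorphism from $\TC$ onto $\dom\L$ equipped with the graph norm. The latter holds because $\R(\lambda)$ is bounded with bounded inverse $\lambda-\L$, and $\norm{\L\R(\lambda)\rho}\le(\lambda\norm{\R(\lambda)}+1)\norm\rho$ via \eqref{eq:resolvent_LR_eq}; an isomorphism maps dense sets to dense sets, so $\R(\lambda)\dom\energy$ is graph-norm dense in $\dom\L$, i.e.\ a core for $\L$. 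Finally, $T(t)$-invariance follows because $T(t)$ preserves $\dom\L$ and commutes with $\L$, while energy-limitedness of $T(t)$ makes it preserve $\dom\energy$; hence $\rho\in\dom(\L\upharpoonright\dom\energy)$ gives $T(t)\rho\in\dom\L\cap\dom\energy$ and $\L T(t)\rho=T(t)\L\rho\in\dom\energy$.

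The main obstacle is the very first interchange of $\energy$ with the time integral, which is exactly where the unboundedness of $G$ bites; this is supplied by the monotone-convergence argument behind \cref{thm:energy_integral}, so once it is invoked the remainder is bookkeeping with the resolvent identities. A secondary subtlety is ensuring integrability of the estimate against $e^{-\lambda t}$, which forces the restriction $\lambda>\omega$ and simultaneously reproduces the quantitative bound \eqref{eq:main2}.
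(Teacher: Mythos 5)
Your proof is correct, and its first two thirds coincide with the paper's: both establish \eqref{eq:open_core2} by feeding $p(t)=e^{-\lambda t}$ into \cref{thm:energy_integral}, bound the integrand by the semigroup estimate from \cref{thm:first_order}, integrate (using $\lambda>\omega$) to get exactly the resolvent bound \eqref{eq:main2}, deduce energy-limitedness of $\lambda\R(\lambda)$ and the $L^1$ property, and prove \eqref{eq:open_core} by the same two inclusions via the resolvent identity \eqref{eq:resolvent_LR_eq} (you spell out the step $\L\R(\lambda)\rho=\lambda\R(\lambda)\rho-\rho\in\dom\energy$, which the paper leaves implicit).

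Where you genuinely diverge is the core property. The paper shows trace-norm density of $\R(\lambda)\dom\energy$ via $\lambda\R(\lambda)\rho\to\rho$ as $\lambda\to\oo$, shows $T(t)$-invariance (using that $T(t)$ commutes with the resolvents), and then invokes the standard invariant-core theorem \cite[Prop.~II.1.7]{EngelNagel}, so there the invariance is an essential ingredient of the core argument. You instead observe that $\R(\lambda)$ is a topological isomorphism of $(\TC,\norm{\placeholder}_1)$ onto $\dom\L$ with the graph norm --- bounded by \eqref{eq:resolvent_LR_eq}, with bounded inverse $\lambda-\L$ --- so it carries the norm-dense set $\dom\energy$ to a graph-norm-dense subset of $\dom\L$, which is the definition of a core. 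This is more elementary (no core theorem, no $\lambda\to\oo$ limit formula) and decouples the core property from invariance, which you then verify separately and directly on $\dom(\L\upharpoonright\dom\energy)$ using $\L T(t)\rho=T(t)\L\rho$ and energy-limitedness of $T(t)$; the paper's route, by contrast, gets density and invariance as a package tied to the characterization $\R(\lambda)\dom\energy$ and is the more standard semigroup-theoretic device. Both are complete; yours arguably localizes the analytic content better, since all the energy-functional subtleties are confined to the first step exactly as you say.
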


\begin{proof}
    Eq.~\eqref{eq:open_core2} follows from \cref{thm:energy_integral} and the observation that the bound \eqref{eq:affine_semigroup_bound} implies that $e^{-\lambda t}\,\energy[T(t)\rho]$ is $L^1$ for all $0\le\rho\in\dom\energy$. Indeed, 
    \begin{align*}
        \energy[\R(\lambda)\rho]=\int_0^\oo e^{-\lambda t}\energy[T(t)\rho]\,dt &\le \int_0^\oo \big[e^{(\omega-\lambda)} (\energy[\rho]+E_0\tr\rho)- e^{-\lambda t}E_0\tr\rho\big]\,dt
        \\ &= \frac{1}{\lambda-\omega}(\energy[\rho]+E_0\tr\rho)-\frac1\lambda E_0\tr\rho<\oo.
    \end{align*}
    Another consequence of this is that $\R(\lambda)$ maps $\dom\energy$ into itself, which shows $\R(\lambda)\dom\energy \subseteq \dom(\L\upharpoonright\dom\energy)$.
    Conversely, let $\rho\in\dom(\L\upharpoonright\dom\energy)$. By definition of $\dom(\L\upharpoonright\dom\energy)$, we have $\sigma = (\lambda-\L)\rho\in\dom\energy$ and, hence, $\rho=\R(\lambda)\sigma \in\R(\lambda)\dom\energy$.
    We conclude that \eqref{eq:open_core} holds which, in particular, implies that $\R(\lambda)\dom\energy$ does not depend on $\lambda>\omega$.
    Applying \cref{thm:energy_integral} and \eqref{eq:laplace_trafo}, we see that \eqref{eq:open_core2} holds.

    Eq.~\eqref{eq:open_core} makes it evident that $\dom(\L\upharpoonright\dom\energy)$ is a $T(t)$-invariant core for $\L$.
    Indeed, density in trace-norm holds because $\rho=\lim_{\lambda\to\oo} \lambda\R(\lambda)\rho$ for all $\rho\in\dom\energy$ shows that we can approximate a dense set of elements (namely, $\dom\energy$) with elements of $\R(\lambda)\dom\energy$.
    Furthermore, $\R(\lambda)\dom\energy$ is $T(t)$-invariant because each $T(t)$ is energy-limited and because $T(t)$ commutes with the resolvents. Thus, it follows from the core theorem (see \cite[Prop.~II.1.7]{EngelNagel}) that $\R(\lambda)\dom\energy$ is a core.    
\end{proof}

\begin{proof}[Proof of \cref{thm:main}]
    \ref{it:main1} $\Rightarrow$ \ref{it:main2}:
    Let $\rho$ be a state with energy $E$. Then \eqref{eq:laplace_trafo} and \eqref{eq:open_core2} imply 
    \begin{equation*}
        \energy[\R(\lambda)\rho] 
        = \int_0^\oo \!e^{-\lambda t} \energy[T(t)\rho]\,dt
        \le \int_0^\oo \big[e^{(\omega-\lambda)t} (E+E_0)-e^{-\lambda t}E_0\big]\,dt
        = \frac{E+E_0}{\lambda-\omega} - \frac{E_0}\lambda,
    \end{equation*}
    where we used \eqref{eq:affine_semigroup_bound}.
    The right-hand side can be rearranged to give \eqref{eq:main2}.

    \ref{it:main2} $\Rightarrow$ \ref{it:main1}:
    Let $\rho\in\states_{<\oo}$. Iterating \eqref{eq:main2} $n$ times, we find
    \begin{align*}
        \energy[\lambda^n \R(\lambda)^n\rho]
        &\le \paren[\Big]{1-\frac\omega\lambda}^{-n}\energy[\rho] + \frac{\omega E_0}{\lambda} \sum_{k=1}^n \paren[\Big]{1-\frac\omega\lambda}^{-k}\\
        &=\paren[\Big]{1-\frac\omega\lambda}^{-n}\energy[\rho] + \frac{\omega E_0}\lambda (1-\frac\omega\lambda)^{-1} 
        \frac{1-\paren[\Big]{1-\frac\omega\lambda}^{-n}}{1-\paren[\Big]{1-\frac\omega\lambda}^{-1}}\\
        &=\paren[\Big]{1-\frac\omega\lambda}^{-n}\energy[\rho] - E_0 \paren*{1-\paren[\Big]{1-\frac\omega\lambda}^{-n}}.
    \end{align*}
    If we put $\lambda = \frac nt$, the right-hand side converges to $e^{\omega t}(\energy[\rho]+E_0)-E_0$ as $n\to \oo$.
    We conclude from lower semicontinuity and \eqref{eq:T_from_R} that
    \begin{align*}
        \energy[T(t)\rho] \le \liminf_n \energy\Big[\Big(\frac nt \R\Big(\frac nt\Big)\!\Big)^n\rho\Big] \le e^{\omega t}(\energy[\rho]+E_0)-E_0
    \end{align*}

    \ref{it:main1} and \ref{it:main2} $\Rightarrow$ \ref{it:main3}:
    \Cref{thm:open_core} implies $(\lambda-\L)\dom(\lambda\restriction\dom\energy)=\dom\energy$.
    We start by showing $\energy[\lambda\R(\lambda)\rho]\to \energy[\rho]$ as $\lambda\to\oo$ for all $\rho\in\dom\energy$.
    Indeed, this follows from \eqref{eq:lambda_R_convergence} and lower-semicontinuity
    \begin{equation}\label{eq:core}
        \energy[\rho]\le \liminf_{\lambda\to\oo} \energy[\lambda\R(\lambda)\rho] \le \limsup_{\lambda\to\oo}\energy[\lambda\R(\lambda)\rho] \le \lim_{\lambda\to\oo} \frac\lambda{\lambda-\omega}(\energy[\rho] + \frac\omega\lambda E_0\tr\rho) = \energy[\rho]
    \end{equation}
    for $\rho\in\TC^+$.
    Extending this linearly, we get $\energy[\lambda\R(\lambda)\rho]\to \energy[\rho]$ for all $\rho\in\dom\energy$.
    To show \eqref{eq:main3}, we combine the inequality \eqref{eq:main2} with eq.~\eqref{eq:resolvent_LR_eq} and obtain
    \begin{align*}
        \energy[\L\R(\lambda)\rho]
        =\energy[\lambda\R(\lambda)\rho] - \energy[\rho]
        \le \frac{\lambda\energy[\rho]+\omega E_0\tr\rho}{\lambda-\omega} - \energy[\rho]
        =\omega\frac{{\energy[\rho]+E_0\tr\rho}}{\lambda-\omega} 
    \end{align*}
    for all $0\le\rho\in\dom\energy$.
    Using $\lim_{\lambda\to\oo}\energy[\lambda\R(\lambda)\sigma]=\energy[\sigma]$ for all $\sigma\in\dom\energy$, we find
    \begin{align*}
        \energy[\L\rho]
        =\lim_{\lambda\to\oo} \energy[\lambda\R(\lambda)\L\rho] 
        &= \lim_{\lambda\to\oo}\lambda\energy[\L\R(\lambda)\rho]\\
        &\le \omega \liminf_{\lambda\to\oo} \frac\lambda{\lambda-\omega} (\energy[\rho]+E_0\tr\rho)
        =\omega(\energy[\rho]+E_0\tr\rho)
    \end{align*}
    for $0\le \rho\in \dom(\L\upharpoonright\dom\energy)$.

    \ref{it:main3} $\Rightarrow$ \ref{it:main2}:
    We can reformulate the first assumption as $\R(\lambda)\dom\energy=\dom(\L\upharpoonright\dom\energy)$.
    Let $0\le \rho\in\dom\energy$, then $0\le \R(\lambda)\rho\in \dom(\L\upharpoonright\dom\energy)$.
    Thus, \eqref{eq:main3} and \eqref{eq:resolvent_LR_eq} imply
    \begin{align*}
        \lambda\energy[ \R(\lambda)\rho] 
        =\energy[\L\R(\lambda)\rho] + \energy[\rho]
        \le \omega(\energy[\R(\lambda)\rho]+E_0\tr\R(\lambda)\rho)+ \energy[\rho].
    \end{align*}
    Rearranging and using the estimate $\tr\R(\lambda)\rho\le \lambda^{-1}\tr\rho$, 
    \begin{equation*}
        (\lambda-\omega)\energy[\R(\lambda)\rho]\le \energy[\rho]+\frac\omega\lambda E_0\tr\rho,
    \end{equation*}
    which shows that \eqref{eq:main2} holds for all $\rho\in\dom \energy$ and hence for all $\rho\in\TC^+$.
\end{proof}

\begin{remark}\label{rem:nonzero_gse}
    The proof of \cref{thm:main} does not require the ground state energy of the reference Hamiltonian $G$ to be zero if $f_{T(t)}(E)$ is defined by \eqref{eq:EL} for an arbitrary self-adjoint operator $G\ge0$.
\end{remark}

\begin{remark}
    Consider $\dom\energy$ as a Banach space equipped with the norm $\normiii{}_1$ (see \cref{sec:setup}).
    If $\{T(t)\}_{t\ge0}$ is an energy-limited dynamical semigroup, then each $T(t)$ is a bounded operator on $\dom\energy$ with operator norm scaling as $e^{\lambda t}$ for some $\lambda>0$.
    To the best of the author's knowledge, the equivalent statements in \cref{thm:main} do not imply that $T(t)$ is strongly continuous for the $\normiii{}_1$-norm.
    Instead, it seems that $\normiii{}_1$-strong continuity is an additional property, which can be restated in several equivalent forms.
    Indeed, the following are equivalent if $\{T(t)\}_{t\ge0}$ satisfies the equivalent properties of \cref{thm:main}:
    \begin{enumerate}[(i)]
        \item $T(t)\upharpoonright\dom\energy$ is strongly continuous for the norm $\normiii{}_1$.\footnote{In the language of \cite[Sec.~4.5]{pazy1983}, this means that $(\dom\energy,\normiii{}_1)\hookrightarrow (\T(\H),\norm\placeholder_1)$ is an admissible subspace.}
        \item $\dom(\L\upharpoonright\dom\energy)=\R(\lambda)\dom\energy$ is a $\normiii{}_1$-dense subspace of $\dom\energy$.
        \item Consider the isometric isomorphism $W=\sqrt{G+1}(\placeholder)\sqrt{G+1}:(\dom\energy,\normiii{}_1)\to\T(\H)$ (see \cref{lem:base}). The operator $\L' = W\L W^{-1}$ with domain
            $\dom\L' = 
            \{\rho\in\T(\H) : W^{-1}\rho\in\dom\L,\ W\L W^{-1}\rho\in\T(\H)\}$
        generates a strongly continuous one-parameter semigroup on $\T(\H)$.
    \end{enumerate}
\end{remark}

\begin{remark}
    We are not aware of a connection between energy-limitedness and differentiability of the dynamical semigroup with respect to the ECD norm, studied in \cite{shirokov2019}.
\end{remark}

\subsection{Standard generators}\label{sec:standard}
As Lindblad famously proved \cite{lindblad}, generators of uniformly continuous quantum dynamical semigroups have the form:
\begin{equation}\label{eq:std_gen}
    \L\rho = K\rho + \rho K^* +\sum_\alpha L_\alpha\rho L_\alpha^*,\qquad \sum_\alpha L_\alpha^*L_\alpha \le -(K+K^*),
\end{equation}
where $K$ is a bounded dissipative operator and $L_\alpha$ are bounded operators.
Uniformly continuous dynamics are conservative if and only if the infinitesimal conservativity condition $\sum_\alpha L_\alpha^*L=-(K^*+K)$  is satisfied.
It helps to think about \eqref{eq:std_gen} as a perturbation of the generator $\L_0\rho = K\rho+\rho K^*$ by the cp map $\P\rho=\sum_\alpha L_\alpha\rho L_\alpha^*$.
The unperturbed dynamics is $T_0(t)\rho = C(t)\rho C(t)^*$ where $C(t)=e^{tK}$ is the contraction semigroup generated by $K$. 
The operator inequality in \eqref{eq:std_gen}, which can be restated as $\tr\L\rho =\tr\L_0\rho+\tr\P\rho \le 0$ for all states $\rho$, ensures that the perturbed dynamics is trace-nonincreasing. 

Quantum dynamical semigroups are rarely uniformly continuous in infinite-dimensional Hilbert spaces, so the story does not end here.
In \cite{siemon_unbounded_2017}, \emph{standard generators} are defined by generalizing \eqref{eq:std_gen}:
A standard generator $\L$ is determined by a pair $(K,\{L_\alpha\})$ of a maximally dissipative operator $K$ on $\H$ and a collection of operators $L_\alpha:\dom K\to\H$ satisfying 
\begin{equation}\label{eq:L_summation}
    \sum_\alpha\norm{L_\alpha\psi}^2 \le -2\Re\ip\psi{K\psi},\qquad \psi\in\dom K
\end{equation}
Roughly speaking, it is defined as the so-called minimal solution to the problem of perturbing the generator $\L_0\rho = K \rho+\rho K^*$ of the semigroup $T_0(t)\rho = e^{tK}\rho(e^{tK})^*$ by the cp map with Kraus operators $\{L_\alpha\}$, see \cite{siemon_unbounded_2017} for details.
This definition guarantees that $\dom\L$ contains the ketbra domain $(\dom K)^{\kettbra{}} = \lin\{\ketbra\psi\phi : \psi,\phi\in\dom K\}$ on which it acts via
\begin{equation}
    \L\ketbra\psi\phi = \ketbra{K\psi}\phi +\ketbra{\psi}{K\phi}+\sum_\alpha \ketbra{L_\alpha\psi}{L_\alpha\phi},\qquad \psi,\phi\in\dom K.
\end{equation}
The standard generator $\L$ is called \emph{formally conservative} if equality holds in \eqref{eq:L_summation}, i.e., if
\begin{equation}\label{eq:L_summation_eq}
    \sum_\alpha\norm{L_\alpha\psi}=-2\Re\ip\psi{K\psi},\qquad \psi\in\dom K.
\end{equation}
Unlike the uniformly continuous case, formally conservative generators do not necessarily generate conservative dynamics; see \cite{davies_quantum_1977, fagnola1999, inken_thesis}.
This phenomenon also occurs in classical systems and can often be regarded as an escape to infinity in finite time of certain parts of the system.
In \cref{sec:birth}, we will consider an example of a formally conservative generator that generates nonconservative dynamics.
Davies showed that a formally conservative standard generator generates conservative dynamics if and only if the ketbra domain $(\dom K)^{\kettbra{}}$ (see \cref{lem:base}) is a core for $\L$ (see \cite[Prop.~3.32]{fagnola1999}).
We note the following slight generalization:

\begin{lemma}\label{lem:davies}
    If  $\L$ is formally conservative, i.e., \eqref{eq:L_summation_eq} holds, then the following are equivalent:
    \begin{enumerate}[(a)]
        \item\label{it:davies1} $\{e^{t\L}\}_{t\ge0}$ is conservative,
        \item\label{it:davies2} $(\dom K)^{\kettbra{}}$ is a core for $\L$,
        \item\label{it:davies3} For every core $\D\subset \dom K$ for $K$, the ketbra domain $\D^{\kettbra{}}$ is a core for $\L$.
    \end{enumerate}
\end{lemma}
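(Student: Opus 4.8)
The plan is to take Davies' equivalence $\ref{it:davies1}\Leftrightarrow\ref{it:davies2}$ as given -- this is \cite[Prop.~3.32]{fagnola1999}, which applies precisely because $\L$ is formally conservative -- and to establish the new content, the equivalence $\ref{it:davies2}\Leftrightarrow\ref{it:davies3}$. The implication $\ref{it:davies3}\Rightarrow\ref{it:davies2}$ is immediate, since the full domain $\D=\dom K$ is trivially a core for $K$, so $\ref{it:davies3}$ applied to it yields that $(\dom K)^{\kettbra{}}$ is a core for $\L$. All the work is in $\ref{it:davies2}\Rightarrow\ref{it:davies3}$.

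For this direction I would fix an arbitrary core $\D\subset\dom K$ for $K$. Because graph-norm density is transitive and $(\dom K)^{\kettbra{}}$ is a core for the closed operator $\L$ by assumption, it suffices to show that $\D^{\kettbra{}}$ is dense in $(\dom K)^{\kettbra{}}$ for the $\L$-graph norm. By bilinearity it is enough to approximate a single generator $\ketbra\psi\phi$ with $\psi,\phi\in\dom K$: since $\D$ is a core for $K$, I choose $\psi_n,\phi_n\in\D$ with $\psi_n\to\psi$, $\phi_n\to\phi$ in the $K$-graph norm (i.e.\ $\psi_n\to\psi$, $K\psi_n\to K\psi$, and likewise for $\phi$). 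Note $\ketbra{\psi_n}{\phi_n}\in\D^{\kettbra{}}\subseteq\dom\L$. It then remains to verify $\ketbra{\psi_n}{\phi_n}\to\ketbra\psi\phi$ and $\L\ketbra{\psi_n}{\phi_n}\to\L\ketbra\psi\phi$, both in trace norm.

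The first convergence follows from $\norm{\ketbra ab-\ketbra cd}_1\le\norm{a-c}\,\norm b+\norm c\,\norm{b-d}$. Recalling that $\L$ acts on $(\dom K)^{\kettbra{}}$ by $\L\ketbra\psi\phi=\ketbra{K\psi}\phi+\ketbra\psi{K\phi}+\sum_\alpha\ketbra{L_\alpha\psi}{L_\alpha\phi}$, the same elementary estimate handles the first two summands using $K$-graph convergence. The main obstacle is the infinite Kraus sum, which I would control via the summation condition \eqref{eq:L_summation}. The key observation is that it makes the linear map $\Lambda\colon\dom K\to\bigoplus_\alpha\H$, $\Lambda\psi=(L_\alpha\psi)_\alpha$, bounded for the $K$-graph norm: indeed $\sum_\alpha\norm{L_\alpha\psi}^2\le-2\Re\ip\psi{K\psi}\le\norm\psi^2+\norm{K\psi}^2=\norm\psi_K^2$. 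Moreover, the bilinear map $(a,b)\mapsto\sum_\alpha\ketbra{a_\alpha}{b_\alpha}$ from $\bigoplus_\alpha\H\times\bigoplus_\alpha\H$ to $\TC$ is bounded, since $\sum_\alpha\norm{\ketbra{a_\alpha}{b_\alpha}}_1=\sum_\alpha\norm{a_\alpha}\,\norm{b_\alpha}\le(\sum_\alpha\norm{a_\alpha}^2)^{1/2}(\sum_\alpha\norm{b_\alpha}^2)^{1/2}$ by Cauchy--Schwarz, so in particular the sum converges in trace norm.

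Combining the two facts finishes the argument: $\Lambda\psi_n\to\Lambda\psi$ and $\Lambda\phi_n\to\Lambda\phi$ in $\bigoplus_\alpha\H$ by graph-norm convergence and boundedness of $\Lambda$, and joint continuity of the bilinear map then gives $\sum_\alpha\ketbra{L_\alpha\psi_n}{L_\alpha\phi_n}\to\sum_\alpha\ketbra{L_\alpha\psi}{L_\alpha\phi}$ in trace norm. Hence $\L\ketbra{\psi_n}{\phi_n}\to\L\ketbra\psi\phi$, showing that $\D^{\kettbra{}}$ is $\L$-graph dense in $(\dom K)^{\kettbra{}}$ and therefore a core for $\L$. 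I expect the only delicate point to be the trace-norm convergence of the Kraus sum; once the $\bigoplus\H$-boundedness of $\Lambda$ coming from \eqref{eq:L_summation} is in hand, this reduces to continuity of a bounded bilinear map, and formal conservativity enters only through Davies' theorem for $\ref{it:davies1}\Leftrightarrow\ref{it:davies2}$.
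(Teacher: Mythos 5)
Your proposal is correct, but for the implication \ref{it:davies2} $\Rightarrow$ \ref{it:davies3} it takes a genuinely different route from the paper. The paper approximates $\ketbra\psi\phi$ by elements of $\D^{\kettbra{}}$ only in the graph norm of the \emph{unperturbed} generator $\L_0\rho=K\rho+\rho K^*$, where the Kraus sum is absent, concluding that $\D^{\kettbra{}}$ is a core for $\L_0$; it then invokes the perturbation-theoretic fact extracted from the proof of \cite[Prop.~4.4.2]{inken_thesis} that, because the (minimal-solution) dynamics is conservative, every core for $\L_0$ is a core for $\L$ -- this is exactly where hypothesis \ref{it:davies1} enters the paper's argument. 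You instead prove $\L$-graph density directly, handling the infinite Kraus sum yourself: the summation condition \eqref{eq:L_summation} makes $\Lambda\psi=(L_\alpha\psi)_\alpha$ bounded from $\dom K$ with the $K$-graph norm into $\bigoplus_\alpha\H$, and Cauchy--Schwarz makes $(a,b)\mapsto\sum_\alpha\ketbra{a_\alpha}{b_\alpha}$ a bounded bilinear map into $\TC$, so trace-norm convergence of $\L\ketbra{\psi_n}{\phi_n}$ follows; combined with transitivity of graph-norm density and assumption \ref{it:davies2}, this closes the argument. Both estimates are correct, and your route is self-contained where the paper leans on a cited result. It even proves something slightly stronger: \ref{it:davies2} $\Rightarrow$ \ref{it:davies3} holds for any standard generator satisfying the inequality \eqref{eq:L_summation}, with no conservativity needed -- formal conservativity enters only through Davies' equivalence \ref{it:davies1} $\Leftrightarrow$ \ref{it:davies2}, as you note. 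What the paper's route buys is brevity (no analysis of the Kraus sum at all); what yours buys is elementarity and the decoupling of the core statements \ref{it:davies2} $\Leftrightarrow$ \ref{it:davies3} from conservativity.
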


\begin{proof}
    \ref{it:davies1} $\Leftrightarrow$ \ref{it:davies2} is shown in \cite[Prop.~3.32]{fagnola1999} and \cite[Prop.~4.4.2]{siemon_unbounded_2017}.
    \ref{it:davies3} $\Rightarrow$ \ref{it:davies2} is clear. 
    \ref{it:davies1} \& \ref{it:davies2} $\Rightarrow$ \ref{it:davies3}:
    Let $\L_0$ be the generator of the semigroup $T_0(t)=e^{tK}(\placeholder)e^{tK^*}$.
    Since $\L_0\ketbra\psi\phi =\ketbra{K\psi}\phi+\ketbra\psi{K\phi}$ for $\ketbra\psi\phi\in (\dom K)^{\kettbra{}}\subset \dom\L_0$, elements of $(\dom K)^{\kettbra{}}$ can approximated in $\L_0$-graph norm by elements in $\D^{\kettbra{}}$ for a core $\D$ of $K$.%
    \footnote{Indeed, given $\psi,\phi\in \dom K$ and $\psi',\phi'\in\D$, consider $\norm{\L_0\ketbra\psi\phi-\L_0\ketbra{\psi'}{\phi'}}_1 \le \norm{\L_0\ketbra{\psi-\psi'}\phi}_1 + \norm{\L_0\ketbra{\psi'}{\phi-\phi'}}_1 \le \norm{K(\psi-\psi')} \norm\phi + \norm{\psi-\psi'}\norm{K\phi} + \norm{K\psi'}\norm{\phi-\phi'}+\norm{\psi'}\norm{K(\phi-\phi')}$, which can be made arbitrarily small by choice of $\psi',\phi'$.}
    Since $(\dom K)^{\kettbra{}}$ is a core for $\L_0$, this implies that $\D^{\kettbra{}}$ is a core for $\L_0$ as well. It remains to show that it is also a core for $\L$.
    This follows from the argument in the proof of \cite[Prop.~4.4.2]{inken_thesis}, which only needs that $(\dom K)^{\kettbra{}}$ is a core for $\L_0$ to show that it is a core for $\L$.\footnote{The proof of \cite[Prop.~4.4.2]{inken_thesis} shows more than they state: The minimal solution $\Lmin$ to a cp peturbation theorem $\L_0+\P$ is conservative if and only if $\dom\L_0$ is a core for $\Lmin$ if and only if every core for $\L_0$ is a core for $\Lmin$.}
\end{proof}

If $\L$ is the standard generator determined by $K$ and $\{L_\alpha\}$ as above, then \cref{thm:main} asserts that energy-limitedness with stability constants $\omega,E_0$ is \emph{formally} equivalent to the operator inequality $ K^*G+GK+\sum_\alpha L_\alpha^*GL_\alpha \le \omega(G+E_0)$.
Since it is hard to characterize the full domain of a standard generator, the condition in \cref{thm:main} is hard to verify.
The following result provides sufficient conditions that can be checked in practice:

\begin{theorem}\label{thm:standard}
    Let $K$ be maximally dissipative, let $L_\alpha:\dom K\to\H$ be operators satisfying \eqref{eq:L_summation_eq}, let $\L$ be the standard generator formally given by \eqref{eq:std_gen}.
    Assume that the semigroup $\{T(t)\}_{t\ge0}$ generated by $\L$ is conservative.
    If $\D\subseteq \dom K\cap\dom G$ is a core for $G$ such that $K$ is $G$-bounded on $\D$ such that 
    \begin{equation}\label{eq:standard}
        2\Re \ip{G\psi}{K\psi} +\sum_\alpha \norm{\sqrt GL_\alpha\psi}^2 \le \omega(\norm{\sqrt G\psi}^2+E_0\norm\psi^2),\qquad \psi\in\D
    \end{equation}
    for constants $\omega,E_0\ge0$.
    Then $\{T(t)\}_{t\ge0}$ is energy-limited with stability constants $\omega,E_0$.
\end{theorem}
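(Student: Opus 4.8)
The plan is to verify energy-limitedness by approximating the minimal semigroup $\{T(t)\}_{t\ge0}$ by uniformly continuous dynamical semigroups with bounded generators, proving the energy bound uniformly for the approximants, and passing to the limit — exactly as in the proof of \cref{thm:dissipative}, but now carrying the jump part $\P\rho=\sum_\alpha L_\alpha\rho L_\alpha^*$ along. With $R_\eps=(1+\eps G)^{-1}$ I regularize by $K_\eps=R_\eps K R_\eps$ and $L_{\alpha,\eps}=L_\alpha R_\eps$. Since $K$ is $G$-bounded on $\dom G$ (extending from the core $\D$) and since \eqref{eq:standard} forces each $\sqrt G L_\alpha$ to be $G$-bounded as well, a short computation shows that for fixed $\eps$ the operators $K_\eps$, $\sqrt G K_\eps$, $L_{\alpha,\eps}$, $\sqrt G L_{\alpha,\eps}$ are bounded and that $\sum_\alpha L_{\alpha,\eps}^* G L_{\alpha,\eps}$ is a bounded operator. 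Hence $\L_\eps\rho=K_\eps\rho+\rho K_\eps^*+\sum_\alpha L_{\alpha,\eps}\rho L_{\alpha,\eps}^*$ is a bounded Lindblad generator; it is trace-nonincreasing because $\sum_\alpha\norm{L_\alpha R_\eps\psi}^2\le -2\Re\ip{R_\eps\psi}{K R_\eps\psi}=-2\Re\ip{\psi}{K_\eps\psi}$ by \eqref{eq:L_summation}. Write $T_\eps(t)=e^{t\L_\eps}$.

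First I would establish a uniform energy bound. Extending \eqref{eq:standard} from the core $\D$ to all of $\dom G$ — the two terms involving $K$ are continuous in the $G$-graph norm, and the infinite sum is handled by lower semicontinuity (Fatou) — and then inserting $R_\eps\psi\in\dom G$ and using $[R_\eps,G]=0$ together with $\norm{R_\eps}\le1$, I obtain, for every $\psi\in\dom\sqrt G$,
\begin{equation*}
    2\Re\ip{\sqrt G\psi}{\sqrt G K_\eps\psi} + \sum_\alpha\norm{\sqrt G L_{\alpha,\eps}\psi}^2 \le \omega\big(\norm{\sqrt G\psi}^2 + E_0\norm\psi^2\big),
\end{equation*}
that is, $\energy[\L_\eps\kettbra\psi]\le\omega(\energy[\psi]+E_0\norm\psi^2)$. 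Decomposing an arbitrary $0\le\sigma\in\dom\energy$ into pure states (\cref{lem:elementary}) and using the $\normiii{}_1$-continuity of $\energy$ (\cref{lem:base}), this upgrades to $\energy[\L_\eps\sigma]\le\omega(\energy[\sigma]+E_0\tr\sigma)$. Since $\L_\eps$ is bounded for the norm $\normiii{}_1$, the semigroup $T_\eps(t)$ preserves $\dom\energy$ and $g(t)=\energy[T_\eps(t)\rho]$ is $C^1$ with $g'(t)=\energy[\L_\eps T_\eps(t)\rho]\le\omega(g(t)+E_0\tr\rho)$ for a state $\rho$; Gronwall's lemma then yields the affine bound $f_{T_\eps(t)}(E)\le E+(e^{\omega t}-1)(E+E_0)$, uniformly in $\eps$.

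Next I would pass to the limit $\eps\to0$. On the ketbra domain $(\dom G)^{\kettbra{}}$ one checks $\L_\eps\to\L$ strongly: for $\psi\in\dom G$ one has $R_\eps\psi\to\psi$ in the $G$-graph norm, hence $K_\eps\psi\to K\psi$ and $L_{\alpha,\eps}\psi\to L_\alpha\psi$, so $\L_\eps\ketbra\psi\phi\to\L\ketbra\psi\phi$ in trace norm. This is where conservativity enters decisively: $\dom G$ is a core for $K$ (\cref{thm:generation}), so by \cref{lem:davies} the conservativity of $\{T(t)\}$ guarantees that $(\dom G)^{\kettbra{}}$ is a core for $\L$. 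Thus $\L_\eps\to\L$ strongly on a core for $\L$ while the $T_\eps(t)$ are contractions, and the Trotter–Kato approximation theorem \cite[Thm.~III.4.8]{EngelNagel} gives $T_\eps(t)\rho\to T(t)\rho$ for all $\rho\in\TC$. Finally, the uniform affine bound together with this strong convergence feeds into item~\ref{it:energy_lim4} of \cref{lem:energy_lim} to give $f_{T(t)}(E)\le E+(e^{\omega t}-1)(E+E_0)$, which by \cref{thm:first_order} is exactly energy-limitedness with stability constants $\omega,E_0$.

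The main obstacle is the limit step: identifying the strong limit of the regularized semigroups with the prescribed minimal semigroup $T(t)$ rather than some proper extension. This is precisely what conservativity buys us — through \cref{lem:davies} it turns $(\dom G)^{\kettbra{}}$ into a core for $\L$, which is the hypothesis Trotter–Kato needs. A secondary technical point is the extension of \eqref{eq:standard} from the core $\D$ to $\dom G$, where the infinite sum $\sum_\alpha\norm{\sqrt G L_\alpha\psi}^2$ must be controlled under $G$-graph limits by lower semicontinuity.
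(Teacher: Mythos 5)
Your proposal is correct and follows essentially the same route as the paper's proof: extend \eqref{eq:standard} from $\D$ to $\dom G$ via graph-norm limits and lower semicontinuity of $\norm{\sqrt G(\placeholder)}$, regularize with $K_\eps=R_\eps KR_\eps$ and $L_{\alpha,\eps}=L_\alpha R_\eps$, establish the uniform affine energy bound for the resulting uniformly continuous semigroups, and then identify the strong limit with the minimal semigroup exactly as the paper does -- $\dom G$ is a core for $K$ by \cref{thm:generation}, conservativity and \cref{lem:davies} make $(\dom G)^{\kettbra{}}$ a core for $\L$, Trotter--Kato gives $T_\eps(t)\to T(t)$ strongly, and item \ref{it:energy_lim4} of \cref{lem:energy_lim} closes the argument. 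The only (harmless) deviation is in the middle step: you run Gronwall directly on $t\mapsto\energy[T_\eps(t)\rho]$ using the $\normiii{}_1$-boundedness of $\L_\eps$, whereas the paper packages this as \cref{lem:bounded}, verifying condition \ref{it:main3} of \cref{thm:main} for the conjugated bounded generator -- both rest on the same boundedness of $ZK_\eps$ and $(Z\ox 1)L_\eps$ and yield the same uniform bound.
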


In \eqref{eq:standard}, we use the convention that $\norm{\sqrt G\phi}=\oo$ if $\phi\not\in\dom\sqrt G$. Thus, for \eqref{eq:standard} to hold it is necessary that $L_\alpha\D\subset\dom\sqrt G$.
We start with the following preparatory result:

\begin{lemma}\label{lem:bounded}
    Let $\{T(t)\}_{t\ge0}$ be a uniformly continuous dynamical semigroup with generator $\L$ and let $K,L_\alpha$ be bounded operators such that \eqref{eq:std_gen} holds. 
    We set $L = \sum_\alpha L_\alpha\ox \ket\alpha \in\B(\H,\H\ox\ell^2)$.
    Assume that
    \begin{itemize}
        \item $L\dom\sqrt G\subset \dom(\sqrt G\ox1)$, and $K\dom\sqrt G\subset \dom\sqrt G$
        \item $\tilde K=ZK Z^{-1}\in\B(\H)$ and $\tilde L = (Z\ox1) L Z^{-1}\in\B(\H)$, where $Z=\sqrt{G+1}$,
    \end{itemize}
    If $\omega,E_0\ge0$ are such that $2\Re \ip{\sqrt G\psi}{\sqrt GK\psi}+ \norm{(\sqrt G\ox1)L\psi}^2 \le \omega(\norm{\sqrt G\psi}^2+E_0\norm\psi^2$ for all $\psi\in\dom\sqrt G$, then $\{T(t)\}_{t\ge0}$ is energy-limited with stability constants $\omega,E_0$. 
\end{lemma}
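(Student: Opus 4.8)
The plan is to transport the whole problem to the trace class $\TC$ by means of the isometric isomorphism $W\colon(\dom\energy,\normiii{}_1)\to\TC$, $W\rho=Z\rho Z$ with $Z=\sqrt{G+1}$, from \cref{lem:base}. The role of the hypotheses is to make the conjugated generator bounded: writing $\tilde L_\alpha=ZL_\alpha Z^{-1}$ (so that $\tilde L=\sum_\alpha\tilde L_\alpha\ox\ket\alpha$), the operator
\[
    \L'\sigma:=\tilde K\sigma+\sigma\tilde K^*+\sum_\alpha\tilde L_\alpha\sigma\tilde L_\alpha^*
\]
is a bounded Lindblad operator on $\TC$ because $\tilde K\in\B(\H)$ and $\norm{\sum_\alpha\tilde L_\alpha^*\tilde L_\alpha}=\norm{\tilde L}^2<\oo$. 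Hence $S(t)=e^{t\L'}$ is a uniformly continuous semigroup on $\TC$, and, after conjugation, it supplies the $\normiii{}_1$-regularity along which the energy functional is continuous (\cref{lem:base}\,\ref{it:base1}).

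First I would establish the intertwining $W\L\rho=\L'W\rho$ for all $\rho\in\dom\energy$, together with the fact that $\L$ maps $\dom\energy$ into itself. Both follow by expanding a spectral decomposition $\rho=\sum_j\mu_j\kettbra{\phi_j}$ with $\phi_j\in\dom\sqrt G$ and using $ZK\phi_j=\tilde K Z\phi_j$ and $ZL_\alpha\phi_j=\tilde L_\alpha Z\phi_j$, which is exactly where the inclusions $K\dom\sqrt G\subset\dom\sqrt G$ and $L\dom\sqrt G\subset\dom(\sqrt G\ox1)$ enter. Consequently $W^{-1}\L'W=\L$ on $\dom\energy$, and since $\L$ is bounded on $\TC$ the two $C^1$ curves $t\mapsto T(t)\rho$ and $t\mapsto W^{-1}S(t)W\rho$ both solve $\dot x=\L x$ with $x(0)=\rho$; by uniqueness of solutions of this linear ODE they coincide. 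Thus $T(t)$ preserves $\dom\energy$, the map $t\mapsto T(t)\rho$ is $\normiii{}_1$-continuously differentiable for $\rho\in\dom\energy$, and $g(t):=\energy[T(t)\rho]$ is $C^1$ with $g'(t)=\energy[\L T(t)\rho]$.

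Next I would prove the pointwise production bound $\energy[\L\sigma]\le\omega(\energy[\sigma]+E_0\tr\sigma)$ for every $0\le\sigma\in\dom\energy$. Decomposing $\sigma=\sum_j\mu_j\kettbra{\phi_j}$ with $\phi_j\in\dom\sqrt G$ (\cref{lem:elementary}\,\ref{it:elementary2}), the completely positive part $\sum_\alpha L_\alpha\sigma L_\alpha^*$ is handled by monotone convergence, and the remaining part $K\sigma+\sigma K^*$ by noting that $\sum_j\mu_j\ketbra{K\phi_j}{\phi_j}$ converges absolutely in $\normiii{}_1$ (its terms are bounded by $\norm{\tilde K}\mu_j\norm{Z\phi_j}^2$, which is summable). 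Applying $\normiii{}_1$-continuity of $\energy$ termwise yields
\[
    \energy[\L\sigma]=\sum_j\mu_j\Big(2\Re\ip{\sqrt G\phi_j}{\sqrt GK\phi_j}+\norm{(\sqrt G\ox1)L\phi_j}^2\Big)\le\sum_j\mu_j\,\omega\big(\norm{\sqrt G\phi_j}^2+E_0\norm{\phi_j}^2\big),
\]
which equals $\omega(\energy[\sigma]+E_0\tr\sigma)$ by the hypothesis. Applying this to $\sigma=T(t)\rho$ (a subnormalized state) gives $g'(t)\le\omega(g(t)+E_0)$, and Gronwall's Lemma (as in the proof of \cref{thm:contraction}) produces $g(t)\le E+(e^{\omega t}-1)(E+E_0)$ with $E=\energy[\psi]$ whenever $\rho=\kettbra\psi$. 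Since $f_{T(t)}$ is attained on pure states (\cref{lem:energy_lim}\,\ref{it:energy_lim1}), this is precisely $f_{T(t)}(E)\le E+(e^{\omega t}-1)(E+E_0)$, i.e.\ energy-limitedness with stability constants $\omega,E_0$.

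I expect the main obstacle to be the bookkeeping in the second paragraph: one must check that the algebraic identities $ZKZ^{-1}=\tilde K$, $ZL_\alpha Z^{-1}=\tilde L_\alpha$ survive the unboundedness of $G$, that $\L$ genuinely preserves $\dom\energy$, and—crucially—that the identification $T(t)=W^{-1}S(t)W$ upgrades trace-norm differentiability (automatic since $\L$ is bounded) to the stronger $\normiii{}_1$-differentiability needed to differentiate the merely lower semicontinuous energy functional along the flow. Everything downstream is then a short Gronwall argument, and one could alternatively phrase the conclusion through condition~\ref{it:main3} of \cref{thm:main}, once the production bound and the range identity $(\lambda-\L)\dom\energy=\dom\energy$ (again immediate from the conjugation, as $\lambda-\L'$ is invertible on $\TC$ for large $\lambda$) are in hand.
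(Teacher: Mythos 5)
Your proposal is correct, and its central device coincides with the paper's: conjugating by the isomorphism $W=Z(\placeholder)Z$ of \cref{lem:base} so that the hypotheses $\tilde K, \tilde L\in\B(\H)$ turn the formal Lindbladian into a \emph{bounded} operator $\L'=\tilde\L$ on $\T(\H)$, with the stability inequality entering as exactly the same quadratic-form estimate. Where you diverge is the endgame. The paper stays at the generator/resolvent level and verifies condition \ref{it:main3} of \cref{thm:main}: boundedness of $W^{-1}\tilde\L W$ gives $\dom(\L\upharpoonright\dom\energy)=\dom\energy$ and the range identity $\R(\lambda)\dom\energy=\dom\energy$ for large $\lambda$ at once, and the production bound $\energy[\L\rho]\le\omega(\energy[\rho]+E_0\tr\rho)$ is read off from the operator inequality $\tilde\L^*(\tilde G)\le\omega(\tilde G+E_0\tilde 1)$ with $\tilde G=Z^{-1}GZ^{-1}$, after which \cref{thm:main} does the integration. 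You instead integrate by hand: identify $T(t)\restriction\dom\energy=W^{-1}e^{t\L'}W$ via uniqueness of solutions of $\dot x=\L x$ (legitimate, since $\normiii{}_1$-$C^1$ implies trace-norm $C^1$ because $\norm{\placeholder}_1\le\normiii{\placeholder}_1$ and $\L$ is bounded on $\T(\H)$), then run Gronwall on $g(t)=\energy[T(t)\rho]$, with the production bound obtained termwise from a spectral decomposition — the absolute $\normiii{}_1$-convergence of $\sum_j\mu_j\ketbra{K\phi_j}{\phi_j}$ and monotone convergence for the cp part are exactly the right justifications. Your route is more self-contained (it bypasses the resolvent machinery of \cref{thm:main}, mirroring instead the Gronwall argument of \cref{thm:contraction}) and yields as a byproduct the invariance $T(t)\dom\energy\subseteq\dom\energy$ together with $\normiii{}_1$-strong continuity of the flow on $\dom\energy$ — a property the paper's remark after \cref{thm:main} flags as not automatic in general; the paper's route is shorter given that \cref{thm:main} is already in place and needs no identification of the semigroup or differentiability of the energy along trajectories. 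Your closing observation that one could instead feed the production bound and the range identity into condition \ref{it:main3} of \cref{thm:main} is precisely the paper's proof.
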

\begin{proof}
    We will check condition \ref{it:main3} of \cref{thm:main}. 
    Recall from \cref{lem:base} that $W = Z(\placeholder)Z:(\dom\energy,\normiii{}_1)\to \T(\H)$ is an isometric isomorphism.
    Since $\tilde K$ and $\tilde L$ are bounded, $\tilde\L\rho = \tilde K\rho+\rho \tilde K^*+\tr_{\ell^2}\tilde L\rho \tilde L^*$ defines a bounded operator on $\T(\H)$.
    By construction, the $\normiii{}_1$-bounded operator $W^{-1}\tilde \L W$ on $\dom\energy$ is precisely $\L\restriction\dom\energy$.
    In particular, this implies $\dom(\L\upharpoonright\dom\energy)=\dom\energy$.
    The resolvents of $\L\restriction\dom\energy$ are given by $W^{-1}(\lambda-\tilde\L)^{-1}W$ and hence satisfy $\R(\lambda)\dom\energy = W(\lambda-\tilde\L)^{-1} \T(\H) = W\T(\H) = \dom\energy = \dom(\L\restriction\dom\energy)$ for sufficiently large $\lambda>0$.
    We define bounded operators $\tilde G= Z^{-1}G Z^{-1}=G/(1+G)\in\B(\H)$ and $\tilde 1 = (1+G)^{-1}=Z^{-2}$ and note that $\energy[\rho] = \tr[\tilde G W(\rho)]$ and $\tr[\rho]= \tr[\tilde G W(\rho)]$ for $\rho\in\dom\energy$.
    The operator inequality $\tilde \L^*(\tilde G) \le \omega(\tilde G+E_0\tilde1)$ follows from
    \begin{align*}
        \ip\psi{\tilde\L^*(\tilde G)\psi} 
        &= \ip\psi{(\tilde G\tilde K+\tilde K^*\tilde G+\tilde L^*(\tilde G\ox1)\tilde L)\psi}\\
        &= 2\Re \ip{\sqrt GZ^{-1}\psi}{\sqrt GKZ^{-1}\psi} + \norm{(\sqrt G\ox1)Z^{-1}\psi}^2 \\
        &\le \omega(\norm{\sqrt GZ^{-1}\psi}^2+\norm{E_0Z^{-1}\psi}^2 )
        = \ip \psi{\omega(\tilde G+E_0\tilde 1)\psi}
    \end{align*}
    for arbitrary $\psi\in\H$.
    Therefore, it holds that
    \begin{equation}
        \energy[\L\rho] = \tr[\tilde G W\L\rho] 
        = \tr[\tilde \L^*(\tilde G)W\rho]
        \le \tr[\omega(\tilde G+E_0\tilde 1)W\rho] = \omega(\energy[\rho]+E_0\tr[\rho])
    \end{equation}
    for all $0\le \rho\in\dom\energy=\dom(\L\restriction\dom\energy)$.
    Thus, \cref{thm:main} implies the claim.
\end{proof}

\begin{proof}[Proof of \cref{thm:standard}]
    \emph{Step 1.} 
    In the first step, we show that we may assume $\D=\dom K$.
    Since $K$ is $G$-bounded on a core $\D$ for $G$, it follows that $\dom G\subseteq\dom K$ and that $K$ is $G$-bounded on $\dom G$.
    Let $\D\ni\psi_n\to \psi\in \dom G$ converge in $G$-graph norm.
    Instead of working with the family of Kraus operators $L_\alpha$, let us use the operator $L= \sum_\alpha L_\alpha \ox \ket\alpha:\dom K\to\H\ox\ell^2$.
    Note the following 
    \begin{equation*}
        \sum_\alpha \norm{L_\alpha\psi}^2 = \norm{L\psi}^2,\qquad 
        \sum_\alpha\norm{\sqrt GL_\alpha \psi}^2=\norm{(\sqrt G\ox1)L\psi}^2.
    \end{equation*}
    It follows from \eqref{eq:L_summation} that $L$ is $K$-bounded and, since $K$ is $G$-bounded, $L$ is also $G$-bounded.
    Consequently $K\psi_n\to K\psi$ and $L\psi_n\to L\psi$.
    From this and lower semicontinuity of $\norm{\sqrt G(\placeholder)}$, we get
    \begin{align*}
        2\Re \ip{G\psi}{K\psi} +\norm{(\sqrt G\ox1)L\psi}^2 
        &\le  \liminf_n\,(2\Re\ip{G\psi_n}{K\psi_n} + \norm{(\sqrt G\ox1)L\psi_n}^2)\\
        &\le \liminf_n \omega(\norm{\sqrt G\psi_n}^2+E_0\norm\psi^2)
        = \omega(\norm{\sqrt G\psi}+E_0\norm\psi^2).
    \end{align*}
    In particular, since the right-hand side and $2\Re\ip{G\psi}{K\psi}$ are finite, we obtain $L\psi\in \dom(\sqrt G\ox1)$, and hence $L_\alpha\psi\in\dom \sqrt G$.
    This shows that we may assume $\D=\dom G$ without loss of generality.

    \emph{Step 2.} In this step, we use \cref{lem:bounded} to establish the claim for a regularized version of the generator.
    As in the proof of \cref{thm:dissipative}, we use the contractions $R_\eps = (1+\eps G)^{-1}$ for $\eps>0$ to define $K_\eps = R_\eps K R_\eps$. We also introduce $L_\eps = LR_\eps$.
    These make sense because our assumptions imply $R_\eps\H = \dom G \subseteq \dom K \subseteq \dom L$.
    $G$-boundedness of $K$ and $L$ implies that $K_\eps$, $L_\eps$ and $KR_\eps$ are bounded operators.
    Furthermore, it holds that
    \begin{equation}\label{eq:kodak}
        L_\eps^*L_\eps = -(K_\eps^*+K_\eps) \qandq GK_\eps + K_\eps^*G + L_\eps^*(G\ox1)L_\eps \le \omega (G+E_0),
    \end{equation}
    where $K_\eps^*G$ denotes the adjoint $(GK_\eps)^*\in\B(\H)$.
    We set $Z=\sqrt{1+G}$ and claim that $ZK_\eps $ and $(Z\ox1)L_\eps$ are bounded operators.
    Indeed, $ZK_\eps  = (ZR_\eps)(KR_\eps)$ is a product of bounded operators on $\H$, and \eqref{eq:kodak} shows
    \begin{align*}
        \norm{(Z\ox1)L_\eps\psi}^2 
        &= \norm{(\sqrt G\ox1)LR_\eps\psi}^2 + \norm{L_\eps\psi}^2 \nonumber\\
        &\le - 2\Re\underbrace{\ip{R_\eps\psi}{GKR_\eps\psi}}_{=\ip{\psi}{GK_\eps\psi}} + \omega(\norm{\sqrt GR_\eps\psi}^2 +E_0\norm{R_\eps\psi}^2) -2\Re \ip\psi{K_\eps\psi} \nonumber\\
        &\le (2\norm{(G+1)K_\eps} + \omega \norm{\sqrt GR_\eps}^2 + E_0\omega) \norm\psi^2,
    \end{align*}
    (where we used $\norm{R_\eps}\le1$).
    In particular, it follows that $ZK_\eps Z^{-1}$ and $(Z\ox1)L_\eps Z^{-1}$ are bounded. In combination with \eqref{eq:kodak} this implies that $\L_\eps\rho = K_\eps\rho +\rho K_\eps^* + \tr_{\ell^2} L_\eps \rho L_\eps^*$ generates a uniformly continuous energy-limited quantum dynamical semigroup with stability constants $\omega,E_0$.

    \emph{Step 3.} 
    We remove the regularization by taking the limit $\eps\to0$.
    As shown in the proof of \cref{thm:dissipative}, $K_\eps$ converges to $K$ strongly on $\dom G$.
    A similar argument shows that $L_\eps$ converges strongly to $L$ on $\dom G$. 
    Indeed, writing $Y = L(1+G)^{-1}\in\B(\H)$, we have $L_\eps = YR_\eps(1+G)$ which converges strongly to $Y(1+G)=L$ on $\dom G$.
    Therefore, $\L_\eps$ converges strongly to $\L$ on $(\dom K)^{\kettbra{}}$.
    The generation theorem in \cref{sec:generation} shows that $\dom G$ is a core for $K$, which by \cref{lem:davies} implies that $(\dom G)^{\kettbra{}}$ is a core for $\L$.
    Thus, the Trotter-Kato approximation theorem \cite[Thm.~II.4.8]{EngelNagel} implies that $T_\eps(t)$ converges strongly to $T(t)$.
    The claim then follows from item \ref{it:energy_lim3} of \cref{lem:energy_lim}.
\end{proof}

\section{Examples of energy-limited dynamics}\label{sec:examples}

\subsection{Gaussian channels and Markov dynamics on bosonic systems}\label{sec:gaussian}

We consider bosonic systems the number operator as the reference Hamiltonian.
Shirokov showed that Gaussian channels are energy-limited in \cite{shirokov2020extension}.
Here, we further establish the energy-limitedness of Gaussian quantum Markov dynamics.
Let us start by fixing some notation.
We set $\H=L^2(\RR^n)$, and we denote the vector of canonical operators by $R = (Q_1,\ldots,Q_n,\allowbreak P_1,\ldots,P_n)$.
We will freely let matrices act on vectors of operators and write $\frac12R^2:=\sum_j R_j^2$ for the Harmonic oscillator.
The number operator is given by $N= \frac12R^2 - \frac n2$.
If defined, the displacement $d$ and the covariance matrix $\gamma$ of a state $\rho\in\states(\H)$ are the vector $\beta\in\RR^{2n}$, and the matrix $\gamma\in M_{2n}(\RR)$ with
\begin{equation}\label{eq:moments}
    \beta_j = \tr[\rho R_j], \qquad \gamma_{jk}= 2\Re\tr[\rho(R_j-\beta_j)(R_k-\beta_k)].
\end{equation}
Positivity of the state $\rho$ requires that the covariance matrix satisfies the semi-definite constraint 
\begin{equation}\label{eq:gaussian_constraint}
    \gamma+i\sigma\ge0, \qquad \sigma = \mat{ 0&-\1_n \\ \1_n &0}.
\end{equation}
The characteristic function of a quantum state $\rho\in \states(\H)$ is the function $\chi_\rho(\alpha) = \tr[\rho D_\alpha]$, $\alpha\in\RR^{2n}$, where $D_\alpha = e^{i \alpha\trp \sigma R}$ denotes the family of Weyl (or displacement) operators.
The Wigner function $W_\rho$ of a state $\rho$ is the (symplectic) Fourier transform of the characteristic function $W_\rho(\alpha) = (2\pi)^{-\frac n2}\int_{\RR^{2n}} e^{i\alpha\trp \sigma \beta} \chi_\rho(\beta)\,d\beta$.
A state is called \emph{Gaussian} if its Wigner function is Gaussian \cite{eisert2007gaussian}:
\begin{equation}\label{eq:gauss}
    W_\rho(\alpha) = \Big(\tfrac2\pi\Big)^n (\det \gamma)^{-\frac12}e^{-(\alpha-\beta) \trp \gamma^{-1} (\alpha-\beta)}.
\end{equation}
In \eqref{eq:gauss}, the parametrization is consistent with the definition of the covariance matrix and displacement vector above.
In particular, every vector $\beta\in\RR^{2n}$ and every symmetric real matrix $\gamma=\gamma\trp\in M_{2n}(\RR)$ such that \eqref{eq:gaussian_constraint} holds, determines a Gaussian state $\rho_{\gamma,\beta}$, e.g., $\rho_{\1,\beta} = \kettbra \beta$.

The energy expectation value of Gaussian states with respect to the number operator can be calculated from its covariance and displacement:
\begin{align}
    \energy[\rho_{\gamma,\beta}]
    = \frac14\tr \gamma + \frac12 \beta^2 - \frac n2
\end{align}
A quantum channel $T:\T(\H_A)\to\T(\H_B)$ between two bosonic systems with $\H_j = L^2(\RR^{n_j})$ is said to be Gaussian if it takes Gaussian states to Gaussian states.
A Gaussian channel necessarily transforms the covariance matrix and displacement vector linearly:
If $T$ is Gaussian there exist a linear map $X:\RR^{2n_B}\to\RR^{2n_A}$, $Y=Y\trp\in M_{2n_B}(\RR)$ and $\alpha\in \RR^{2n_B}$ such that the covariance matrix $\gamma'$ and the displacement $\beta'$ of $\rho'=T\rho_{\gamma,\beta}$ are given by
\begin{equation}
    \gamma' = X\trp \gamma X+Y, \qquad \beta' = X\trp \beta+\alpha.
\end{equation}
Complete positivity of $T$ enforces the positivity condition
\begin{equation}\label{eq:gaussian_cp}
    Y + i\sigma_B -iX\trp \sigma_A X \ge0,
\end{equation}
and, conversely, any triple $(X,Y,\alpha)$ that satisfies \eqref{eq:gaussian_cp} determines a Gaussian channel, denoted $T_{X,Y,\alpha}$, in this way \cite{eisert2007gaussian}.
The matrix $X$ describes the linear transformation on phase space which the channel implements. The matrix $Y$ is the noise introduced by the channel, and $\alpha$ is an additional displacement.
By \eqref{eq:gaussian_cp}, every linear map $X$ may be implemented by a Gaussian quantum channel with sufficient noise.
We can factor every Gaussian channel into pure displacement and a nondisplacing channel: 
\begin{equation}
     T_{X,Y,\alpha} = T_{0,0,\alpha} T_{X,Y,0}.
\end{equation}
Pure displacement channels are implemented by Weyl operators $T_{0,0,\alpha} = D_\alpha(\placeholder)D_\alpha^*$.
To understand the energy change caused by a Gaussian channel, one needs to estimate the action of the dual channel $T_{X,Y,\alpha}^*$ on the number operator.
This is readily derived from the formula
\begin{equation}
    T_{X,Y,\alpha}^*(R\trp\! AR) = (XR+\alpha)\trp  A(XR+\alpha) + \frac12\tr[AY],
\end{equation}  
valid for any symmetric matrix $A=A\trp\in M_{2n}(\RR)$.
As observed by Shirokov, applying this to the number operator $\frac12 R\trp R - \frac n2$ immediately gives:

\begin{lemma}[{\cite[Sec.~5]{shirokov2020extension}}]\label{thm:gaussian_channel}
    Gaussian channels are energy-limited with respect to the number operators on the input and output systems.
    The output energy of a nondisplacing Gaussian channel is bounded as
    \begin{equation}\label{eq:fT_gaussian}
        f_{T_{X,Y,0}}(E) \le \norm X_\oo^2 E+ \frac14 \tr[Y]+\norm X_\oo^2\frac{n_A}2-\frac{n_B}2,
    \end{equation}
    where $n_A$ and $n_B$ are the number of input and output modes, respectively.
\end{lemma}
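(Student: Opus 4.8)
The plan is to reduce the claim to the Heisenberg-picture operator inequality provided by \cref{thm:dual_prob}: to bound $f_{T_{X,Y,0}}$ it suffices to produce constants $\lambda,E_0\ge0$ with $T_{X,Y,0}^*(G_B)\le\lambda G_A+E_0$, whence $f_{T_{X,Y,0}}(E)\le\lambda E+E_0$ (and finiteness for every $E$ is exactly energy-limitedness). I would write the output reference Hamiltonian as the quadratic observable $G_B=N_B=R_B\trp(\tfrac12\1)R_B-\tfrac{n_B}2$ and feed it into the displayed formula for the dual of a Gaussian channel with $A=\tfrac12\1$ and $\alpha=0$.

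For the nondisplacing channel, using that $T$ is trace-preserving so $T^*(\1)=\1$, this gives
\begin{equation*}
    T_{X,Y,0}^*(G_B)=\tfrac12\,R_A\trp\,(XX\trp)\,R_A+\tfrac14\tr[Y]-\tfrac{n_B}2 ,
\end{equation*}
where $XX\trp$ is the positive $2n_A\times2n_A$ matrix encoding the first-moment map $T^*(R_B)=X\trp R_A$ together with the noise contribution. Since $XX\trp\le\norm X_\oo^2\,\1$ (indeed $\norm{XX\trp}_\oo=\norm X_\oo^2$) and $\tfrac12 R_A\trp R_A=G_A+\tfrac{n_A}2$, I obtain the operator inequality
\begin{equation*}
    T_{X,Y,0}^*(G_B)\le\norm X_\oo^2\,G_A+\big(\norm X_\oo^2\tfrac{n_A}2+\tfrac14\tr[Y]-\tfrac{n_B}2\big).
\end{equation*}
Reading off $\lambda=\norm X_\oo^2$ and $E_0=\norm X_\oo^2\tfrac{n_A}2+\tfrac14\tr[Y]-\tfrac{n_B}2$ and invoking \cref{thm:dual_prob} yields precisely \eqref{eq:fT_gaussian}; in particular $f_{T_{X,Y,0}}(E)<\oo$ for all $E$, so nondisplacing Gaussian channels are energy-limited.

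For an arbitrary Gaussian channel I would use the factorization $T_{X,Y,\alpha}=T_{0,0,\alpha}\,T_{X,Y,0}$. The pure-displacement channel $T_{0,0,\alpha}=D_\alpha(\placeholder)D_\alpha^*$ is energy-limited because the Weyl operators shift the canonical operators by a constant, $D_\alpha^* R_B D_\alpha=R_B+s$ with $s\in\RR^{2n_B}$, so that $D_\alpha^* N_B D_\alpha=N_B+s\trp R_B+\tfrac12\abs s^2$; bounding the linear term by $\pm s\trp R_B\le\tfrac12 R_B\trp R_B+\tfrac12\abs s^2$ gives $D_\alpha^* N_B D_\alpha\le2N_B+c$, and by \cref{thm:EL_contraction} the unitary $D_\alpha$ is energy-limited. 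Energy-limitedness of the composite then follows from item~\ref{it:energy_lim5} of \cref{lem:energy_lim}.

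The single point requiring care is that, a priori, $T^*(G_B)$ is only defined as the positive self-adjoint operator of \cref{thm:def_TG}, so the formal quadratic computation above must be read as an identity of quadratic forms on $\dom\sqrt{G_A}$. I expect this to be the main (though essentially routine) obstacle: one verifies the identity $\energy_B[T\kettbra\psi]=\tfrac12\norm{(X\trp R_A)\psi}^2+(\tfrac14\tr[Y]-\tfrac{n_B}2)\norm\psi^2$ on a core of sufficiently regular vectors (e.g.\ Schwartz vectors, on which the Weyl calculus and the affine moment transformation $\gamma\mapsto X\trp\gamma X+Y$, $\beta\mapsto X\trp\beta$ are literally valid) and extends the resulting form inequality to all of $\dom\sqrt{G_A}$ using the closability in \cref{thm:def_TG}. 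Equivalently, and perhaps most transparently, one may argue directly on finite-energy states: any $\rho\in\states_E(\H_A)$ has finite second moments, its image has $\gamma'=X\trp\gamma X+Y$ and $\beta'=X\trp\beta$, and substituting into the general formula $\energy[\rho]=\tfrac14\tr\gamma+\tfrac12\abs\beta^2-\tfrac n2$ reproduces the same bound after optimizing over $\states_E(\H_A)$.
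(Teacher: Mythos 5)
Your proposal is correct and follows essentially the same route as the paper: both apply the dual-channel formula $T_{X,Y,\alpha}^*(R\trp A R)=(X\trp R_A+\alpha)\trp A\,(X\trp R_A+\alpha)+\tfrac12\tr[AY]$ with $A=\tfrac12\1$ to the number operator, bound $XX\trp\le\norm{X}_\oo^2\,\1$, and read off the affine estimate via \cref{thm:dual_prob}. Your two minor deviations --- handling displacement through the factorization $T_{X,Y,\alpha}=T_{0,0,\alpha}T_{X,Y,0}$ rather than the $\alpha\ne0$ case of the same formula, and spelling out the quadratic-form justification of $T^*(G_B)$ that the paper delegates to the citation of Shirokov --- are refinements, not a different argument.
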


We now consider Gaussian quantum Markov dynamics. For simplicity, we restrict to the case without displacement.
We use the following structure theorem from {\cite[Sec.~5.1]{lars_thesis}}:

\begin{proposition}\label{thm:gaussian_qds}
    Let $\{T(t)\}_{t\ge0}$ be a quantum dynamical semigroup.
    If each $T(t)$ is a Gaussian quantum channel without displacement, there exist matrices $\dot X,\dot Y=\dot Y\trp\!\in M_{2n}(\RR)$ satisfying $\dot Y+\frac i2(\dot X\trp\sigma+\sigma\dot X)\ge0$, such that
    \begin{equation}
        T(t) = T_{X(t),Y(t),0},\qquad X(t)=e^{t\dot X}, \qquad Y(t)=\int_0^t X(s)\trp\dot YX(s)\,ds.
    \end{equation}
    The generator of such a semigroup is standard and given by
    \begin{equation}\label{eq:gaussian_generator}
        \L\rho = \frac12 \sum_{jk} \Big( m_{jk} \big(R_j [\rho,R_k] +[R_j,\rho]R_k \big) + h_{jk} [R_j R_k,\rho] \Big),
    \end{equation}
    with matrices $0\le m\in M_{2n}(\CC)$, $h=h\trp\in M_{2n}(\RR)$, given by $m = \sigma \dot Y \sigma + \frac i2(\sigma \dot X+\dot X\trp\sigma)$ and $h=\tfrac12 (\sigma \dot X\trp-\dot X\sigma)$.
    Furthermore, every quantum dynamical semigroup of Gaussian channels arises this way.
\end{proposition}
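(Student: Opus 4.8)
The plan is to read the matrices $\dot X,\dot Y$ off the semigroup law, solve for $X(t)$ and $Y(t)$ explicitly, extract the semidefinite constraint as the infinitesimal form of complete positivity, and then identify the generator by differentiating the action of $T(t)$ at $t=0$. I would first record the composition law for nondisplacing Gaussian channels: since $T_{X,Y,0}$ acts on covariance matrices by $\gamma\mapsto X\trp\gamma X+Y$, two such channels compose as $T_{X_2,Y_2,0}\,T_{X_1,Y_1,0}=T_{X_1X_2,\;X_2\trp Y_1X_2+Y_2,\,0}$. Writing $T(t)=T_{X(t),Y(t),0}$ and inserting this into $T(t+s)=T(t)T(s)$ yields
\begin{equation}
    X(t+s)=X(s)X(t),\qquad Y(t+s)=Y(t)+X(t)\trp Y(s)X(t).
\end{equation}
Strong continuity makes $t\mapsto X(t)$ and $t\mapsto Y(t)$ continuous (the covariance of $T(t)\rho$ depends continuously on $t$ for each Gaussian input), so the first equation exhibits $X(t)$ as a continuous one-parameter matrix semigroup, hence $X(t)=e^{t\dot X}$ with $\dot X=\tfrac{d}{dt}X(t)|_{t=0}$. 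Differentiating the cocycle equation for $Y$ in $s$ at $s=0$ gives $Y'(t)=X(t)\trp\dot Y X(t)$ with $\dot Y=\tfrac{d}{dt}Y(t)|_{t=0}$; integrating from $Y(0)=0$ produces $Y(t)=\int_0^t X(s)\trp\dot Y X(s)\,ds$.

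The semidefinite constraint is the infinitesimal form of \eqref{eq:gaussian_cp}. Set $M(t):=Y(t)+i\sigma-iX(t)\trp\sigma X(t)$, which must satisfy $M(t)\ge0$ and obeys $M(0)=0$. Using the integral formula for $Y$ together with $i\big(X(t)\trp\sigma X(t)-\sigma\big)=i\int_0^t\tfrac{d}{ds}\big[X(s)\trp\sigma X(s)\big]\,ds$, the two contributions combine into the single representation
\begin{equation}
    M(t)=\int_0^t X(s)\trp\big(\dot Y-i(\dot X\trp\sigma+\sigma\dot X)\big)X(s)\,ds.
\end{equation}
Because conjugation by the invertible real matrices $X(s)$ preserves positivity, $M(t)\ge0$ for all $t\ge0$ is equivalent to the single inequality $\dot Y-i(\dot X\trp\sigma+\sigma\dot X)\ge0$, which is the constraint asserted in the Proposition once $\sigma$ is normalised as in \eqref{eq:gaussian_constraint}. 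This representation does double duty: it proves necessity here and supplies the propagation of positivity needed for the converse.

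To obtain the generator I would pass to characteristic functions, on which a nondisplacing Gaussian channel acts by $\chi_\rho\mapsto\big[\xi\mapsto\chi_\rho(X\xi)\,e^{-\frac14\xi\trp\sigma\trp Y\sigma\xi}\big]$. Differentiating $T(t)$ at $t=0$ yields a second-order differential operator on phase space whose first-order (drift) part is governed by $\dot X$ and whose second-order (diffusion) part by $\dot Y$; translating back through the Weyl correspondence and symmetrising gives the quadratic expression \eqref{eq:gaussian_generator}, and splitting the coefficient into Hermitian and anti-Hermitian parts reads off $m=\sigma\dot Y\sigma+\tfrac i2(\sigma\dot X+\dot X\trp\sigma)$ and $h=\tfrac12(\sigma\dot X\trp-\dot X\sigma)$, with $m\ge0$ inherited from the constraint. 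To see that $\L$ is a \emph{standard} generator I would exhibit the defining pair: factor $m=\sum_\alpha c_\alpha c_\alpha^\dagger$ with $c_\alpha\in\CC^{2n}$, set the (unbounded) Lindblad operators $L_\alpha=c_\alpha\trp R$ and $K=-\tfrac12\sum_\alpha L_\alpha^*L_\alpha-iH$ with quadratic Hamiltonian $H=\tfrac12\sum_{jk}h_{jk}R_jR_k$, all defined on a common invariant core such as the finite-particle vectors or Schwartz space; one then checks \eqref{eq:L_summation} and that this standard generator agrees with the formal expression on the ketbra domain, whence \cref{lem:davies} pins it down as the generator of $T(t)$.

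For the converse I would invert the relations above to recover $\dot X,\dot Y$ from $(m,h)$ (the anti-symmetric/imaginary and symmetric/real parts, respectively), define $X(t)=e^{t\dot X}$ and $Y(t)$ by the integral, and invoke the representation of $M(t)$ to certify that every $T_{X(t),Y(t),0}$ is completely positive; these assemble into a strongly continuous semigroup of Gaussian channels whose generator is $\L$ by the forward computation. The genuine obstacle is the generator identification in infinite dimensions: making rigorous the passage from the characteristic-function computation to an operator identity for the unbounded $\L$, selecting a core on which both sides are defined and equal, and verifying that the candidate standard generator really generates the given semigroup rather than some other extension. The matrix algebra for $X(t),Y(t)$ and the positivity propagation are routine by comparison; the care lies in the domains and in matching conventions so that the signs and factors in $m$ and $h$ come out as stated.
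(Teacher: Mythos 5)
The paper offers no proof of this proposition at all — it is imported verbatim from \cite[Sec.~5.1]{lars_thesis} — so your sketch is necessarily a different route, and its skeleton is the right one. The composition law for nondisplacing Gaussian channels, the identification $X(t)=e^{t\dot X}$, the integral formula for $Y(t)$, and especially the single representation $M(t)=\int_0^t X(s)\trp\big(\dot Y-i(\dot X\trp\sigma+\sigma\dot X)\big)X(s)\,ds$ of the complete-positivity defect are correct, and the last item elegantly settles both the necessity of the constraint and its propagation for the converse in one stroke. The only soft spot in that part is differentiating the cocycle for $Y$ at $s=0$: continuity alone must first be upgraded to differentiability (integrate the cocycle over a small $s$-interval and use that $X(t)$ is smooth), a routine but genuine step you skip.

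Your reconciliation of the derived constraint with the printed one, however, is wrong — and instructively so. Under the conventions of this paper, namely \eqref{eq:moments} with $\gamma_{\mathrm{vac}}=\1$, the state condition \eqref{eq:gaussian_constraint} and the CP condition \eqref{eq:gaussian_cp}, your inequality $\dot Y-i(\dot X\trp\sigma+\sigma\dot X)\ge0$ (equivalently with $+i$, by entrywise conjugation of a Hermitian matrix) is the correct infinitesimal condition, and it is \emph{strictly stronger} than the printed $\dot Y+\tfrac i2(\dot X\trp\sigma+\sigma\dot X)\ge0$; no ``normalisation of $\sigma$'' consistent with \eqref{eq:gaussian_constraint} produces the $\tfrac12$. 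Test against the lossy cavity, $\gamma\mapsto e^{-t}\gamma+(1-e^{-t})\1$, so $\dot X=-\tfrac12\1$, $\dot Y=\1$: your condition reads $\1+i\sigma\ge0$ and holds exactly at the boundary (rank-deficient), as it must for the minimal-noise attenuator. By contrast, $\dot Y=\tfrac12\1$ still satisfies the printed condition, since $\tfrac12(\1-i\sigma)\ge0$, yet the resulting $T(t)$ violates \eqref{eq:gaussian_cp} for every $t>0$ because $(1-e^{-t})(\tfrac12\1+i\sigma)$ has a negative eigenvalue. The printed $\tfrac12$ matches the alternative normalisation $\gamma_{\mathrm{vac}}=\tfrac12\1$, $\gamma+\tfrac i2\sigma\ge0$, presumably that of the cited source; the same convention clash infects the printed formula for $m$ (note $\tr[\sigma\dot Y\sigma]=-\tr\dot Y\le0$, so as printed $m\ge0$ would force $\dot Y=0$). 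So your computation is right and in fact exposes an inconsistency between the statement and \cref{sec:gaussian}'s conventions — but you should say that, not paper over it with a vague appeal to normalisation.

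The second genuine gap is the one you flag yourself, and your proposed fix does not close it: the appeal to \cref{lem:davies} is circular, since its hypothesis is conservativity of the \emph{standard} semigroup $e^{t\L_{\mathrm{std}}}$ — precisely what you are trying to transfer from the given $T(t)$ — and it only characterises cores, never identifying $e^{t\L_{\mathrm{std}}}$ with $T(t)$. After factoring $m$ and building $K$, $L_\alpha$ on a Schwartz-type domain, you still need (i) agreement of the given generator $\L$ with the formal expression on the ketbra domain (the characteristic-function differentiation done rigorously), and (ii) an inclusion $\L_{\mathrm{std}}\subseteq\L$ together with the maximality of generators (bijectivity of $\lambda-\L_{\mathrm{std}}$ plus injectivity of $\lambda-\L$ forces equality of domains) — or, alternatively, the minimal-solution machinery of \cref{sec:standard}. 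A direct core argument for the given $\L$ via semigroup invariance, as in \cite[Prop.~II.1.7]{EngelNagel}, fails naively here because Gaussian channels do not preserve the finite-rank ketbra domain; this is exactly where the real work of the cited source lies.
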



\begin{proposition}\label{thm:gaussian_el}
    Let $\{T(t)\}_{t\ge0}$ be a Gaussian quantum dynamical semigroup and let $\dot X$, $\dot Y$ be the matrices from \cref{thm:gaussian_qds}.
    Then $\{T(t)\}_{t\ge0}$ is energy-limited with stability constants $\omega=2 \norm{\dot X}_\oo$, $E_0=\frac n2+\norm{\dot Y}_\oo/8\norm{\dot X}_\oo$.
\end{proposition}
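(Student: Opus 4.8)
The plan is to realize $\L$ as a standard generator and apply \cref{thm:standard}. By \cref{thm:gaussian_qds} the generator \eqref{eq:gaussian_generator} is standard: since $0\le m$, I factor $m=\sum_\alpha\kettbra{\phi_\alpha}$ and set $L_\alpha=\sum_j(\phi_\alpha)_j R_j$ (linear in the canonical operators $R$), with a maximally dissipative $K$ collecting the remaining drift together with the CCR-induced Hamiltonian piece. The semigroup $\{T(t)\}_{t\ge0}$ is conservative because each $T(t)$ is a trace-preserving Gaussian channel. As a core I would take the Schwartz space $\D=\Sch(\RR^n)$ (or the finite-particle subspace): it is a core for $N=\tfrac12 R^2-\tfrac n2$, is invariant under each $R_j$, lies in $\dom K$, and $K$ is $N$-bounded on it. Thus it remains only to verify the form inequality of \cref{thm:standard}, i.e.\ $2\Re\ip{N\psi}{K\psi}+\sum_\alpha\norm{\sqrt N L_\alpha\psi}^2\le \omega(\norm{\sqrt N\psi}^2+E_0\norm\psi^2)$ for $\psi\in\D$, with $\omega=2\norm{\dot X}_\oo$ and $E_0=\tfrac n2+\norm{\dot Y}_\oo/(8\norm{\dot X}_\oo)$.

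The left-hand side is the quadratic form of $\L^*(N)$, which I would compute through the Heisenberg action on quadratics, obtained by differentiating the channel identity $T^*_{X,Y,0}(R\trp AR)=R\trp(X\trp AX)R+\tfrac12\tr[AY]$ at $t=0$ with $X(t)=e^{t\dot X}$: taking $A=\tfrac12\1$ gives $\L^*(N)=\tfrac12 R\trp(\dot X+\dot X\trp)R+(\text{scalar noise term})$. The drift part is controlled as an operator inequality: $\dot X+\dot X\trp$ is real symmetric with $\dot X+\dot X\trp\le 2\norm{\dot X}_\oo\1$, so $2\norm{\dot X}_\oo\1-(\dot X+\dot X\trp)=S\trp S$ for a real matrix $S$, whence $R\trp\big(2\norm{\dot X}_\oo\1-\dot X-\dot X\trp\big)R=\sum_l(SR)_l^2\ge0$ as a sum of squares of self-adjoint operators. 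Therefore $\tfrac12 R\trp(\dot X+\dot X\trp)R\le \norm{\dot X}_\oo R\trp R=\omega\big(N+\tfrac n2\big)$, which fixes $\omega=2\norm{\dot X}_\oo$ and already produces the $\tfrac n2$ in $E_0$ via the zero-point term in $R\trp R=2N+n$.

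It then remains to bound the noise contribution by $\omega\cdot\tfrac{\norm{\dot Y}_\oo}{8\norm{\dot X}_\oo}=\tfrac14\norm{\dot Y}_\oo$, and this is the main obstacle. I would isolate the scalar noise piece by rewriting $\sum_\alpha\norm{\sqrt N L_\alpha\psi}^2=\sum_\alpha\ip{L_\alpha\psi}{N L_\alpha\psi}$ and commuting $N$ past the linear $L_\alpha$ through the CCR, so that each commutator $[R_l,L_\alpha]$ is a c-number (a linear combination of entries of $\sigma$); the genuinely operator-valued remainder recombines with $2\Re\ip{N\psi}{K\psi}$ (this is where the $N^2$-contributions must cancel, since $\L^*(N)$ carries no $N^2$ term), leaving a scalar. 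Using the structural relation $\Re m=\sigma\dot Y\sigma$ from \cref{thm:gaussian_qds} and the orthogonality of $\sigma$, the delicate point is to estimate this scalar by the operator norm $\norm{\dot Y}_\oo=\norm{\Re m}_\oo$ rather than by a trace: one must bound the noise as a genuine operator inequality on $\D$, exploiting the sum-of-squares factorisation of $m$, instead of merely evaluating expectation values (the latter route only yields the weaker constant $\tfrac14\tr\dot Y$). Once the scalar is controlled by $\tfrac14\norm{\dot Y}_\oo$, combining the drift and noise estimates gives $\L^*(N)\le\omega(N+E_0)$ on the core, and \cref{thm:standard} yields energy-limitedness with the stated constants.
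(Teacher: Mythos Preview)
Your route via \cref{thm:standard} is genuinely different from the paper's. The paper never touches the generator: it writes $T(t)=T_{X(t),Y(t),0}$ with $X(t)=e^{t\dot X}$ and $Y(t)=\int_0^t X(s)\trp\dot Y X(s)\,ds$ from \cref{thm:gaussian_qds}, plugs this into the finite-time channel bound \eqref{eq:fT_gaussian} of \cref{thm:gaussian_channel}, and then estimates $\norm{X(t)}_\oo\le e^{t\norm{\dot X}_\oo}$ and the $Y$-integral directly. This is a few-line computation that sidesteps the standard-generator machinery, the choice of core, and all the commutator bookkeeping you set up; the price is that it leans more heavily on the structural input of \cref{thm:gaussian_qds} and \cref{thm:gaussian_channel}.

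On the obstacle you flag: your formula $\L^*(N)=\tfrac12 R\trp(\dot X+\dot X\trp)R+\tfrac14\tr\dot Y$ is correct, and the noise contribution really is the \emph{scalar} $\tfrac14\tr\dot Y$---no sum-of-squares manoeuvre can sharpen a constant operator. The paper's own argument also passes through $\tfrac14\tr Y(t)$ via \eqref{eq:fT_gaussian}, and the step in its proof where $\tr[X(s)\trp\dot Y X(s)]$ is replaced by $\norm{X(s)\dot Y X(s)}_\oo$ is not valid for $2n\times2n$ matrices (differentiating at $t=0$ it would force $\tr\dot Y\le\norm{\dot Y}_\oo$). So the discrepancy you encountered is not a defect of your method but appears to be a slip in the stated constant: both approaches naturally deliver $E_0=\tfrac n2+\tr\dot Y/(8\norm{\dot X}_\oo)$ rather than the claimed value with $\norm{\dot Y}_\oo$.
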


\begin{proof}
    Since $T(t)=T_{X(t),Y(t),0}$ with the notation from \cref{thm:gaussian_qds}, equation \eqref{eq:fT_gaussian} implies
    \begin{align*}
        f_{T(t)}(E)&\le \norm{X(t)}_\oo^2 (E +\tfrac n2)+ \tfrac14\tr[Y(t)]- \tfrac n2\\ 
        &\le e^{2t\norm{\dot X}_\oo} (E+\tfrac n2) + \tfrac14  \int_0^t \norm{X(s)\dot YX(s)}_\oo\,ds -\tfrac n2\\
        &\le e^{2t\norm{\dot X}_\oo} (E +\tfrac n2) + \tfrac{\norm{\dot Y}_\oo}4  \int_0^t e^{2s\norm{\dot X}_\oo} \,ds -\tfrac n2\\
        &\le e^{2t\norm{\dot X}_\oo} E + (e^{2t\norm{\dot X}_\oo}-1)(\tfrac n2 +\tfrac{\norm{\dot Y}_\oo}{8\norm{\dot X}_\oo}).\qedhere
    \end{align*}
\end{proof}

\subsection{Coherent state quantization}\label{sec:quantization}

We consider the Hilbert space $\H=\K\ox L^2(\RR^n) =  L^2(\RR^n;\K)$ of a quantum system with $n$ canonical degrees of freedom coupled to a system with Hilbert space $\K$.
We continue to use the notation from \cref{sec:gaussian} for operators on $L^2(\RR^n)$.
We denote by $\ket 0\in L^2(\RR^n)$ the ground state of the number operator $N= \sum_{i=1}^n a_i^\dagger a_i$ and
by $\ket \alpha= D_\alpha\ket0$, $\alpha\in\RR^{2n}$, the family of coherent states.
As the reference Hamiltonian, we take $G=1\ox N$.
The coherent state quantization of an hermitian operator-valued function $h \in L^\oo(\RR^{2n};\B(\K))$ is the operator
\begin{equation}\label{eq:coherent_state_quant}
    H = (2\pi)^{-n}\!\int_{\RR^{2n}} h(\alpha) \ox \kettbra \alpha\, d\alpha\in \B(L^2(\RR^n;\K).
\end{equation}
The map $h\mapsto H$ defines a normal unital completely positive map $L^\oo(\RR^{2n})\to\B(\H)$, where unitality follows from the overcompleteness relation $\int_{\RR^{2n}} \kettbra \alpha\, d\alpha = (2\pi)^{n}1$.
It also makes sense to consider the coherent state quantization of unbounded functions $h$. 
If $h$ is a measurable and polynomially bounded then $H$ is naturally defined as an operator on the domain of vector-valued Schwartz functions $\Sch(\RR^n,\K)$ \cite{ESA}.

\begin{proposition}\label{prop:coherent_state_quant}
    Let $h:\RR^{2n}\to\B(\K)$ be a hermitian operator-valued $C^1$-function whose gradient $\nabla h\in C(\RR^{2n},\RR^{2n}\times \B(\K))$ is globally Lipschitz continuous.
    Then:
    \begin{enumerate}[(1)]
        \item 
            The coherent state quantization $H$ of $h$ is essentially self-adjoint.
        \item 
            The unitary dynamics generated by $\bar H$ is energy-limited with respect to $G=1\ox N$.
        \item 
            Let $\omega,E_0>0$ be such that $\| \alpha\trp \sigma \nabla h(\alpha)\|\le \omega(\tfrac12\abs{\alpha}^2+E_0-n)$, then $\omega,E_0$ are stability constants for the unitary dynamics generated by $\bar H$. 
    \end{enumerate}
\end{proposition}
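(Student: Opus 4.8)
The plan is to invoke \cref{thm:unitary2} with the core $\D=\Sch(\RR^n,\K)$ of vector-valued Schwartz functions, on which $H$ is defined by the coherent state calculus of \cite{ESA} and is symmetric because $h$ is hermitian-valued. Three things must be checked: that $\D$ is a core for $G=1\ox N$, that $H$ is $G$-bounded on $\D$, and that the commutator bound \eqref{eq:commutator_bound} holds on $\D$ with the given $\omega,E_0$. Granting these, \cref{thm:unitary2} yields energy-limitedness with stability constants $\omega,E_0$, while the Nelson commutator theorem quoted right after \cref{thm:unitary2} gives essential self-adjointness; this establishes items (1)--(3) at once, with item~(2) reducing to item~(3) once we note that admissible constants exist. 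That $\Sch(\RR^n,\K)$ is a core for $1\ox N$ is standard, since $\Sch(\RR^n)$ is a core for $N$ and $G$ acts trivially on the $\K$-factor.

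The heart of the argument is the commutator identity $i[H,G]=\mathrm{Op}(g)$ on $\D$, where $\mathrm{Op}$ denotes the coherent state quantization $f\mapsto (2\pi)^{-n}\int f(\alpha)\ox\kettbra\alpha\,d\alpha$ and $g(\alpha)=(\sigma\alpha)\cdot\nabla h(\alpha)$. Since the harmonic oscillator $N$ generates the symplectic phase-space rotation $\alpha\mapsto e^{\theta\sigma}\alpha$ (the flow of the classical Hamiltonian $\tfrac12\abs{\alpha}^2$), one has the infinitesimal relation $i[N,\kettbra\alpha]=(\sigma\alpha)\cdot\nabla_\alpha\kettbra\alpha$ as a form on $\D$. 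Inserting this into the integral defining $H$ and integrating by parts in $\alpha$ -- which produces no divergence term because $\nabla\cdot(\sigma\alpha)=\tr\sigma=0$ -- transfers the derivative onto $h$ and gives $i[H,G]=\mathrm{Op}(g)$ with $\norm{g(\alpha)}=\norm{\alpha\trp\sigma\nabla h(\alpha)}$. I expect this integration by parts to be the main obstacle: $h$ may grow quadratically (its gradient is only Lipschitz), so the manipulation must be justified via the rapid decay of $\alpha\mapsto\ip{\phi}{(\kettbra\alpha)\psi}$ for $\phi,\psi\in\D$, which simultaneously tames the quadratic symbol and kills boundary contributions.

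With the identity in hand, I would exploit that $\mathrm{Op}$ is a unital completely positive (hence order-preserving) map, with $\mathrm{Op}(1_\K)=1$ by the overcompleteness relation and $\mathrm{Op}\big((\tfrac12\abs{\alpha}^2-n)1_\K\big)=1\ox N=G$ by the standard anti-Wick computation (convolving the quadratic symbol $\tfrac12\abs{\alpha}^2$ with the Husimi Gaussian raises it by $n$). From the hypothesis $\norm{\alpha\trp\sigma\nabla h(\alpha)}\le\omega(\tfrac12\abs{\alpha}^2+E_0-n)$ one gets the operator inequalities $\pm g(\alpha)\le\norm{g(\alpha)}\,1_\K\le\omega(\tfrac12\abs{\alpha}^2+E_0-n)1_\K$ on $\K$; applying the order-preserving map $\mathrm{Op}$ then yields
\begin{equation*}
    \pm i[H,G]=\pm\mathrm{Op}(g)\le \omega\big(\mathrm{Op}((\tfrac12\abs{\alpha}^2-n)1_\K)+E_0\,\mathrm{Op}(1_\K)\big)=\omega(G+E_0),
\end{equation*}
which is exactly \eqref{eq:commutator_bound}. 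The $G$-boundedness of $H$ on $\D$ follows by the same mechanism: the Lipschitz gradient forces $\pm h(\alpha)\le c(\abs{\alpha}^2+1)1_\K$, so $\pm H$ is dominated by $\mathrm{Op}$ of a quadratic scalar symbol and hence controlled by $G$. Finally, for item~(2) the existence of admissible constants is automatic, since $\norm{\alpha\trp\sigma\nabla h(\alpha)}\le\norm{\nabla h(0)}\abs{\alpha}+L\abs{\alpha}^2$ is dominated by $\omega(\tfrac12\abs{\alpha}^2+E_0-n)$ once $\omega$ is large and $E_0\ge n$; invoking \cref{thm:unitary2} then completes all three claims.
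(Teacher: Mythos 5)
Your route is, at its core, the same as the paper's: \cref{prop:coherent_state_quant} is proved there by checking the hypotheses of \cref{thm:unitary2} (with Nelson's commutator theorem supplying item (1)), using that $-i[H,G]$ equals the coherent state quantization of $\alpha\trp\sigma\nabla h(\alpha)$ and that the quantization map is monotone, unital, and sends $\tfrac12\abs{\alpha}^2$ to $G+n$, whence $\pm i[H,G]\le \omega((G+n)+E_0-n)=\omega(G+E_0)$ — exactly your chain of inequalities. The structural difference is that the paper outsources all the analytic verifications (essential self-adjointness, $G$-boundedness on the Schwartz domain, and the commutator identity itself) to \cite{ESA}, working in the Bargmann picture with Berezin--Toeplitz operators, whereas you reconstruct them in the Schrödinger picture. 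Your derivation of the commutator identity from the rotation covariance of coherent states together with divergence-free integration by parts ($\nabla\cdot(\sigma\alpha)=0$), justified by the rapid decay of $\alpha\mapsto\ip{\alpha}{\psi}$ for Schwartz $\psi$, is sound in outline; the sign discrepancy with the paper's $-i[H,G]$ is immaterial since the bound is two-sided.

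There is, however, one genuine gap: the $G$-boundedness of $H$ on $\D$. Both \cref{thm:unitary2} and Nelson's commutator theorem require relative boundedness in the \emph{operator} sense, $\norm{H\psi}\le a\norm{G\psi}+b\norm{\psi}$ for $\psi\in\D$. What your argument delivers is only the quadratic-form bound $\pm H\le c\,\mathrm{Op}\big((\abs{\alpha}^2+1)1_\K\big)\le c'(G+1)$, i.e.\ $\abs{\ip{\psi}{H\psi}}\le c'\ip{\psi}{(G+1)\psi}$, and form-boundedness relative to $G$ does not imply operator $G$-boundedness (already for scalar Schrödinger operators, form-bounded perturbations need not be operator-bounded). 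To close this, you need an honest norm estimate for anti-Wick operators with quadratically bounded symbols. A natural repair: write $h(\alpha)=h(0)+\alpha\trp\nabla h(0)+r(\alpha)$ with $\norm{r(\alpha)}\le\tfrac L2\abs{\alpha}^2$ by the Lipschitz hypothesis; the affine part quantizes to $h(0)\ox 1+\sum_j(\partial_jh)(0)\ox R_j$, which is even $\sqrt G$-bounded, but for the remainder the crude estimate $\norm{\mathrm{Op}(r)\psi}\le(2\pi)^{-n}\int\norm{r(\alpha)}\,\abs{\ip{\alpha}{\psi}}\,d\alpha$ loses powers of the number operator and does not obviously give $\norm{\mathrm{Op}(r)\psi}\le C\norm{(G+1)\psi}$; one needs a genuine moment/matrix-element estimate in the number basis (or on the Husimi transform). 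This estimate is precisely part of what the paper imports from \cite{ESA}, so in a self-contained write-up it cannot be waved through.
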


In particular, the gradient $\nabla h$ is Lipschitz continuous if $h\in C^2$ with uniformly bounded second derivatives.

\begin{proof}
    The coherent state quantization $H$ is unitarily equivalent to the Berezin-Toeplitz operator $T_h$ with operator-valued symbol $h:\CC^n\equiv\RR^{2n}\to\B(\K)$ on the vector-valued Segal-Bargmann space, a certain $L^2$-space of complex analytic functions on $\CC^n$, via the Bargman transform \cite{bargmann,folland} (see \cite{ESA} for the vector-valued case).
    The main Theorem of \cite{ESA} states that our assumptions imply essential self-adjointness of the Berezin-Toeplitz operator with symbol $h$ on the domain that corresponds to the Schwartz functions under the Bargmann transform.
    This is proved by checking the assumptions of Nelson's commutator theorem, which, by \cref{thm:unitary2}, also imply energy-limitedness.
    The stability constants are obtained from the requirement $\pm i [H,G]\le \omega(G+E_0)$.
    It is shown in \cite{ESA} that the commutator $-i[H,G]$ is equal to the coherent state quantization of the symbol $\partial_\theta h(\alpha) :=\alpha\trp \sigma \nabla h(\alpha)$. 
    Since the coherent state quantization is monotone and takes $\frac12\abs\placeholder^2$ to $G+n$, the assumptions imply $\pm i[H,G]\le \omega((G+n)+E_0-n) = \omega(G+E_0)$ \cite{ESA}.
\end{proof}

Similar results can be shown for the Weyl quantization at the price of additional regularity assumptions on $h$.
For instance, a similar proof applies to the Weyl quantization if $h\in C^{2d+3}$ with uniformly bounded derivatives of second and higher order (see \cite{ESA_Weyl}).
Using the generation theorem in \cref{sec:generation}, one can also cover contraction semigroups generated by coherent state quantizations of dissipative operator-valued functions $h:\RR^{2n}\to\B(\K)$.

Let us consider consider a single mode, i.e., $n=1$, system coupled to a qubit $\K=\CC^2$.
If we take the function
\begin{equation}
    h(q,p) = \Omega\bigg(\frac{q^2+p^2}2-1\bigg) +\sqrt2 g q \sigma_x  + \nu \sigma_z 
\end{equation}
for constants $\Omega>0$, $\nu,g\in\RR$, the coherent state quantization yields the quantum Rabi Hamiltonian  
\begin{align}
    H =\Omega a^\dagger a + g \sigma_x(a+a^\dagger) +  \nu\sigma_z,
\end{align}
where we suppressed the tensor product symbol.
Therefore, the quantum Rabi model is energy-limited.
The same is true for all Hamiltonians with interaction linear in $Q,P$ or $a$ and $a^\dagger$.

\subsection{Quantum birth process}\label{sec:birth}

In this section, we consider a class of standard quantum dynamical semigroups introduced in \cite{siemon_unbounded_2017}.
What is interesting about this class is that it contains nonconservative dynamics even though the infinitesimal conservativity condition $K^*+K=L^*L$ holds.
In the nonconservative case, one can perturb the generators to make them actually conservative, and it was proved in \cite{siemon_unbounded_2017} that this results in a nonstandard generator.

Following \cite{siemon_unbounded_2017}, we consider the Hilbert space $\H=\ell^2(\NN_0)$.
We denote by $\ket n$, $n\in\NN_0$, the canonical basis and set $\psi_n := \braket n\psi$ for $\psi\in\ell^2(\NN_0)$.
Let $\mu_0,\mu_1,\ldots>0$ be a sequence of positive numbers.
To define the process, we introduce operators
\begin{align}
    &&&&K\ket n &= -\frac{\mu_n}2\ket n, &&& \dom K &= \{\psi\in\ell^2(\NN_0) : \sum_n \mu_n^2 \abs{\psi_n}^2 <\oo\},&& \\
    &&&&L\ket n&=\sqrt{\mu_n}\ket{n+1},&&& \dom L &= \{\psi\in\ell^2(\NN_0) : \sum_n \mu_n \abs{\psi_{n}}^2<\oo\}.
\end{align}
Let us now consider the standard generator $\L$ determined by $K$ and $L$ (see \cref{sec:standard} or \cite{inken_thesis}).
Heuristically speaking, the dynamics generated by $\L$ may be described as follows:
The states $\ket n$ may transition to states $\ket{n+1}$, and the probability of this is distributed exponentially with parameter $\mu_n$. Thus, the transition $\ket n \to \ket{n+1}$ takes a time of $\mu_n^{-1}$ on average.
Therefore, the expected time for the first state $\ket0$ to escape to infinity is
\begin{equation}\label{eq:tau}
    \tau:=\sum_{n=0}^\oo\frac 1{\mu_n} \in (0,\oo].
\end{equation}
If $\mu_n\to\oo$, transitions happen at faster and faster rates and if they grow sufficiently fast, e.g., $\mu_n=n^2$, we have $\tau<\oo$ meaning that particles escape to infinity in finite time.
While the relation $K^*+K+L^*L=0$ guarantees that $\L$ is infinitesimally conservative on the ketbra domain, i.e., $\tr\L\rho=0$ for $\rho\in(\dom K)^{\kettbra{}}$, it does not imply that $\L$ is infinitesimally conservative on its full domain $\dom\L$ (see \cref{lem:nonstandard} below).%
\footnote{Escape in finite time is not special to quantum systems. E.g., it occurs in the classical birth process \cite{siemon_unbounded_2017}.}
Indeed, the dynamics of $\L$ is conservative if and only if $\tau=\oo$.

\begin{lemma}[{\cite{siemon_unbounded_2017,inken_thesis}}]\label{lem:nonstandard}
    The dynamics generated by $\L$ is conservative if and only if $\tau=\oo$.
    If $\tau<\oo$, then $\sigma = \frac1\tau \sum_n \frac1{\mu_n} \kettbra n\in\dom\L$ and $\tr\L\sigma = -\frac1\tau<0$.
\end{lemma}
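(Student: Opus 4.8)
The plan is to reduce everything to the classical pure‑birth process with rates $\mu_n$. The first observation is that the Kraus data $(K,L)$ leave the diagonal invariant: from $L\ket n=\sqrt{\mu_n}\ket{n+1}$ and $K\ket n=-\tfrac{\mu_n}2\ket n$ one computes $\L\kettbra n=-\mu_n\kettbra n+\mu_n\kettbra{n+1}$, which is precisely the forward generator of the classical birth process ($\dot p_n=-\mu_n p_n+\mu_{n-1}p_{n-1}$). Since the approximating semigroups in the minimal construction of \cite{siemon_unbounded_2017} preserve the diagonal, so does $\{e^{t\L}\}_{t\ge0}$, and its restriction to diagonal density operators is the minimal (sub‑Markovian) classical birth semigroup with transition probabilities $P_{nm}(t)$.

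For part (1), I would use that conservativity is equivalent to $T^*(t)(1)=1$. Because each $T(t)$ is trace‑nonincreasing, $0\le T^*(t)(1)\le 1$, and the diagonal entries of $T^*(t)(1)$ are the survival probabilities $\bra n T^*(t)(1)\ket n=\tr[e^{t\L}\kettbra n]=\sum_m P_{nm}(t)$. Positivity of $1-T^*(t)(1)$ then forces $T^*(t)(1)=1$ as soon as all diagonal entries equal $1$ (if $X\ge0$ and $\bra n X\ket n=0$ for all $n$ then $X^{1/2}\ket n=0$, so $X=0$); hence the quantum dynamics is conservative iff every classical survival probability is $1$, i.e.\ iff the birth process is non‑explosive. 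By the classical explosion criterion (Reuter/Feller, as used in \cite{siemon_unbounded_2017,inken_thesis}) this holds iff $\tau=\sum_n\mu_n^{-1}=\oo$.

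For part (2), first note $\tr\sigma=\tau^{-1}\sum_n\mu_n^{-1}=1$, so $\sigma$ is a state. The key difficulty is that the naive expression $K\sigma+\sigma K^*+L\sigma L^*$ is meaningless: one finds $L\sigma L^*=\tau^{-1}\sum_n\kettbra{n+1}$ and $K\sigma+\sigma K^*=-\tau^{-1}\sum_n\kettbra n$, both of infinite trace, and it is exactly this $\oo-\oo$ indeterminacy that encodes the loss of conservativity. I would therefore bypass the ketbra domain and work with the resolvent $\R(\lambda)=(\lambda-\L)^{-1}$. Set $q:=\lambda\sigma+\tau^{-1}\kettbra 0$, a diagonal positive trace‑class operator. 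Solving the resolvent equation $(\lambda-\L)x=q$ on the diagonal by the forward recursion $x_0=q_0/(\lambda+\mu_0)$, $x_n=(q_n+\mu_{n-1}x_{n-1})/(\lambda+\mu_n)$ — which has a unique solution and, for the minimal semigroup, coincides with $\R(\lambda)q$ — one checks directly that $x_n=\tau^{-1}\mu_n^{-1}$ solves it. Hence $\sigma=\R(\lambda)q\in\Ran\R(\lambda)=\dom\L$ and $\L\sigma=\lambda\sigma-q=-\tau^{-1}\kettbra 0$, so $\tr\L\sigma=-1/\tau<0$.

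I expect the main obstacle to be part (2): showing $\sigma\in\dom\L$ even though $\sigma\notin(\dom K)^{\kettbra{}}$ and $L\sigma L^*$ is not trace class. The resolvent characterization resolves this cleanly, and as a bonus it reproves the forward direction of part (1) — if $\tau<\oo$ then $\sigma\in\dom\L$ with $\tr\L\sigma=-1/\tau\ne0$, which is incompatible with conservativity — leaving only the non‑explosion direction $\tau=\oo\Rightarrow$ conservative to the classical criterion (equivalently, to the core condition of \cref{lem:davies}).
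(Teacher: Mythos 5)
Your argument is essentially correct, but note there is nothing in the paper to compare it against: the lemma is stated with proof deferred to \cite{siemon_unbounded_2017,inken_thesis}, and those references argue, as you do, by reduction to the classical pure-birth process. Within that shared strategy your treatment is sound. The diagonal computation $\L\kettbra n=-\mu_n\kettbra n+\mu_n\kettbra{n+1}$ is right, the Dyson-type iteration defining the minimal solution does preserve the diagonal and restricts to the minimal classical birth semigroup, and your positivity argument (if $X\ge 0$ and $\bra n X\ket n=0$ for all $n$ then $X^{1/2}\ket n=0$, so $X=0$) correctly reduces conservativity of $T^*(t)(1)$ to the classical survival probabilities, where Feller's non-explosion criterion gives $\tau=\oo$. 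For part (2), you correctly identify the real difficulty --- $\sigma\notin(\dom K)^{\kettbra{}}$ and $K\sigma+\sigma K^*$, $L\sigma L^*$ are separately of infinite trace --- and the resolvent route $\sigma=\R(\lambda)q\in\Ran\R(\lambda)=\dom\L$ with $q=\lambda\sigma+\tau^{-1}\kettbra 0$ is exactly the right way around it; your recursion check that $x_n=\tau^{-1}\mu_n^{-1}$ solves $(\lambda+\mu_n)x_n-\mu_{n-1}x_{n-1}=q_n$ is correct and yields $\L\sigma=\lambda\sigma-q=-\tau^{-1}\kettbra 0$, hence $\tr\L\sigma=-1/\tau$. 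The one place where ``one checks directly'' hides genuine content is the claim that the diagonal of $\R(\lambda)q$ obeys the \emph{forward} recursion: a priori the minimal generator's action on diagonal elements of $\dom\L$ need not be given by the formal componentwise expression, so you should justify that the minimal classical resolvent satisfies the Laplace-transformed forward equation
\begin{equation*}
    r_{nm}(\lambda)=\frac{\delta_{nm}}{\lambda+\mu_n}+\frac{\mu_{m-1}}{\lambda+\mu_m}\,r_{n,m-1}(\lambda),
\end{equation*}
not merely the backward one. For a pure birth process this is immediate from the iterative construction $P_{nm}(t)=\delta_{nm}e^{-\mu_n t}+\mu_{m-1}\int_0^t e^{-\mu_m(t-s)}P_{n,m-1}(s)\,ds$ of the minimal semigroup, and since your forward recursion has a unique solution, $\R(\lambda)q=\sigma$ follows. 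With that sentence added, the proof is complete; your closing observation that part (2) independently reproves $\tau<\oo\Rightarrow$ nonconservative (as $\tr\L\sigma\ne0$ is incompatible with $\tr\L\rho=0$ on $\dom\L$) is a nice bonus consistent with \cref{lem:davies}.
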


What is interesting about the quantum birth process is that it can be used to construct a nonstandard generator by perturbing $\L$ to restore conservativity:

\begin{lemma}[{\cite{siemon_unbounded_2017,inken_thesis}}]\label{lem:nonstandard gen}
    Assume $\tau<\oo$.
    If $\chi\in\states(\H)$ is a density operator and $\L'$ is defined as 
    \begin{equation}
        \L'\rho := \L\rho -\tr[\L\rho]\,\chi, \qquad \rho\in\dom\L':=\dom\L.
    \end{equation}
    Then, $\L'$ is a nonstandard generator of a conservative quantum dynamical semigroup.
\end{lemma}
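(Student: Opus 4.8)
The plan is to realise $\L'$ as a rank-one ``feedback'' perturbation of $\L$ at the level of resolvents and then to invoke the generation machinery recalled in \cref{sec:general_open}. Write $\P\rho = -\tr[\L\rho]\,\chi$, so that $\L' = \L+\P$ on $\dom\L' = \dom\L$. The decisive observation is that $\P$, although built from the unbounded $\L$, becomes a \emph{rank-one} map after composing with the resolvent $\R(\lambda)=(\lambda-\L)^{-1}$: by \eqref{eq:resolvent_LR_eq},
\[
    \P\R(\lambda)\rho = -\tr[\L\R(\lambda)\rho]\,\chi = \big(\tr\rho - \lambda\tr[\R(\lambda)\rho]\big)\,\chi =: c_\lambda(\rho)\,\chi .
\]
Since $\lambda\R(\lambda)$ is cp and trace-nonincreasing (by \eqref{eq:laplace_trafo} one has $\tr[\R(\lambda)\rho]\le\lambda^{-1}\tr\rho$), the functional $c_\lambda$ is represented by an operator $0\le D_\lambda\le\1$, so $0\le c_\lambda(\rho)\le\tr\rho$ for $\rho\ge0$; moreover $c_\lambda(\chi)=1-\lambda\tr[\R(\lambda)\chi]<1$ because $\R(\lambda)\chi=\int_0^\oo e^{-\lambda t}T(t)\chi\,dt$ is a nonzero positive operator.

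Next I would invert $1-\P\R(\lambda)$ explicitly. As $\P\R(\lambda)$ is rank one and $c_\lambda(\chi)\ne1$, the Sherman–Morrison formula gives
\[
    (1-\P\R(\lambda))^{-1}\rho = \rho + \frac{c_\lambda(\rho)}{1-c_\lambda(\chi)}\,\chi ,
    \qquad
    \R'(\lambda) := \R(\lambda)\,(1-\P\R(\lambda))^{-1}.
\]
Using $(\lambda-\L')\R(\lambda)=1-\P\R(\lambda)$ one checks that $\lambda-\L'$ is a bijection of $\dom\L'$ onto $\T(\H)$ with inverse $\R'(\lambda)$ (injectivity follows since $(\lambda-\L')\rho=0$ gives $\lambda\tr\rho=\tr[\L'\rho]=0$, whence $\tr\rho=0$, and then $\rho=-\tr[\L\rho]\R(\lambda)\chi=0$). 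Three properties follow: (i) $\R'(\lambda)$ is completely positive, being $\R(\lambda)$ composed with the cp measure-and-prepare map $\rho\mapsto\rho+(1-c_\lambda(\chi))^{-1}c_\lambda(\rho)\chi$; (ii) $\lambda\tr[\R'(\lambda)\rho]=\tr\rho$ for all $\rho$, using $\lambda\tr[\R(\lambda)\rho]=\tr\rho-c_\lambda(\rho)$ and $\lambda\tr[\R(\lambda)\chi]=1-c_\lambda(\chi)$; (iii) the resolvent identity holds automatically, $\R'$ being the resolvent of the single closed operator $\L'$.

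By (i) and (ii) each $\lambda\R'(\lambda)$ is a quantum channel, hence a trace-norm contraction, so $\|\lambda^n\R'(\lambda)^n\|\le1$; the Hille–Yosida theorem \cite{EngelNagel} then shows $\L'$ generates a strongly continuous contraction semigroup $T'(t)=e^{t\L'}$, and \eqref{eq:T_from_R} exhibits $T'(t)=\lim_n(\tfrac nt\R'(\tfrac nt))^n$ as a trace-norm limit of cp, trace-preserving maps. Thus $\{T'(t)\}_{t\ge0}$ is a \emph{conservative} quantum dynamical semigroup.

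The main obstacle is showing that $\L'$ is \emph{nonstandard}; here I would argue by contradiction. On the ketbra domain $(\dom K)^{\kettbra{}}$ the perturbation vanishes, since $\tr[\L\ketbra\psi\phi]=\ip\phi{K\psi}+\ip{K\phi}\psi+\ip{L\phi}{L\psi}=0$ by the polarised formal conservativity \eqref{eq:L_summation_eq}; hence $\L'$ agrees with $\L$ there, i.e.\ it extends the same local expression $\ketbra\psi\phi\mapsto\ketbra{K\psi}\phi+\ketbra\psi{K\phi}+\ketbra{L\psi}{L\phi}$ whose \emph{minimal} solution is, by definition, the standard generator $\L$. If $\L'$ were itself standard with data $(K',\{L'_\alpha\})$, its conservativity and \cref{lem:davies} would force its ketbra domain to be a core, and the minimal-solution characterisation of the standard construction would identify the local data and give $\L'=\Lmin=\L$; this is absurd because $\tr[\L'\sigma]=0$ while $\tr[\L\sigma]=-\tfrac1\tau\ne0$ for the state $\sigma$ of \cref{lem:nonstandard}. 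The delicate point, where I would lean on the minimal-solution theory of \cite{siemon_unbounded_2017,inken_thesis}, is turning the coincidence of $\L'$ and $\L$ on a ketbra core into the statement that no local data can reproduce the nonlocal feedback $\rho\mapsto-\tr[\L\rho]\chi$: intuitively, the standard (minimal) dynamics injects no probability ``from infinity,'' whereas $\L'$ does exactly that through the fixed state $\chi$.
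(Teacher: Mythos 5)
The paper itself offers no proof of this lemma---it is quoted from \cite{siemon_unbounded_2017,inken_thesis}---so your proposal has to stand on its own. Its first half does. Realizing $\P\rho=-\tr[\L\rho]\,\chi$ as a rank-one perturbation at the resolvent level, computing $\P\R(\lambda)\rho=c_\lambda(\rho)\chi$ with $0\le c_\lambda(\rho)\le\tr\rho$ and $c_\lambda(\chi)<1$, inverting $1-\P\R(\lambda)$ by Sherman--Morrison (equivalently, summing the Neumann series $\sum_k\R(\lambda)(\P\R(\lambda))^k$, which is exactly the minimal-solution resolvent series the cited works use), and checking that $\lambda\R'(\lambda)$ is a channel so that Hille--Yosida and \eqref{eq:T_from_R} yield a conservative quantum dynamical semigroup---all of this is correct and essentially complete. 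Only trivia are glossed: in the injectivity step you need the intermediate observation that $\tr\rho=0$ together with $\tr[\R(\lambda)\chi]>0$ forces $\tr[\L\rho]=0$, and closedness of $\L'$ (needed for Hille--Yosida) should be noted to follow from the existence of the bounded everywhere-defined inverse $\R'(\lambda)$ of $\lambda-\L'$.

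The genuine gap is in the nonstandardness half, and it sits exactly where you flag it---but flagging it does not fill it. Your contradiction closes only if the hypothetical standard data of $\L'$ is the original pair $(K,\{L_\alpha\})$: then conservativity and \cref{lem:davies} make $(\dom K)^{\kettbra{}}$ a core for $\L'$, agreement of $\L$ and $\L'$ on that set gives $\L'\subseteq\L$, hence $\L'=\L$ (equal domains), contradicting $\tr[\L'\sigma]=0\ne-\tfrac1\tau=\tr[\L\sigma]$ for the state $\sigma$ of \cref{lem:nonstandard}. But standardness of $\L'$ only asserts the existence of \emph{some} data $(K',\{L'_\alpha\})$ whose minimal solution is $\L'$, and nothing relates $(\dom K')^{\kettbra{}}$ to $(\dom K)^{\kettbra{}}$: a rank-one operator $\kettbra\psi\in\dom\L'$ need not have $\psi\in\dom K'$, so the coincidence of $\L'$ with $\L$ on $(\dom K)^{\kettbra{}}$ cannot be transported to the data of $\L'$. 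Indeed, your own setup shows why agreement on such a ketbra set determines nothing: $\L$ and $\L'$ are two \emph{distinct} generators extending the same restriction to $(\dom K)^{\kettbra{}}$, which is a core for neither ($\tau<\oo$ and \cref{lem:davies} rule it out for $\L$, and the contradiction you derive rules it out for $\L'$). So the sentence ``the minimal-solution characterisation would identify the local data'' is not a proof step; it \emph{is} the theorem. Excluding all candidate pairs $(K',\{L'_\alpha\})$ is the actual content of the argument in \cite{siemon_unbounded_2017} (valid for strongly standard generators, cf.\ the footnote in \cref{sec:problems}), which relies on structural properties of the minimal-solution construction---roughly, that a minimal solution cannot inject probability through the fixed state $\chi$---and your closing heuristic about no probability arriving ``from infinity'' is the right intuition for that argument but is not turned into mathematics here.
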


The natural reference Hamiltonian in this setting is of the form
\begin{align}\label{eq:reference_qjump}
    G= \sum_n \epsilon_n \,\kettbra n, \qquad \dom G = \Big\{\psi\in\ell^2(\NN_0) : \sum \epsilon_n^2\abs{\psi_n}^2<\oo\Big\}
\end{align}
with eigenvalues $\epsilon_0=0\le \epsilon_1\le\epsilon_2\le \cdots$. 
Since energy-limitedness with respect to a bounded reference is trivial, we assume $\lim_n\epsilon_n=\oo$.
By \cref{thm:main}, energy-limitedness requires that 
\begin{equation}
    K^*G+GK + L^*GL = \sum_{n=0}^\oo \mu_n (\epsilon_{n+1}-\epsilon_n) \kettbra n.
\end{equation}
is dominated by $\omega(G+E_0)$ for some $\omega,E_0\ge0$, which is equivalent to
\begin{equation}\label{eq:birth_EL}
    \mu_n(\epsilon_{n+1}-\epsilon_n) \le \omega(\epsilon_n +E_0), \qquad n\in\NN_0,
\end{equation}
If escape to infinity and energy-limitedness could be true simultaneously, then \cref{lem:nonstandard gen} with, say, $\chi= \kettbra0$, would yield an energy-limited dynamical semigroup with a nonstandard generator(see \cref{sec:problems} for further discussion).
This is not the case:

\begin{proposition}\label{prop:escape}
    The following are equivalent:
    \begin{enumerate}[(a)]
        \item\label{it:escape1} Conservativity of the dynamics or, equivalently, no escape in finite time: $\tau=\oo$.
        \item\label{it:escape2} There exists an increasing sequence $(\epsilon_n)$ with $\lim_n\epsilon_n=\oo$ such that \eqref{eq:birth_EL} holds for some constants $\omega,E_0\ge0$.
    \end{enumerate}
\end{proposition}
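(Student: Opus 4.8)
The plan is to prove the two implications directly, using \cref{lem:nonstandard} to replace conservativity in \ref{it:escape1} by the arithmetic condition $\tau=\oo$, where $\tau$ is as in \eqref{eq:tau}. The key observation, which drives both directions, is that \eqref{eq:birth_EL} is secretly a \emph{multiplicative} condition: writing $a_n := \epsilon_n+E_0$ and noting $\epsilon_{n+1}-\epsilon_n = a_{n+1}-a_n$, the inequality $\mu_n(\epsilon_{n+1}-\epsilon_n)\le \omega(\epsilon_n+E_0)$ is equivalent to $a_{n+1}\le a_n\bigl(1+\tfrac{\omega}{\mu_n}\bigr)$, i.e. to $\tfrac1{\mu_n}\ge \tfrac1\omega\cdot\tfrac{a_{n+1}-a_n}{a_n}$. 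The proof then reduces to the standard fact that $\sum_n \tfrac1{\mu_n}$ and $\prod_n\bigl(1+\tfrac1{\mu_n}\bigr)$ diverge together.

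For \ref{it:escape2} $\Rightarrow$ \ref{it:escape1} I would argue as follows. First note that $\omega>0$ is forced, since $\omega=0$ would make \eqref{eq:birth_EL} read $\mu_n(\epsilon_{n+1}-\epsilon_n)\le 0$, hence $\epsilon_n$ constant, contradicting $\epsilon_n\to\oo$; and we may enlarge $E_0$ so that $E_0>0$ without affecting \eqref{eq:birth_EL}, so that $a_n\ge E_0>0$ for all $n$. Because $(\epsilon_n)$ is increasing we have $a_{n+1}-a_n\ge 0$, and the elementary bound $\log(1+t)\le t$ applied to $t=\tfrac{a_{n+1}-a_n}{a_n}$ gives $\tfrac{a_{n+1}-a_n}{a_n}\ge \log a_{n+1}-\log a_n$. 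Combining this with the rearranged inequality and telescoping yields $\sum_{n=0}^{N-1}\tfrac1{\mu_n}\ge \tfrac1\omega\bigl(\log a_N-\log a_0\bigr)$. Since $a_N=\epsilon_N+E_0\to\oo$, the right-hand side diverges as $N\to\oo$, whence $\tau=\oo$, which is \ref{it:escape1} by \cref{lem:nonstandard}.

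For \ref{it:escape1} $\Rightarrow$ \ref{it:escape2} I would simply \emph{construct} the witnessing sequence by taking equality in the multiplicative recursion. Set $\omega=E_0=1$, define $a_n:=\prod_{k=0}^{n-1}\bigl(1+\tfrac1{\mu_k}\bigr)$ (empty product equal to $1$), and put $\epsilon_n:=a_n-1$, so that $\epsilon_0=0$. Then $a_{n+1}=a_n\bigl(1+\tfrac1{\mu_n}\bigr)$ is exactly the equality case of the rearranged form, so \eqref{eq:birth_EL} holds with $\omega=E_0=1$; since each factor exceeds $1$, the sequence $(a_n)$, and hence $(\epsilon_n)$, is strictly increasing. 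Finally, expanding the product and keeping only the linear terms gives $a_n\ge 1+\sum_{k=0}^{n-1}\tfrac1{\mu_k}$, and because $\tau=\sum_k\tfrac1{\mu_k}=\oo$ this forces $a_n\to\oo$, i.e. $\epsilon_n\to\oo$. This produces an admissible $(\epsilon_n)$ and establishes \ref{it:escape2}.

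Both directions are elementary once the multiplicative structure is recognized, so I do not expect a genuine obstacle; the only points requiring care are (i) justifying that $\omega>0$ and that one may assume $E_0>0$, so that the logarithm estimate in the first implication is legitimate (the sequence $a_n$ must stay bounded away from $0$), and (ii) invoking the correct elementary comparison $\prod(1+x_k)\ge 1+\sum x_k$ for nonnegative $x_k$ rather than a more delicate convergence argument. Everything else is telescoping and the already-established equivalence of conservativity with $\tau=\oo$ from \cref{lem:nonstandard}.
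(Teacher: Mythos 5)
Your proof is correct and follows essentially the same route as the paper: the construction direction uses the same multiplicative recursion $\epsilon_{n+1}=(1+\tfrac1{\mu_n})\epsilon_n$ (in your normalization $a_{n+1}=a_n(1+\tfrac1{\mu_n})$ with $\omega=E_0=1$), and the converse rests on the same comparison of $\tau=\sum_n \tfrac1{\mu_n}$ with the telescoping ratios $a_{n+1}/a_n$, your bound $\log(1+t)\le t$ being the logarithmic form of the paper's $1+t\le e^t$. Your shift $a_n=\epsilon_n+E_0$ even handles the edge cases slightly more cleanly than the paper, which special-cases $\epsilon_1>0$ and rescales the $\mu_n$, but this is a cosmetic rather than a substantive difference.
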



\begin{proof}
    \ref{it:escape1} $\Rightarrow$ \ref{it:escape2}:
    We define the sequence recursively via $\epsilon_{n+1}= (1+\frac{1}{\mu_n})\epsilon_n$ for $n\ge1$ and $\epsilon_0=0$, $\epsilon_1=1$. 
    The sequence diverges since
    \begin{equation}
        \lim_n \epsilon_n = \prod_{n=1}^\oo \Big(1+\frac1{\mu_n}\Big) \ge 1+ \sum_{n=1}^\oo \frac1{\mu_n} = 1+ \tau- \frac1{\mu_0} =\oo,
    \end{equation}
    where we used \cref{lem:nonstandard}.
    We have $\mu_n(\epsilon_{n+1}-\epsilon_n) = \mu_0 \delta_{0,n} + \epsilon_n \le \omega(\epsilon_n+E_0)$ with $\omega=1$ and $E_0=1/\mu_0$.

    \ref{it:escape2} $\Rightarrow$ \ref{it:escape1}:
    For simplicity, we assume $\epsilon_1>0$. The general case follows similarly.
    By appropriate choice of the offset $E_0>0$, we see that there exists $\omega>0$ such that $ \mu_n(\epsilon_{n+1}-\epsilon_n) \le \omega \epsilon_n$ for all $n\in\NN$ (excluding $n=0$).
    By rescaling the $\mu_n$ with a constant, we can further assume $\omega=1$.
    We can now rearrange the resulting inequality $\mu_n(\epsilon_{n+1}-\epsilon_n)\le \epsilon_n$ to give $\frac{\eps_{n+1}}{\eps_n} \le 1+\frac1{\mu_n}$.
    Then 
    \begin{equation*}
        e^{\tau} = \prod_{n=0}^\oo e^{1/\mu_n} \ge \prod_{n=0}^\oo \Big(1+\frac1{\mu_n}\Big) \ge \prod_{n=1}^\oo \Big(1+\frac1{\mu_n}\Big) \ge \prod_{n=1}^\oo \frac{\epsilon_{n+1}}{\epsilon_n} = \epsilon_1 \cdot \lim_{n\to\oo} \epsilon_n =\oo
    \end{equation*}
    (the limits make sense because $(\epsilon_n)$ is an increasing sequence, and the infinite products make sense because each factor is $\ge 1$).
    Therefore, $\lim_n \epsilon_n= \oo$ implies $\tau=\oo$ and $\tau<\oo$ implies $\lim_n\epsilon_n<\oo$
\end{proof}

\subsection{Representations of Lie groups}\label{sec:groups}
\renewcommand\grp{\mathcal G}

In this subsection, we show that every unitary representation of a connected Lie group is energy-limited relative to a natural reference Hamiltonian, the Nelson Laplacian.
The results presented here build on \cite{lie_group_error}, where state-dependent quantum speed limits for Lie group representations (cp.\ \cref{sec:speed}).

Let $\grp$ be a Lie group with Lie algebra $\lie$ and let $U:\grp\ni g\mapsto U_g\in\U(\H)$ be a continuous unitary representation on a Hilbert space $\H$, where the unitary group $\U(\H)$ is equipped with the strong operator topology.
We equip the Lie algebra with an inner product $\ip\placeholder\placeholder_\lie$, and we denote the induced norm by $\norm X_\lie=\sqrt{\ip XX_\lie}$.
We may freely choose the inner product. 
Typically, $\lie$ is a Lie algebra of skew-symmetric real (or skew-hermitian complex) matrices and it makes sense to pick the Frobenius inner product $\ip XY_{\lie} = \tr X\trp Y$ (or $\tr X^*Y$).
Let us denote by $A:\lie\ni X\mapsto A(X)$ the induced Lie algebra representation in terms of self-adjoint operators, which is uniquely defined by
\begin{align}
    &&U_{e^{tX}}&=e^{-itA(X)},&&t\in\RR,\ X\in\lie.\\
\intertext{The dense subspace $C^\oo(U)$ of $U$-smooth vectors{\footnotemark}  
is invariant, i.e., $U_gC^\oo(U)=C^\oo(U)$ for all $g\in\grp$. Furthermore, $C^\oo(U)$ is an invariant core for all $A(X)$, $X\in\mf g$, on which the commutator relations }
    &&[A(X), A(Y)]&= i A([X,Y]),&&X,Y\in\lie,\\
    &&U_gA(X) U_g^* &= A(\Ad_gX),&& X\in\lie,\ g\in\grp,
\end{align}
hold; see \cite{nelson1959} for details.
\footnotetext{A vector $\psi\in\H$ is smooth with respect to the continuous representation $U$ of $\grp$ on $\H$ if $\grp\ni g\mapsto U_g\in\H$ is smooth with respect to the strong operator topology.}
The natural reference Hamiltonian is the Nelson Laplacian \cite{lie_group_error,nelson1959}.
To define it, we pick an orthonormal basis $\{X_i\}\subset \lie$ and set
\begin{equation}\label{eq:NL}
    \Delta= \sum A(X_i)^2.
\end{equation}
This expression makes sense on the dense subspace $C^\oo(U)$ of $U$-smooth vectors and defines an essentially self-adjoint operator \cite{nelson1959}.
The Nelson Laplacian only depends on the choice of inner product but not on the chosen basis.
It is shown in \cite{lie_group_error} that any other inner product $\ip\placeholder\placeholder_{\mf g}'$ yields an equivalent Nelson Laplacian
\begin{equation}\label{eq:Delta'}
    c \Delta' \le \Delta \le C\Delta',
\end{equation}
where $c,C>0$ are constants such that $c\ip XX_{\mf g}\le\ip XX_{\mf g}'\le C \ip XX_{\mf g}$ for all $X\in \mf g$.

\begin{lemma}\label{lem:Delta}
    Let $\alpha\in[0,1]$. The following estimates hold
    \begin{equation}
        \norm{\Ad_{g}}_{op}^{-2}\Delta \le  U_g^*\Delta U_g  \le \norm{\Ad_{g^{-1}}}_{op}^{2} \Delta,
    \end{equation}
    where $\norm{}_{op}$ denotes the operator norm with respect to $\norm{}_\lie$.
\end{lemma}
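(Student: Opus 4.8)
The plan is to reduce both operator inequalities to the submultiplicativity of the Hilbert--Schmidt norm under composition, and then to promote the resulting quadratic-form bounds from the smooth vectors $C^\oo(U)$ to the operator order of \eqref{eq:order}. First I would record the conjugation identity: since $C^\oo(U)$ is $U_g$-invariant and $U_g A(X) U_g^* = A(\Ad_g X)$ holds on it, replacing $g$ by $g^{-1}$ gives $U_g^* A(X) U_g = A(\Ad_{g^{-1}} X)$, and squaring and summing over an orthonormal basis $\{X_i\}$ of $\lie$ yields
\begin{equation}
    U_g^* \Delta U_g = \sum_i \bigl( U_g^* A(X_i) U_g\bigr)^2 = \sum_i A(\Ad_{g^{-1}} X_i)^2
\end{equation}
on $C^\oo(U)$.

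The key observation is that for a fixed $\psi \in C^\oo(U)$ the assignment $\Phi_\psi\colon \lie \to \H$, $\Phi_\psi(X) = A(X)\psi$, is a linear map out of the finite-dimensional inner product space $(\lie, \ip\placeholder\placeholder_\lie)$, and, since each $A(X_i)$ is self-adjoint,
\begin{equation}
    \ip\psi{\Delta\,\psi} = \sum_i \norm{A(X_i)\psi}^2 = \norm{\Phi_\psi}_\HS^2,
\end{equation}
a basis-independent quantity, while the identity above gives $\ip\psi{U_g^*\Delta U_g\,\psi} = \norm{\Phi_\psi\circ \Ad_{g^{-1}}}_\HS^2$. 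The upper bound is then immediate from $\norm{\Phi T}_\HS \le \norm{\Phi}_\HS \norm{T}_{op}$ with $T=\Ad_{g^{-1}}$, and the lower bound from writing $\Phi_\psi = (\Phi_\psi\circ\Ad_{g^{-1}})\circ\Ad_g$ and using $\norm{T^{-1}}_{op}=\norm{\Ad_g}_{op}$; both hold for all $\psi \in C^\oo(U)$.

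The last, and only delicate, step is to upgrade these bounds on the core to the operator order of \eqref{eq:order} for the positive self-adjoint operators $\overline\Delta$ and $U_g^*\overline\Delta U_g$. Here I would use that $C^\oo(U)$ is a form core for $\overline\Delta$ (as $\Delta$ is essentially self-adjoint there) and, being $U_g$-invariant, is also a form core for $U_g^*\overline\Delta U_g$. For $\psi \in \dom\sqrt{\overline\Delta}$, approximating by $\psi_n \in C^\oo(U)$ in the $\sqrt{\overline\Delta}$-form norm makes $(\sqrt{U_g^*\overline\Delta U_g}\,\psi_n)_n$ Cauchy by the upper bound, so closedness places $\psi$ in $\dom\sqrt{U_g^*\overline\Delta U_g} = U_g^*\dom\sqrt{\overline\Delta}$ and yields $\norm{\sqrt{U_g^*\overline\Delta U_g}\,\psi} \le \norm{\Ad_{g^{-1}}}_{op}\norm{\sqrt{\overline\Delta}\,\psi}$; applying this with $g^{-1}$ in place of $g$ shows that $U_g$ preserves $\dom\sqrt{\overline\Delta}$, so the form domain is $U_g$-invariant and the lower bound extends by the same continuity argument. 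The main obstacle is precisely this bookkeeping of form domains, i.e.\ proving the inequalities in the sense of \eqref{eq:order} rather than merely as forms on the core; everything else is the elementary observation that $\ip\psi{\Delta\psi}$ is the squared Hilbert--Schmidt norm of $X\mapsto A(X)\psi$ and that conjugation by $U_g$ precomposes this map with $\Ad_{g^{-1}}$.
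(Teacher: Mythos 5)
Your proof is correct, and it takes a more self-contained route than the paper. The paper's proof is two lines: it defines the pulled-back inner product $\ip XY'_{\lie}=\ip{\Ad_gX}{\Ad_gY}_{\lie}$, observes that its Nelson Laplacian is exactly $U_g^*\Delta U_g$, and then invokes the comparison \eqref{eq:Delta'} cited from the earlier work \cite{lie_group_error} — so all the analytic content (including the domain bookkeeping you worry about) is outsourced to that cited result. You instead re-derive the comparison from scratch: the identification $\ip\psi{\Delta\psi}=\norm{\Phi_\psi}_{\HS}^2$ with $\Phi_\psi(X)=A(X)\psi$, the fact that conjugation by $U_g$ precomposes $\Phi_\psi$ with $\Ad_{g^{-1}}$, and submultiplicativity $\norm{\Phi T}_{\HS}\le\norm\Phi_{\HS}\norm T_{op}$ give precisely the constants $\norm{\Ad_{g^{-1}}}_{op}^{2}$ and $\norm{\Ad_g}_{op}^{-2}$; your final step (form core via essential self-adjointness and $U_g$-invariance of $C^\oo(U)$, Cauchy sequences plus closedness of $\sqrt{U_g^*\bar\Delta U_g}$ to get the domain inclusion demanded by \eqref{eq:order}, then the $g\mapsto g^{-1}$ symmetry to equate the form domains and extend the lower bound) is exactly the upgrade the cited result would otherwise supply. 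Conceptually the two arguments are the same — metric comparison on $\lie$ translates into operator comparison of the associated Laplacians — but what your version buys is transparency and independence from the reference: it is in effect a proof of \eqref{eq:Delta'} itself (any second inner product is a precomposition with an invertible map, so your HS argument covers the general statement), whereas the paper's buys brevity by reduction. One cosmetic point: the hypothesis "$\alpha\in[0,1]$" in the lemma statement is vestigial (it is used only later in \cref{prop:EL_grp}), and you were right to ignore it.
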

\begin{proof}
    $\ip XY'_{\mf g} := \ip{\Ad_g X}{\Ad_g Y}_{\mf g}$ defines an inner product whose corresponding Nelson Laplacian is $\Delta'=U_g^*\Delta U_g$.
    Thus, the claim follows from \eqref{eq:Delta'}.
\end{proof}

It follows that the whole group representation is energy-limited:

\begin{proposition}\label{prop:EL_grp}
    Let $G=\Delta^\alpha-E_0^\alpha$ be the system's reference Hamiltonian, where $E_0 =\inf \Sp\Delta$ and $0\le \alpha\le1$.
    Then 
    \begin{equation}
        f_{U_g}(E) \le \norm{\Ad_{g^{-1}}}_{\op}^{2\alpha}(E+E_0^\alpha)-E_0^\alpha
    \end{equation}
    In particular, if $X\in\lie$, the unitary group $\{e^{-itA(X)}\}_{t\in\RR}$ is energy-limited with stability constants $2\alpha \norm{\mathrm{ad}_X}_{op}$, $E_0^\alpha$, i.e.,
    \begin{equation}
        f_{e^{-itA(X)}}(E) \le e^{2\abs t\alpha\norm{\ad_x}_{\op}} (E+E_0^\alpha)-E_0^\alpha.
    \end{equation}
\end{proposition}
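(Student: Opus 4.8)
The plan is to reduce the statement to the operator-inequality criterion for energy-limitedness in \cref{thm:EL_contraction} and then push the conjugation bound of \cref{lem:Delta} through the operator monotone function $t\mapsto t^\alpha$. Since $U_g$ is unitary, hence a bounded operator, \cref{lem:energy_lim}\ref{it:energy_lim1} together with \cref{thm:EL_contraction} tells us that it suffices to produce constants $\lambda,E_0'\ge0$ with $U_g^* G\,U_g \le \lambda G + E_0'$ in the sense of \eqref{eq:order}; the energy bound then reads $f_{U_g}(E)\le \lambda E + E_0'$, because $\energy[U_g\psi]=\ip\psi{U_g^* G U_g\psi}\le\lambda\energy[\psi]+E_0'$ for unit vectors with $\energy[\psi]\le E$.

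First I would pass from $\Delta$ to $\Delta^\alpha$. Conjugation by a unitary commutes with Borel functional calculus, so $U_g^*\Delta^\alpha U_g = (U_g^*\Delta U_g)^\alpha$, and since $t\mapsto t^\alpha$ is operator monotone on $[0,\oo)$ for $0\le\alpha\le1$, the upper bound $U_g^*\Delta U_g\le \norm{\Ad_{g^{-1}}}_{\op}^{2}\,\Delta$ of \cref{lem:Delta} is preserved under the $\alpha$-power; here I would invoke the unbounded-operator version of operator monotonicity in \cref{thm:op_monotone}. Using $(c\Delta)^\alpha = c^\alpha\Delta^\alpha$ for the scalar $c=\norm{\Ad_{g^{-1}}}_{\op}^{2}$, this yields $U_g^*\Delta^\alpha U_g \le \norm{\Ad_{g^{-1}}}_{\op}^{2\alpha}\Delta^\alpha$. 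Substituting $\Delta^\alpha = G + E_0^\alpha$ (noting that $E_0^\alpha=(\inf\Sp\Delta)^\alpha=\inf\Sp\Delta^\alpha$, so $G\ge0$ indeed has vanishing ground state energy) and cancelling the scalar $E_0^\alpha$ gives exactly $U_g^* G\,U_g \le \norm{\Ad_{g^{-1}}}_{\op}^{2\alpha}(G + E_0^\alpha) - E_0^\alpha$.

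Reading off $\lambda=\norm{\Ad_{g^{-1}}}_{\op}^{2\alpha}$ and $E_0'=(\lambda-1)E_0^\alpha$ and applying \cref{thm:EL_contraction} then produces the first displayed bound $f_{U_g}(E)\le\norm{\Ad_{g^{-1}}}_{\op}^{2\alpha}(E+E_0^\alpha)-E_0^\alpha$; the domain hypothesis of \cref{thm:EL_contraction}, namely that $U_g$ maps $\dom\sqrt G=\dom\Delta^{\alpha/2}$ (using \cref{lem:elementary}\ref{it:elementary1}) into itself, is automatic once the $\alpha$-power inequality is read in the domain-inclusion sense of \eqref{eq:order}. I expect the delicate point to be precisely this functional-calculus bookkeeping for \emph{unbounded} operators: one must ensure that the $\alpha$-power inequality, the conjugation identity $U_g^*\Delta^\alpha U_g=(U_g^*\Delta U_g)^\alpha$, and the square-root domains all hold rigorously in the sense of \eqref{eq:order}, which is why I route the monotonicity through \cref{thm:op_monotone} rather than a naive spectral argument.

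Finally, for the one-parameter group $U(t)=e^{-itA(X)}=U_{e^{tX}}$ I would specialize $g=e^{tX}$, so that $g^{-1}=e^{-tX}$ and $\Ad_{g^{-1}}=e^{-t\,\ad_X}$. Submultiplicativity of the operator norm along the exponential series gives $\norm{\Ad_{e^{-tX}}}_{\op}=\norm{e^{-t\,\ad_X}}_{\op}\le e^{\abs t\,\norm{\ad_X}_{\op}}$ for every $t\in\RR$, and inserting this into the bound above yields $f_{U(t)}(E)\le e^{2\alpha\abs t\,\norm{\ad_X}_{\op}}(E+E_0^\alpha)-E_0^\alpha$. Comparing with \eqref{eq:fU_bound}, this is exactly energy-limitedness in both time directions with stability constants $\omega=2\alpha\norm{\ad_X}_{\op}$ and $E_0^\alpha$.
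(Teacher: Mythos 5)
Your proposal is correct and follows essentially the same route as the paper: the conjugation bound of \cref{lem:Delta}, pushed through the operator-monotone power $t\mapsto t^\alpha$ to get $U_g^*GU_g\le\norm{\Ad_{g^{-1}}}_{\op}^{2\alpha}(G+E_0^\alpha)-E_0^\alpha$, read off as an affine bound on $f_{U_g}$, and then $\norm{\Ad_{e^{-tX}}}_{\op}=\norm{e^{-t\ad_X}}_{\op}\le e^{\abs t\norm{\ad_X}_{\op}}$ for the second claim. The only cosmetic differences are that you invoke \cref{thm:EL_contraction} where the paper cites the equivalent characterization \cref{thm:dual_prob_Enorm}, and that you spell out (correctly, via \cref{thm:op_monotone} and $U_g^*\Delta^\alpha U_g=(U_g^*\Delta U_g)^\alpha$) the unbounded functional-calculus step the paper dismisses as straightforward.
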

\begin{proof}
    The first claim is straightforward from \cref{lem:Delta,thm:dual_prob_Enorm}.
    The second claim follows from the first one and the estimate $\norm{\Ad_{e^{tX}}}_{\op}=\norm{e^{t\ad_X}}_{\op}\le e^{\abs t\norm{\ad_X}_{\op}}$.
\end{proof}

\begin{example}[Metaplectic representation]
    Let $G=\textrm{Mp}(2n,\RR)$ be the metaplectic group, i.e., the two-fold cover of the symplectic group $\mathrm{Sp}(2n,\RR)$.
    The metaplectic group has a natural continuous representation $U$ on $\H=L^2(\RR^n)$ such that
    \begin{equation}
        U_{e^X} = e^{-\frac i2 R\trp \sigma X R}, \qquad X\in\lie=\mathfrak{sp}(2m,\RR),
    \end{equation}
    where $\sigma$ denotes the symplectic matrix and $R$ is the vector of canonical operators (see \cref{sec:gaussian}).
    In \cite{lie_group_error} the Nelson Laplacian of this representation is shown to be the squared Harmonic oscillator (plus a constant).
    Therefore, the metaplectic group is energy-limited with respect to $G=N^2$ and with respect to $G=N$, where $N$ denotes the number operator.
\end{example}

We can explicitly estimate the ECO norm of the infinitesimal generators $A(X)$:

\begin{lemma}\label{lem:enorm_grp}
    Let the reference Hamiltonian be the grounded Nelson Laplacian $G=\Delta-E_0$, then the ECO norm of $A(X)$, $X\in\lie$, is given by
    \begin{equation}
        \norm{A(X)}_{\op,E} \le \norm{X}_\lie \sqrt{E+E_0}.
    \end{equation}
\end{lemma}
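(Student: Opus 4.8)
The plan is to reduce the claim to the single quadratic-form inequality
\begin{equation}\label{eq:plan_form_ineq}
    A(X)^2 \le \norm X_\lie^2\,\Delta \qquad\text{on } C^\oo(U),
\end{equation}
and then to read off the energy-constrained operator norm directly from its definition, using that $\Delta=G+E_0$. First I would invoke item \ref{it:norm_dense} of \cref{lem:ECD}: since $C^\oo(U)$ is a core for $\sqrt\Delta=\sqrt{G+E_0}$, and hence (by \cref{lem:elementary}\,\ref{it:elementary1}) for $\sqrt G$, the supremum defining $\norm{A(X)}_{\op,E}$ may be restricted to unit vectors $\psi\in C^\oo(U)$ with $\energy[\psi]\le E$. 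On such $\psi$ the operator $A(X)$ acts genuinely, so it suffices to bound $\norm{A(X)\psi}^2=\ip\psi{A(X)^2\psi}$.

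The key step is a Cauchy--Schwarz estimate. Writing $X=\sum_i c_iX_i$ in the chosen orthonormal basis, with $c_i=\ip{X_i}X_\lie$ and $\sum_i c_i^2=\norm X_\lie^2$, linearity of the Lie algebra representation gives $A(X)\psi=\sum_i c_iA(X_i)\psi$. Applying the triangle inequality and then Cauchy--Schwarz to the finite numerical sum yields
\begin{equation}\label{eq:plan_cs}
    \norm{A(X)\psi}^2=\big\lVert \sum_i c_iA(X_i)\psi\big\rVert^2 \le \Big(\sum_i c_i^2\Big)\Big(\sum_i \norm{A(X_i)\psi}^2\Big) = \norm X_\lie^2\,\ip\psi{\Delta\psi},
\end{equation}
where the last equality is exactly the definition \eqref{eq:NL} of the Nelson Laplacian. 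This is the form statement \eqref{eq:plan_form_ineq}.

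To finish, since $G=\Delta-E_0$, for a unit vector $\psi\in\dom\sqrt G$ the identity \eqref{eq:energy_functional_psi} gives $\ip\psi{\Delta\psi}=\ip\psi{G\psi}+E_0=\energy[\psi]+E_0$. Combining this with $\energy[\psi]\le E$ in \eqref{eq:plan_cs} produces $\norm{A(X)\psi}^2\le\norm X_\lie^2(E+E_0)$, and taking the supremum over admissible $\psi$ yields the claimed bound after a square root. The argument is essentially a one-line form inequality, so I do not expect a serious obstacle; the only points requiring care are the passage to the core $C^\oo(U)$, which is exactly what \cref{lem:ECD}\,\ref{it:norm_dense} provides, and bookkeeping of the constant shift $E_0$ relating $\Delta$ and $G$. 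Equivalently, one may phrase the same computation as the operator inequality $A(X)^*A(X)\le \norm X_\lie^2\,G+\norm X_\lie^2E_0$ (using $A(X)^*A(X)=A(X)^2$ by self-adjointness) and invoke the semidefinite characterization of \cref{thm:dual_prob_Enorm} with $\lambda=\norm X_\lie^2$ and offset $\norm X_\lie^2E_0$, which gives $\norm{A(X)}_{\op,E}^2\le\lambda E+\norm X_\lie^2E_0=\norm X_\lie^2(E+E_0)$.
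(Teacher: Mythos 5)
Your proposal is correct and follows essentially the same route as the paper: the paper's proof is exactly your second formulation, citing \cite[Lem.~4]{lie_group_error} for the operator inequality $A(X)^2\le\norm X_\lie^2\,\Delta$ and then invoking the semidefinite characterization of \cref{thm:dual_prob_Enorm} (your bookkeeping $\lambda=\norm X_\lie^2$, offset $\norm X_\lie^2E_0$ is right). The only difference is that you supply a self-contained proof of the cited inequality via the triangle inequality and Cauchy--Schwarz in an orthonormal basis of $\lie$, together with the core argument via \cref{lem:ECD}\,\ref{it:norm_dense}; both steps are sound, since $C^\oo(U)$ is a core for $\Delta$ and hence for $\sqrt G$, and the finite sum $\sum_i\norm{A(X_i)\psi}^2=\ip\psi{\Delta\psi}$ uses only symmetry of the $A(X_i)$ on the invariant domain $C^\oo(U)$.
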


\begin{proof}
    It is proved in \cite[Lem.~4]{lie_group_error} that  $A(X)^2\le \norm{X}_\lie^2 \Delta$, which by \cref{thm:dual_prob_Enorm} implies the claim.
\end{proof}

\section{Applications}\label{sec:applications}

In this section, we will show that the combination of energy-limited dynamics, energy-constrained norms and the submultiplicativity estimates allows one to prove state-dependent continuity bounds in infinite-dimensions by paralleling arguments from the finite-dimensional case.

\subsection{Quantum speed limits}\label{sec:speed}

Here, we present a simple application of the submultiplicativity estimate from \cref{thm:submultiplicativity} to quantum speed limits.
Let us start with the case of unitary dynamics:

\begin{proposition}\label{prop:qsl1}
    Let $H_1$ and $H_2$ be self-adjoint operators generating energy-limited unitary groups $U_1(t)$ and $U_2(t)$, respectively. 
    Let $\D$ be a $U_2(t)$-invariant core for $\sqrt G$ with $\D\subset \dom H_1,\dom H_2$.
    Let $\omega,E_0$ be stability constants for $U_2(t)$.
    Then, for a state vector $\psi\in\H$ with energy $E=\energy[\psi]$, we have 
    \begin{align}\label{eq:qsl1}
        \norm{U_1(t)\psi-U_2(t)\psi} 
        &\le \abs t\,\norm{H_1-H_2}_{\op,f_t(E)}, 
    \end{align}
    where $f_t(E)=E+(e^{\omega\abs t}-1)(E+E_0)$.
\end{proposition}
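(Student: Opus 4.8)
The plan is to use the standard integration--differentiation trick to reduce the left-hand side to an integral of a commutator-type term, and then to apply the submultiplicativity estimate \eqref{eq:submultiplicativity1} of \cref{thm:submultiplicativity} to control the energy-constrained operator norm of $H_1 - H_2$ along the trajectory. First I would assume $t > 0$ (the case $t<0$ is symmetric) and consider the $\H$-valued function $s \mapsto U_1(s) U_2(t-s)\psi$, whose endpoints at $s=0$ and $s=t$ are exactly $U_2(t)\psi$ and $U_1(t)\psi$. Differentiating gives
\begin{equation}
    \frac{d}{ds}\, U_1(s)U_2(t-s)\psi = i\,U_1(s)(H_1 - H_2)U_2(t-s)\psi,
\end{equation}
so that by the fundamental theorem of calculus
\begin{equation}
    U_1(t)\psi - U_2(t)\psi = i\int_0^t U_1(s)(H_1-H_2)U_2(t-s)\psi\,ds.
\end{equation}
The role of the hypotheses $\D \subset \dom H_1, \dom H_2$ and the $U_2(t)$-invariant core assumption is to guarantee that this differentiation is legitimate: for $\psi \in \D$ the vector $U_2(t-s)\psi$ stays in $\dom H_1 \cap \dom H_2$, so the derivative exists strongly and the integrand is continuous.

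Next I would take norms and pull the integral outside:
\begin{equation}
    \norm{U_1(t)\psi - U_2(t)\psi} \le \int_0^t \norm{(H_1-H_2)U_2(t-s)\psi}\,ds,
\end{equation}
using that $U_1(s)$ is unitary. Now the key point: the vector $\phi_s := U_2(t-s)\psi$ has energy bounded by $\energy[\phi_s] \le f_{U_2(t-s)}(E) \le f_t(E)$, since $U_2$ is energy-limited with stability constants $\omega, E_0$ and $f_{U_2(\tau)}(E)$ is nondecreasing in $\tau \in [0,t]$ (here $f_t(E) = E + (e^{\omega|t|}-1)(E+E_0)$ as in \eqref{eq:fU_bound}/\cref{thm:first_order}). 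By the very definition of the ECO norm \eqref{eq:norm1}, each term $\norm{(H_1-H_2)\phi_s}$ is bounded by $\norm{H_1-H_2}_{\op, f_t(E)}\,\norm{\phi_s} = \norm{H_1-H_2}_{\op,f_t(E)}$, because $\norm{\phi_s}=1$ and $\energy[\phi_s]\le f_t(E)$. Integrating the constant bound over $[0,t]$ yields exactly $|t|\,\norm{H_1-H_2}_{\op,f_t(E)}$.

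The main obstacle is the infinite-dimensional regularity: ensuring the differentiation and the resulting integral identity hold rigorously, and then extending the final bound from the core $\D$ to an arbitrary state vector $\psi$ with $\energy[\psi] = E$. For the extension I would approximate a general $\psi \in \dom\sqrt G$ with $\energy[\psi]\le E$ by vectors $\psi_n \in \D$ converging in the $\sqrt G$-graph norm (possible since $\D$ is a core for $\sqrt G$), apply the established bound to each $\psi_n$, and pass to the limit using strong continuity of $U_1(t), U_2(t)$ together with lower semicontinuity of the energy functional \eqref{eq:lsc} to control the energy of the limit; the ECO norm on the right is stable under this since it only depends on the fixed threshold $f_t(E)$. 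A small care point is that the approximants may have slightly larger energy, but since $\norm{H_1-H_2}_{\op,E'}$ is continuous and nondecreasing in $E'$ (by \eqref{eq:concavity_ineq_opE}), choosing $\psi_n$ with $\energy[\psi_n]\to E$ keeps the right-hand side under control in the limit.
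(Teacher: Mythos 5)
Your overall strategy—the integration–differentiation trick along $s\mapsto U_1(s)U_2(t-s)\psi$, the energy bound $\energy[U_2(t-s)\psi]\le f_t(E)$ from energy-limitedness, and the definition of the ECO norm—is the same as the paper's (which proves the proposition via \cref{lem:qsl_lemma}), and your closing approximation argument for general $\psi$ is sound (the paper gets this for free from item \ref{it:norm_dense} of \cref{lem:ECD}, since $U_1(t)-U_2(t)$ is bounded). However, there is one unjustified step: you differentiate the vector-valued function $s\mapsto U_1(s)U_2(t-s)\psi$ \emph{strongly} and assert that "the derivative exists strongly and the integrand is continuous." The hypotheses do give existence of the pointwise strong derivative, because $U_2(t-s)\psi\in\D\subset\dom H_1\cap\dom H_2$ by the $U_2(t)$-invariance of $\D$; but they do \emph{not} give continuity of $s\mapsto H_1U_2(t-s)\psi$. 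Nothing controls how $H_1$ varies along the $U_2$-orbit: $\D$ is not assumed $U_1(t)$-invariant and $H_1$ need not commute with $U_2$, so $H_1$ restricted to the orbit need not be continuous in any usable sense. Without continuity (or at least Bochner integrability) of the derivative, the vector-valued fundamental theorem of calculus does not apply as written, and your displayed integral identity is not yet justified.

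The paper sidesteps exactly this by working weakly: it pairs with a second unit vector $\phi\in\D$ and differentiates the \emph{scalar} function $s\mapsto\ip{U_1(s-t)\phi}{U_2(s)\psi}$, whose derivative equals $i\ip{\phi}{U_1(t-s)(H_1-H_2)U_2(s)\psi}$ (self-adjointness of $H_1$ plus $U_2(s)\psi\in\dom H_1$ lets one move $H_1$ across) and is bounded by $\norm{H_1-H_2}_{\op,f_s(E)}$ via \cref{thm:submultiplicativity}; a differentiable scalar function with bounded derivative is Lipschitz, so the scalar fundamental theorem of calculus applies, and \eqref{eq:Enorm'} converts the resulting weak bound into the ECO-norm bound. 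Your argument can be repaired either this way, or by noting that your strong derivative, which exists at every $s$, is measurable (a pointwise limit of continuous difference quotients) and norm-bounded by the finite constant $\norm{H_1-H_2}_{\op,f_t(E)}$, hence Bochner integrable, so the vector-valued fundamental theorem of calculus still goes through. Two minor points: $f_{U_2(\tau)}(E)$ itself need not be monotone in $\tau$—what is monotone is the affine bound $E+(e^{\omega\tau}-1)(E+E_0)$, which is all you need—and the paper's lemma actually yields the slightly sharper estimate $\int_0^{\abs t}\norm{H_1-H_2}_{\op,f_s(E)}\,ds$, of which \eqref{eq:qsl1} is an immediate consequence.
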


The ECO norm appearing in \eqref{eq:qsl1} is defined as in item \ref{it:norm_dense} of \cref{lem:ECD} by optimizing the distance over energy-constrained state vectors in $\D$. It is finite if and only if $H_1-H_2$ is $\sqrt G$-bounded on $\D$.
Thus, if one wants to apply this to, say, quadratic bosonic Hamiltonians, the reference Hamiltonian needs to be something like the squared number operator.
\Cref{prop:qsl1} follows directly from the following Lemma:
\begin{lemma}\label{lem:qsl_lemma}
    Under the assumption of \cref{prop:qsl1}, it holds that 
    \begin{equation}
        \norm{U_1(t)-U_2(t)}_{\op,E} \le \int_0^{\abs t} \norm{H_1-H_2}_{\op,f_s(E)}\,ds.
    \end{equation}
\end{lemma}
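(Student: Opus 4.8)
The plan is to run the integration--differentiation trick vectorwise over the core $\D$. Fix a unit vector $\psi\in\D$ with $\energy[\psi]\le E$ and suppose first $t\ge 0$. Consider the $\H$-valued curve $G(s):=U_1(t-s)U_2(s)\psi$ for $s\in[0,t]$, which interpolates between $G(0)=U_1(t)\psi$ and $G(t)=U_2(t)\psi$. I would show that $G$ is differentiable on $(0,t)$ with
\[
    G'(s)=iU_1(t-s)(H_1-H_2)U_2(s)\psi ,
\]
so that the mean value inequality for Banach-space-valued curves gives $\norm{(U_1(t)-U_2(t))\psi}=\norm{G(t)-G(0)}\le\int_0^t\norm{G'(s)}\,ds$. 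Since $U_1(t-s)$ is unitary, $\norm{G'(s)}=\norm{(H_1-H_2)U_2(s)\psi}$, which reduces everything to controlling the energy of $U_2(s)\psi$ and applying the definition of the ECO norm.

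The differentiability of $G$ is the technical heart, and it is exactly here that the hypotheses on $\D$ are used. Because $\D$ is a $U_2(t)$-invariant core for $\sqrt G$ with $\D\subset\dom H_1\cap\dom H_2$, every vector $U_2(s)\psi$ again lies in $\D\subset\dom H_1\cap\dom H_2$. I would establish the product rule by splitting the difference quotient as
\[
    \tfrac1h\big(G(s+h)-G(s)\big)=U_1(t-s-h)\,\tfrac1h\big(U_2(s+h)\psi-U_2(s)\psi\big)+\tfrac1h\big(U_1(t-s-h)-U_1(t-s)\big)U_2(s)\psi .
\]
In the first summand the difference quotient converges in norm to $-iH_2U_2(s)\psi$ (as $U_2(s)\psi\in\dom H_2$) while $U_1(t-s-h)\to U_1(t-s)$ strongly; the elementary estimate $\norm{A_hv_h-Av}\le\norm{v_h-v}+\norm{(A_h-A)v}$ for contractions $A_h$ then lets me pass to the limit, giving $-iU_1(t-s)H_2U_2(s)\psi$. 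The second summand converges to $iU_1(t-s)H_1U_2(s)\psi$ because $U_2(s)\psi\in\dom H_1$. Adding the two yields the claimed $G'(s)$, and $G$ is continuous on all of $\RR$ by strong continuity of both groups.

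It remains to bound $\norm{(H_1-H_2)U_2(s)\psi}$. Since $U_2(t)$ is energy-limited with stability constants $\omega,E_0$, the bound \eqref{eq:fU_bound} gives $\energy[U_2(s)\psi]\le f_{\abs s}(E)$ for the unit vector $U_2(s)\psi$ (forward energy-limitedness for $s\ge 0$, backward for $s<0$). As $U_2(s)\psi\in\D$ is a unit vector with this energy, the definition of the ECO norm over the core $\D$ (item \ref{it:norm_dense} of \cref{lem:ECD}) together with its monotonicity in the energy parameter \eqref{eq:concavity_ineq_opE} yields
\[
    \norm{(H_1-H_2)U_2(s)\psi}\le\norm{H_1-H_2}_{\op,\energy[U_2(s)\psi]}\le\norm{H_1-H_2}_{\op,f_{\abs s}(E)} .
\]
Because $s\mapsto f_{\abs s}(E)$ is continuous and $E'\mapsto\norm{H_1-H_2}_{\op,E'}$ is concave and nondecreasing, hence continuous (\cref{thm:dual_prob_Enorm}), the right-hand side is a continuous integrable majorant; this legitimizes the mean value inequality and gives $\norm{(U_1(t)-U_2(t))\psi}\le\int_0^{t}\norm{H_1-H_2}_{\op,f_s(E)}\,ds$ for $t\ge 0$. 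Taking the supremum over admissible $\psi\in\D$ proves the claim for $t\ge 0$; for $t<0$ the identical computation with $\int_0^t=-\int_t^0$ and the substitution $u=\abs s$ turns the integration domain into $[0,\abs t]$, producing the stated bound in general.

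The main obstacle is the rigorous product-rule differentiation with two unbounded generators: the invariance $U_2(s)\D\subset\D$ and the inclusion $\D\subset\dom H_1\cap\dom H_2$ are precisely what keep every vector appearing in $G'(s)$ inside the relevant domains, and mixing the strong convergence of $U_1(t-s-h)$ with the norm convergence of the $U_2$-difference quotient is the one step requiring care. Everything else assembles standard facts about the ECO norm recorded in \cref{sec:norms}.
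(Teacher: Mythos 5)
Your proof is correct and takes essentially the same route as the paper: the same interpolating curve $U_1(t-s)U_2(s)\psi$, the same use of the $U_2(t)$-invariance of $\D$ to keep $U_2(s)\psi$ in $\dom H_1\cap\dom H_2$, and the same final estimate $\norm{(H_1-H_2)U_2(s)\psi}\le\norm{H_1-H_2}_{\op,f_s(E)}$ via energy-limitedness of $U_2$. The only difference is technical: you differentiate the vector-valued curve in norm and invoke a Banach-valued mean value inequality (sound, thanks to the continuous majorant you exhibit), whereas the paper differentiates the scalar pairing $\ip{U_1(s-t)\phi}{U_2(s)\psi}$ for $\phi\in\D$ and recovers the ECO norm through \eqref{eq:Enorm'}, which sidesteps the vector-valued product rule and mean value theorem entirely since the scalar derivative is bounded.
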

\begin{proof}
    Without loss of generality we assume $t>0$.
    Let $\psi,\phi\in\D$ be unit vectors and put $E=\energy[\psi]$.
    By assumption, $U_1(t)\phi$ and $U_2(t)\psi$ are differentiable in $t$. We find
    \begin{align*}
        \abs{\ip\phi{(U_1(t)-U_2(t))\psi}}
        &=\bigg|\int_0^t \frac{d}{ds} \ip{U_1(s-t)\phi}{U_2(s)\psi} \,ds\bigg|\\
        &=\bigg|\int_0^t \big(\ip{H_1U_1(s-t)\phi}{U_2(s)\psi}-\ip{U_1(s-t)\phi}{H_2U_2(s)\psi}\big)\,ds\bigg| \\
        &=\bigg|\int_0^t \ip\phi{U_1(t-s)(H_1-H_2)U_2(s)\psi}\,ds\bigg| \\
        &\le \int_0^t \norm{(H_1-H_2)U_2(s)\psi}\,ds \le \int_0^{t} \norm{(H_1-H_2)}_{\op,f_s(E)}\, ds
    \end{align*}
    where we used \cref{thm:submultiplicativity}.
    By \eqref{eq:Enorm'} this gives the desired bound on the ECO norm.
\end{proof}

If the generators come from a continuous Lie group representation, we can use the Nelson Laplacian (see \cref{sec:groups}) and \cref{prop:EL_grp} to make the quantum speed limit explicit:

\begin{corollary}\label{cor:qsl_grp}
    Let $g\mapsto U_g$ be a continuous representation of a connected Lie group $\grp$ and let $X\mapsto A(X)$ be the induced Lie algebra representation.
    Pick some inner product on the Lie algebra $\lie$ and let $\Delta$ be the corresponding Nelson Laplacian (see \cref{sec:groups}).
    Then
    \begin{equation}\label{eq:qsl_grp}
        \norm{e^{-iA(X)}\psi - e^{-iA(Y)}\psi} \le  
        \tfrac{e^{\omega}-1}\omega \norm{X-Y}_\lie \norm{\sqrt\Delta\psi}, \qquad \psi\in\dom\sqrt\Delta,
    \end{equation}
    where $\omega=\min\{\norm{\ad_X}_{op},\norm{\ad_Y}_{op}\}$ and $\tfrac{e^\omega-1}\omega=:1$ if $\omega=0$.
\end{corollary}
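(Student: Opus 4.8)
The plan is to deduce \cref{cor:qsl_grp} by feeding the Lie-algebraic estimates of \cref{sec:groups} into the integral form of the quantum speed limit, \cref{lem:qsl_lemma}, taken at time $t=1$. Concretely, I would choose $H_1=A(X)$, $H_2=A(Y)$ and use the grounded Nelson Laplacian $G=\Delta-E_0$, with $E_0=\inf\Sp\Delta$, as the reference Hamiltonian. The two groups are then $U_1(t)=e^{-itA(X)}$ and $U_2(t)=e^{-itA(Y)}$, and the whole point is that the additive stability constant produced by \cref{prop:EL_grp} coincides with the grounding constant $E_0$, so that the function $f_s$ appearing in \cref{lem:qsl_lemma} collapses to a pure exponential.

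First I would check the hypotheses. The space $C^\infty(U)$ of $U$-smooth vectors is invariant under every $U_g$, in particular under $U_2(t)$; it is contained in $\dom A(X)\cap\dom A(Y)$; and it is a core for $\Delta$, hence for $\sqrt\Delta=\sqrt{G+E_0}$ by \cref{lem:elementary}, so $\D:=C^\infty(U)$ is an admissible $U_2(t)$-invariant core for $\sqrt G$. By \cref{prop:EL_grp} with $\alpha=1$, the group $U_2(t)$ is energy-limited with stability constants $\omega'=2\norm{\ad_Y}_{op}$ and $E_0$, so $f_s(E)=E+(e^{\omega' s}-1)(E+E_0)$ for $s\ge0$.

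Then I would apply \cref{lem:qsl_lemma} at $t=1$, use linearity $A(X)-A(Y)=A(X-Y)$, and bound the integrand by the ECO-norm estimate of \cref{lem:enorm_grp}, namely $\norm{A(X-Y)}_{op,f_s(E)}\le\norm{X-Y}_\lie\sqrt{f_s(E)+E_0}$. The key algebraic simplification is
\begin{equation}
    f_s(E)+E_0=(E+E_0)e^{\omega' s},\qquad \sqrt{f_s(E)+E_0}=\sqrt{E+E_0}\,e^{\norm{\ad_Y}_{op}\,s},
\end{equation}
using $\omega'/2=\norm{\ad_Y}_{op}$. For a unit vector $\psi$ with $E=\energy[\psi]$ one has $E+E_0=\ip\psi{\Delta\psi}=\norm{\sqrt\Delta\psi}^2$, so the elementary integral $\int_0^1 e^{\norm{\ad_Y}_{op}s}\,ds=(e^{\norm{\ad_Y}_{op}}-1)/\norm{\ad_Y}_{op}$ yields
\begin{equation}
    \norm{(U_1(1)-U_2(1))\psi}\le \norm{U_1(1)-U_2(1)}_{op,E}\le \frac{e^{\norm{\ad_Y}_{op}}-1}{\norm{\ad_Y}_{op}}\,\norm{X-Y}_\lie\,\norm{\sqrt\Delta\psi}.
\end{equation}
Since the ECO norm taken over the core $\D$ equals the full ECO norm by \cref{lem:ECD}, this bound extends from $C^\infty(U)$ to all $\psi\in\dom\sqrt\Delta$, and the non-unit case follows by homogeneity of both sides in $\psi$.

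Finally I would symmetrize: repeating the argument with the roles of $X$ and $Y$ swapped (treating $U_1(t)$ as the energy-limited reference dynamics) gives the same bound with $\norm{\ad_X}_{op}$ in place of $\norm{\ad_Y}_{op}$. As $\lambda\mapsto(e^\lambda-1)/\lambda$ is increasing on $(0,\infty)$ and tends to $1$ as $\lambda\to0^+$, keeping the smaller of the two bounds replaces the rate by $\omega=\min\{\norm{\ad_X}_{op},\norm{\ad_Y}_{op}\}$ and simultaneously covers the degenerate case $\omega=0$, where the integral is simply $\int_0^1 1\,ds=1$. I do not anticipate a real obstacle here, as every ingredient is already available; the only point requiring care is the bookkeeping that makes $f_s(E)+E_0$ exponential, which is precisely what turns the crude factor $e^{\omega}$ of the non-integrated estimate \eqref{eq:qsl1} into the sharper $(e^\omega-1)/\omega$.
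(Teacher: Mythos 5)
Your proposal is correct and follows essentially the same route as the paper's proof: \cref{prop:EL_grp} (with $\alpha=1$) for the stability constants, \cref{lem:enorm_grp} for the ECO norm of $A(X-Y)$, and \cref{lem:qsl_lemma} at $t=1$, with the key identity $f_s(E)+E_0=(E+E_0)e^{2\norm{\ad_Y}_{op}s}$ collapsing the integrand to a pure exponential and the minimum over $\{\norm{\ad_X}_{op},\norm{\ad_Y}_{op}\}$ obtained by swapping the roles of the two groups. Your write-up is in fact slightly more careful than the paper's, which leaves the choice $\D=C^\infty(U)$, the symmetrization step, and the factor-of-two bookkeeping between $\omega$ and $2\norm{\ad_Y}_{op}$ implicit.
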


The operator norm of $\ad_X = [X,\placeholder]$ is taken relative to the chosen inner product on $\lie$.

\begin{proof}
    Let $E_0$ be the ground state energy of $\Delta$ and take $G=\Delta-E_0$ as the reference Hamiltonian. Set $f_t(E)=E+(e^{\omega t}-1)(E+E_0)$.
    By \cref{prop:EL_grp,lem:enorm_grp,lem:qsl_lemma}, we have
    \begin{align*}
        \norm{e^{-iA(X)}-e^{-iA(Y)}}_{\op,E} \le \int_0^1 \norm{A(X)-A(Y)}_{\op,f_s(E)}\,ds \le \int_0^1 \norm{X-Y}_\lie \sqrt{e^{2s\omega}(E+E_0)}\,ds.
    \end{align*}
    The right hand side equals $\norm{X-Y}_\lie \sqrt{E+E_0}$ times $\int_0^1 e^{s\omega}ds = \frac{e^{\omega}-1}\omega$.
    The claim follows because $\norm{\sqrt{\Delta}\psi}= \sqrt{\energy[\psi]+E_0}$ for unit vectors $\psi\in\dom\sqrt G$.
\end{proof}

The case $Y=0$ in \cref{cor:qsl_grp} yields $\norm{e^{-iA(X)}\psi-\psi}\le \norm X_\lie \norm{\sqrt\Delta\psi}$ which is precisely the estimate used in \cite{lie_group_error} to derive the bound $\norm{U_g\psi-U_h\psi}\le d(g,h)\norm{\sqrt\Delta\psi}$ for general group elements $g,h\in\grp$, where $d$ is a left-invariant metric on $\grp$.
However, the metric $d$ is rather hard to estimate and, in applications, one relies on the upper bound $d(g,h)\le \norm{\log(g^{-1}h)}_\lie$ \cite{lie_group_error}, which requires one to find a logarithm of $g^{-1}h$. 
The estimate \eqref{eq:qsl_grp}, which involves only infinitesimal objects, seems better suited for treating quantum speed limits with Hamiltonians coming from a Lie algebra representation.

A similar technique works for open systems and gives:
\begin{proposition}\label{prop:qsl2}
    Let $\L_1$ and $\L_2$ be generators of energy-limited dynamical semigroups $\{T_i(t)\}_{t\ge0}$. Let $\omega,E_0$ be stability constants for $\{T_2(t)\}_{t\ge0}$ and set $f_t(E)=E+(e^{\omega t}-1)(E+E_0)$.
    Let $\D\subset \dom\L_1\cap \dom\L_2$ be a $T_2(t)$-invariant $\normiii\placeholder_1$-dense subspace of $\dom\energy$.
    Then
    \begin{align}\label{eq:qsl2}
        \norm{T_1(t)-T_2(t)}_{\diamond,E} 
        \le t\norm{\L_1-\L_2}_{\diamond,f_t(E)}.
    \end{align}
\end{proposition}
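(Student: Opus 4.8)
The plan is to run the open-system analogue of the proof of \cref{lem:qsl_lemma}, replacing the integration–differentiation identity for unitaries by the Duhamel (variation-of-constants) formula for the two semigroups, the operator-norm contractivity of unitaries by the trace-norm contractivity of trace-nonincreasing cp maps, and the ECO norm by the ECD norm. By item \ref{it:ECDnorm_dense} of \cref{lem:ECD}, it suffices to verify the claimed bound on the restricted supremum over states $\rho\in\states_E(\H_{AR})\cap(\D\odot\T(\H_R))$, where $R$ is the ancilla of \cref{def:norms}; this restriction to a $\normiii{}_1$-dense subspace is exactly what will make the differentiation step legitimate. Note also that if the right-hand side of \eqref{eq:qsl2} is $+\oo$ the statement is trivial, so one may assume $\norm{\L_1-\L_2}_{\diamond,f_t(E)}<\oo$.

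Fix such a $\rho$ and abbreviate all maps by their ampliations $T_i(s)\ox\id$ and $(\L_1-\L_2)\ox\id$. Since $\D$ is $T_2(t)$-invariant and contained in $\dom\L_1\cap\dom\L_2$, the curve $s\mapsto\sigma_s:=(T_2(s)\ox\id)\rho$ stays inside $\D\odot\T(\H_R)$, hence inside a common core on which both ampliated generators act. Therefore $F(s):=(T_1(t-s)\ox\id)(T_2(s)\ox\id)\rho$ is trace-norm differentiable with
\[
  F'(s)=(T_1(t-s)\ox\id)\big((\L_2-\L_1)\ox\id\big)(T_2(s)\ox\id)\rho ,
\]
and integrating from $0$ to $t$ yields the Duhamel formula
\[
  \big((T_1(t)-T_2(t))\ox\id\big)\rho=\int_0^t (T_1(t-s)\ox\id)\big((\L_1-\L_2)\ox\id\big)(T_2(s)\ox\id)\rho\,ds .
\]
Taking $\norm{}_1$ inside the integral and using that each $T_1(t-s)\ox\id$ is trace-nonincreasing, hence a trace-norm contraction, gives
\[
  \norm{\big((T_1(t)-T_2(t))\ox\id\big)\rho}_1\le \int_0^t \norm{\big((\L_1-\L_2)\ox\id\big)\sigma_s}_1\,ds .
\]
Each $\sigma_s$ is subnormalized and lies in $\D\odot\T(\H_R)$ with $\energy_{AR}[\sigma_s]\le f_{T_2(s)\ox\id}(E)=f_{T_2(s)}(E)\le f_s(E)$, where the equality is item \ref{it:energy_lim3} of \cref{lem:energy_lim} and the last inequality is \eqref{eq:affine_semigroup_bound} from \cref{thm:first_order}. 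By the subnormalized-state form of the ECD norm (item \ref{it:ECD_subnormalized} of \cref{lem:ECD}), its monotonicity in the energy, and $f_s(E)\le f_t(E)$, each integrand is bounded by $\norm{\L_1-\L_2}_{\diamond,f_t(E)}$; integrating over $[0,t]$ and taking the supremum over the admissible $\rho$ produces \eqref{eq:qsl2}.

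The main obstacle will be justifying the differentiation of $F(s)$ and the resulting Duhamel formula rigorously in the unbounded setting: one must check that $s\mapsto(T_2(s)\ox\id)\rho$ is $C^1$ into $\T(\H_{BR})$ taking values in $\dom(\L_1\ox\id)$, and that post-composing with the contraction semigroup $T_1(t-s)\ox\id$ preserves differentiability with the product-rule derivative used above. This is precisely where the three hypotheses on $\D$—that it be $T_2(t)$-invariant, $\normiii{}_1$-dense, and contained in $\dom\L_1\cap\dom\L_2$—enter: they keep the entire curve in a common core on which both generators are defined, legitimizing the formal computation and underwriting the reduction to $\D\odot\T(\H_R)$ via item \ref{it:ECDnorm_dense} of \cref{lem:ECD} in the first step.
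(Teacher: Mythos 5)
Your overall architecture is the paper's: reduce to $\rho\in\states_E(\H_{AR})\cap(\D\odot\T(\H_R))$ via item \ref{it:ECDnorm_dense} of \cref{lem:ECD}, interpolate with $F(s)=(T_1(t-s)\ox\id)(T_2(s)\ox\id)\rho$, and do the energy bookkeeping $\energy_{AR}[\sigma_s]\le f_{T_2(s)}(E)\le f_s(E)\le f_t(E)$ together with the subnormalized form of the ECD norm (this last combination is exactly the content of \cref{thm:submultiplicativity}, which the paper invokes directly). All of that is sound. The gap is the one you flag yourself, and — contrary to your closing claim — the three hypotheses on $\D$ do \emph{not} close it. What $T_2(s)\D\subseteq\D\subset\dom\L_1\cap\dom\L_2$ buys you is \emph{pointwise} differentiability of $F$ at each $s$ with the product-rule derivative you wrote (both one-sided difference quotients converge, using uniform boundedness and strong continuity of $T_1(\cdot)\ox\id$). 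What it does not buy you is continuity of $s\mapsto(\L_1\ox\id)\sigma_s$ in trace norm: nothing in the hypotheses makes the curve $C^1$ for the $\L_1$-graph norm, so $F'$ may be discontinuous, and the step "integrating from $0$ to $t$ yields the Duhamel formula" is not the textbook fundamental theorem of calculus. For a Banach-space-valued function that is merely everywhere differentiable with a bounded, possibly discontinuous derivative, $F(t)-F(0)=\int_0^t F'(s)\,ds$ needs a further argument, and this is precisely the point where a formal Duhamel computation can fail in the unbounded setting.

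There are two standard ways to supply the missing step. The paper's way: never differentiate the trace-norm-valued curve at all, but pair with $A\in\dom\L_1^*$ and differentiate the scalar function $(t,s)\mapsto\tr[T_1^*(t)(A)\,T_2(s)\rho]$, which is jointly $C^1$ because the $\L_1$-differentiation has been moved to the Heisenberg picture, where $t\mapsto T_1^*(t)(A)$ is $\sigma$-weakly $C^1$; one then integrates along $s\mapsto(t-s,s)$, estimates the integrand by $\norm{\L_1-\L_2}_{\diamond,f_s(E)}$, and recovers the trace norm by optimizing over $A\in\dom\L_1^*$ with $\norm A\le1$ ($\sigma$-weak density). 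Alternatively, you can keep your pointwise $F'$: for an arbitrary contraction $A$, the scalar function $g(s)=\tr[AF(s)]$ is everywhere differentiable with $\abs{g'(s)}\le\norm{F'(s)}_1\le\norm{\L_1-\L_2}_{\diamond,f_t(E)}$, so the mean value inequality gives $\abs{g(t)-g(0)}\le t\,\norm{\L_1-\L_2}_{\diamond,f_t(E)}$ with no integral at all, and optimizing over $A$ yields the bound. Either device completes your argument; as written, the proposal is incomplete exactly at the step you identified, and the assertion that membership of the curve in a common domain "legitimizes the formal computation" is where it overreaches.
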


\begin{proof}
    Let $\rho\in\states\cap\D$ and let $A\in\dom\L_1^*$ ($\L_1^*$ is the generator of the dual semigroup $T^*(t)$ which is strongly continuous for the $\sigma$-weak operator topology).
    Since $t\mapsto T_1^*(t)(A)$ is $C^1$ for the $\sigma$-weak operator topology and $t\mapsto T_2(t)\rho$ is $C^1$ for the trace norm topology, we know that $(t,s)\mapsto \tr[T_1^*(t)(A)T_2(s)\rho] = \tr[AT_1(t)T_2(s)\rho]$ is $C^1$.
    Therefore:
    \begin{align*}
        \abs{\tr[ A(T_1(t)-T_2(t))\rho]} 
        &= \bigg|\int_0^t \frac{d}{ds} \tr[AT_1(t-s)T_2(s)\rho]\,ds\bigg| \\
        &=\bigg| \int_0^t \tr[\L_1^*(A)T_1(t-s)T_2(s)\rho] - \tr[AT_1(t-s)T_2(s)\L_2\rho]\,ds\bigg|\\
        &=\bigg| \int_0^t \tr[AT_1(t-s)(\L_1-\L_2)T_2(s)\rho]\,ds\bigg|\\
        &\le \int_0^{t} \norm{(\L_1-\L_2)T_2(s)}_{\diamond,E}\,ds
        \le \int_0^{t} \norm{\L_1-\L_2}_{\diamond,f_s(E)}\,ds
        \le t \norm{\L_1-\L_2}_{\diamond,f_t(E)}.
    \end{align*}
    If we optimize over operators $A\in\dom\L_1^*$ with norm $\le 1$, we obtain $\norm{T_1(t)\rho-T_2(t)\rho}_1 \le t\norm{\L_1-\L_2}_{\diamond,f_t(E)}$.
    The same reasoning applies to the semigroups $T_1(t)\ox\id$ and $T_2(t)\ox\id$ and states $\rho\in\D\odot\T(\H_R)$, where $\H_R$ is another Hilbert space.
    Therefore, the claimed bound follows.
\end{proof}

\subsection{Trotter product formula in open systems}\label{sec:trotter}

Here, we use the submultiplicativity estimate \eqref{eq:submultiplicativity2} to lift the proof of operator norm convergence rates of Trotter convergence from the finite-dimensional setting to the infinite-dimensional setting.
The idea for this was developed for \cite{trotter_maths}, where unitary dynamics are treated.\footnote{This application was the author's original motivation for investigating energy-limited dynamics.
Strong error bounds for the Trotter product formula in dimension have recently been studied in \cite{mobus_strong_2024,burgarth_state-dependent_2023,lie_group_error,burgarth_strong_2024,hahn_lower_2024}.
}

\begin{proposition}
    Let $\L_1$ and $\L_2$ be generators of energy-limited dynamical semigroups $\{T_j(t)\}_{t\ge0}$, $j=1,2$ with joint stability constants $\omega,E_0\ge0$.
    Let $\D\subset\dom\energy$ be a $\normiii{}_1$-dense $T_1(t)$- and $T_2(t)$-invariant subspace with the property that $(t,s)\mapsto T_1(t)T_2(s)\rho$ and $(t,s)\mapsto T_2(t)T_1(s)\rho$ are $C^2$ functions for all $\rho\in\D$.

    Assume that the commutator $[\L_1,\L_2]:\D\to\T(\H)$ has finite ECD norm.
    If there exists an extension
    $\L\supseteq \L_1+\L_2$ that generates a quantum dynamical semigroup $\{T(t)\}_{t\ge0}$, then there is a unique generating extension.
    In this case, the Trotter product formula converges with convergence rates bounded as
    \begin{equation}\label{eq:trotter_estimate}
        \norm{( T_1(t/n)T_2(t/n))^n - T(t)}_{\diamond,E} \le \frac{t^2}{2n} \norm{[\L_1,\L_2]}_{\diamond,f_{2t}(E)},
    \end{equation}
    where $f_t(E) = E+ (e^{\omega t}-1)(E+E_0)$.
\end{proposition}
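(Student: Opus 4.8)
The plan is to reduce the problem to a telescoping estimate and then apply the submultiplicativity inequality \eqref{eq:submultiplicativity2} together with the energy-limitedness bound \eqref{eq:affine_semigroup_bound}. First I would establish the local (single-step) error: for $\rho\in\D$, Taylor-expand both $T_1(h)T_2(h)\rho$ and $T(h)\rho$ to second order in $h=t/n$ using the $C^2$ assumption. Both semigroups agree to first order because $\L=\L_1+\L_2$ on $\D$, so the first-order terms cancel and the leading discrepancy is the second-order term, which is governed by $[\L_1,\L_2]$. Concretely, I expect the local error to take the form $T_1(h)T_2(h)\rho-T(h)\rho = \tfrac{h^2}{2}[\L_1,\L_2]\rho + o(h^2)$, with a remainder that can be written as an iterated integral of second derivatives of $s\mapsto T_1(\cdot)T_2(\cdot)\rho$ and $s\mapsto T(\cdot)\rho$; this is where the $C^2$ hypothesis on $\D$ is essential.

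Next I would telescope the global error in the standard way, writing
\begin{equation}
    \big(T_1(h)T_2(h)\big)^n - T(t) = \sum_{k=0}^{n-1} \big(T_1(h)T_2(h)\big)^{k}\,\big(T_1(h)T_2(h)-T(h)\big)\,T(t-(k+1)h),
\end{equation}
and then estimate each summand in the energy-constrained diamond norm. For a state $\rho\in\states_E(\H_{AR})$, I first evolve by $T(t-(k+1)h)\otimes\id$; by \eqref{eq:affine_semigroup_bound} this lands in a state of energy at most $f_{t-(k+1)h}(E)\le f_t(E)$, using that $f$ is nondecreasing and $t-(k+1)h\le t$. Applying the single-step local-error map $\big(T_1(h)T_2(h)-T(h)\big)\otimes\id$ contributes the factor $\tfrac{h^2}{2}\norm{[\L_1,\L_2]}_{\diamond,\,\cdot}$ evaluated at the current energy, via \eqref{eq:submultiplicativity2}; and the remaining $k$ blocks $\big(T_1(h)T_2(h)\big)^k$ are trace-nonincreasing cp maps, so they do not increase the trace norm. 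The energy at which the commutator norm is evaluated must absorb the energy growth through all the dynamics in the block, which is where the argument $f_{2t}(E)$ (rather than $f_t(E)$) appears: the joint stability constants $\omega,E_0$ bound the energy after both $T(t-(k+1)h)$ and the extra factors of $T_1,T_2$ in the step, and a crude but sufficient bound on the accumulated energy is $f_{2t}(E)$.

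Summing the $n$ terms, each carrying a factor $\tfrac{h^2}{2}=\tfrac{t^2}{2n^2}$ times $\norm{[\L_1,\L_2]}_{\diamond,f_{2t}(E)}$, gives $n\cdot\tfrac{t^2}{2n^2}\norm{[\L_1,\L_2]}_{\diamond,f_{2t}(E)} = \tfrac{t^2}{2n}\norm{[\L_1,\L_2]}_{\diamond,f_{2t}(E)}$, which is the claimed bound \eqref{eq:trotter_estimate}. The uniqueness of the generating extension $\L\supseteq\L_1+\L_2$ I would argue separately: two generators agreeing on the $\normiii{}_1$-dense, invariant core $\D$ must coincide, since $\D$ is a core for each, and the Trotter convergence identifies the limit semigroup canonically with $T(t)$.

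The main obstacle I anticipate is controlling the local error \emph{uniformly in $k$} with the correct energy bookkeeping, since the naive telescoping estimate requires that the second-order remainder be bounded not just pointwise on $\D$ but after the interposed energy-increasing dynamics, and that the energy at which the ECD norm of $[\L_1,\L_2]$ is measured genuinely stays below $f_{2t}(E)$ for every block; making the remainder estimate rigorous (rather than merely formal) in the infinite-dimensional, $C^2$-on-$\D$ setting, and verifying that $\D\odot\T(\H_R)$ is preserved and suffices for the diamond-norm supremum via \eqref{eq:ECDnorm_dense}, is the delicate part. This is precisely the step where I would lean on the detailed argument of \cite{becker_convergence_2024}, adapting their unitary proof to the present cp setting.
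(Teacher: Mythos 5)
Your overall architecture (telescoping, then \eqref{eq:submultiplicativity2} with energy bookkeeping, adapted from the unitary argument) is the right skeleton, but two of your steps would fail as proposed. First, your telescoping places $T(t-(k+1)h)$ \emph{innermost}, acting on the state before the single-step error map, so your bookkeeping needs an energy-limitedness bound $f_{T(s)}(E)\le f_s(E)$ for the limit semigroup itself. No such bound is among the hypotheses: the joint stability constants are assumed only for $T_1$ and $T_2$, and although $\L\supseteq \L_1+\L_2$ formally suggests $\L^*(G)\le 2\omega(G+E_0)$, verifying the domain conditions of \cref{thm:main} for the unknown extension $\L$ is exactly what one cannot do here (one could deduce $f_{T(t)}\le f_{2t}$ from item (4) of \cref{lem:energy_lim} only \emph{after} Trotter convergence is established --- circular). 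The paper's decomposition reverses the order: $T$ stands outermost, where it only needs to be a trace-norm contraction on hermitian trace-class elements, and the energy growth is absorbed by the inner powers $V(t/n)^{n-j}$, $V(s)=T_1(s)T_2(s)$, which satisfy $f_{V(s)^k}\le f_{2sk}$ using only the given joint constants.

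Second, the single-step estimate: a Taylor expansion with an $o(h^2)$ remainder cannot yield \eqref{eq:trotter_estimate}, which is a non-asymptotic bound with the exact constant $t^2/2n$ at every finite $n$; and your proposed remainder --- iterated integrals of second derivatives of $T_1(\cdot)T_2(\cdot)\rho$ and $T(\cdot)\rho$ --- would involve $\L_1^2\rho$, $\L_2^2\rho$ and $\L^2\rho$, whose ECD norms are not assumed finite (the only quantitative hypothesis is finiteness of $\norm{[\L_1,\L_2]}_{\diamond,E}$). The mechanism that actually closes the argument is an exact Duhamel-type representation: one differentiates $s\mapsto T(t-s)T_1(s)T_2(s)\rho$ in the weak sense, pairing with $A\in\dom\L^*$ because the curves are differentiable only after dualizing, which leaves a single commutator $[T_1(s),\L_2]$; this is then expanded via the identity $[\L_2,T_1(s)]\rho=\int_0^s T_1(s-u)[\L_2,\L_1]T_1(u)\rho\,du$, so the total error is a double integral in which \emph{only} $[\L_1,\L_2]$ appears, sandwiched between energy-limited factors, giving $\frac{h^2}{2}\norm{[\L_1,\L_2]}_{\diamond,f_{2h}(\cdot)}$ exactly. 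A smaller but real gap: your uniqueness argument ("$\D$ is a core for each generator") is unsupported, since $\D$ is only assumed $\normiii{}_1$-dense in $\dom\energy$, not an $\L$-graph-norm core; the paper instead observes that \eqref{eq:trotter_estimate} holds for \emph{every} generating extension, and since ECD convergence implies strong convergence, all extensions generate the same semigroup and hence coincide.
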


Note that, by \eqref{eq:concavity_ineq_ECD}, the right-hand side of \eqref{eq:trotter_estimate} is bounded by $ \frac{t^2}{2n} \norm{[\L_1,\L_2]}_{\diamond,E} \cdot \big(1+ (e^{2\omega t}-1)(1+\tfrac{E_0}E)\big)$.
\begin{proof}
    We adapt the argument for the unitary case from \cite{trotter_maths}.
    Let us begin by noting that the assumptions guarantee that $\L_1\L_2$ and $\L_2\L_1$ are defined on $\D$ since they arise as second-order derivatives of $T_1(t)T_2(s)$ and $T_2(t)T_1(s)$ at $(t,s)=(0,0)$.
    Therefore, the commutator makes sense as an operator on $\D$ and, by item \ref{it:ECDnorm_dense} of \cref{lem:ECD}, it canonically extends to an operator $\dom\energy\to\T(\H)$ (with the same ECD norm).
    Furthermore, the assumptions guarantee that $\D\subset\dom\L$ for all generating extensions $\L$.
    
    We begin with the usual telescoping sum trick.
    Set $V(t)=T_1(t)T_2(t)$ and $f_t(E)=e^{\omega t}(E+E_0)-E_0$. Then
    \begin{align} 
        \norm{V(t/n)^n -T(t)}_{\diamond,E} 
        & = \bigg\|\sum_{j=1}^n T\big(t(j+1)/n\big) \big(V(t/n)-T(t/n)\big) V(t/n)^{n-j} \bigg\|_{\diamond,E} \nonumber\\
        & \le \sum_{j=1}^n \big\| \big(V(t/n)-T(t/n)\big)V(t/n)^{n-j}\big\|_{\diamond,E} \nonumber\\
        & \le \sum_{j=1}^n \|V(t/n)-T(t/n)\|_{\diamond,f_{2t(n-j)/n}(E)} \nonumber\\
        & \le n\|V(t/n)-T(t/n)\|_{\diamond,f_{2t-2t/n}(E)}.\label{eq:brot} 
    \end{align}
    This reduces the problem to estimating $\norm{V(t)-T(t)}_{\diamond,E}$ for small times $t>0$.
    Next, we show the identity
    \begin{equation}\label{eq:leaf}
        [\L_2,T_1(s)]\rho =\int_0^s T_1(s-u) [\L_2,\L_1] T_1(u)\rho\,du,\qquad \rho\in\D.
    \end{equation}
    Note that the integral makes sense since the integrand is continuous by assumption.
    Formally the integrand is simply $(d/du)T_1(s-u)\L_2T_1(u)\rho$. However, we are not guaranteed that this function is differentiable.
    Since $\L_2T_1(u)\rho$ is differentiable, this is solved by taking a dual pairing with an operator $A\in \dom\L_1^*$ ($\L_1^*$ is the generator of the dual semigroup $T_1^*(t)$ which is strongly continuous for the $\sigma$-weak operator topology):
    \begin{align*}
        \tr\!\big[A[\L_2,T_1(s)]\rho\big] 
        &= \int_0^s \frac d{du} \tr\!\Big[T_1^*(s-u)(A)\,\L_2T_1(u)\rho\Big] du \\
        &=\int_0^s\tr\!\Big[\L_1^*T_1^*(s-u)(A)\,\L_2T_1(u)\rho-T_1^*(s-u)(A)\,\L_2\L_1T_1(u)\rho\Big] du\\
        &=\int_0^s \tr\!\Big[A\,T_1(s-u)[\L_1,\L_2]T_1(u)\rho\Big ]du.
    \end{align*}
    Since $\dom\L_1^*$ is $\sigma$-weakly dense in $\B(\H)$, this shows that \eqref{eq:leaf} holds.
    We apply the same trick to $V(t)-T(t)$.
    If $A\in\dom\L^*$ and $\rho\in\D$, we find
    \begin{align*}
        \tr\!\big[A\,(V(t)\rho-T(t)\rho)\big] 
        &= \int_0^t \frac{d}{ds} \tr\!\Big[T^*(t-s)(A)\,T_1(s)T_2(s)\rho\Big]ds\\
        &= \int_0^t \tr\!\Big[T^*(t-s)(A)\,T_1(s)(\L_1+\L_2)T_2(s)\rho\Big] -\tr\!\Big[\L^*T^*(t-s)(A)\,T_1(s)T_2(s)\rho\Big] ds\\
        &= \int_0^t \tr\!\Big[A\,T(t-s)T_1(s)(\L_1+\L_2)T_2(s)\rho\Big]  -\tr\!\Big[A\,T(t-s)(\L_1+\L_2)T_1(s)T_2(s)\rho\Big] ds\\
        &= \int_0^t \tr\!\Big[A\,T(t-s)[T_1(s),\L_2]T_2(s)\rho\Big] ds\\
        &=\int_0^t\int_0^s \tr\!\Big[A\,T(t-s)T_1(s-u)[\L_1,\L_2] T_1(u)T_2(s)\rho\Big] du\,ds,
    \end{align*}
    where we used \eqref{eq:leaf} in the last step.
    Now assume $\rho\in\states_E\cap\D$.
    Since $\dom\L^*$ is $\sigma$-weakly dense, optimizing over $A\in\dom\L^*$ with $\norm A\le 1$ gives
    \begin{align*}
        \norm{V(t)\rho-T(t)\rho}_1 
        &\le \int_0^t\int_0^s \norm{  T(t-s)T_1(s-u)[\L_2,\L_1] T_1(u)T_2(s)\rho}_1\,du\,ds\\
        &\le \int_0^t\int_0^s \norm{[\L_2,\L_1] T_1(u)T_2(s)\rho}_1\,du\,ds\\
        &\le \int_0^t\int_0^s \norm{[\L_1,\L_2]}_{\diamond,f_{s+u}(E)}\,du\,ds
        \le \frac{t^2}2 \norm{[\L_1,\L_2]}_{\diamond,f_{2t}(E)}
        ,
    \end{align*}
    where, in the last step, we used that $\omega,E_0$ are joint stability constants.
    The same argument applies to $\H\ox\H_R$, $T_j(t)\ox \id$, $j=1,2$, and $\rho\in \D\odot \T(\H_R)$, where $\H_R$ is another Hilbert space and "$\odot$" denotes the algebraic tensor product.
    The $\normiii{}_1$-density assumption guarantees that $\D\odot \T(\H_R)\subset\dom\tilde\energy$ is similarly dense for the corresponding norm induced by the reference Hamiltonian $\tilde G=G\ox 1$ on $\H\ox\H_R$.
    Since $\D$ is $\normiii{}_1$-dense, the above establishes the estimate
        $\norm{V(t)-T(t)}_{\diamond,E} \le \frac{t^2}2 \norm{[\L_1,\L_2]}_{\diamond,f_{2t}(E)}$.
    If we insert this in \eqref{eq:brot}, we get
    \begin{align*}
        \norm{V(t/n)^n-T(t)}_{\diamond,E}
        \le n \frac{t^2}{2n^2} \norm{[\L_1,\L_2]}_{\diamond,f_{2t/n}(f_{2t-2t/n}(E))}
        = \frac{t^2}{2n} \norm{[\L_1,\L_2]}_{\diamond,f_{2t}(E)}.
    \end{align*}
    Eq.~\eqref{eq:trotter_estimate} now follows from \eqref{eq:concavity_ineq_ECD}.
    Recall that convergence in ECD norm implies strong convergence. Since \eqref{eq:trotter_estimate} holds for all extensions $\L\supset \L_1+\L_2$ that generate dynamical semigroups, all such extensions generate the same dynamical semigroup and, hence, coincide.
\end{proof}

\section{Open problems}\label{sec:problems}

In the following, we discuss open questions, possible generalizations and ideas for future work.

\paragraph{Limited energy loss.}
By definition, energy-limited quantum channels are channels with controlled energy increase.
Let us consider channels $T$ from system $A$ to $B$ with controlled energy loss.
While energy-limitedness is equivalent to the input energy bounding the output energy, limited energy loss asks for the reverse inequality, i.e., the output bounds the input energy.
To quantify this, one can introduce the function
\begin{equation}
    g_T(E) = \inf\Big\{ \energy[T\rho] : \rho\in\states(\H_A),\ \energy[\rho]\ge E\Big\}.
\end{equation}
This is a convex nondecreasing function. 
Let us say that a quantum channel $T$ has \emph{limited energy loss} if $g_T(E)\to\oo$ as $E\to\oo$.
By convexity, this is indeed equivalent to the existence of an affine lower bound $g_T(E)\ge \lambda E-E_0$ for constants $\lambda>0$, $E_0\ge0$, which is equivalent to the operator inequality
\begin{equation}
    T^*(G_B) \ge \lambda G_A -E_0.
\end{equation}
On physical grounds, requiring finite energy loss might not sound as compelling as requiring a finite energy gain.
For instance, channels $\rho \mapsto \omega_0$, which reset the state of the system to, say, the ground state state $\omega_0$, clearly take an unbounded amount of energy away.
However, ground state preparation is extremely hard to perform in practice.
We expect that a theory of dynamics with limited energy loss can be done in parallel to the limited energy increase that we considered in the main text.
Furthermore, we expect dynamics that have both limited energy gain and loss to be particularly well-behaved.

\paragraph{Escape to infinity and (non)standard generators.}
In infinite dimensions, generators of quantum dynamical semigroups are still not fully understood.
To the best of the author's knowledge, all Markov semigroups used in actual models of open quantum systems have standard generators.
However, we cannot conclude that nonstandard generators are unphysical since it may be our ignorance that keeps us from using them in models.
Here, we consider whether the physically meaningful property of energy-limitedness might be related to the generator's standardness.
In the special case of the quantum birth process, we saw that this is indeed the case (see \cref{sec:birth}).

Clearly, every dynamical semigroup is energy-limited with respect to every bounded reference Hamiltonian, e.g., $G=1$.
Even for unbounded reference Hamiltonians, energy-limitedness might hold trivially, e.g., if $G$ is only unbounded on a subsystem on which the dynamics is trivial.
To avoid such artificial cases, let us assume that the reference Hamiltonian is of the form \eqref{eq:discrete}, i.e., has compact resolvent.
We consider the following problem, suggested to the author by Andreas Winter:

\begin{problem*}\label{prob1}
    Are energy-limited quantum dynamical semigroups necessarily generated by standard generators?
\end{problem*}

In view of the previous paragraph, it might be necessary to assume additionally limited energy loss.
%
A method of constructing nonstandard generators is to take a formally conservative standard generator with nonconservative dynamics and to reset the system to a state $\chi$ whenever an escape occurs \cite{siemon_unbounded_2017,inken_thesis}. On the infinitesimal level this is a perturbation $\L'=\L - \tr[\L(\placeholder)]\chi$ (see \cref{sec:birth} and \cite{siemon_unbounded_2017,inken_thesis}).\footnote{The proof for $\L'$ being nonstandard in \cite{siemon_unbounded_2017} is valid for all strongly standard generators (see \cite[Def.~4.4.3]{inken_thesis}).}
In this construction, we may choose $\chi$ freely. 
Thus, if we pick a finite-energy state, \cref{thm:main} implies that $\L'$ is energy-limited if and only if $\L$. 
Thus, the existence of an energy-limited dynamical semigroup with escape to infinity leads to a nonstandard energy-limited semigroup.
Therefore, an affirmative answer to the Problem above would imply that energy-limitedness prohibits escape to infinity -- at least for "strongly standard" generators where $K$ and $L_\alpha$ satisfy a closability assumption \cite{siemon_unbounded_2017}.
Perhaps surprisingly, this has been studied in a paper by Chebotarev and Fagnola \cite{chebotarev1998sufficient} (see also \cite[Sec.~3.6]{fagnola1999}).
They show that a formally conservative standard generator that satisfies the infinitesimal energy-limitedness inequality with respect to some reference Hamiltonian admits no escape in finite time.%
\footnote{The infinitesimal version of energy-limitedness only appears as a sufficient mathematical condition in their work. It is not studied in its own right and it is not interpreted in the context of energies.}
In addition to infinitesimal energy-limitedness, they require certain assumptions. One of these is that $F=-(K+K^*)$ is a self-adjoint operator dominated by the reference Hamiltonian, which is an infinitesimal version of limited energy loss.

In the special case of the quantum birth process (see \cref{sec:birth}): Energy-limitedness and the impossibility of escape to infinity are equivalent.
It is an interesting question whether this holds in general.

\paragraph{Energy scales on von Neumann algebras.}
Energy-limitedness and energy-constrained norms make sense for classical systems where the energy scale is determined by a reference Hamiltonian function.
In fact, we can go far beyond this:
If $\M$ is a von Neumann algebra, then a reference Hamiltonian is a positive self-adjoint operator affiliated with $\M$ or, equivalently, an element $\energy\in \overline\M^+$ of the extended positive cone (see \cref{sec:epc}).
Relative to a fixed reference energy scale, we can then define an ECO norm for elements of $\M$ and an ECD norm for $^*$-preserving maps $\M\to\N$ for some other von Neumann algebra.
This includes "ordinary" quantum systems $\M=\B(\H)$ as well as classical systems $\M=L^\oo(X,\mu)$ and hybrid systems.
However, it also covers the more exotic observable algebras appearing in quantum field theory and quantum statistical mechanics.
The question is, of course, whether such a generalization is useful for anything. 

\appendix

\section{Operator inequalities for Heisenberg-picture channels applied to unbounded operators}\label{sec:epc}

In this appendix, we review different ways of describing positive self-adjoint operators and explain how the action of a quantum channel in the Heisenberg picture can be extended to them.
We then explain how results from \cite{dinh2019}  allow us to extend certain operator inequalities to the case of unbounded operators.

We start by recalling the definition of quadratic forms (see \cite[Sec.~VIII.6]{ReedSimon1} for details):

\begin{definition}[Quadratic forms]
    A \emph{quadratic form} on a Hilbert space $\H$ is a sesquilinear map $a:Q(a)\times Q(a)\to\CC$ where $Q(a)\subset\H$ is a subspace called the form domain.
    If not explicitly said otherwise, we assume $Q(a)$ to be dense.
    A quadratic form $a$ is \emph{positive} if $a(\psi,\psi)\ge0$ for all $\psi\in Q(a)$ and a positive quadratic form is \emph{closed} if $Q(a)$ is complete under the norm $\norm{\psi}_{Q(a)}=\sqrt{\norm\psi^2+a(\psi,\psi)}$.
\end{definition}

In the following, we only consider positive quadratic forms.
By polarization, a quadratic form $a$ is uniquely defined by the numbers $a(\psi,\psi)$, $\psi\in Q(a)$.
The norm $\norm{}_{Q(a)}$ on $Q(a)$ is the norm induced by the inner product $(\psi,\phi)\mapsto \ip\psi\phi+a(\psi,\phi)$.
Thus, a positive quadratic form $a$ is closed if and only if this inner product turns $Q(a)$ into a Hilbert space.
A subspace $\D\subset Q(a)$ is called a \emph{form core} if $\D$ is dense in $Q(a)$ with respect to the norm $\norm{}_{Q(a)}$.
A positive quadratic form $a$ is \emph{closable} if it admits a closed extension. In this case, it admits a smallest closed extension, called the \emph{closure} and denoted by $\bar a$.

\begin{theorem}[{\cite[Sec.~VIII.6]{ReedSimon1}}]\label{thm:forms}
    If $A=A^*\ge0$ is a positive self-adjoint operator, then it defines a closed quadratic form $a$ by 
    \begin{equation}\label{eq:a_from_A}
        Q(a)=\dom\sqrt A,\qquad a(\psi,\phi)=\ip{\sqrt A\psi}{\sqrt A\phi}.
    \end{equation}
    Every closed positive quadratic form arises from a unique positive self-adjoint operator in this way.
\end{theorem}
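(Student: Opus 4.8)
The plan is to establish the two assertions in turn, the second being the substantive one. For the first, given a positive self-adjoint $A$, one defines $\sqrt A$ by the spectral theorem; the map $a$ in \eqref{eq:a_from_A} is then manifestly a positive sesquilinear form on $Q(a)=\dom\sqrt A$, and its form norm satisfies $\norm\psi_{Q(a)}^2=\norm\psi^2+\norm{\sqrt A\psi}^2$, i.e.\ it coincides with the graph norm of $\sqrt A$. Since $\sqrt A$ is self-adjoint and hence closed, $\dom\sqrt A$ is complete in this norm, so $a$ is closed. This direction requires no real work.

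For the converse, I would first turn the form into a Hilbert space: equip $Q(a)$ with the inner product $\ip\psi\phi_{+1}:=\ip\psi\phi+a(\psi,\phi)$. Closedness of $a$ says exactly that $(Q(a),\ip\placeholder\placeholder_{+1})=:\H_{+1}$ is complete, and since $\norm\psi\le\norm\psi_{+1}$ the inclusion $\H_{+1}\hookrightarrow\H$ is a norm-decreasing injection with dense range. For each $f\in\H$ the functional $\psi\mapsto\ip f\psi$ is bounded on $\H_{+1}$, so the Riesz representation theorem (applied in $\H_{+1}$) produces a unique $Bf\in Q(a)$ with
\begin{equation*}
    \ip{Bf}\psi_{+1}=\ip f\psi,\qquad \psi\in Q(a).
\end{equation*}
A short computation with this identity shows that $B$ is a bounded operator on $\H$ which is self-adjoint, injective, and satisfies $0\le B\le1$; injectivity together with self-adjointness then gives that $\Ran B$ is dense.

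The operator is recovered by inversion: since $B$ is injective, positive, and bounded by $1$, the inverse $B^{-1}$ is a positive self-adjoint operator with $B^{-1}\ge1$, so $A:=B^{-1}-1$ on $\dom A=\Ran B$ is positive and self-adjoint. Taking $\psi=Bg$ in the defining relation and using $A(Bf)=f-Bf$ yields the first representation identity $a(u,v)=\ip{Au}v$ for all $u\in\dom A$ and $v\in Q(a)$. To upgrade this to the stated identity involving $\sqrt A$, I would argue that $\dom A$ is a form core \emph{both} for $a$ and for the closed form $\tilde a$ attached to $A$ by the first part: density of $\Ran B$ in $\H_{+1}$ follows from the defining relation as above, while $\dom A$ is a core for $\sqrt A$ by the spectral theorem (cf.\ item \ref{it:elementary1} of \cref{lem:elementary}). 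Since $a$ and $\tilde a$ are closed, agree on $\dom A$, and both have $\dom A$ as a form core, they must coincide; this gives $Q(a)=\dom\sqrt A$ and $a(\psi,\phi)=\ip{\sqrt A\psi}{\sqrt A\phi}$. Uniqueness is then immediate: any admissible $A'$ has $(A'+1)^{-1}$ satisfying the very Riesz relation that defines $B$, forcing $A'=A$.

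The main obstacle is precisely this last bridging step. The Riesz construction hands one the \emph{first} representation theorem, $a(u,v)=\ip{Au}v$ on $\dom A$, essentially for free, whereas the theorem as stated is the \emph{second} representation theorem, phrased through $\sqrt A$ and the form domain $\dom\sqrt A$. Closing this gap is where the self-adjointness of $A$ and the common-form-core argument do the real work; by contrast, the verification of the algebraic properties of $B$ and of the positivity and self-adjointness of $A$ is routine.
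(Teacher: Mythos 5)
Your proof is correct, and it takes a genuinely different (and more complete) route than the paper's. The paper only sketches the converse direction: it defines $A$ directly on $\dom A=\{\psi\in Q(a): \exists\,\tilde\psi\ \forall\phi\in Q(a):\ a(\phi,\psi)=\ip{\phi}{\tilde\psi}\}$ by $A\psi=\tilde\psi$, asserts symmetry, and leaves ``$\dom A^*=\dom A$'' as a check to the reader --- and that check is precisely where the completeness of $Q(a)$ must enter, since one effectively has to show $\Ran(A+1)=\H$, which amounts to your Riesz argument in $\H_{+1}$ anyway. You instead construct the resolvent first: the Riesz representation in $\H_{+1}$ hands you a bounded, self-adjoint, injective $0\le B\le 1$ with surjectivity of $A+1$ built in, so self-adjointness of $A=B^{-1}-1$ is free from the spectral theorem rather than requiring a domain computation, and uniqueness collapses to the observation that $(A'+1)^{-1}$ must satisfy the defining Riesz relation for $B$. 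Your proposal also makes explicit a step the paper's sketch silently skips: the passage from the first representation identity $a(u,v)=\ip{Au}{v}$ on $\dom A$ to the stated second representation via $\sqrt A$ and $Q(a)=\dom\sqrt A$. Your bridge --- $\dom A$ is a form core for $a$ (density of $\Ran B$ in $\H_{+1}$, from the defining relation) and a form core for the form of $A$ (since $\dom A$ is a core for $A$, hence for $\sqrt A$), and two closed forms agreeing on a common form core coincide --- is exactly right, and in fact mirrors the common-core reasoning the paper itself deploys later in the proof of \cref{thm:def_TG}. In short: the paper's route buys brevity; yours buys that every nontrivial ingredient (maximality/self-adjointness, the second representation theorem, uniqueness) is actually proved rather than asserted.
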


That \eqref{eq:a_from_A} is indeed closed is easy to see.
We briefly explain how to construct the positive self-adjoint operator $A$ inducing a given closed quadratic form $a$: 
On the domain $\dom A = \{\psi\in Q(a): \exists_{\tilde\psi\in Q(a)}  \forall_{\phi\in Q(a)} \ a(\phi,\psi)=\ip{\phi}{\tilde \psi}\}$ the operator $A$ is now defined on $\dom A$ by $A\psi=\tilde \psi$.
Clearly, $A$ is symmetric, and it is not too hard to check explicitly that $\dom A^*=\dom A$. 

For positive quadratic forms $a_1$ and $a_2$, the order relation $a_1\le a_2$ is defined by
\begin{equation}\label{eq:order_forms}
    Q(a_1)\supseteq Q(a_2)\qandq a_1(\psi,\psi)\le a_2(\psi,\psi), \ \psi\in Q(a_2).
\end{equation}
Formulated in terms of the corresponding positive self-adjoint operators $A_1$ and $A_2$, this is precisely the definition of $A_1\le A_2$ used in the main text (see \cref{eq:order})

Next, we consider a concept from von Neumann algebra theory (see \cite[Sec.~X.4]{takesaki2} for details):

\begin{definition}
    The \emph{extended positive cone} $\overline{\B(\H)}{}^+$ of $\B(\H)$ is the set of lower semicontinuous maps $m:\TC^+\to\bar\RR^+$ such that $m(\lambda\rho)=\lambda m(\rho)$ and $m(\rho+\sigma)=m(\rho)+m(\sigma)$ for all $\lambda\ge0$, $\rho,\sigma\in\TC^+$. 
    An element $m\in\overline{\B(\H)}{}^+$ is called semifinite if $\{\rho\in\TC^+: m(\rho)<\oo\}$ is dense in $\T(\H)^+$.
\end{definition}

Every bounded positive operator $A\in\B(\H)^+$ corresponds to an element $m$ of the extended positive cone via $m(\rho)=\tr A\rho$.
From the duality $\B(\H)=\TC^*$ it follows that $\B(\H)^+\hookrightarrow\overline{\B(H)}^+$ contains precisely the finite elements, i.e.\ those $m\in\overline{\B(H)}^+$ which never evaluate to infinity.
The extended positive cone $\overline{\B(\H)}{}^+$ for the one-dimensional Hilbert space $\H=\CC$ can be identified with $\bar\RR^+$.

\begin{theorem}[{\cite[Sec.~X.4]{takesaki2}}]
    There is a bijection between the following objects
    \begin{enumerate}[(i)]
        \item elements of the extended positive cone $m\in\overline{\B(\H)}{}^+$,
        \item pairs $(A,\K)$ of a closed subspace $\K\subseteq \H$ and a positive self-adjoint operator $A:\K\supseteq\dom A\to\K$,
        \item projection-valued Borel measures $P$ on $\bar\RR^+$,
    \end{enumerate} 
    given by the following:
    $P|_{\RR^+}$ is the spectral measure of $A$ and $P(\{\oo\})$ is the projection onto $\K^\perp$.
    Conversely, $\K$ is the orthogonal complement of $P(\{\oo\})\H$ and $A = \int_0^\oo x\, dP(x)$.
    $m$ can be obtained from $P$ via
    \begin{equation}
        m(\rho)=\int_0^\oo \lambda \tr[\rho\,dP(\lambda)] + \tr[(1-P)\rho]\cdot\oo.
    \end{equation}
    The subspace $\K$ is related to $m$ via $\K=\bar{\{\psi\in\H : m(\kettbra\psi)<\oo\}}$.
    On $\K$ a closed quadratic form with form domain $Q = \{\psi\in\H: m(\kettbra\psi)<\oo\}\subseteq\K$ is defined by polarization from $m$ and $A$ is the unique positive self-adjoint operator corresponding to it.
    Furthermore, semifiniteness is equivalently characterized by
    \begin{equation}
        m\ \text{is semifinite} \iff \K=\H \iff P(\{\oo\})=0.
    \end{equation}
\end{theorem}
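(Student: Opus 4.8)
The plan is to build the correspondence in stages, handling (ii)$\Leftrightarrow$(iii) by the spectral theorem and concentrating the analytic content in the equivalence with (i), where \cref{thm:forms} does the heavy lifting. For (ii)$\Leftrightarrow$(iii) I would split off the atom at infinity: given a projection-valued Borel measure $P$ on $\bar\RR^+$, set $\K=(1-P(\{\oo\}))\H$, so that $P|_{\RR^+}$ is a projection-valued measure supported on $\K$ and the spectral theorem yields a unique positive self-adjoint $A=\int_0^\oo x\,dP(x)$ on $\K$ with spectral measure $P|_{\RR^+}$; conversely $(A,\K)$ recovers $P$ by adjoining the projection onto $\K^\perp$ as $P(\{\oo\})$. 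These maps are mutually inverse by uniqueness in the spectral theorem. Passing from $(A,\K)$ to $m$ through the displayed formula, positive homogeneity and additivity are immediate from linearity of trace and integral, and lower semicontinuity follows by writing the finite part $\rho\mapsto\int_0^\oo\lambda\,\tr[\rho\,dP(\lambda)]$ as a supremum over $N$ of the $\sigma$-weakly continuous functionals $\rho\mapsto\tr[(\int_0^N\lambda\,dP(\lambda))\rho]$, the infinity term being lower semicontinuous on its own; hence $m\in\overline{\B(\H)}{}^+$.

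The heart of the proof is the reverse map (i)$\to$(ii). First I would show that an element $m$ of the extended positive cone is determined by its values on rank-one projections: additivity with positivity gives monotonicity, which together with lower semicontinuity promotes finite additivity to countable additivity along increasing sequences, so that the spectral decomposition $\rho=\sum_i\lambda_i\kettbra{\psi_i}$ gives $m(\rho)=\sum_i\lambda_i\,m(\kettbra{\psi_i})$. Setting $q(\psi):=m(\kettbra{\psi})$ and $Q:=\{\psi:q(\psi)<\oo\}$, the operator identity $\kettbra{\psi+\phi}+\kettbra{\psi-\phi}=2\kettbra{\psi}+2\kettbra{\phi}$ displays a single positive trace-class operator as a sum of positive rank-ones in two ways, so additivity and homogeneity force the parallelogram law $q(\psi+\phi)+q(\psi-\phi)=2q(\psi)+2q(\phi)$ on $Q$; polarization then produces a positive sesquilinear form $a$ on the subspace $Q$ with $a(\psi,\psi)=q(\psi)$. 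Lower semicontinuity of $m$ makes $q$ lower semicontinuous, hence a closed positive quadratic form, and \cref{thm:forms} furnishes a unique positive self-adjoint $A$ on $\K:=\bar Q$ with $\dom\sqrt A=Q$.

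It then remains to check the round-trips and semifiniteness. Feeding the constructed $(A,\K)$ back through the formula yields an $m'$ agreeing with $m$ on every $\kettbra{\psi}$ (the integral reproduces $\|\sqrt A\psi\|^2=q(\psi)$ for $\psi\in Q$ while the infinity term covers $\psi\notin Q$), whence $m'=m$ by the rank-one determination above; the reverse round-trip is precisely the uniqueness clause of \cref{thm:forms} together with $\K=\overline{\dom\sqrt A}$. Finally $\{\rho:m(\rho)<\oo\}$ consists exactly of the positive trace-class operators supported in $\K$ on which the form of $A$ is finite, and this set is $\norm\placeholder_1$-dense in $\TC^+$ if and only if $\K=\H$, i.e.\ $P(\{\oo\})=0$. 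I expect the step (i)$\to$(ii) to be the main obstacle: distilling a closed quadratic form from the abstract data of $m$ relies on the delicate interplay of finite additivity with lower semicontinuity used to reduce everything to rank-one projections, and on recognizing the parallelogram law encoded in the operator identity.
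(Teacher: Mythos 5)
The paper offers no proof of this statement at all: it is imported from Takesaki [Sec.~X.4], where the correspondence is established for the extended positive cone of a general von Neumann algebra via spectral resolutions of normal weights. Your argument is therefore a genuinely different, self-contained route specialized to $\B(\H)$, and its architecture is sound. The split (ii)$\Leftrightarrow$(iii) by detaching the atom at $\oo$ is exactly right; so is the verification that the displayed formula defines an element of $\overline{\B(\H)}{}^+$, with lower semicontinuity obtained as a supremum of the trace-norm-continuous functionals $\rho\mapsto\tr[(\int_0^N\lambda\,dP(\lambda)+N(1-P))\rho]$. The real content, as you say, is (i)$\to$(ii), and your rank-one reduction is correct: additivity plus positivity gives monotonicity, monotonicity plus lower semicontinuity upgrades finite to countable additivity along the trace-norm-convergent partial sums of the spectral decomposition, and the operator identity $\kettbra{\psi+\phi}+\kettbra{\psi-\phi}=2\kettbra{\psi}+2\kettbra{\phi}$ does force the parallelogram law for $q(\psi)=m(\kettbra\psi)$. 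Relative to Takesaki's machinery, your route buys elementarity: it leans entirely on the forms--operators correspondence the paper already records as \cref{thm:forms}, and the round trips and the semifiniteness equivalence (any $m$-finite $\rho$ has all spectral eigenvectors with positive weight inside $Q\subseteq\K$, so density fails whenever $\K\ne\H$) go through as you sketch them.

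Two steps you present as automatic need a line each to be airtight. First, ``polarization then produces a positive sesquilinear form'' is not purely algebraic: the parallelogram law together with $q(c\psi)=\abs{c}^2q(\psi)$ yields additivity and hence only \emph{rational} homogeneity of the polarized form $B(\psi,\phi)=\tfrac12(q(\psi+\phi)-q(\psi)-q(\phi))$, and additive functions $\RR\to\RR$ can fail to be linear. You must inject regularity here: since $t\mapsto\kettbra{t\psi+\phi}$ is trace-norm continuous and $m$ is lower semicontinuous, $t\mapsto B(t\psi,\phi)=\tfrac12\bigl(q(t\psi+\phi)-t^2q(\psi)-q(\phi)\bigr)$ is measurable and additive, hence linear --- this closes the Jordan--von Neumann step using exactly the hypothesis you already have. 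Second, ``$q$ lower semicontinuous, hence closed'' silently invokes the standard equivalence between lower semicontinuity of the $[0,\oo]$-valued extension of a positive quadratic functional and closedness of the form in the sense of \cref{thm:forms} (completeness of $Q$ in the form norm); the direction you need follows from a short Cauchy-sequence argument, applying lsc of $q$ along $\psi_m-\psi_n\to\psi-\psi_n$, and should be stated, together with the (trivial, by fiat) density of $Q$ in $\K:=\bar Q$. With these two insertions your proof is complete and constitutes a legitimately more elementary alternative to the cited general-von-Neumann-algebra argument.
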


Abusing notation, the correspondence between $m$ and $(A,\K)$ can be summarized by $m = A \oplus  \oo$ relative to $\H = \K\oplus \K^\perp$.
The energy functional $\energy[\placeholder]$ induced by a positive reference Hamiltonian $G$ used in the main text is the semifinite element of the extended positive cone corresponding to the pair $(G,\H)$.
For completeness, we mention two further equivalent characterizations of elements of the extended positive cone:%
\footnote{From (iv), one obtains a pair $(A,\K)$ via $\K=\bar{Q(a)}$, where the closure is taken with respect to the norm topology on $\H$, and $A$ is the positive self-adjoint operator on $\K$ inducing the closed densely defined quadratic form $a$ (now viewed as a form on $\K$). The characterization (v) is connected directly to elements $m$ in the extended positive cone via $m(\rho) = \tr\rho \cdot h(\rho /\tr\rho)$ if $\rho\ne0$ and $m(0)=0$.}
\begin{enumerate}[resume*]\it
    \item[(iv)]\label{it:eps4} closed positive quadratic forms $a$ on $\H$ which are not-necessarily densely defined.
    \item[(v)]\label{it:eps5} affine lower semicontinuous functionals $h:\states(\H)\to\bar\RR^+$ on the state space.
\end{enumerate} 

The main advantage of the extended positive cone is that it makes sense to define the sum and the semidefinite ordering on all pairs of elements of the full extended cone:
For $m_1,m_2\in \overline{\B(\H)}{}^+$ and $\lambda\ge0$, the element $m_1+\lambda m_2\in \overline{\B(\H)}{}^+$ is defined by $(m_1+\lambda m_2)(\rho)=m_1(\rho)+\lambda m_2(\rho)$, and the order relation $m_1\le m_2$ is defined via $m(\rho)\le n(\rho)$ for all $\rho\in\TC^+$.
In contrast, linear combinations and order relations can only be defined in the realm of positive self-adjoint operators if certain domain assumptions are met.
If $m_1,m_2$ are semifinite and correspond to positive self-adjoint operators $A_1,A_2$, respectively, then $m_1\le m_2$ if and only if $A_1\le A_2$ (see \eqref{eq:order}).
Furthermore, if $m_1+m_2$ is semifinite, it corresponds to the form sum $A_1\dot+A_2$ \cite[Ap.~A.9]{takesaki2}.
We also need the following notions:
\begin{itemize}
    \item For $K\in\B(\H,\K)$ and $m\in \overline{\B(\K)}^+$, $K^*m K\in \overline{\B(\H)}{}^+$ is defined by $(K^*mK)(\rho)=m(K\rho K^*)$.
    \item For $m_i\in\overline{\B(\H_i)}^+$, $i=1,2$, we define $m_1\ox m_2\in\overline{\B(\H_{1}\!\ox\!\H_{2})}^+$ via the corresponding pairs $(A_i,\K_i)$ as the element corresponding to the pair $(A_1\ox A_2,\K_1\ox\K_2)$
    \item If $T:\T(\H_A)\to\T(\H_B)$ is a (bounded) positive linear map and $m\in\overline{\B(\H_B)}^+$, define $T^*m\in\overline{\B(\H_A)}^+$ via $T^*m(\rho) = m(T\rho)$.
    \item For $m\in\overline{\B(\H)}{}^+$ and a Borel function $f:\bar\RR^+\to\bar\RR^+$ and $m\in\overline{\B(\H)}{}^+$, define $f(m)\in\overline{\B(\H)}{}^+$ via the associated Borel measure $P$ of $m$ as the element whose associated Borel measure is the push-forward measure $f_*P$, i.e., $$f(m)(\rho) =\int_0^\oo f(\lambda) \tr[\rho dP(\lambda)]+ \tr[(1-P)\rho]\cdot f(\oo).$$ 
        Therefore, $f(m)$ is semifinite if and only if $f^{-1}(\{\oo\})$ is an $P$-null set. In particular, $f(m)$ is semifinite if no element is mapped to infinity.
\end{itemize}

If $f:\RR^+\to\RR^+$ is a monotone function, we define an extension $f:\bar\RR^+\to\bar\RR^+$ by setting $f(\oo)=\sup f$. This extension is a Borel function.

\begin{lemma}[{\cite{dinh2019}}]\label{thm:op_monotone}
    Let $f:\RR^+\to\RR^+$ be an operator-monotone function and let $m,n\in\overline{\B(\H)}{}^+$. Then
    \begin{equation}
        m\le n \implies f(m)\le f(n).
    \end{equation}
\end{lemma}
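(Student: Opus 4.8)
The plan is to reduce the assertion to the classical operator-monotonicity inequality for \emph{bounded} operators, approximating $m$ and $n$ from below by bounded elements of the extended positive cone and then passing to the limit by monotone convergence. The only genuinely new ingredient is an \emph{antitonicity of the resolvents} on the extended positive cone, which I would establish through a variational (Legendre-type) formula; everything else is bookkeeping built on the correspondence between extended-cone elements, projection-valued measures and closed quadratic forms recalled above (together with \cref{thm:forms}).

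First I would fix $\lambda>0$ and define the bounded positive operator $R_\lambda(m)\in\B(\H)^+$ as the element of the extended positive cone obtained by pushing the projection-valued measure $P_m$ of $m$ forward along $x\mapsto(x+\lambda)^{-1}$, with the convention $(\oo+\lambda)^{-1}=0$; concretely $R_\lambda(m)=(A_m+\lambda)^{-1}$ on $\K_m$ and $0$ on $\K_m^\perp$, where $(A_m,\K_m)$ is the pair associated with $m$. The key claim is the variational identity
\begin{equation*}
    \ip\psi{R_\lambda(m)\psi} = \sup_{\phi\in\H}\Big(2\Re\ip\phi\psi - m(\kettbra\phi) - \lambda\norm\phi^2\Big),\qquad \psi\in\H,
\end{equation*}
where $m(\kettbra\phi):=+\oo$ whenever $\phi$ lies outside the form domain $Q=\{\phi:m(\kettbra\phi)<\oo\}$. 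To prove it I would restrict to $\K_m$ (on $\K_m^\perp$ the form is $+\oo$, so those $\phi$ never attain the supremum), substitute $\eta=(A_m+\lambda)^{1/2}\phi$, and complete the square, recovering the standard Rayleigh-quotient formula for the resolvent of a positive self-adjoint operator. Since $m\le n$ means $m(\rho)\le n(\rho)$ for all $\rho\in\TC^+$, and in particular $m(\kettbra\phi)\le n(\kettbra\phi)$ for every $\phi$, enlarging the form only decreases the supremum, so the identity immediately yields $R_\lambda(n)\le R_\lambda(m)$.

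Next I would introduce the Yosida-type approximants $y_\lambda(m):=m(1+\lambda m)^{-1}=\tfrac1\lambda\big(1-\lambda^{-1}R_{\lambda^{-1}}(m)\big)$ for $\lambda>0$, which are bounded positive operators with $\norm{y_\lambda(m)}\le\lambda^{-1}$ and whose associated measure is the pushforward of $P_m$ along $x\mapsto x/(1+\lambda x)$ (with $\oo\mapsto\lambda^{-1}$). The antitonicity of $R_{\lambda^{-1}}$ gives $y_\lambda(m)\le y_\lambda(n)$ as bounded operators. Because both sides are now bounded and $f$ is operator monotone on $\RR^+$, the classical bounded-operator result gives $f(y_\lambda(m))\le f(y_\lambda(n))$. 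Finally, as $\lambda\to0^+$ one has $y_\lambda(x)=x/(1+\lambda x)\nearrow x$ for every $x\in\bar\RR^+$ (with $y_\lambda(\oo)=\lambda^{-1}\nearrow\oo$), so by continuity and monotonicity of the extension $f:\bar\RR^+\to\bar\RR^+$ with $f(\oo)=\sup f$, one gets $f(y_\lambda(x))\nearrow f(x)$ pointwise. Integrating against the positive measure $\tr[\rho\,dP_m(\cdot)]$ and invoking the monotone convergence theorem shows $f(y_\lambda(m))(\rho)\nearrow f(m)(\rho)$ for every $\rho\in\TC^+$, and likewise for $n$. Passing to the limit in $f(y_\lambda(m))\le f(y_\lambda(n))$ then yields $f(m)\le f(n)$.

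The hard part is the careful bookkeeping around the infinite part of the cone: one must verify that $R_\lambda(m)$, $y_\lambda(m)$ and $f(m)$, all defined through the pushforward of $P_m$, are consistent with the variational formula and with the functional calculus on $\K_m$, and that the convention $f(\oo)=\sup f$ makes the monotone convergence go through on $\K_m^\perp$. Once the variational formula is in place, the remaining steps are a routine application of the correspondence between extended-cone elements, projection-valued measures and closed quadratic forms together with the monotone convergence theorem. (Alternatively, one could replace the Yosida approximation step by the Löwner integral representation $f(x)=f(0)+\beta x+\int_{(0,\oo)}\tfrac{x}{x+\lambda}\,d\mu(\lambda)$ and combine the resolvent antitonicity termwise, but the Yosida route keeps the analytic input to a single explicitly operator-monotone function.)
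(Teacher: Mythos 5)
Your proof is correct, but it is worth noting that the paper does not prove this lemma at all: it is stated with a citation to \cite{dinh2019} and used as imported input (alongside \cref{thm:Hansen_ineq}), so your argument is a genuinely self-contained alternative rather than a reconstruction. Your three ingredients all check out. The variational identity is the standard Legendre/completion-of-squares formula for $\ip{\psi}{(A+\lambda)^{-1}\psi}$, and it extends correctly to the extended cone because the convention $m(\kettbra\phi)=+\oo$ off the form domain automatically restricts the supremum to $Q\subseteq\K_m$, so both sides vanish on $\K_m^\perp$, consistent with $(\oo+\lambda)^{-1}=0$; antitonicity $R_\lambda(n)\le R_\lambda(m)$ then needs only $m(\kettbra\phi)\le n(\kettbra\phi)$, which is immediate from the definition of the order via rank-one elements of $\TC^+$. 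The algebra $y_\lambda(m)=\tfrac1\lambda(1-\lambda^{-1}R_{\lambda^{-1}}(m))$ is right (on $\K_m$ it gives $A_m(1+\lambda A_m)^{-1}$, on $\K_m^\perp$ the constant $\lambda^{-1}$, matching the pushforward of $P_m$ along the homeomorphism $x\mapsto x/(1+\lambda x)$ of $[0,\oo]$ onto $[0,\lambda^{-1}]$), and the reduction to Löwner's theorem for bounded operators is legitimate since the spectra lie in $[0,\lambda^{-1}]\subset\RR^+$. The limit step is sound because operator-monotone functions on $\RR^+$ are automatically continuous and nondecreasing, so $f(x/(1+\lambda x))\nearrow f(x)$ pointwise on $[0,\oo)$ and $f(\lambda^{-1})\nearrow\sup f=f(\oo)$ on the atom at infinity, exactly matching the paper's extension convention; monotone convergence (applied along a sequence $\lambda_k\downarrow0$, integrating against the positive measures $\tr[\rho\,dP_m(\placeholder)]$ and $\tr[\rho\,dP_n(\placeholder)]$) then transports $f(y_\lambda(m))\le f(y_\lambda(n))$ to $f(m)\le f(n)$. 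Your parenthetical alternative via the integral representation $f(x)=f(0)+\beta x+\int x/(x+\lambda)\,d\mu(\lambda)$ with $\beta\ge0$, combined termwise with the same resolvent antitonicity and Tonelli, is essentially the route taken in the cited literature; the Yosida route you chose has the advantage of invoking bounded-operator Löwner theory as a black box for the single function $x\mapsto x/(1+\lambda x)$, at the cost of the (correctly handled) bookkeeping at the infinite part of the cone.
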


\begin{lemma}[{\cite{dinh2019}}]\label{thm:Hansen_ineq}
    Let $f:\RR^+\to\RR^+$ be an operator-monotone function and let $K:\H\to\K$ be a linear contraction.
    Then 
    \begin{equation}
        K^*f(m) K \le f(K^*mK),\qquad m\in\overline{\B(\K)}^+.
    \end{equation}
\end{lemma}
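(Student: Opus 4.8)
The plan is to reduce the statement to the classical inequality for bounded operators by a monotone truncation of $m$, and then to pass to the limit using the continuity of the operator-monotone functional calculus along increasing sequences in the extended positive cone. First I would record the bounded case. Since an operator-monotone function $f:\RR^+\to\RR^+$ is operator concave with $f(0)\ge0$, Hansen's inequality (the single-contraction case of the operator Jensen inequality for operator concave functions) gives $K^*f(A)K\le f(K^*AK)$ for every bounded $A\in\B(\K)^+$ and every contraction $K:\H\to\K$. This is precisely the assertion for the \emph{finite} elements $m=A$ of the extended positive cone.

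For general $m\in\overline{\B(\K)}^+$ with associated projection-valued measure $P$, I would truncate by setting $g_n(\lambda)=\min(\lambda,n)$ and $m_n:=g_n(m)$. Each $m_n$ is a bounded positive operator, $m_n\le m_{n+1}$, and $m_n\uparrow m$ in the sense that $m_n(\rho)\uparrow m(\rho)$ for all $\rho\in\T(\K)^+$, by monotone convergence of the defining integral against $P$. Applying the bounded case to each $m_n$ yields
\begin{equation}
    K^*f(m_n)K\le f(K^*m_nK),\qquad n\in\NN,
\end{equation}
and it remains to let $n\to\oo$ on both sides. On the left, $f\circ g_n\uparrow f$ pointwise on $\bar\RR^+$ (with $f(\oo)=\sup f$), so $f(m_n)=(f\circ g_n)(m)\uparrow f(m)$ by monotone convergence of the push-forward measures; composing with the map $\,\cdot\,\mapsto K^*(\,\cdot\,)K$, which preserves increasing limits because it merely reparametrizes the argument as $\rho\mapsto K\rho K^*$, gives $K^*f(m_n)K\uparrow K^*f(m)K$. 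On the right, the same reasoning shows $K^*m_nK\uparrow K^*mK$ in $\overline{\B(\H)}^+$, and one then needs $f(K^*m_nK)\uparrow f(K^*mK)$; passing to the supremum over $n$ in the displayed inequality would then yield $K^*f(m)K\le f(K^*mK)$, as claimed.

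The main obstacle is exactly the normality step on the right-hand side: the operators $K^*m_nK$ are bounded, but their supremum $K^*mK$ need not be, so norm-continuity of $f$ is unavailable. I would justify $f(K^*m_nK)\uparrow f(K^*mK)$ via a Löwner-type integral representation $f(t)=a+bt+\int_{(0,\oo)}\tfrac{t}{t+s}\,d\mu(s)$ with $a=f(0)\ge0$, $b\ge0$ and a positive measure $\mu$, which reduces the claim to the identity and the elementary operator-monotone functions $t\mapsto t/(t+s)$. For the latter, normality follows from the monotone convergence of the resolvents $(B_n+s)^{-1}$ associated with an increasing sequence $B_n\uparrow B$ of positive forms, and the general case is reassembled by monotone convergence in the integral over $\mu$. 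Alternatively, since this normality of the operator-monotone calculus on $\overline{\B(\H)}^+$ is already part of the framework of \cite{dinh2019}, from which \cref{thm:op_monotone} is taken, one may invoke it directly. Everything else—monotone convergence against the spectral measure and the elementary behavior of $K^*(\,\cdot\,)K$—is routine bookkeeping in the extended positive cone.
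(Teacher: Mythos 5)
Your proof is correct, but it takes a genuinely different route from the paper: the paper's entire proof of this lemma is the remark that the inequality is established in \cite{dinh2019} for the square case $\K=\H$ and that the same proof goes through for a rectangular contraction $K:\H\to\K$. You instead rebuild the statement from scratch: the bounded case is the classical Hansen inequality (valid since an operator-monotone $f:\RR^+\to\RR^+$ is operator concave with $f(0)\ge0$), and the general case follows by truncating $m_n=g_n(m)$ with $g_n(\lambda)=\min(\lambda,n)$ and passing to increasing limits. You also correctly isolate the one genuinely non-trivial step, the normality $f(K^*m_nK)\uparrow f(K^*mK)$ of the operator-monotone calculus on $\overline{\B(\H)}{}^+$, and your fix is sound: via the L\"owner representation $f(t)=a+bt+\int_{(0,\oo)}\tfrac{t}{t+s}\,d\mu(s)$ (with the usual integrability condition on $\mu$, and Tonelli to exchange the integral with the supremum over $n$) the claim reduces to the resolvent functions $t\mapsto t/(t+s)$, for which the Kato--Simon monotone convergence theorem for increasing sequences of closed positive forms yields strong resolvent convergence, including the non-densely-defined part that the extended positive cone is designed to accommodate. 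Two bookkeeping points are worth making explicit if you write this up: first, elements of $\overline{\B(\K)}{}^+$ are determined by their values on rank-one operators (additivity plus lower semicontinuity gives countable additivity along spectral decompositions), which is what identifies the pointwise supremum of the forms of $K^*m_nK$ with $K^*mK$; second, $f(m_n)=(f\circ g_n)(m)$ rests on compatibility of the push-forward calculus with composition, immediate from $(f\circ g_n)_*P=f_*\big((g_n)_*P\big)$, where the singular mass at $\oo$ is handled by $g_n(\oo)=n$ and $f(n)\uparrow\sup f=f(\oo)$, as you note. What each approach buys: the paper's citation is shorter and inherits the framework of \cite{dinh2019} wholesale (as you yourself observe, one could invoke that framework to shortcut the normality step), while your argument makes the lemma self-contained and makes transparent exactly where lower semicontinuity and monotone convergence enter.
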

\begin{proof}
    This result is proved in \cite{dinh2019} for the case that $\K=\H$, but the same proof works in the general case.
\end{proof}

\begin{corollary}\label{thm:Stinespring_lemma}
    Let $T:\T(\H_A)\to\T(\H_B)$ be completely positive and let $(V,\K)$ be a Stinespring dilation, i.e.\ $V\in\B(\H_A,\H_B\ox\K)$ is such that $T\rho=\tr_\K V\rho V^*$ for all $\rho\in\T(\H_A)$.
    Then
    \begin{equation}
        T^*m = V^* (m\ox 1) V,\qquad m\in\overline{\B(\H_B)}^+.
    \end{equation}
\end{corollary}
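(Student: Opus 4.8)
The plan is to prove the equality at the level of the extended positive cone by checking that both sides, viewed as maps $\T(\H_A)^+\to\bar\RR^+$, agree on every $\rho\in\T(\H_A)^+$; since an element of $\overline{\B(\H_A)}{}^+$ is by definition exactly such a map, pointwise agreement is equality. Unfolding the definitions recalled in this appendix, the right-hand side evaluates to $(V^*(m\ox1)V)(\rho)=(m\ox1)(V\rho V^*)$ and the left-hand side to $(T^*m)(\rho)=m(T\rho)=m\big(\tr_\K(V\rho V^*)\big)$. As $V\rho V^*$ runs through a subset of $\T(\H_B\ox\K)^+$, the whole corollary reduces to the single identity
\[
    (m\ox1)(\sigma)=m(\tr_\K\sigma),\qquad \sigma\in\T(\H_B\ox\K)^+,\ m\in\overline{\B(\H_B)}{}^+,
\]
which says that tensoring with the identity on $\K$ implements the partial trace over $\K$.

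First I would prove this identity for \emph{finite} $m$, i.e.\ when $m$ corresponds to a bounded operator $A\in\B(\H_B)^+$. Then $m\ox1$ corresponds to $A\ox1_\K$ and a direct trace computation gives $(m\ox1)(\sigma)=\tr[(A\ox1_\K)\sigma]=\tr[A\,\tr_\K\sigma]=m(\tr_\K\sigma)$. To reach general $m$ I would truncate with $f_n(x)=\min(x,n)$ (so $f_n(\oo)=n$): the elements $m_n:=f_n(m)$ are finite and increase to $m$ pointwise on $\T(\H_B)^+$, by monotone convergence against the finite measure $\tr[\rho\,dP(\lambda)]$ associated with $m$. The structural step I would then isolate is the commutation relation
\[
    f_n(m)\ox1 = f_n(m\ox1).
\]
Granting it, $m_n\ox1=f_n(m\ox1)\nearrow m\ox1$ pointwise on $\T(\H_B\ox\K)^+$ (again monotone convergence, now for $m\ox1$), while $m_n(\tr_\K\sigma)\nearrow m(\tr_\K\sigma)$, so letting $n\to\oo$ in the finite-case identity $(m_n\ox1)(\sigma)=m_n(\tr_\K\sigma)$ yields the identity for arbitrary $m$, and hence the corollary.

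The main obstacle is the commutation relation $f_n(m)\ox1=f_n(m\ox1)$, the one place needing genuine care. Writing $m=(A,\K')$ for its associated pair, $m\ox1$ corresponds to $(A\ox1_\K,\K'\ox\K)$, whose $\oo$-mass is the projection onto $(\K'\ox\K)^\perp=(\K')^\perp\ox\K$. Applying $f_n$ moves this mass to the value $n$ and acts by functional calculus on the finite part, where $f_n(A\ox1_\K)=f_n(A)\ox1_\K$ because the spectral measure of $A\ox1_\K$ is $E_A\ox1_\K$; collecting the two pieces shows $f_n(m\ox1)$ is the bounded operator $\big(f_n(A)\oplus n\,1_{(\K')^\perp}\big)\ox1_\K$. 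On the other hand $m_n=f_n(m)$ is precisely the bounded operator $f_n(A)\oplus n\,1_{(\K')^\perp}$ on $\H_B$, so $m_n\ox1_\K$ is the same operator, establishing the relation. The remaining ingredients are routine and can be invoked without computation: lower semicontinuity of $\sigma\mapsto m(\tr_\K\sigma)$ follows from contractivity of the partial trace together with lower semicontinuity of $m$, and the reduction to pointwise equality on $\T^+$ is immediate from the definition of the extended positive cone.
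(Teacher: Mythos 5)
Your proof is correct. The paper in fact states this corollary \emph{without} proof, treating it as immediate from the definitions in the appendix, and your write-up supplies exactly the intended justification: the reduction of both sides to the single identity $(m\ox 1)(\sigma)=m(\tr_\K\sigma)$ for $\sigma\in\T(\H_B\ox\K)^+$, the trivial bounded case, and the truncation $f_n(x)=\min(x,n)$ with the (correctly verified, and genuinely the only non-routine point) commutation $f_n(m)\ox 1=f_n(m\ox 1)$ via the pair description $(A\ox 1_\K,\K'\ox\K)$, followed by monotone convergence.
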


\begin{corollary}\label{thm:op_monotone_quantum_op}
    Let $f:\RR^+\to\RR^+$ be an operator-monotone function and let $T:\T(\H_A)\to\T(\H_B)$ be a completely positive trace-nonincreasing map.
    Then
    \begin{equation}
        T^*f(m) \le f(T^*m),\qquad m\in\overline{\B(\K)}^+.
    \end{equation}
\end{corollary}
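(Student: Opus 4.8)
The plan is to reduce the claimed inequality to the operator Jensen-type estimate of \cref{thm:Hansen_ineq} by passing through a Stinespring dilation. First I would fix a Stinespring dilation $(V,\K)$ of $T$, so that $V\in\B(\H_A,\H_B\ox\K)$ and $T\rho=\tr_\K V\rho V^*$ for all $\rho\in\T(\H_A)$. Because $T$ is trace-nonincreasing, $\tr[V^*V\rho]=\tr[V\rho V^*]=\tr[T\rho]\le\tr\rho$ holds for every $0\le\rho\in\T(\H_A)$, whence $V^*V\le1$ and $V$ is a contraction. This contractivity is precisely the hypothesis needed to invoke \cref{thm:Hansen_ineq}, and it is the only place where the trace-nonincreasing assumption is used.

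Next I would apply \cref{thm:Stinespring_lemma}, which gives $T^*n=V^*(n\ox1)V$ for every $n\in\overline{\B(\H_B)}{}^+$. Using this for both $n=m$ and $n=f(m)$ yields the two identities $T^*m=V^*(m\ox1)V$ and $T^*f(m)=V^*\big(f(m)\ox1\big)V$. The key algebraic observation is that functional calculus in the extended positive cone commutes with tensoring by the identity, i.e. $f(m)\ox1=f(m\ox1)$. I would verify this directly from the spectral description: if $m$ corresponds to the pair $(A,\K_0)$ with associated projection-valued measure $P$, then $m\ox1$ corresponds to $(A\ox1_\K,\K_0\ox\K)$ whose associated measure is $P(\placeholder)\ox1_\K$ together with the atom $f(\oo)=\sup f$ assigned on $\K_0^\perp\ox\K$. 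Pushing this measure forward by the Borel function $f$ manifestly agrees with the measure defining $f(m)\ox1$, since the extension $f(\oo)=\sup f$ is applied on exactly the same infinite part in both expressions.

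With these identities in hand, the corollary follows in a single line by applying \cref{thm:Hansen_ineq} to the contraction $V$ and the element $m\ox1\in\overline{\B(\H_B\ox\K)}{}^+$:
\[
    T^*f(m)=V^*\big(f(m)\ox1\big)V=V^*f(m\ox1)V\le f\big(V^*(m\ox1)V\big)=f(T^*m).
\]
This chains the Stinespring identities with Hansen's inequality and the tensor-commutation of $f$, giving the assertion.

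I expect the only genuine subtlety to be the identity $f(m)\ox1=f(m\ox1)$, and in particular the bookkeeping for the value $f(\oo)=\sup f$ on the infinite part of the cone; every other step is a direct substitution. If one prefers to avoid the spectral-measure computation, an alternative is to note that the identity is standard functional calculus when $m$ is bounded, and then to pass to general $m$ by approximating $A$ from below by its spectral truncations $A\restrictedto[0,N]$, using the lower semicontinuity built into the extended positive cone together with the operator monotonicity of $f$ to take the limit.
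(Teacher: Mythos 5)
Your proof is correct and follows exactly the paper's route: the paper proves this corollary in one line by combining \cref{thm:Hansen_ineq} and \cref{thm:Stinespring_lemma}, which is precisely your chain $T^*f(m)=V^*(f(m)\ox1)V=V^*f(m\ox1)V\le f(V^*(m\ox1)V)=f(T^*m)$. You merely make explicit the details the paper leaves implicit — that trace-nonincreasingness forces $V^*V\le1$ so that $V$ is a contraction, and that $f(m)\ox1=f(m\ox1)$ in the extended positive cone (your spectral-measure verification of the latter, including the bookkeeping of the atom $f(\oo)=\sup f$ on the infinite part, is sound).
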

\begin{proof}
    This follows from combining \cref{thm:Hansen_ineq,thm:Stinespring_lemma}.
\end{proof}

\section{A generation theorem for dissipative operators on Hilbert spaces}
\label{sec:generation}

In this appendix, we present a generation theorem for dissipative operators on Hilbert spaces.
The core idea is to use infinitesimal energy-limitedness to verify the assumptions of the Lumer-Phillips generation theorem.
The class of dissipative generators satisfying the assumptions of this theorem is closed under summation. By restricting to skew-hermitian operators, we obtain Nelson's commutator theorem as a special case.

Recall that an operator $K:\H\supseteq \dom K\to\H$ is called dissipative if 
\begin{equation*}
    \Re \ip\psi{K\psi} \le 0, \qquad \psi\in \dom K.
\end{equation*}
Dissipativity captures infinitesimally that an operator generates a contraction semigroup, i.e., a strongly continuous one-parameter semigroup of linear contractions on $\H$.
However, not all dissipative operators are generators (like not all symmetric operators are self-adjoint).
Those that are generators are precisely the maximally dissipative operators, i.e., dissipative operators that admit no proper dissipative extensions \cite{arendt_extensions_2023}.

If a given dissipative operator is not a generator one must one has to find a generating extension.\footnote{Every dissipative operator admits a maximally dissipative extension \cite{arendt_extensions_2023}. This is in contrast to symmetric operators that only admit self-adjoint extensions if their defect indices are equal.}
In good cases, there is a unique generating extension, namely the closure $\bar K$ (this corresponds to essentially self-adjoint operators).
The following theorem provides sufficient conditions for this:

\begin{theorem}\label{thm:generation}
    Let $\H$ be a Hilbert space and let $N\ge0$ a self-adjoint operator with core $\D$.
    Let $K: \D\to\H$ be a dissipative $N$-bounded operator and let $\omega>0$ such that
    \begin{equation}\label{eq:generation1}
        \ip{K\psi}{N\psi}+\ip{N\psi}{K\psi} \le \omega \ip\psi{N\psi},\qquad \psi\in \D,
    \end{equation}
    Then $\bar K$ generates a contraction semigroup, $\dom\bar K\supseteq \dom N$, and every core for $N$ is a core for $\bar K$. 
\end{theorem}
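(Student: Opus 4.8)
The plan is to use the infinitesimal estimate \eqref{eq:generation1} to verify the range condition of the Lumer--Phillips theorem. Since $\D$ is dense and $K$ is dissipative, $K$ is closable (densely defined dissipative operators on a Hilbert space are closable, see \cite{arendt_extensions_2023}) and $\bar K$ is again dissipative; by Lumer--Phillips \cite{EngelNagel,arendt_extensions_2023} it then suffices to show that $\Ran(\lambda-K)$ is dense in $\H$ for some $\lambda>0$. Before that I would record two reductions. First, because $K$ is $N$-bounded on the core $\D$, it extends by $N$-graph-norm limits to an $N$-bounded operator on all of $\dom N$ whose closure is still $\bar K$, and \eqref{eq:generation1} extends by continuity to all $\psi\in\dom N$; thus I may assume $\D=\dom N$. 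Second, this same extension argument already shows $\dom N\subseteq\dom\bar K$, and since $\bar K=\overline{K\restrictedto\dom N}$ it shows that $\dom N$ is a core for $\bar K$.

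\textbf{The range condition.} Fix $\lambda>0$ and suppose $\eta\perp\Ran(\lambda-K)$; equivalently $\eta\in\dom K^*$ with $K^*\eta=\lambda\eta$. The idea is to regularize $\eta$ into $\dom N$ and exploit \eqref{eq:generation1}. Set $R_n=(1+\tfrac1nN)^{-1}$ and $\eta_n=R_n\eta\in\dom N\subseteq\dom K$. Writing $\eta-\eta_n=(1-R_n)\eta=\tfrac1nN\eta_n$ via the identity $1-R_n=\tfrac1nNR_n$, using $\ip{\eta}{K\eta_n}=\ip{K^*\eta}{\eta_n}=\lambda\ip{\eta}{\eta_n}=\lambda\norm{R_n^{1/2}\eta}^2$ on the one hand and the dissipativity bound $\Re\ip{\eta_n}{K\eta_n}\le0$ on the other, I would obtain
\begin{equation*}
    \lambda\,\norm{R_n^{1/2}\eta}^2
    = \Re\ip{\eta_n}{K\eta_n} + \tfrac1n\Re\ip{N\eta_n}{K\eta_n}
    \le \tfrac1n\Re\ip{N\eta_n}{K\eta_n}.
\end{equation*}
The commutator estimate \eqref{eq:generation1} (applied to $\psi=\eta_n\in\dom N$) bounds the right-hand side by $\tfrac{\omega}{2n}\ip{\eta_n}{N\eta_n}=\tfrac\omega2\ip{\eta}{R_n(1-R_n)\eta}$, which tends to $0$ by the spectral theorem, while $\norm{R_n^{1/2}\eta}^2\to\norm\eta^2$. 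Letting $n\to\oo$ yields $\lambda\norm\eta^2\le0$, hence $\eta=0$. This gives $\overline{\Ran(\lambda-K)}=\H$, and Lumer--Phillips concludes that $\bar K$ generates a contraction semigroup.

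\textbf{Core transfer.} Given any core $\D'$ for $N$, it is $N$-graph dense in $\dom N$, and since the extended $K$ is $N$-bounded, $N$-graph convergence implies $\bar K$-graph convergence; hence $\D'$ is $\bar K$-graph dense in $\dom N$, which is itself a core for $\bar K$ by the second reduction. Therefore $\D'$ is a core for $\bar K$, completing the proof.

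\textbf{Main obstacle.} The delicate step is the range/density argument, i.e.\ ruling out deficiency vectors $\eta$. The subtlety is that the hypotheses only control the ``energy'' $\ip{\psi}{N\psi}=\norm{N^{1/2}\psi}^2$ through \eqref{eq:generation1}, whereas $K$ is merely $N$-bounded; the regularization $R_n$ is exactly what reconciles the two, since the error term $\tfrac1n\ip{\eta_n}{N\eta_n}$ carries precisely one factor of $N$ and vanishes in the limit. In a full write-up I would take care to justify that $\eta_n\in\dom N$ legitimizes applying \eqref{eq:generation1} to $\psi=\eta_n$ and that the adjoint relation $\ip{\eta}{K\eta_n}=\ip{K^*\eta}{\eta_n}$ is invoked on the correct domains; the remaining manipulations are routine.
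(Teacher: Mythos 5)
Your proposal is correct and takes essentially the same route as the paper's proof (Nelson's argument): reduce to $\D=\dom N$ using $N$-boundedness, verify the Lumer--Phillips range condition by pairing a putative deficiency vector against a resolvent regularization lying in $\dom N$ and invoking \eqref{eq:generation1}, then transfer cores via $N$-boundedness. The only differences are cosmetic: the paper tests once with $\psi=(1+N)^{-1}\phi$ and concludes for $\lambda>\omega/2$, whereas you use the family $(1+\tfrac1nN)^{-1}$ and pass to the limit, obtaining density of $\Ran(\lambda-K)$ for every $\lambda>0$ -- either version suffices for Lumer--Phillips, and your direct graph-density argument for the core transfer is a valid (slightly more streamlined) alternative to the paper's re-running of the generation argument.
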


\cref{thm:dissipative} in the main text shows that the assumptions furthermore imply 
\begin{equation}\label{eq:generation2}
    \norm{N^{\frac12}e^{t\bar K}\psi} \le e^{\omega t/2} \norm{N^{\frac12}\psi}, \qquad \psi\in \dom N.
\end{equation}
In the case of a skew-symmetric operator $K$, our theorem implies Nelson's Commutator Theorem \cite{nelson_time-ordered_1972}:

\begin{corollary}[Nelson's Commutator Theorem]
    Let $\H$ be a Hilbert space and let $N\ge0$ a self-adjoint operator with core $\D$.
    Let $H: \D\to\H$ be a symmetric $N$-bounded operator such that
    \begin{equation}
        \pm i\big(\ip{H\psi}{N\psi}-\ip{N\psi}{H\psi}\big) \le \omega \ip\psi{N\psi}, \qquad \psi\in\D,
    \end{equation}
    for some $\omega>0$. Then $H$ is essentially self-adjoint, $\dom\bar H\supseteq \dom N$, and every core for $N$ is a core for $\bar H$.
\end{corollary}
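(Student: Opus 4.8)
The plan is to apply \cref{thm:generation} twice, once to each of the dissipative operators $K_\pm := \pm iH$ on $\D$, and then to combine the two conclusions through the standard characterization (recalled in \cref{sec:unitary_dynamics}) that a densely defined operator is self-adjoint precisely when both $i$ and $-i$ times it are maximally dissipative. First I would check the hypotheses of \cref{thm:generation} for $K_\pm$. Since $H$ is symmetric, $\ip\psi{H\psi}\in\RR$ for all $\psi\in\D$, so $\Re\ip\psi{\pm iH\psi}=\mp\Im\ip\psi{H\psi}=0$; thus $K_+$ and $K_-$ are dissipative (in fact conservative), and they are $N$-bounded on $\D$ because $H$ is.

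It remains to verify the quadratic-form commutator bound \eqref{eq:generation1} for $K_\pm$. Using that the inner product is antilinear in its first argument, a direct computation gives
\begin{align*}
    \ip{(\pm iH)\psi}{N\psi}+\ip{N\psi}{(\pm iH)\psi}
    &= \mp i\,\ip{H\psi}{N\psi}\pm i\,\ip{N\psi}{H\psi}\\
    &= \mp i\big(\ip{H\psi}{N\psi}-\ip{N\psi}{H\psi}\big).
\end{align*}
Hence the bound \eqref{eq:generation1} for $K_+=iH$ is exactly the "$-$" instance of the corollary's hypothesis, and the bound for $K_-=-iH$ is exactly the "$+$" instance; in both cases the right-hand side is $\le \omega\ip\psi{N\psi}$. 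Therefore \cref{thm:generation} applies to each of $K_+$ and $K_-$.

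Applying the theorem, $\overline{K_\pm}$ generate contraction semigroups -- so both are maximally dissipative -- with $\dom\overline{K_\pm}\supseteq\dom N$ and with every core for $N$ a core for $\overline{K_\pm}$. Because $H$ is symmetric it is closable, and closure commutes with multiplication by a nonzero scalar, so $\overline{K_\pm}=\pm i\bar H$. Thus $i\bar H$ and $-i\bar H$ are both maximally dissipative, which forces $\bar H$ to be self-adjoint; that is, $H$ is essentially self-adjoint. The remaining assertions transfer immediately: $\dom\bar H=\dom(i\bar H)\supseteq\dom N$, and since $i\bar H$ and $\bar H$ differ only by the scalar $i$ they have identical cores, so every core for $N$ is a core for $\bar H$.

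I do not anticipate a serious obstacle here, as the corollary is essentially a repackaging of \cref{thm:generation}. The only points demanding care are getting the sign conventions right in the commutator computation above (so that the two sign choices in the hypothesis line up with $K_+$ and $K_-$), the observation that $\Re\ip\psi{\pm iH\psi}$ vanishes for symmetric $H$, and the clean passage $\overline{\pm iH}=\pm i\bar H$ together with the maximal-dissipativity characterization of self-adjointness.
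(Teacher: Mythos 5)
Your proposal is correct and takes essentially the same route as the paper: apply \cref{thm:generation} to both dissipative operators $\pm iH$ (whose dissipativity follows from symmetry of $H$, and whose form bounds \eqref{eq:generation1} are exactly the two sign instances of the hypothesis), identify $\overline{\pm iH}=\pm i\bar H$, and conclude self-adjointness of $\bar H$ together with the domain and core statements. The only cosmetic difference is the final step---the paper observes that the two contraction semigroups are adjoints of each other and hence unitary, whereas you invoke the equivalent characterization that an operator is self-adjoint if and only if $\pm i$ times it are both maximally dissipative, which the paper itself records in \cref{sec:unitary_dynamics}.
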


\begin{proof}
    Since $H$ is symmetric, $K=\pm i H$ is dissipative.
    Applying \cref{thm:generation} to both operators shows that $i \bar H$ and $-i\bar H$ generate contraction semigroups. 
    Since these semigroups are adjoints of each other, both are unitary, and $\bar H$ is self-adjoint.
\end{proof}

We note an immediate consequences of our result:
The class of dissipative operators that satisfy the assumptions or \cref{thm:generation} for a core $\D$ for $N$ is closed under positive linear combinations.
Thus, if we can decompose a given operator $K$ into a real and an imaginary part, it suffices to check the conditions for these parts separately:

\begin{corollary}\label{corollary}
    Let $\H$, $N$ and $\D$ be as in \cref{thm:generation}. 
    Let $H,P:\D\to\H$ be symmetric $N$-bounded operators such that $\ip\psi{P\psi}\ge0$ for all $\psi\in\D$.
    If 
    \begin{equation}\label{eq:corollary1}
        -i(\ip{H\psi}{N\psi}-\ip{N\psi}{H\psi}) \le \omega \ip\psi{N\psi}, \qquad \psi\in\D,
    \end{equation}
    and 
    \begin{equation}\label{eq:corollary2}
        \ip{P\psi}{N\psi} + \ip{P\psi}{K\psi} \le \omega \ip\psi{N\psi},\qquad\psi\in\D,
    \end{equation}
    then the closure of $K=iH-P$ generates a contraction semigroup.
    In fact, the same holds for $(-i\alpha H-\beta P)$ for all $\alpha,\beta>0$.
\end{corollary}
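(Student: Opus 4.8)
The plan is to apply \cref{thm:generation} directly to $K=iH-P$, using the observation preceding the statement that the three hypotheses of that theorem---dissipativity, $N$-boundedness, and the commutator bound \eqref{eq:generation1}---are each preserved under positive linear combinations of operators. Since $K=iH+(-P)$ is such a combination of the skew-symmetric piece $iH$ and the accretive piece $-P$, it suffices to verify the hypotheses for these two summands separately and then add the resulting bounds. The structural fact that makes this work is that the left-hand side of \eqref{eq:generation1}, namely $\ip{K\psi}{N\psi}+\ip{N\psi}{K\psi}=2\Re\ip{K\psi}{N\psi}$, is real-linear in $K$, so that a bound for two operators with constants $\omega_1,\omega_2$ passes to any positive combination with the constants combined accordingly; dissipativity and $N$-boundedness are likewise stable under sums and nonnegative scalings.

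First I would verify the summands. Both $H$ and $P$ are $N$-bounded, hence so is $K$. For dissipativity, symmetry of $H$ forces $\ip{\psi}{H\psi}\in\RR$, whence $\Re\ip{\psi}{iH\psi}=0$, so $iH$ is dissipative; and $\Re\ip{\psi}{-P\psi}=-\ip{\psi}{P\psi}\le0$ because $P\ge0$, giving $\Re\ip{\psi}{K\psi}=-\ip{\psi}{P\psi}\le 0$. For the commutator bound I would split $2\Re\ip{K\psi}{N\psi}$ into its two contributions: the skew-symmetric part equals $-i\big(\ip{H\psi}{N\psi}-\ip{N\psi}{H\psi}\big)$ and is bounded by $\omega\ip{\psi}{N\psi}$ via \eqref{eq:corollary1}, while the accretive part $2\Re\ip{-P\psi}{N\psi}$ is bounded above by $\omega\ip{\psi}{N\psi}$ via \eqref{eq:corollary2}. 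Adding these establishes \eqref{eq:generation1} for $K$ with stability constant $2\omega$.

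Having checked all three hypotheses, \cref{thm:generation} yields at once that $\bar K$ generates a contraction semigroup, that $\dom N\subseteq\dom\bar K$, and that every core for $N$ is a core for $\bar K$. The general operator $-i\alpha H-\beta P$ is treated by the identical splitting after rescaling the two pieces by $\alpha$ and $\beta$, the constant becoming $(\alpha+\beta)\omega$. The only delicate point is the real-part bookkeeping that separates $2\Re\ip{K\psi}{N\psi}$ cleanly into an $H$- and a $P$-contribution and matches each to the corresponding hypothesis; once this is in place no further analytic work is required, since the conclusion is a single invocation of \cref{thm:generation}.
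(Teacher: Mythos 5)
Your proposal is correct and is essentially the paper's own argument: the corollary is derived there from the remark immediately preceding it, namely that dissipativity, $N$-boundedness, and the bound \eqref{eq:generation1} are each preserved under positive linear combinations, so one verifies them separately for the skew-symmetric piece $iH$ (via \eqref{eq:corollary1}) and the accretive piece $-P$ (via \eqref{eq:corollary2} together with $P\ge0$), adds the bounds (your $2\omega$, resp.\ $(\alpha+\beta)\omega$, which is harmless since \cref{thm:generation} only needs some $\omega>0$), and invokes \cref{thm:generation} once. The one caveat is not yours but the paper's: as printed, \eqref{eq:corollary2} contains an evident typo (the term $\ip{P\psi}{K\psi}$ refers to $K$ before it is defined, and the sign conventions in the displayed hypotheses do not all match the final claim about $-i\alpha H-\beta P$), and your reading of it as the bound needed for $2\Re\ip{-P\psi}{N\psi}$ is precisely the correction that makes the statement and the proof consistent.
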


As a consequence of \cref{corollary} and the Chernoff product formula \cite[Thm.~III.5.2]{EngelNagel}, we get the following:

\begin{corollary}
    Let $\H$, $N$ be as in \cref{thm:generation} and let $K_1,\ldots K_m$ be operators satisfying the assumptions of \cref{thm:generation} and set $K=\sum_i \bar K_i$. Then
    \begin{equation}
        \big\|\big(e^{t\bar K_1/n}\cdots e^{t \bar K_m/n}\big)^n \psi - e^{t\bar K}\psi\big\| \to 0, \qquad \psi\in\H.
    \end{equation}
\end{corollary}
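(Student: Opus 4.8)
The plan is to deduce the statement from the Chernoff product formula \cite[Thm.~III.5.2]{EngelNagel}. First I would pin down the generator. The remark preceding \cref{corollary} shows that the class of operators satisfying the hypotheses of \cref{thm:generation} (for the fixed core $\D$ of $N$) is closed under positive linear combinations, so the operator $\sum_i K_i$ on $\D$ is again dissipative, $N$-bounded, and satisfies the commutator bound \eqref{eq:generation1} with $\omega$ equal to the sum of the individual constants $\omega_i$. Applying \cref{thm:generation} to $\sum_i K_i$, its closure $\overline{\sum_i K_i}$ generates a contraction semigroup, $\dom N$ is a core for it, and $\dom\overline{\sum_i K_i}\supseteq\dom N$. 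I would then identify this generator with $K=\sum_i\bar K_i$: since each $\bar K_i$ is a contraction-semigroup generator with $\dom\bar K_i\supseteq\dom N$, the operator $\sum_i\bar K_i$ is dissipative and extends the maximally dissipative operator $\overline{\sum_i K_i}$, whence the two coincide. In particular $K=\overline{\sum_i K_i}$ is closed, generates the contraction semigroup $e^{t\bar K}$, and $\dom N$ is a core for $K$.

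Next I would set $F(t)=e^{t\bar K_1}\cdots e^{t\bar K_m}$ for $t\ge0$. As a product of contraction semigroups, $\{F(t)\}_{t\ge0}$ is a strongly continuous family of contractions with $F(0)=\id$, which is exactly the input required by the Chernoff product formula. The key computation is the strong derivative $F'(0)$ on $\dom N$. Using the telescoping identity
\begin{equation*}
    F(t)\psi-\psi=\sum_{j=1}^m e^{t\bar K_1}\cdots e^{t\bar K_{j-1}}\big(e^{t\bar K_j}-\id\big)\psi
\end{equation*}
and dividing by $t$, each summand is $e^{t\bar K_1}\cdots e^{t\bar K_{j-1}}$ applied to $\tfrac1t(e^{t\bar K_j}-\id)\psi$. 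For $\psi\in\dom N\subseteq\dom\bar K_j$ the latter converges in norm to $\bar K_j\psi$, while the prefactor $e^{t\bar K_1}\cdots e^{t\bar K_{j-1}}\to\id$ strongly and remains a contraction. The elementary fact that $S_t g_t\to g$ whenever $g_t\to g$ in norm and $\norm{S_t}\le1$ with $S_t\to\id$ strongly then gives $\tfrac1t(F(t)\psi-\psi)\to\sum_{j=1}^m\bar K_j\psi=K\psi$ for all $\psi\in\dom N$.

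With this in hand I would invoke the Chernoff product formula: since $\dom N$ is dense and a core for the generator $K$, one has $\overline{F'(0)\restrictedto\dom N}=K$, and $(\lambda-K)\dom N$ is dense for $\lambda>0$. Therefore $F(t/n)^n\psi\to e^{tK}\psi=e^{t\bar K}\psi$ for every $\psi\in\H$, uniformly on compact $t$-intervals, which is the claim.

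The main obstacle is the $t\to0^+$ computation of $F'(0)$: one must combine the norm convergence of the difference quotients $\tfrac1t(e^{t\bar K_j}-\id)\psi$ with the merely strong convergence of the prefactors $e^{t\bar K_1}\cdots e^{t\bar K_{j-1}}$, exploiting their uniform contractivity. One must also confirm that the limit is precisely $K$ rather than a possibly larger operator, which is where the identification $\sum_i\bar K_i=\overline{\sum_i K_i}$ via maximal dissipativity, together with the core property of $\dom N$, becomes essential.
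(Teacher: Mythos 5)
Your argument is correct and is essentially the proof the paper intends: the paper obtains this corollary in one line from \cref{corollary} (the hypothesis class of \cref{thm:generation} is closed under positive linear combinations, so the sum again generates, with $\dom N$ a core) combined with the Chernoff product formula \cite[Thm.~III.5.2]{EngelNagel}, and your telescoping computation of $F'(0)$ on $\dom N$, using norm convergence of the difference quotients against the strongly convergent, uniformly contractive prefactors, is exactly the standard Chernoff verification being invoked.

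One small inaccuracy in your identification step: $\sum_i \bar K_i$ extends $S:=\sum_i K_i\restrictedto\dom N$, but not obviously its closure $\bar S$, so maximal dissipativity does not directly give $K=\bar S$, and indeed $K$ need not be closed. The fix is one line: $S\subseteq K$ and $K$ dissipative imply $\bar S\subseteq \bar K$ with $\bar K$ dissipative, whence $\bar K=\bar S$ by maximality of $\bar S$; this weaker identity $\bar K=\bar S$ is all that the statement $e^{t\bar K}$ (and your Chernoff application, whose generator is the closure of $F'(0)\restrictedto\dom N=S$) requires.
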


We now come to the proof of \cref{thm:generation}. The proof is based on Nelson's original argument to check the conditions of the Lumer-Phillips Theorem.

\begin{proof}[Proof of \cref{thm:generation}]
    \emph{Step 1.} 
    Since $K$ is dissipative, it is closable and the closure $\bar K$ is dissipative as well \cite[Thm.~4.5]{pazy1983}.
    Since $K$ is $N$-bounded and since $\D$ is a core for $N$, we have $\dom\bar K\supseteq \dom N$.
    Another consequence of $N$-boundedness is that \eqref{eq:generation1} remains true if $K$ is replaced by $\bar K$ and $\D$ is replaced by $\dom N$. To see this let $(\psi_n)$ be an $N$-graph norm Cauchy sequence in $\D$ with limit $\psi\in \dom N\subseteq \dom\bar K$, and note that $K\psi_n$ is a Cauchy sequence in $\H$ because $\norm{K\psi_n-K\psi_m}$ is bounded by a multiple of $\norm{N(\psi_n-\psi_m)}$. Therefore,
    \begin{align*}
        \ip{\bar K\psi}{N\psi}+\ip{N\psi}{\bar K\psi} &= \lim_n \big(\ip{K\psi_n}{N\psi_n} + \ip{N\psi_n}{K\psi_n}\big) \nonumber\le \lim_n \omega \ip{\psi_n}{N\psi_n} = \omega \ip\psi{N\psi}
    \end{align*}
    
    \emph{Step 2.}
    So far, we have shown that the restriction of $\bar K$ to $\dom N$ satisfies the same assumptions as $K$ with the core $\D$ given by $\dom N$.
    Since the closure of this restriction is $\bar K$, we may simply assume that $\D=\dom N$ in the following.
    By the Lumer-Phillips Theorem \cite[Thm.~3.15]{EngelNagel}, $\bar K$ generates a contraction semigroup if and only if $(\lambda-K)\dom N = (\lambda-K)\dom N\subseteq\H$ is dense for some/all $\lambda>0$.
    Assume that $\phi\in \H$ is orthogonal to $[(\lambda-K)\dom N]$ and let $\psi = (1+N)^{-1}\phi \in \dom N$. Then $\ip\phi{(\lambda-K)\psi} =0$ or, equivalently,
        $\ip\phi{K\psi}= \lambda\ip\phi\psi$.
    Therefore, we have
    \begin{align*}
        0\le \lambda \ip\psi{(1+N)\psi} 
        = \lambda \Re\ip\phi{\psi} 
        &= \Re\ip\phi{K\psi} \nonumber\\
        &= \Re\ip{(1+N)\psi}{K \psi} \nonumber\\
        &= \Re \ip{N\psi}{K \psi} + \Re \ip\psi{K\psi} \nonumber\\
        &\le \frac12 \big(\ip{K\psi}{N\psi}+\ip{N\psi}{K\psi}\big)+0\nonumber\\
        &\le \frac\omega2 \ip\psi{N\psi} < \frac\omega2 \ip\psi{(1+N)\psi}.
    \end{align*}
    For $\lambda>\frac\omega2$, this implies $\ip\psi{(1+N)\psi}=0$ and hence $\psi=0$. 
    Therefore, $(\lambda-K)\dom N$ must be dense.

    \emph{Step 3.} It remains to show that every core for $N$ is a core for $\bar K$.
    If $\D'$ is another core for $N$, we can run through the above arguments to show that the closure of $P:=K\upharpoonright\D'$ generates a contraction semigroup. 
    Since $\dom \bar P\supseteq \dom N\supseteq \dom P=\D'$, $\dom N$ is a core for $\bar P$ as well, and since $\bar K$ and $\bar P$ agree on a common core, we have $\bar P=\bar K$.
    Thus $\D'$ is a core for $\bar K$.
\end{proof}

We close with a comment on similar results in the literature.
Derek Robinson's work on commutator theorems \cite{robinson_commutator_1987,robinson1988,batty,robinson1989} contains generalizations of the commutator theorem that also cover dissipative operators.
However, these generalizations are different from our version in spirit.
While we replace the assumption $\pm i[H,N]\le \omega N$ in Nelson's commutator theorem by $K^*N+NK\le \omega N$, Robinson keeps the commutator by considering assumptions of the form $\abs{\ip\psi{[K,N]\phi}}\le \omega \norm{N^{\frac12}\psi}\norm{N^{\frac12}\phi}$.
The reason is that, in our case, $K^*N+NK$ measures the "infinitesimal energy gain" in the sense that (formally) $(d/dt) (e^{tK})^*Ne^{tK} |_{t=0}=K^*N+NK$ while the commutator shows up in Robison's work because it measures the noncommutativity of $e^{itN}$ and $e^{tK}$.

\printbibliography

\end{document}